 \newtheorem{thm}{Theorem}[section]
 \newtheorem{cor}[thm]{Corollary}
 \newtheorem{lemma}[thm]{Lemma}
 \newtheorem{prop}[thm]{Proposition}
 \newtheorem{claim}[thm]{Claim}
 \theoremstyle{definition}
 \newtheorem{defn}[thm]{Definition}
  \newtheorem{remk}[thm]{Remark}
 \theoremstyle{remark}
 \newtheorem*{ex}{Example}
 \newtheorem{assumption}[thm]{Assumption}
 \numberwithin{equation}{section}
 \def\idtyty{{\mathchoice {\mathrm{1\mskip-4mu l}} {\mathrm{1\mskip-4mu l}} %
{\mathrm{1\mskip-4.5mu l}} {\mathrm{1\mskip-5mu l}}}}
 \def\idty{{\mathchoice {\mathrm{1\mskip-4mu l}} {\mathrm{1\mskip-4mu l}} %
{\mathrm{1\mskip-4.5mu l}} {\mathrm{1\mskip-5mu l}}}}
\newcommand{\bR}{{\mathbb R}}
\newcommand{\Ir}{{\mathbb Z}}
\newcommand{\bC}{{\mathbb C}}
\newcommand{\cD}{{\mathcal D}}
\newcommand{\bN}{{\mathbb N}}
\newcommand{\bZ}{{\mathbb Z}}
\newcommand{\cA}{{\mathcal A}}
\newcommand{\cB}{{\mathcal B}}
\newcommand{\cH}{{\mathcal H}}
\newcommand{\cI}{{\mathcal I}}
\newcommand{\cT}{{\mathcal T}}
\newcommand{\supp}{\operatorname{supp}}
\newcommand{\cP}{{\mathcal P}}
\newcommand{\cS}{{\mathcal S}}
\newcommand{\cE}{{\mathcal E}}
\newcommand{\bE}{{\mathbb E}}
\newcommand{\cAG}{\cA_\Gamma^{{\rm loc}, G}}
\newcommand{\gsE}{\overline{E}}
\newcommand{\caA}{{\mathcal A}}
\newcommand{\caB}{{\mathcal B}}
\newcommand{\caF}{{\mathcal F}}
\newcommand{\caG}{{\mathcal G}}
\newcommand{\caI}{{\mathcal I}}
\newcommand{\cK}{{\mathcal K}}
\newcommand{\caS}{{\mathcal S}}
\newcommand{\caT}{{\mathcal T}}
\newcommand{\bbE}{{\mathbb E}}
\newcommand{\bbF}{{\mathbb F}}
\newcommand{\tr}{\mathrm{tr}}
\newcommand{\Tr}{\mathrm{Tr}}
\newcommand{\gap}{\mathrm{gap}}
\newcommand{\braket}[2]{\left\langle #1 , #2\right\rangle}
\newcommand{\diam}{\mathrm{diam}}
\newcommand{\ran}{{\rm ran}}
\newcommand{\bra}[1]{\langle #1 \vert} 
\newcommand{\ket}[1]{\vert #1 \rangle}
\newcommand{\Cx}{\mathbb{C}}
\renewcommand{\H}{\mathcal{H}} 
\newcommand{\dom}{\mathop{\rm dom}}
\newcommand{\spec}{\mathop{\rm spec}}
\newcommand{\be}{\begin{equation}}
\newcommand{\ee}{\end{equation}}
\newcommand{\bea}{\begin{eqnarray}}
\newcommand{\eea}{\end{eqnarray}}
\newcommand{\beann}{\begin{eqnarray*}}
\newcommand{\eeann}{\end{eqnarray*}}
\newcommand{\Rl}{\bR}
\newcommand{\eq}[1]{(\ref{#1})}
\newcommand{\floor}[1]{\lfloor #1 \rfloor}
\title[Quasi-Locality Bounds II]{Quasi-Locality Bounds for Quantum Lattice Systems.
Part II. Perturbations of Frustration-Free Spin Models with Gapped Ground States.}
\author[B. Nachtergaele]{Bruno Nachtergaele}
\thanks{Based upon work supported by the National Science Foundation under DMS-1813149.}
\address{Department of Mathematics and Center for Quantum Mathematics and Physics\\
University of California, Davis\\
Davis, CA 95616, USA}
\email{bxn@math.ucdavis.edu}
\author[R. Sims]{Robert Sims}
\address{Department of Mathematics \\
University of Arizona\\
Tuscon, AZ 85721, USA}
\email{rsims@math.arizona.edu}
\author[A. Young]{Amanda Young}
\address{Munich Center for Quantum Science and Technology, and\\
Zentrum Mathematik, TU M\"{u}nchen\\
85747 Garching, Germany}
\email{young@ma.tum.de}
\begin{document}
\date{\today }
\begin{abstract}
We study the stability with respect to a broad class of perturbations of gapped ground state phases of quantum spin systems defined by frustration-free Hamiltonians. The core result of this work is a proof using the Bravyi-Hastings-Michalakis (BHM) strategy
that under a condition of Local Topological Quantum Order, the bulk gap is stable under perturbations that decay at long distances faster than 
a stretched exponential. Compared to previous work we expand the class of frustration-free quantum spin models that can be handled
to include models with more general boundary conditions, and models with discrete symmetry breaking. Detailed estimates allow us to formulate sufficient conditions for the validity of positive lower bounds for the gap that are uniform in the system size and that are explicit to some degree.
We provide a survey of the BHM strategy following the approach of Michalakis and Zwolak, with alterations introduced to accommodate 
more general than just periodic boundary conditions and more general lattices. We express the fundamental condition known as LTQO
by means of the notion of indistinguishability radius, which we introduce. Using the uniform finite-volume results we then proceed
to study the thermodynamic limit. We first study the case of a unique limiting ground state and then also consider models with spontaneous
breaking of a discrete symmetry. In the latter case, LTQO cannot hold for all local observables. However, for perturbations that preserve the symmetry, we show stability of the gap and the structure of the broken symmetry phases. We prove that the GNS Hamiltonian associated with each pure state has a non-zero spectral gap above the ground state.
\end{abstract}

\maketitle

\tableofcontents

\section{Introduction}\label{sec:intro}

\subsection{Stability of the ground state gap}

The main object of study in this paper is the gap above the ground state of Hamiltonians of the form
$$
H(s) = H + s V,
$$
where $H$ is a finite-range frustration free quantum spin Hamiltonian with a gap above its ground state, and $V$ is a perturbation described by an interaction $\Phi$ of which the decay at long distances is upper bounded by a stretched exponential. The goal is to prove a lower bound for the ground state gap for $H(s)$ for sufficiently small $s$ under a set of conditions on  $H$ and its ground states. The existence of a positive lower bound for $|s|<s_0$, for some $s_0 >0$, uniformly in the system size, is referred to as {\em stability} of the ground state gap. Good introductions to the mathematics of quantum spin systems can be found in \cite{bratteli:1997,naaijkens:2017,tasaki:2020}.

A gap above the ground state in the spectrum of a quantum many-body Hamiltonian is a signature property that has important implications for the 
physics of the system described by that Hamiltonian. For example, it is well known (and proven) that it quite generally implies exponential decay of correlations in the ground state \cite{nachtergaele:2006a,hastings:2006}. In one dimension, a non-vanishing gap for the infinite system 
implies the split property \cite{matsui:2010}, which in turn plays a crucial role in definition of a topological index for symmetry protected topological 
phases \cite{ogata:2020a,ogata:2019a,ogata:2019b}.
More generally, the presence of a spectral gap features as an assumption in the theories classifying topological phases of matter \cite{chen:2011,pollmann:2012,chen:2013,ogata:2016a,moon:2019a,moon:2020} and the derivation of the quantum Hall effect and similar properties \cite{hastings:2015,bachmann:2017b,bachmann:2020a,bachmann:2020}.

To prove existence of a gap and, in particular, to obtain a positive lower bound uniform in the system size, is in general a hard problem.
It was shown in \cite{cubitt:2015a,cubitt:2015b} that the question whether an arbitrary translation-invariant, frustration-free, nearest-neighbor two-dimensional quantum spin model has a gap above the ground state or not, is undecidable in the technical sense. That result, however, has only limited 
bearing on what we can learn mathematically for specific classes of systems. There are a number of examples in the literature of such systems for 
which the question has been settled 
\cite{bachmann:2015,bishop:2016a,gosset:2016,abdul-rahman:2020,lemm:2019b,lemm:2019,pomata:2019,lemm:2020a,pomata:2020,guo:2020}. 
For one-dimensional frustration-free systems arguments to prove a gap have been extended even further \cite{affleck:1988,knabe:1988,fannes:1992,nachtergaele:1996,koma:1997,matsui:1997,koma:2001,spitzer:2003,bravyi:2015,nachtergaele:2020}.
The stability results of Bravyi, Hastings, and Michalakis significantly amplify the class of models for which one can prove a spectral gap uniform in the system size \cite{bravyi:2010,bravyi:2011,michalakis:2013}. In this work, we employ the Bravyi-Hastings-Michalakis (BHM) strategy to expand the class of models for which a gap can be proved even further.

Early results on the stability of the ground state gap have typically been framed as perturbation theory for the ground states of quantum 
spin systems. These were usually focused on a specific model or a limited class of models \cite{matsui:1990,kennedy:1992,borgs:1996,datta:1996,datta:1996a,fernandez:2006,yarotsky:2006}.
The growing interest in topologically ordered ground states, however, raised the general stability as a crucial question for their possible 
experimental observability and utilization as quantum memory. To address {\em stability} it is important to look for an approach that allows 
for the widest possible class of perturbations. Mathematically, the perturbations are described by an interaction involving arbitrary $k$-body terms and that belongs to a suitable Banach space, the norm of which expresses interaction strength and the decay at long distances.

The Toric Code model \cite{kitaev:2006} was the first test case for proving this type of stability in the presence of topological order. 
It has a unique ground state on the infinite lattice $\Ir^2$ \cite{alicki:2007}, but finite systems have multiple ground states 
and the ground state degeneracy is strongly dependent on the boundary conditions. The first proof of stability for the Toric 
Code model is by Bravyi, Hastings and Michalakis  \cite{bravyi:2010}. Klich addressed the same question in  \cite{klich:2010}.
Bravyi and Hastings followed up with a streamlined proof in \cite{bravyi:2011}. Their proof applies to the class of frustration-free 
commuting Hamiltonians satisfying a natural topological order condition. The term `commuting' here refers to the fact that all terms 
in the Hamiltonian commute, which holds for the general class of quantum double models defined by Kitaev \cite{kitaev:2006,kitaev:2009},
and also for the Levin-Wen models \cite{levin:2005}.

To be able to cover more physically realistic models it was important to get rid of the restriction to {\em  commuting} Hamiltonians. This was achieved 
by Michalakis and Zwolak in \cite{michalakis:2013}. In that work the condition of {\em Local Topological Quantum Order} (LTQO) is introduced in
 essentially the same form we will use it here. Other more restrictive notions of stability were investigated in 
\cite{schuch:2011a,cirac:2013,szehr:2015,haah:2016}. In the latter, the LTQO condition is either automatically satisfied or expressed in a different 
way by the assumptions for the particular class of systems under consideration. In this work we will focus on the BHM strategy and we refine the LTQO condition to obtain extensions of the stability results in two directions, namely, (i) finite systems with other than periodic boundary conditions and 
(ii) systems in which discrete symmetry breaking occurs. 

A generalization we do not pursue in the this paper is the inclusion of models with unbounded on-site Hamiltonians of the type 
considered in \cite{yarotsky:2006,frohlich:2020} and unbounded interactions as in \cite{del-vecchio:2019}. 
Fr\"ohlich and Pizzo introduced a method that handles a class of unbounded one-dimensional lattice Hamiltonians with ease 
as long as the unperturbed ground state is unique and given by a product state. The latter restriction excludes 
non-trivial order, topological or otherwise, and, naturally, any version of an LTQO condition is automatically satisfied.
In \cite{del-vecchio:2019a}, Del Vecchio, Fr\"ohlich, Pizzo, and Rossi prove analyticity of the ground state energy density
for translation-invariant chains of the same class.

There also has been recent interest in stability for fermionic lattice systems. We outlined a BHM strategy for lattice fermions in \cite{nachtergaele:2018}, on which we plan to elaborate in a separate paper \cite{paper4}. Hastings sketched a related approach
for perturbations of quasi-free fermion systems in \cite{hastings:2019}, which was elaborated upon by Koma in \cite{koma:2020}. 
Another stability result for gapped quasi-free lattice fermion systems was proved in \cite{de-roeck:2019}.
The applicability of these results to topologically ordered systems with gapless boundary modes remains unclear, and we will address 
this issue in \cite{paper4}. 

We also mention that the stability question for {\em irreversible dynamics} with an exponentially clustering invariant state has been addressed in 
\cite{cubitt:2015}.

Statements about the thermodynamic limit are highly relevant for the classification of gapped ground state phases, including symmetry protected topologically order phases. For example, the topological indices introduced in \cite{ogata:2020a,ogata:2019a,ogata:2019b} are for infinite gapped systems. In this work we study infinite systems as limits of
sequences of finite systems. In this familiar approach, as an intermediate step, one studies a sequence of finite systems for which estimates uniform in the system size can be derived. This is described in more detail in the summary given in the next section. It is worth noting that this does not cover all cases of interest since our ability to carry this out depends on the existence of a uniformly positive gap for finite systems, which depends on knowing suitable boundary conditions for which there are no gapless boundary states that may obscure the existence of a bulk gap. Such boundary conditions
are not known or even known to exist in all cases. They may, in fact, not exist \cite{wang:2019}. For this reason we will present a direct approach to the bulk gap in the infinite system that bypasses this difficulty in a forthcoming paper \cite{paper3}.

\subsection{The Bravyi-Hastings-Michalakis strategy and main results}\label{sec:BHM}

In this section we will sketch the general approach to proving stability of spectral gaps for quantum spin systems introduced by Bravyi, 
Hastings, and Michalakis in \cite{bravyi:2010}, and streamlined and extended in \cite{bravyi:2011,michalakis:2013}, to which we will henceforth
refer to as the {\em BHM strategy}. In this general overview, we do not spell out the technical assumptions in detail, but focus instead on the overall 
structure of the main arguments and the qualitative role of the basic assumptions. This will also allow us to point out where the new contributions 
of this work are located in the overall scheme and we hope it will be helpful to the reader. We believe that the BHM strategy combined with 
the results in this paper and future enhancements will continue to extend its reach. Precise definitions, assumptions, and statements of the results  follow in later sections, see Section \ref{sec:summaries} for an outline.

The class of quantum spin models under consideration are defined on a lattice $\Gamma$, which we assume is a metric space that satisfies a regularity (finite-dimensionality) condition expressed by requiring that the cardinality of balls does not grow faster than a power of their radius.
Often one can take $\Gamma =\Ir^\nu$ with the lattice distance, as in done in the work of Bravyi, Hastings, and Michalakis (BHM)\cite{bravyi:2010,bravyi:2011,michalakis:2013}. The generalization to general regular $\Gamma$ is mostly straightforward and of significance only when we consider various boundary conditions and study the thermodynamic limit. BHM only consider finite systems with periodic boundary conditions.

The Hamiltonians can be written in terms of two interactions, $\eta$ and $\Phi$, that map each finite subset, $X$,
of the lattice to a self-adjoint observable, $\eta(X)$ and $\Phi(X)$, that is supported in $X$. Formally,
\be
H(s) = \sum_X \eta(X) + s\Phi(X),
\ee
where $s\in\Rl$ is the perturbation parameter. The unperturbed model, $H(0)$ defined by $\eta$ alone, is assumed to be finite-range, frustration-free, and with a gap in the spectrum above the ground state, uniformly in the system size and subject to suitable boundary conditions. 
Stability, the property we want to prove, means that there exists $s_0>0$, such that
for all $s$ with $|s|<s_0$, $H(s)$ has a gap above its ground state that is bounded below uniformly in the system-size. See Section \ref{sec:stab+sf} for a detailed discussion. 

Stability does not hold in general, of course. The BHM strategy assumes two essential conditions, one on the unperturbed model $H(0)$, and one
on the perturbation $\Phi$. Due to the frustration-freeness, the ground state space of the unperturbed model defined on a finite volume is given by the kernel of the Hamiltonian. This kernel need not be one-dimensional and its dimension may grow unbounded as a function of system size. A stable
gap above the ground state implies that the ground state splitting by arbitrary perturbations is `non-essential'. In particular, the distinct ground states should not be distinguished by any of the local perturbation terms since otherwise their energies would split in first order of perturbation theory. A suitable indistinguishability assumption is introduced by BHM which they refer to as topological order conditions. It is called
Local Topological Quantum Order (LTQO) in \cite{michalakis:2013}, which is the version we also use. To deal with more general boundary conditions we found it useful to formulate it in terms of an {\em indistinguishability radius} (see Definition \ref{LTQO_def}).

The second condition needed to prove stability is that $\Phi$ is sufficiently short-range. Roughly speaking, we require that the interaction strength decays at least as fast as a stretched exponential at long distances. Mathematically, this can be expressed by introducing a suitable Banach space
of allowed interactions $\Phi$. There is some freedom to choose $\eta$ and $\Phi$ given $H(s)$, and this freedom is useful for some arguments.
In particular, we will make use of the notion of {\em anchored interaction}. See Section \ref{sec:balled-int} and Appendix \ref{sec:est-trans-balled-ints} for a detailed discussion of norms on interactions and a proof that certain useful choices for the interactions have comparable norms. We note here that in \cite{michalakis:2013} the  main stability result is claimed to hold for models on the lattice $\Ir^\nu$, under perturbations that decay
as a power law with an exponent greater than $\nu + 2$. We have not been able to verify this claim and explained in \cite[Section VI.E.1, page 62]{nachtergaele:2019} why it may be erroneous.

The essence of the BHM strategy is a combination of something novel with something classical. The classical element is to use relative boundedness of a perturbation with respect to the unperturbed Hamiltonian to show that gaps in the spectrum remain open for small coupling
constants. In \cite{bravyi:2010,bravyi:2011} a relative norm bound is used. We use a relative form bound as in \cite{michalakis:2013}. See Section 
\ref{sec:gen_pert} for a discussion of gaps and relative form bounds. The new ingredient is the quasi-local `quasi-adiabatic' evolution, which is an $s$-dependent unitary transformation $U(s)$,  introduced and pioneered by Hastings \cite{hastings:2004,hastings:2004b,hastings:2005}. We studied this evolution in detail in \cite{bachmann:2012}, where we called it the {\em spectral flow}.
It has since been used in many other interesting applications \cite{bachmann:2012bf,bachmann:2014c,hastings:2015,bachmann:2015a,bachmann:2016,bachmann:2017b,bachmann:2017c,bachmann:2018,bachmann:2020a,cha:2020,bachmann:2020}. 
Its two main features are that it is quasi-local and that it exactly transforms the ground state spaces of $H(s)$ into each other. 
Quasi-locality is expressed with a Lieb-Robinson bound, which holds for the spectral flow in the same way as for physical dynamics 
generated by a short-range interaction \cite{lieb:1972}.

The BHM strategy is to first apply the quasi-adiabatic evolution to $H(s)$ and then prove a relative bound for the transformed Hamiltonian, which of course has the same spectrum as $H(s)$. 
Concretely, one defines $\tilde\Phi(x,n,s)$, for each site $x\in\Gamma$, such that
\be\label{unitary_transformation}
U^*(s) H(s) U(s) = H(0) + \sum_{x, n\geq 1} \tilde\Phi(x,n,s) + R(s) + E(s)\idty,
\ee
in which $R(s)$ is a remainder term that vanishes in the thermodynamic limit, $E(s)$ is a good approximation of the perturbed
ground state energy, and $\tilde\Phi(x,n,s)$ has the following properties: 
(i) $\tilde\Phi(x,n,s)$ is supported in the ball centered at $x$ with radius $n$; (ii) $\Vert \tilde\Phi(x,n,s)\Vert$ decays at least fast as 
as a stretched exponential; (iii) $\tilde\Phi(x,n,s)$ vanishes on the ground states of $H(0)$. These properties are proved in 
Sections \ref{sec:Step1} and
\ref{sec:LTQO2}, in which we use of the assumptions and the quasi-locality properties of $U(s)$. We refer to \cite{nachtergaele:2019} for a 
detailed analysis of the latter.

As shown in Section \ref{sec:form_bd} (Theorem \ref{thm:Step3General}) the properties of $\tilde\Phi$, and some technical assumptions we skip over here,  imply that the 
perturbation it defines satisfies a relative form bound with respect to $H(0)$ of the form
\be\label{formbd_beta}
\left| \sum_{x, n\geq 1} \langle\psi,\tilde\Phi(x,n,s) \psi\rangle\right| \leq |s| \beta \langle \psi, H(0)\psi\rangle, \quad \psi \in \cH,
\ee
where $\beta$ is a constant that depends on the unperturbed model, a suitable norm of the perturbing interaction $\Phi$, 
and on a choice of $\gamma\in(0,\gamma_0)$, where $\gamma_0$ is the gap of the unperturbed system. $\beta$ \emph{not} depend 
on the finite volume and $s$. As we explain in Section \ref{sec:Pert_Theory}, this implies stability of the gap in the spectrum above the 
ground state.

A topic not discussed in the work of Bravyi, Hastings, and Michalakis is the thermodynamic limit. In preparation for studying 
the thermodynamic limit, we investigate sequences of 
finite systems for which the estimates leading to the constant $\beta$  in \eq{formbd_beta} and the vanishing of the remainder term $R(s)$ in
\eq{unitary_transformation} hold uniformly for a sequence of finite systems in Section \ref{sec:uniform_sequences}. The thermodynamic limit
is then discussed in Section \ref{sec:automorphic-equivalence}. We show that under assumptions for which the thermodynamic limit of the dynamics exists, the ground states of the finite-volume systems converge to a unique pure ground state of the infinite system and this state
also satisfies LTQO for $|s|< s_0$ (Theorem \ref{thm:convergence_LTQO}). The lower bounds of the finite systems yields a lower bound for the 
gap in spectrum of the GNS Hamiltonian above the ground state. This is shown in Theorem \ref{thm:gap_ineq_TL} and Corollary \ref{cor:GNSgap}. 
The unitary evolution $U(s)$ leads to a strongly continuous co-cycle of automorphisms relating the ground states at different values of $s$ in
the interval $(-s_0,s_0)$. These are the automorphisms that implement the notion of {\em automorphic equivalence} introduced in 
\cite{bachmann:2012} and that appear in the definition of {\em gapped ground state phase} \cite{chen:2011,chen:2013,nachtergaele:2019}.

The gap of the GNS Hamiltonian is often referred to as the {\em bulk gap}. It is interesting to note that the applicability of our results 
includes cases 
where the  gap for finite systems with open boundary conditions tends to $0$ as the system size tends to infinity, while the gap with periodic 
boundary conditions is bounded below uniformly. In \cite{paper3} we extend this result to prove the stability of the bulk gap directly, regardless 
of the behavior with particular boundary conditions.

The last question we address in this paper is the situation in the presence of discrete symmetries. Spontaneous symmetry breaking in the
ground states is a common phenomenon and it is compatible with a non-vanishing gap above the ground states, as well as topological order
(the Goldstone theorem shows that continuous symmetry breaking, however, is not compatible with a gap \cite{landau:1981}). In Section
\ref{sec:TL_G} we discuss 
in some detail how the BHM strategy can be adapted to the situation with discrete symmetry breaking of three types: (S1) local symmetries such as 
spin flip, discrete spin rotation, time-reversal etc; (S2) breaking of lattice translations to a subgroup leading to periodic ground states; (S3) other lattice symmetries such as reflections and lattice rotations. In each of these cases we show that, if the unperturbed model has the symmetry and it is spontaneously broken in the ground states, then the spectral gap {\em and} the symmetry breaking are stable under perturbations that possess the symmetry. This is the content of Theorem \ref{thm:sym_stability}. Examples of breaking of each of the three types of discrete symmetry can be found in one-dimensional models with a finite set if pure matrix product ground states. In Section \ref{sec:MPSexamples} we show how the assumptions for the general results are satisfied for this class of examples.

Next, we give a synopsis of the remaining sections and the two appendices of this paper.

\subsection{Summaries of the sections}\label{sec:summaries}

\begin{description}
\item[Section \ref{sec:set-up}]
We introduce the main elements of the mathematical setting for quantum spin systems, the notion of interaction, and the property of frustration-freeness
in Section \ref{sec:QSS}. Given the ground state space of a system, we define the indistinguishability radius, which is used to express Local Topological Quantum Order, and discuss some examples. We review $F$-norms on spaces of interactions, Lieb-Robinson bounds, and basic quasi-local estimates. In Section \ref{sec:stab+sf} we give a precise definition of stability of the spectral gap. Section \ref{sec:spec_flow} reviews the Hastings generator and the spectral flow, which is the essential tool of the BHM strategy. In Section \ref{sec:balled-int} we discuss how general interactions can be rewritten in {\em anchored form}, which is often convenient and we discuss how this affects their norms.
\item[Section \ref{sec:Pert_Theory}]
In Section \ref{sec:gen_pert} we prove a general Level Repulsion Principle (Lemma \ref{lem:variational_principle_for_gaps}), which can also be seen as a variational principle for gaps. The proof of spectral gap stability uses relative form bounds and we explain in some detail how this proceeds in the remainder of Section \ref{sec:gen_pert}. Section \ref{sec:form_bd} gives a relative form bound for a special class of interactions
(Theorem \ref{thm:Step3General}). This requires a regularity assumption on the lattice and the notion of separating partitions to deal with general (non-commuting) Hamiltonians (Definition \ref{ass:sep_part}).
\item[Section \ref{sec:Step1}]
In Section~\ref{sec:Step1}, we begin to implement the basic BHM strategy. Rather than study the spectrum of
the perturbed Hamiltonian $H(s)$ directly, we consider the transformed Hamiltonian $\alpha_s(H(s))$ where
$\alpha_s( \cdot)$ is the spectral flow automorphism. The main goal of Section~\ref{sec:Step1} is to begin a 
decomposition of $\alpha_s(H(s))$ into a form suitable for an application of the general perturbation
theory results proven in Section~\ref{sec:Pert_Theory}, namely Theorem~\ref{thm:Pert_Est}. After introductory
preliminaries in Section~\ref{sec:intro_Step1}, this main goal is accomplished by proving two results: Proposition~\ref{prop:step1-commute}
in Section~\ref{sec:conv-lab} and Theorem~\ref{thm:Step1} in Section~\ref{sec:1ests}.  As detailed in
Section~\ref{sec:just-tech}, the bulk of the work in this section involves 
an appropriate choice of local decompositions and a familiarity with quasi-locality estimates; the foundations of which
we considered in \cite{nachtergaele:2019}.  
\item[Section \ref{sec:LTQO2}] In Section~\ref{sec:LTQO2}, we complete the decomposition procedure
we began in Section~\ref{sec:Step1}. It is here that the structure of the unperturbed ground state space, and
in particular, our assumptions on local topological quantum order play a crucial role.  
The main content is a proof of Theorem~\ref{thm:step2-(i)+(ii)} and Theorem~\ref{thm:rel_bound_cond}. Theorem~\ref{thm:step2-(i)+(ii)} establishes 
estimates on the remainder terms that arise from the decomposition of the transformed Hamiltonian, 
whereas Theorem~\ref{thm:rel_bound_cond} demonstrates that the remaining 
anchored interaction terms satisfy the necessary constraints so that the general form bound estimate, 
see Theorem~\ref{thm:Step3General} in Section~\ref{sec:form_bd}, is applicable. 
\item[Section \ref{sec:uniform_sequences}]
In Section~\ref{sec:uniform_sequences}, we set the stage for considerations of the thermodynamic limit. 
In fact, we consider assumptions for models, defined on an increasing and absorbing sequence of finite volumes,
which are sufficiently uniform so that the stability estimates hold uniformly in these finite volumes.
We characterize uniformity of the models with Definition~\ref{def:pert_model} which we refer to as {\it perturbation models}.
We characterize uniformity of the estimates for these models by Assumption~\ref{ass:uni_pert_model} which
defines {\it uniform perturbation models}. The main results of this section are Theorem~\ref{thm:uni_stab} and
Corollary~\ref{cor:zerod+e} which both demonstrate forms of stability for uniform perturbation models. 
In Section~\ref{sec:veri_ass}, we discuss some common cases where one can verify that our model
assumptions hold, and in Section~\ref{subsec:PBC}, we discuss cases where these stability arguments simplify,
for example, in situations where the models of interest have periodic boundary conditions.

\item[Section \ref{sec:automorphic-equivalence}] We consider spectral gap stability in the thermodynamic limit for uniform perturbation 
models for which limiting dynamics exists and the ground states are everywhere indistinguishable (see Definition~\ref{ass:bf}). We show in Theorem~\ref{thm:convergence_LTQO} and Corollary~\ref{cor:uniqueTL} that the perturbed models are also everywhere indistinguishable, 
and that their ground states converge to a unique, pure infinite volume ground state. We establish a criterion for which finite volume ground state indistinguishability implies a spectral gap of the associated GNS Hamiltonian in Theorem~\ref{thm:gap_ineq_TL} and
Corollary~\ref{cor:GNSgap}. 

\item[Section \ref{sec:TL_G}] We introduce two indistinguishability radii that can be used to prove stability for several cases of discrete symmetry breaking. For uniform finite volume stability, it is sufficient to consider the $G$-symmetric radius (see Definition~\ref{LTQO_def_G}). To recover the stability of the GNS gap, one needs the stronger $G$-broken radius from Assumption~\ref{assumption:LTQO_N}. We discuss in detail how to adjust the BHM strategy to prove stability for a model with a broken gauge symmetry which is proved in Theorem~\ref{thm:sym_stability},  and explain how to modify this argument to hold for cases of broken lattice symmetries in Section~\ref{sec:casesS2S3}. We conclude with Section~\ref{sec:MPSexamples} where we provide a class of examples with symmetry broken MPS ground states for which our methods apply.

\item[Appendix\ref{sec:est-trans-balled-ints}] In Appendix~\ref{sec:est-trans-balled-ints}, we provide some basic quasi-locality estimates
with particular emphasis on models defined by anchored interactions. More general results of this type are described in detail in \cite{nachtergaele:2019}.
The main result is Theorem~\ref{thm:trans-int-bd}, see also Corollary~\ref{cor:trans-int-F-dec}, which establishes a bound in $F$-norm on a quasi-locally transformed anchored interaction. Results of this specific type enter the analysis of Section~\ref{sec:Step1}.

\item[Appendix\ref{app:MPS}] We prove a lower bound on the indistinguishability radius for models with a unique infinite volume MPS ground state in Section~\ref{app:unique_MPS}. We then consider models with $N$-distinct MPS ground states, and show that the conditions from Assumption~\ref{assumption:LTQO_N} for stability in the case of symmetry breaking hold.
\end{description}

\section{General framework and auxiliary results} \label{sec:set-up}

\subsection{Introduction}

Our aim in this paper is to present the current status of stability results for quantum spin systems in considerable generality. This is not to say that all results are stated under the most general conditions available to date. Attempting to do that would produce and unreadable text and require an excessive amount of definitions and notations. Our emphasis is on general ideas that work for large classes of systems and we illustrate the application of these ideas by presenting detailed arguments that cover the known results and, in fact, allow us to provide a number of generalizations and new results that can be obtained using the same principles.

In support of this goal, we describe in this section the more or less standard mathematical framework for studying quantum spin systems and discuss the basic notions that feature in the stability properties of the spectral gap above the ground state(s). Some definitions generalize what has appeared in the literature so far and in some cases we found it useful to discuss relationships between different ways basic properties may be expressed. This is a bit more material than is strictly needed to read the rest of the paper, but we hope some readers will find it useful. 

\subsection{Quantum Spin Systems}\label{sec:QSS}

In this work, we study quantum spin models defined by uniformly bounded and frustration-free interactions. While the interactions defining both the initial system and the perturbation are static, the method of choice to study the spectrum of these Hamiltonians relies on auxiliary dynamics generated
by a time-dependent generator (the Hastings generator of the so-called spectral flow). Therefore, we consider both time-independent and
time-dependent interactions in our setup.

We consider quantum spin systems defined on a countable metric space $(\Gamma, \, d)$ that is \emph{$\nu$-regular}, meaning there is a non-negative integer $\nu$ and constant $\kappa >0$ such that for any $x\in \Gamma$,
\begin{equation}  \label{nu-reg}
|b_x(n)| \leq \kappa n^\nu,
\end{equation} 
where $b_x(n) = \{y\in \Gamma \, : \, d(x,y)\leq n \}$. If $\Gamma$ is a regular lattice, \eq{nu-reg} holds with $\nu$ the lattice dimension.
At every site $x\in\Gamma$, we associate a finite-dimensional Hilbert space $\cH_x=\bC^{n_x}$, and denote by $\cB(\cH_x)$ the algebra of all bounded linear operators. We use $\cP_0(\Gamma)$ to denote the set of all finite subsets of $\Gamma$, and for each $\Lambda \in \cP_0(\Gamma)$ we define the \emph{state space} and \emph{algebra of observables}, respectively, by
\begin{equation}
\cH_{\Lambda} = \bigotimes_{x\in\Lambda} \cH_x, \qquad \cA_\Lambda := \bigotimes_{x\in\Lambda} \cB(\cH_x) = \cB(\cH_\Lambda).
\end{equation}
Some results will in fact hold more generally for systems with infinite-dimensional state spaces and we will point this out where applicable.

For any two finite subsets $\Lambda_0 \subset \Lambda$, there is a natural embedding $\cA_{\Lambda_0} \hookrightarrow \cA_\Lambda$ via $A \mapsto A\otimes \idtyty_{\Lambda \setminus \Lambda_0}$ for all $A\in \cA_{\Lambda_0}$. With respect to this identification, the \emph{algebra of local observables} is defined by the inductive limit
\begin{equation}
\cA_\Gamma^{\rm loc} = \bigcup_{X\in\cP_0(\Gamma)} \cA_X,
\end{equation}
and the C$^*$-algebra of \emph{quasi-local observables}, denoted $\cA_\Gamma$, is given by the norm completion of $\cA_\Gamma^{\rm loc}$. 

A quantum spin model is defined in terms of an \emph{interaction} $\Phi$. In the time-independent case, this is a map $\Phi:\cP_0(\Gamma)\to\cA_{\Gamma}^{\rm loc}$ such that $\Phi(X)^* = \Phi(X) \in \cA_X$. The \emph{local Hamiltonian} associated to any $\Lambda \in \cP_0(\Gamma)$ is the sum of all interaction terms supported on $\Lambda$, i.e.
\begin{equation} \label{ti_ham}
H_\Lambda = \sum_{X\subseteq \Lambda} \Phi(X).
\end{equation}
An interaction $\Phi$ is \emph{uniformly bounded} if 
\begin{equation}
\sup_{X\in\cP_0(\Gamma)}\|\Phi(X)\| < \infty,
\end{equation}
and \emph{finite range} if there exists $R>0$ such that $\Phi(X) = 0$ for any finite $X$ with $\diam(X) > R$. The smallest such $R$ for which this holds is called the \emph{range} of the interaction. We will also consider interactions $\Phi$ that are \emph{frustration-free} and have \emph{local topological quantum order (LTQO)}. These are both properties on the ground states associated with the finite volume Hamiltonians. We describe these properties in detail as they will be key assumptions for the main results of this work. 

\subsubsection{Frustration-free Interactions} \label{sec:FF-ints} A frustration-free interaction is one where the ground states of any finite volume Hamiltonian $H_\Lambda$ simultaneously minimize the energy of all interaction terms $\Phi(X)$, $X\subseteq \Lambda$. Said differently, up to shifting each interaction term, $\Phi(X)$, by its ground state energy, we say that an interaction $\Phi:\cP_0(\Gamma)\to\cA_\Gamma^{\rm loc}$ is \emph{frustration-free} if the following two properties hold:
\begin{enumerate}
	\item $\Phi(X) \geq 0$ for all $X\in\cP_0(\Gamma)$.
	\item $\min\spec(H_\Lambda) = 0$ for all $\Lambda\in\cP_0(\Gamma)$.
\end{enumerate}
It follows immediately from the definition that the ground state space of $H_\Lambda$ is $\caG_\Lambda := \ker(H_\Lambda)$, and that $\psi\in \caG_\Lambda$ if and only if $\Phi(X) \psi = 0$ for all $X\subseteq \Lambda$, i.e.
\be\label{int_prop_ff}
\caG_\Lambda = \bigcap_{X\subseteq\Lambda} \ker(\Phi(X)).
\ee
Let $P_\Lambda$ denote the orthogonal projection onto $\caG_\Lambda$ for any $\Lambda \in \cP_0(\Gamma)$. By identifying $H_{\Lambda_0}\mapsto H_{\Lambda_0}\otimes\idtyty_{\Lambda\setminus \Lambda_0}\in \cA_{\Lambda}$ for $\Lambda_0\subseteq \Lambda$, the above equation implies that $\caG_\Lambda \subseteq \caG_{\Lambda_0}$. As a consequence, the associated ground state projections satisfy
\begin{equation}\label{ff_prop}
P_{\Lambda}P_{\Lambda_0} = P_{\Lambda_0}P_{\Lambda} = P_{\Lambda}.
\end{equation}
This ground state projection property is a key feature of frustration-free interactions and will frequently be used in our analysis.

\subsubsection{Local Topological Quantum Order} \label{sec:LTQO} 


A characteristic feature of topological order is the degeneracy of the ground state accompanied by the property that observables localized away from the boundary of the volume do not (or barely) distinguish between different ground states. In the absence of a boundary (for example finite volumes considered with periodic boundary conditions, say a torus) observables with support that is small with respect to the size of topologically non-trivial closed paths in the volume similarly cannot distinguish between different ground states. It is this feature that makes such systems candidates to serve as robust quantum memory: you can store information by selecting a particular ground state without the danger that local noise will erase that information.
The robustness of this property requires that there is a gap in the spectrum above the ground state that does not vanish with increasing system size.
In fact, the local indistinguishability of the ground states itself implies that local perturbations to the Hamiltonian will not affect the ground state energy, at least not up to high orders in perturbation theory.


This motivates the notion of local topological quantum order (LTQO), which describes this property of  local indistinguishability of the ground states
in a quantitative way. LTQO is a central condition for the stability results we present in this paper. The term LTQO was coined by Michalakis and Zwolak in \cite{michalakis:2013} but essentially the same property was first considered by Bravyi, Hastings, and Michalakis in  \cite{bravyi:2010}. In these and other subsequent works, the authors only consider Hamiltonians with periodic boundary conditions and define their topological order condition specifically for this situation. There are situations in which it is necessary or preferable to consider the ground state problem for models with other boundary conditions. Therefore, in this paper we introduce a more general LTQO condition built on the notion of an \emph{indistinguishability radius}.

As before, for any finite volume $\Lambda \in \cP_0(\Gamma)$, we denote by $P_\Lambda$ the orthogonal projection onto the ground state space of $H_\Lambda = \sum_{X\subseteq\Lambda}\Phi(X)$, and denote by $\omega_\Lambda: \cA_\Lambda \to \bC$ the ground state functional
\begin{equation} \label{ground_state_fun}
\omega_\Lambda(A) = \Tr[P_\Lambda A]/\Tr[P_\Lambda].
\end{equation}
We define the indistinguishability radius of a site $x\in \Lambda$ in terms of balls with respect to $\Lambda$. To differentiate these from the balls in $\Gamma$, we use the notation $b_x^\Lambda(n) = \{y\in \Lambda \, : d(x,y)\leq n \}$. Since $b_x^\Lambda(n) = b_x(n) \cap \Lambda$,
we necessarily have $|b_x^\Lambda(n)|\leq \kappa n^\nu$ for all $\Lambda \in \cP_0(\Gamma)$ and $\nu$-regular $\Gamma$.

\begin{defn}[Indistinguishability radius]\label{LTQO_def}
	Let $\Omega:\bR \to [0,\infty)$ be a non-increasing function. The \emph{indistinguishability radius} of $H_\Lambda$ at $x\in\Lambda$, denoted $r_x^\Omega(\Lambda)$, is the largest integer $r_x^\Omega(\Lambda)\leq \diam(\Lambda)$ such that for all integers $0 \leq k \leq n \leq r_x^\Omega(\Lambda)$ and all observables $A\in \cA_{b_x^\Lambda(k)}$,
	\begin{equation}\label{LTQO_length}
	\|P_{b_x^\Lambda(n)} A P_{b_x^\Lambda(n)} - \omega_\Lambda(A) P_{b_x^\Lambda(n)}\| \leq |b_x^\Lambda(k)| \|A\| \Omega(n-k).
	\end{equation}
\end{defn}	

Loosely speaking, a system is said to have the LTQO property if, for fixed $x\in\Gamma$,  the indistinguishability radius $r_x^\Omega(\Lambda)\to\infty$
as the system size increases and the distance of $x$ to the boundary of $\Lambda$ diverges. 

Several comments are in order. First, the set of indistinguishability radii is a property of the ground state space of the model and they obviously depend on the choice of the function $\Omega$. There is no {\em a priori} obvious optimal choice. Both $\Omega$ and the radii, $r_x^\Omega$, that appear in crucial estimates are derived from computing a good upper bound on the left hand side of \eqref{LTQO_length} for the system under consideration.  Typically,
one wants $\lim_{n\to\infty} \Omega(n) = 0$. The rate of this convergence is related to the vanishing dependence of local expectations on boundary conditions, see \eq{LTQO_length}. Therefore, as a second comment, we note that the indistinguishability radius $r_x^\Omega(\Lambda)$ depends not only on the volume $\Lambda$ but possibly also the choice of boundary conditions for the system. Finally, (\ref{LTQO_length}) shows that given any $A\in \cA_{b_x(k)}$ and $k<<n\leq r_x^\Omega(\Lambda)$, the matrix $P_{b_x^\Lambda(n)}A P_{b_x^\Lambda(n)}$ is approximately a multiple of $P_{b_x^\Lambda(n)}$. This property, generally referred to as LTQO, does not require that the model is defined by a frustration free interaction. In the case of frustration free models, however, we have the following proposition, first proved in \cite{michalakis:2013}, which will be used when we apply LTQO further on.

\begin{prop} \label{cor:LTQO} Let $H_\Lambda$ be a frustration-free Hamiltonian 
	$\Omega: \bR \to [0,\,\infty)$ be a non-increasing function, and $x\in\Lambda$. Then, for any $0<k \leq n \leq r_x^\Omega(\Lambda)$ and $A \in \caA_{b_x^\Lambda(k)}$ one has
	\begin{equation} \label{LTQO_cor_esta} 
	\left| \|AP_{b_x^\Lambda(n)}\| - \|AP_{\Lambda}\| \right| \leq \|A\| \sqrt{2|b_x^\Lambda(k)| \Omega(n-k)}.
	\end{equation}
\end{prop}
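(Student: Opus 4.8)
The plan is to compare $\|AP_{b_x^\Lambda(n)}\|$ and $\|AP_\Lambda\|$ by estimating the operator $A^*A$ in the ground state subspaces. Write $Q = P_{b_x^\Lambda(n)}$ and $P = P_\Lambda$. Since the interaction is frustration-free and $b_x^\Lambda(n) \subseteq \Lambda$, the nesting property \eqref{ff_prop} gives $PQ = QP = P$, so $P \leq Q$ as projections. Consequently $\|AP\|^2 = \|PA^*AP\| \leq \|QA^*AQ\|$ trivially in one direction; the content of the proposition is the matching lower bound, which is where LTQO enters. The key observation is that $B := A^*A \in \caA_{b_x^\Lambda(k)}$ (the support does not grow under multiplication within a fixed algebra) with $\|B\| \leq \|A\|^2$, so the indistinguishability estimate \eqref{LTQO_length} applies to $B$: one has $\|QBQ - \omega_\Lambda(B)Q\| \leq |b_x^\Lambda(k)|\,\|A\|^2\,\Omega(n-k)$.

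First I would use this to show that $\|QBQ\|$ is close to $\omega_\Lambda(B)$: since $Q$ is a nonzero projection, $\|QBQ\| \geq \|QBQ - \omega_\Lambda(B)Q\|$ reversed gives $|\,\|QBQ\| - \omega_\Lambda(B)\,| \leq |b_x^\Lambda(k)|\,\|A\|^2\,\Omega(n-k)$ (using $\|Q\|=1$ and $\omega_\Lambda(B) \geq 0$). Next, I would run the same argument with $Q$ replaced by $P$: because $P \leq Q$, one checks directly from \eqref{LTQO_length} applied at the value $n$ but with $P$ in place of $Q$ — more carefully, one uses $PQP = P$ to get $PBP = P(QBQ)P$, hence $\|PBP - \omega_\Lambda(B)P\| = \|P(QBQ - \omega_\Lambda(B)Q)P\| \leq \|QBQ - \omega_\Lambda(B)Q\| \leq |b_x^\Lambda(k)|\,\|A\|^2\,\Omega(n-k)$. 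Therefore both $\|QBQ\| = \|AP_{b_x^\Lambda(n)}\|^2$ and $\|PBP\| = \|AP_\Lambda\|^2$ lie within $|b_x^\Lambda(k)|\,\|A\|^2\,\Omega(n-k) =: \delta$ of the common number $\omega_\Lambda(B)$, so
\begin{equation*}
\left|\, \|AP_{b_x^\Lambda(n)}\|^2 - \|AP_\Lambda\|^2 \,\right| \leq 2\delta = 2\,|b_x^\Lambda(k)|\,\|A\|^2\,\Omega(n-k).
\end{equation*}

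Finally I would convert the bound on the difference of squares into a bound on the difference of the quantities themselves. Writing $a = \|AP_{b_x^\Lambda(n)}\|$ and $b = \|AP_\Lambda\|$, we have $|a-b|\,(a+b) = |a^2 - b^2| \leq 2\delta$. Here the cheap estimate is to note $|a-b| \leq |a^2-b^2|^{1/2}$ whenever $a,b \geq 0$ with $|a-b| \leq a+b$: indeed $|a-b|^2 \leq |a-b|(a+b) = |a^2-b^2|$. This yields $|a - b| \leq \sqrt{2\delta} = \|A\|\sqrt{2\,|b_x^\Lambda(k)|\,\Omega(n-k)}$, which is exactly \eqref{LTQO_cor_esta}. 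The only mild subtlety — and the step I would be most careful about — is confirming that the indistinguishability inequality \eqref{LTQO_length}, stated for a pair $k \leq n$, is legitimately applied to the single projection $P_{b_x^\Lambda(n)}$ together with an observable in $\caA_{b_x^\Lambda(k)}$, and then separately that the reduction to $P_\Lambda$ goes through purely algebraically via $PQ=P$; neither requires re-invoking LTQO at a different radius, so no loss in $\Omega$ is incurred beyond the single evaluation $\Omega(n-k)$.
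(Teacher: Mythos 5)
Your proof is correct and follows the same overall strategy as the paper: reduce $|a-b|$ to $|a^2-b^2|^{1/2}$, then estimate $|\,\|QA^*AQ\|-\|PA^*AP\|\,|$ by inserting $\omega_\Lambda(A^*A)$ and bounding each piece via LTQO, using frustration-freeness to handle the $P_\Lambda$ term. The one place you diverge is in that last step: the paper writes $P_\Lambda = P_\Lambda P_{b_x^\Lambda(r)}$ at the \emph{maximal} admissible radius $r=r_x^\Omega(\Lambda)$, invokes \eqref{LTQO_length} a second time at radius $r$, and only then uses monotonicity of $\Omega$ to conclude $\Omega(r-k)\le\Omega(n-k)$. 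You instead use $P_\Lambda = P_\Lambda P_{b_x^\Lambda(n)}$ directly, obtaining $\|P_\Lambda A^*A P_\Lambda - \omega_\Lambda(A^*A)P_\Lambda\| \le \|P_{b_x^\Lambda(n)}A^*AP_{b_x^\Lambda(n)} - \omega_\Lambda(A^*A)P_{b_x^\Lambda(n)}\|$ by a single algebraic sandwiching, so LTQO is applied only once and the monotonicity of $\Omega$ is never needed. This is a small but genuine tightening of the argument; both routes land on the identical bound.
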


\begin{proof}
	Since $|a-b|^2 \leq |a^2-b^2|$ for any $a,b \geq 0$, first note that
	\begin{align}
	\left| \|AP_{b_x^\Lambda(n)}\| - \|AP_{\Lambda}\|\right|^2 
	\, & \leq  \,
	\left| \|AP_{b_x^\Lambda(n)}\|^2 - \|AP_{\Lambda}\|^2\right| \nonumber \\	
	\, & \leq \,
	\left|\|P_{b_x^\Lambda(n)}A^*AP_{b_x^\Lambda(n)}\| - \omega_\Lambda(A^*A)\right| 
	+ \left|\|P_{\Lambda}A^*AP_{\Lambda}\| - \omega_\Lambda(A^*A)\right|. \label{ApplyLTQOa}
	\end{align}
	The result follows from individually bounding the terms on the RHS of \eqref{ApplyLTQOa}. For any $k \leq n \leq r_x^\Omega(\Lambda)$, \eqref{LTQO_length} holds, and we can estimate the first term of \eqref{ApplyLTQOa} as follows:
	\begin{align}\label{LTQOBd1a}
	\left|\|P_{b_x^\Lambda(n)}A^*AP_{b_x^\Lambda(n)}\| - \omega_\Lambda(A^*A)\right| 
	\, & = \,  
	\left|\|P_{b_x^\Lambda(n)}A^*AP_{b_x^\Lambda(n)}\| - \omega_\Lambda(A^*A)\|P_{b_x^\Lambda(n)}\|\right| \nonumber \\
	\, &\leq \,
	\|P_{b_x^\Lambda(n)}A^*AP_{b_x^\Lambda(n)} - \omega_\Lambda(A^*A)P_{b_x^\Lambda(n)}\| \nonumber \\
	\, &\leq \, |b_x^\Lambda(k)|\|A\|^2\Omega(n-k).
	\end{align}
	For the second term of \eqref{ApplyLTQOa}, using the same argument as above we have
	\[ \left|\|P_{\Lambda}A^*AP_{\Lambda}\| - \omega_\Lambda(A^*A)\right| 
	\; \leq \; 
	\|P_{\Lambda}A^*AP_{\Lambda} - \omega_\Lambda(A^*A)P_{\Lambda}\|. \]
	To simplify notation, let $r=r_x^{\Omega}(\Lambda)$. It follows from the frustration-free property \eqref{ff_prop} that
	$P_{\Lambda} = P_{\Lambda}P_{b_x^\Lambda(r)}  = P_{b_x^\Lambda(r)}P_{\Lambda}$, and therefore
	\begin{align}
	\|P_{\Lambda}A^*AP_{\Lambda} - \omega(A^*A)P_{\Lambda}\|
	\, & \leq \,
	\| P_{b_x^\Lambda(r)}A^*A P_{b_x^\Lambda(r)} - \omega(A^*A) P_{b_x^\Lambda(r)}\| \nonumber \\
	\, & \leq \,
	|b_x(k)|\|A\|^2\Omega(r-k),
	\end{align}
	where we have again applied \eqref{LTQO_length}. Since $\Omega$ is non-increasing and $n \leq r$, the bound in (\ref{LTQO_cor_esta}) readily follows. 
\end{proof}

To illustrate the notion of indistinguishability length we now give  some examples of systems where there is a natural choice for $\Omega$ and for which good estimates of the indistinguishability radii can be given.

\begin{enumerate}
	
	\item LTQO by itself does not imply non-trivial topological order. Clearly, a system with a unique ground state that is not sensitive to boundary effects will have large indistinguishability radii but there will be no topological order of any kind. For example, consider a model with finite volume Hamiltonians $H_\Lambda$ that have a unique ground state given by a product vector $\bigotimes_{x\in\Lambda} \phi_x$, where $ \phi_x$ is independent of $\Lambda$. Then, one can take $\Omega\equiv 0$ and $r_x^\Omega(\Lambda)=\diam(\Lambda)$.
	
	\item Frustration-free spin chains with a unique translation-invariant matrix product ground state (e.g., the AKLT chain \cite{affleck:1988}) give an interesting class of examples that includes interesting cases of symmetry-protected topological order \cite{chen:2013,tasaki:2018,ogata:2019}. As is shown in Appendix \ref{app:MPS}, $\Omega$ can be taken to be of the form $\Omega(r)=Ce^{-r/\xi}$, where $\xi$ can be taken to be the correlation length 
	of the MPS state. 
	The indistinguishability radii depend on the boundary conditions as follows. If $\Lambda=[a,b]\subset\Ir$ is a finite interval with open boundary 
	conditions, one can show
	$$
	r^\Omega_x(\Lambda) \geq \min( |x-a|, |b-x| ) - c,
	$$
	for a suitable constant $c$, which depends on the model but not on $\Lambda$. For the model on a ring of $N$ sites, i.e., $\Lambda=\Ir/(N\Ir)$, with periodic or twisted periodic boundary conditions, we have, with the same $\Omega$,
	$$
	r^\Omega_x(\Lambda) \geq \floor{N/2}.
	$$
	
	\item The Toric Code model, the simplest example of the quantum double models introduced by Kitaev \cite{kitaev:2003,kitaev:2006}, was the system that inspired the original LTQO-type conditions introduced in \cite{bravyi:2010,bravyi:2011}. It can be defined on a square lattice ($\Gamma
	=\Ir^2$), with qubits on each edge ($\H_x=\Cx^2$ for all $x$ in the edge lattice, which is also a square lattice). The interactions are four-body terms associated with elementary squares (plaquettes) and stars (four edges meeting in a site). These interaction terms mutually commute, a situation often describe as a {\em commuting Hamiltonian}.
	
	For this model, one can take $\Omega$ to be the step function of the form
	$$
	\Omega(r)=\begin{cases} 2 & \mbox{ if } r\leq 2\\ 0 & \mbox{ if } r>2\end{cases}
	$$
	Again, precise estimates for the indistinguishability radii depend on the choice of boundary conditions. For the model defined on a torus
	$\Lambda = \Ir^2/(N_1\Ir\times N_2\Ir)$, one can show
	$$
	r^\Omega_x(\Lambda) \geq \min( N_1,N_2 ) - 2.
	$$
	In this case the indistinguishability radius, which does not depend on $x$, is essentially the code distance, meaning, the number of bits one has
	to modify to make an unrecoverable error. The case of general quantum double models was worked out in \cite{cui:2019}. 
	
	\item Levin-Wen models \cite{levin:2005} are another interesting class of two-dimensional models with commuting Hamiltonians (in the sense of the previous example). Their LTQO properties are similar to those of the Toric Code model and have been analyzed in \cite{qiu:2020}.
	
\end{enumerate}

It is easy to see that if a model has two or more ground states that can be distinguished by a local observable $A$, as is the case for the Ising model, the indistinguishability radius $r^\Omega_x(\Lambda)$ will be bounded or even vanish for any choice of $\Omega$ that tends to zero at infinity. In Section \ref{sec:TL_G}  we will consider spectral gap stability for models with discrete symmetry breaking. There we show that by using a symmetry restricted notion of the indistinguishability radius, which only requiries \eq{LTQO_length} for observables that satisfy a symmetry condition, one can also prove stability of the spectral gap in models with multiple (distinguishable) ground states.

%
%
%
%
%
%

\subsubsection{Decay of Interactions, Lieb-Robinson Bounds, and Quasi-locality} \label{sec:Fnorm}

Lieb-Robinson bounds for the Heisenberg dynamics of time (in)dependent interactions will play a key role in these stability results. In this section, we briefly review this topic. We first introduce the framework for time-independent interactions, and then discuss time-dependent interactions. We conclude with a statement of Lieb-Robinson bounds and a brief summary of quasi-local maps.

Lieb-Robinson bounds provide an upper bound for the speed of propagation of dynamically evolved observable through a quantum lattice system. This estimate is closely tied to the locality of the interaction in question, which we quantify using so-called $F$-functions. Given a countable metric space $(\Gamma,\,d)$, an \emph{F-function} $F: [0,\infty) \to (0,\infty)$ is a non-increasing function that satisfies the following two properties:
\begin{enumerate}
	\item[(i)] $F$ is uniformly-integrable, i.e.
	\begin{equation}
	\|F\| = \sup_{x\in\Gamma}\sum_{y\in\Gamma}F(d(x,y))<\infty.
	\end{equation} 
	\item[(ii)] $F$ has a finite convolution coefficient,
	\begin{equation}
	C_F := \sup_{x,\, y\in \Gamma} \sum_{z\in\Gamma} \frac{F(d(x,z))F(d(z,y))}{F(d(x,y))} <\infty.
	\end{equation}
\end{enumerate}
For a $\nu$-regular metric space $(\Gamma, d)$, any function of the form
\begin{equation} \label{poly-dec-F}
F(r) = \frac{1}{(1+r)^\zeta} \quad \text{for} \quad \zeta>\nu+1
\end{equation}
is an $F$-function with $C_F \leq 2^{\zeta+1}\|F\|$. If $\Gamma = \Ir^\nu$, one can take any $\zeta>\nu$. Given an $F$-function $F$ and any non-negative, non-decreasing, sub-additive function $g:[0,\infty) \to [0,\infty)$,  i.e. $g(r+s) \leq g(r)+g(s)$, the function
\begin{equation}
F_g(r) = e^{-g(r)}F(r)
\end{equation}
is also an $F$-function with $\|F_g\| \leq \|F\|$ and $C_{F_g}\leq C_F$. We refer to such functions as \emph{weighted $F$-functions}. The special case of
\be\label{stretch_exp_Ffun}
F(r) = \frac{e^{-ar^\theta}}{(1+r)^{\zeta}}, \;\;\text{with}\;\; a>0, \;\; 0<\theta\leq 1,\;\;\text{and}\;\; \zeta>\nu+1
\ee
are a particularly useful class of $F$-functions that will be frequently referenced in this work. 


We use $F$-functions to define decay classes of interactions in terms of $F$-norms. Given an $F$-function $F$ and an interaction $\Phi: \cP_0(\Gamma) \to \cA_{\Gamma}^{\rm loc}$, we say that $\Phi\in \cB_F$ if its \emph{$F$-norm} is finite, i.e.
\begin{equation}\label{ti_Fnorm}
\|\Phi\|_{F} : = \sup_{x,\, y\in \Gamma} \frac{1}{F(d(x,y))} 
\sum_{\substack{X\in \cP_0(\Gamma) \\ x, \, y\in X}} \|\Phi(X)\| < \infty.
\end{equation}
For example, if $\Phi$ is a uniformly bounded, finite range interaction on a $\nu$-regular metric space, one can easily check $\|\Phi\|_F<\infty$ for any exponentially decaying $F$-function $F(r) = e^{-ar}(1+r)^{-\zeta}$ with $\zeta>\nu+1$ and $a>0$.

From \eqref{ti_Fnorm}, it is useful to observe that for any $x,\, y\in \Gamma$,
\begin{equation}\label{decay_nb}
\sum_{\substack{X\in \cP_0(\Gamma) \\ x,\, y\in X}} \|\Phi(X)\| \leq \|\Phi\|_F F(d(x,y)),
\end{equation}
and in particular, $\|\Phi(X)\| \leq \|\Phi\|_F F(\diam(X))$ for any $X\in\cP_0(\Gamma)$ and $\Phi\in\cB_F$. 

In certain contexts, it becomes natural to consider the decay of an interaction term $\Phi(X)$ weighted against the size of its support, $|X|$. In this situation, the \emph{$m$-th moment $F$-norm} of the interaction is relevant. Given an integer $m\geq 0$ this is defined to be
\be\label{mom_Fnorm}
\|\Phi\|_{m,F} = \sup_{x,\, y\in \Gamma} \frac{1}{F(d(x,y))} 
\sum_{\substack{X\in \cP_0(\Gamma) \\ x, \, y\in X}} |X|^m\|\Phi(X)\|,
\ee
and we write $\Phi\in\cB_F^m$ when $\|\Phi\|_{m,F}<\infty$. With this notation, it is clear that $\|\Phi\|_{0,F} = \|\Phi\|_F$.

In our analysis, we also need to consider decay classes of continuous time-dependent interactions. Given an interval $I\subseteq \bR$ (possibly infinite),  we consider \emph{time-dependent interactions} $\Phi: \cP_0(\Gamma) \times I \to \cA_{\Gamma}^{\rm loc}$ for which
\begin{enumerate}
	\item[(i)] $\Phi(X, \, t)^* = \Phi(X,t) \in \cA_X$ for each $t\in I$ and $X\in\cP_0(\Gamma)$.
	\item[(ii)]$\Phi(X,t)$ is continuous in $t$ for all $X\in\cP_0(\Gamma)$. 
\end{enumerate}
We note that there is no ambiguity in the notion of continuity above (i.e. weak, strong, norm) since $\dim(\cH_X)<\infty$ for every $X\in\cP_0(\Gamma)$. The local Hamiltonians for a time-dependent interaction are defined analogously to the time-independent case, specifically
\begin{equation}
H_\Lambda(t) = \sum_{X\subseteq \Lambda} \Phi(X,t),\;\; \text{for all} \;\; \Lambda \in \cP_0(\Gamma)\;\; \text{and} \;\;t\in I.
\end{equation}
Furthermore, given an $F$-function $F$, we say that a time-dependent interaction $\Phi$ belongs to $\cB_F(I)$ if 
\begin{equation} \label{td_Fnorm}
\|\Phi\|_F(t) := \sup_{x,\, y\in \Gamma} \frac{1}{F(d(x,y))} \sum_{\substack{X\in \cP_0(\Gamma) \\ x, \, y\in X}} \|\Phi(X,t)\| < \infty
\end{equation}
and is locally bounded as a function of $t$.  In this case, $t \to \|\Phi\|_F (t)$ is measurable (as it is the supremum of a countable family of measurable functions) and hence locally integrable. As in the time-independent case, \eqref{td_Fnorm} implies that for all $t \in I$ and $x, y \in \Gamma$,
\begin{equation}
\sum_{\substack{X\in \cP_0(\Gamma) \\ x,\, y\in X}} \|\Phi(X,t)\| \leq \|\Phi\|_F(t) F(d(x,y)).
\end{equation}
The $m$-th moment $F$-norm, $\|\Phi\|_{m,F}(t)$, and decay class $\cB_F^m(I)$ for a time-dependent interaction are defined analogously, i.e. by substituting $\|\Phi(X,t)\|$ for $\|\Phi(X)\|$ in \eqref{mom_Fnorm}.

One can use the $F$-norm of an interaction to bound the speed of propagation of observables evolved under the Heisenberg dynamics. Given a finite volume $\Lambda\in \cP_0(\Gamma)$ and any $t,s\in I$, the Heisenberg dynamics $\tau_{t,s}^\Lambda: \cA_\Lambda \to \cA_\Lambda$ associated with a time-dependent interaction $\Phi$ is given by
\begin{equation} \label{fv_Heis_dyn}
\tau_{t,s}^\Lambda(A) = U_\Lambda(t,s)^* A U_\Lambda(t,s)
\end{equation}
where $U_\Lambda(t,s)$ is the solution to
\begin{equation}\label{td_diffeqn}
\frac{d}{dt} U_\Lambda(t,s) = -iH_\Lambda(t) U_\Lambda(t,s),\;\; U_\Lambda(s,s) = \idtyty.
\end{equation}
In the case of a time-independent interaction, the Heisenberg dynamics is defined analogously by replacing $H_\Lambda(t)$ with $H_\Lambda$ in \eqref{td_diffeqn} above.

In the situation that $\Phi\in \cB_F(I)$ for an $F$-function $F$ on $(\Gamma,\, d)$, one can prove the following well-known quasi-locality estimate on $\tau_{t,s}^\Lambda$. A proof of this result can be found, e.g. in \cite{nachtergaele:2019}.

\begin{thm}[Lieb-Robinson Bounds]\label{thm:LRBounds}
	Let $\Phi \in \cB_F(I)$. Then for any $\Lambda \in \cP_0(\Gamma)$ and $t,s\in I$, the Heisenberg dynamics $\tau_{t,s}^\Lambda$ satisfies the following bound: for any $A\in \cA_X$ and $B\in \cA_Y$ with $X\cap Y = \emptyset$ and $X\cup Y \subseteq \Lambda$, 
	\begin{equation}\label{LR_bound}
	\|[\tau_{t,s}^\Lambda(A), \, B]\| \leq \frac{2\|A\|\|B\|}{C_F}\left( e^{2C_F\int_{t_-}^{t_+}\|\Phi\|_F(r)dr}-1\right)\sum_{x\in X}\sum_{y\in Y} F(d(x,y)).
	\end{equation}
	where $t_- = \min\{t,s\}$ and $t_+ = \max\{t,s\}$.
\end{thm}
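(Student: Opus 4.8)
The plan is to prove the Lieb-Robinson bound \eqref{LR_bound} by the now-standard interpolation/differential-inequality argument, adapted to the time-dependent setting. First I would fix $A \in \cA_X$ and $B \in \cA_Y$ with $X \cap Y = \emptyset$, $X \cup Y \subseteq \Lambda$, and introduce the two-parameter family $f(t) := [\tau_{t,s}^\Lambda(A), B]$. Differentiating in $t$ and using \eqref{td_diffeqn}, one finds
\begin{equation*}
\frac{d}{dt} f(t) = i[\tau_{t,s}^\Lambda([H_\Lambda(t), A]), B] = i[H_\Lambda^{(1)}(t) + H_\Lambda^{(2)}(t), f(t)] + i[\tau_{t,s}^\Lambda([H_\Lambda^{(2)}(t), A]), B],
\end{equation*}
where $H_\Lambda^{(1)}(t)$ collects the interaction terms $\Phi(Z,t)$ with $Z \cap X = \emptyset$ and $H_\Lambda^{(2)}(t)$ those with $Z \cap X \neq \emptyset$; the point is that $\tau_{t,s}^\Lambda([H_\Lambda^{(1)}(t), A])$ commutes with $f$ only modulo the support structure, so more carefully one writes $[H_\Lambda(t), A] = [H_\Lambda^{(2)}(t), A]$ since $A \in \cA_X$. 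Conjugating $f$ by the unitary cocycle generated by $H_\Lambda^{(2)}(t)$ removes the commutator term, leaving a quantity whose $t$-derivative has norm controlled purely by the inhomogeneous piece.

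The key estimate is then the integral inequality obtained by integrating this derivative bound: writing $B(Y) := \|B\|$ initially and more generally tracking the dependence on the support $Y$, one gets (for $t > s$, say)
\begin{equation*}
\|[\tau_{t,s}^\Lambda(A), B]\| \leq \|[A,B]\| + 2\|A\| \sum_{Z : Z \cap X \neq \emptyset} \int_s^t \|\Phi(Z,r)\| \, \|[\tau_{r,s}^\Lambda(\idty_Z), B]\|_{\text{sup over } \cA_Z} \, dr,
\end{equation*}
where the supremum is over unit-norm observables in $\cA_Z$. Iterating this inequality (Picard/Dyson series), the $N$-th term is a sum over chains $X =: Z_0, Z_1, \dots, Z_N$ with consecutive overlaps and $Z_N \cap Y \neq \emptyset$, weighted by $\prod \|\Phi(Z_j, r_j)\|$ and a time-ordered simplex integral $\int_s^t \cdots$ whose volume is $\frac{1}{N!}\left(\int_s^t \|\Phi\|_F(r)\,dr\right)^N$ after bounding. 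Summing the spatial chains against $F$ using the convolution bound $C_F$ repeatedly — each link contributes a factor $C_F$ and consumes one $\|\Phi\|_F$ factor — telescopes the geometric sum $\sum_{N \geq 1} \frac{(2C_F \int \|\Phi\|_F)^N}{N!} = e^{2C_F \int \|\Phi\|_F} - 1$, with the final $\sum_{x \in X}\sum_{y \in Y} F(d(x,y))$ surviving from the endpoints. The factor $\|[A,B]\| = 0$ since $X \cap Y = \emptyset$.

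The main obstacle I anticipate is the bookkeeping in the iteration step: one must set up the recursion so that at each stage the free observable on the growing support has its operator norm (not just $\|A\|$) pulled out cleanly, the new interaction term $\|\Phi(Z_j,r_j)\|$ is correctly accounted against one power of $\|\Phi\|_F(r_j)$ via \eqref{td_Fnorm}, and the geometric decay of $F$ is not double-counted — this is exactly where the definition of $C_F$ as a convolution coefficient is used, and where a naive approach loses the $1/N!$ or picks up spurious combinatorial factors. The time-dependence adds the wrinkle that $\|\Phi\|_F(r)$ varies with $r$, so the simplex integral must be estimated as $\int_{s \leq r_1 \leq \cdots \leq r_N \leq t} \prod_j \|\Phi\|_F(r_j)\, dr_j \leq \frac{1}{N!}\left(\int_s^t \|\Phi\|_F(r)\,dr\right)^N$ by symmetrization, which requires only local integrability of $r \mapsto \|\Phi\|_F(r)$ — precisely what membership in $\cB_F(I)$ guarantees. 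Since this is a well-known result for which a detailed proof is available in \cite{nachtergaele:2019}, I would likely present the argument in compressed form, emphasizing the cocycle conjugation that handles the commutator term and the convolution-coefficient summation that yields $C_F$, and refer to the literature for the full combinatorial details.
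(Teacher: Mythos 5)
The paper does not prove Theorem~\ref{thm:LRBounds}; it states the bound and defers to \cite{nachtergaele:2019}. Your sketch is the standard Lieb-Robinson iteration argument used there (and going back to Lieb--Robinson and Nachtergaele--Sims), so the approach is precisely the one the paper relies on: the recursive integral inequality, the repeated use of the convolution coefficient $C_F$ to sum spatial chains, and the time-ordered simplex bound giving the $1/N!$ are the right ingredients, and the compressed presentation with a pointer to the literature is consistent with what the paper itself does.

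One step, however, is garbled and would not hold up as written. Your first displayed equation asserts $\tfrac{d}{dt}f(t) = i[H_\Lambda^{(1)}(t)+H_\Lambda^{(2)}(t),\,f(t)] + i[\tau_{t,s}^\Lambda([H_\Lambda^{(2)}(t),A]),\,B]$, but since $[H_\Lambda(t),A]=[H_\Lambda^{(2)}(t),A]$ the second term on the right already equals $\tfrac{d}{dt}f(t)$, forcing the first term to vanish, which it does not. The correct decomposition comes from the Jacobi identity applied to $[[\tau_{t,s}^\Lambda(H_\Lambda^{(2)}(t)),\tau_{t,s}^\Lambda(A)],B]$, giving $\tfrac{d}{dt}f(t)=i[\tau_{t,s}^\Lambda(H_\Lambda^{(2)}(t)),\,f(t)] - i[\tau_{t,s}^\Lambda(A),\,[\tau_{t,s}^\Lambda(H_\Lambda^{(2)}(t)),B]]$. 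Two points worth fixing: the commutator term you conjugate away involves the \emph{evolved} observable $\tau_{t,s}^\Lambda(H_\Lambda^{(2)}(t))$, not $H_\Lambda^{(1)}(t)+H_\Lambda^{(2)}(t)$ (in the time-dependent setting $\tau_{t,s}^\Lambda(H_\Lambda(t))\neq H_\Lambda(t)$, so these are genuinely different operators), and correspondingly the unitary cocycle you conjugate by is the one generated by $\tau_{t,s}^\Lambda(H_\Lambda^{(2)}(t))$, not by $H_\Lambda^{(2)}(t)$ itself. With this correction the remainder term produces exactly the integral inequality you then write down, so the rest of the iteration goes through as you describe.
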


In the case that $\Phi$ is time-independent, Theorem~\ref{thm:LRBounds} holds with $|t-s|\|\Phi\|_F$ replacing the integral on the RHS of \eqref{LR_bound}. For a weighted $F$-function $F_g(r) = e^{-g(r)}F(r)$ and $\Phi\in\cB_{F_g}(I)$ such that $v_\Phi := 2C_{F_g}\sup_{t\in I}\|\Phi\|_{F_g}(t)<\infty$, one can use the uniform integrability of $F$ to show
\begin{equation}\label{velocity_bd}
\|[\tau_{t,s}^\Lambda(A), \, B]\| \leq \frac{2\|A\|\|B\|}{C_{F_g}}|X|\|F\|e^{v_\Phi|t-s|-g(d(X,Y))},
\end{equation}
which is decreasing in the distance between $X$ and $Y$. Here, the quantity $v_\Phi$ is known as the \emph{Lieb-Robinson} velocity, or, more appropriately, a bound on it. It is important to note that the norm bounds in both \eqref{LR_bound} and \eqref{velocity_bd} are independent of the volume $\Lambda$. The uniformity of these bounds in the system size plays a key role in our analysis.

We will also consider other maps with Lieb-Robinson type estimates, which we refer to as quasi-local maps. For any $\Lambda\in\cP_0(\Gamma)$, we say that a map $\cK_\Lambda : \cA_\Lambda \to \cA_\Lambda$ is \emph{quasi-local} if there is an integer $q\geq 0$, and non-increasing function $G:[0,\infty) \to (0,\infty)$, with $\lim_{x\to\infty} G(x) =0$, so that the norm bound
\begin{equation}\label{ql_map}
\|[\cK_\Lambda(A),B]\| \leq \|A\|\|B\||X|^q G(d(X,Y)),
\end{equation}
holds for any $A\in\cA_X$ and $B\in\cA_Y$ with $X,\, Y\subset \Lambda$. As in the case of the Heisenberg dynamics, it is often the case that there is a family of quasi-local maps $\{\cK_\Lambda \, : \, \Lambda \in \cP_0(\Gamma)\}$ for which $q$ and $G$ can be chosen uniform in $\Lambda$, a key property in applications. An example of such a map is the \emph{spectral flow} which we introduce in \ref{sec:spec_flow}. For a detailed, general analysis of quasi-local maps, see \cite[Section~5]{nachtergaele:2019}.

\subsection{Stability of the Spectral Gap} \label{sec:stab+sf}

The main focus of this work is spectral gap stability of a quantum spin model under local perturbations, for which the spectrum of the local Hamiltonian is the set of all eigenvalues for any fixed finite volume $\Lambda\in\cP_0(\Gamma)$. We will consider Hamiltonians that depend differentiably on a parameter $s$, and as such the eigenvalues can be written as continuous functions of the parameter. Without loss of generality, we assume the parameter
range is $s\in[0,1]$. Given such a differentiable family of Hamiltonians, $H_\Lambda(s)$, we consider a partition of the spectrum of $H_\Lambda(s)$ into two disjoint sets $\Sigma_1^\Lambda(s)$ and $ \Sigma_2^\Lambda(s)$ of the form 
\be \label{spec_partition}
\Sigma_1^\Lambda (s)= \spec(H_\Lambda(s))\cap I(s), \quad \Sigma_2^\Lambda(s) = \spec(H_\Lambda(s))\setminus \Sigma_1^\Lambda(s),
\ee
where $I(s)\subset \Rl$ is a closed interval with end points that depend smoothly on $s$.

As mentioned above, it is well known from perturbation theory (see \cite[Section 2.1]{kato:1995}) that the 
eigenvalues of the Hermitian matrix $H_\Lambda(s)$ are given by a family of continuous functions $\{\lambda^\Lambda_i(\cdot)\mid i=1,\ldots, \dim(\H_\Lambda)\}$ for which $\lambda^\Lambda_1(s)\leq \lambda^\Lambda_2(s) \leq \cdots$, for all $s\in [0,1]$. We are mainly interested the behavior of the gap above the ground state in $\spec(H_\Lambda(s))$, which we define as follows. Choosing
the partition of the form \eq{spec_partition} given by
\be\label{spec_sets}
\begin{array}{ll}
	\Sigma_1^\Lambda(s) &= \left\{\lambda^\Lambda_i(s)\in \spec(H_\Lambda(s))  : \lambda^\Lambda_i(0) = \lambda^\Lambda_1(0) \right\}\\
	\Sigma_2^\Lambda(s) &= \left\{\lambda^\Lambda_i(s) \in \spec(H_\Lambda(s)) : \lambda^\Lambda_i(0) > \lambda^\Lambda_1(0)  \right\},
\end{array}
\ee
we define the {\em ground state gap}, $\gap(H_\Lambda(s)) $ by:
\begin{equation} \label{fv_pert_gap}
\gap(H_\Lambda(s)) = \text{dist}(\Sigma_1^\Lambda(s), \, \Sigma_2^\Lambda(s)),
\end{equation}
where $\text{dist}(X, Y)  = \inf\{|x-y| \, : \, x\in X, \, y\in Y \}$ for any two non-empty sets $X, \, Y\subset \bR$. In the cases of particular interest to us, $H_\Lambda(0)$ is a finite-volume Hamiltonian of a frustration free interaction, which by definition has $\lambda_1(0) =\inf\spec H_\Lambda(0)=0$ and hence $\Sigma_1(0) = \{0\}$. From the continuity of the eigenvalues, it is clear that for any fixed $\Lambda$ and $0 < \gamma < \gap(H_\Lambda)$, 
\begin{equation} \label{def:s_lam_gam}
s^\Lambda_\gamma := \sup\{s'\in [0,1] \, : \, \gap(H_\Lambda(s))\geq \gamma \; \text{  for all } \; 0\leq s \leq s' \}>0.
\end{equation}
Our goal will be to obtain a useful lower bound for $s^\Lambda_\gamma$. Without loss of generality we may assume that $s^\Lambda_\gamma<1$
(as $s^\Lambda_\gamma=1$ is a useful lower bound). A visualization of $s^\Lambda_\gamma$ is given in Figure \ref{fig:eigsplitgap}. 

A sequence of finite volumes $\Lambda_n\subset \Gamma$ is said to be {\em absorbing} (for $\Gamma$), denoted $\Lambda_n \to \Gamma$, if for all $x\in \Gamma$, there exists an $n$ such that $x\in\Lambda_m$ for all $m\geq n$. We often also assume the sequence of finite volumes is 
{\em increasing}, i.e., $\Lambda_n\subset \Lambda_{n+1}$ for all $n$, and denote by $\Lambda_n\uparrow\Gamma$ a sequence of increasing and absorbing volumes.

We say that a frustration-free interaction is \emph{gapped}, if there exists a sequence of finite volumes $\Lambda_n \uparrow \Gamma$ 
such that
\begin{equation}\label{unif_gap}
\gamma_0 = \inf_{n\geq 1}\text{gap}(H_{\Lambda_n}) >0.
\end{equation}
In the situation that there is a sequence of uniformly gapped unperturbed Hamiltonians in the sense of \eqref{unif_gap}, we say that the spectral gap is \emph{stable} if for any $0 < \gamma < \gamma_0$
\begin{equation}\label{stable_gap}
s_\gamma := \inf_{n\geq 1} s^{\Lambda_n}_\gamma > 0,
\end{equation}
that is, if there is $s_\gamma>0$ such that $\inf_{n}\gap(H_{\Lambda_n}(s)) \geq \gamma$ for all $0\leq s \leq s_\gamma$.

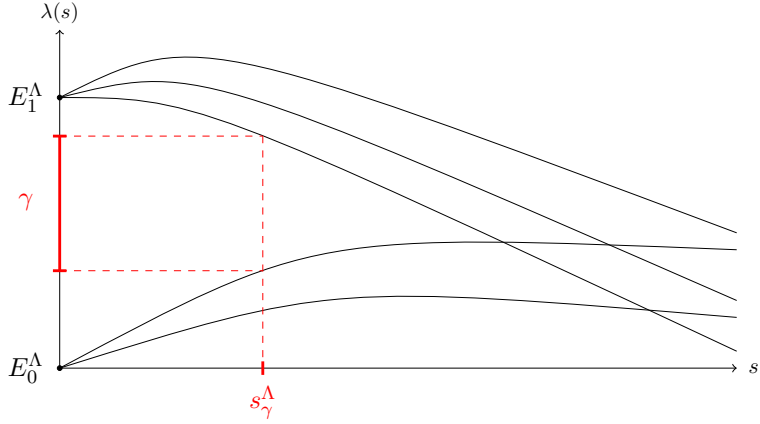
\begin{figure} \label{fig:eigsplitgap}
	\scalebox{.9}{
		\begin{tikzpicture}
		\draw[->] (-5,0) -- (5,0) node at (5.25,0) {\scalebox{.85}{$s$} };
		\draw[->] (-5,0) -- (-5,5) node at (-5,5.25) {\scalebox{.8}{$\lambda(s)$}};
		\draw [fill=black] (-5,0) circle (1pt);
		\draw [domain=-5:5] (-5,0) .. controls (-1.25,2) ..(5,1.75);
		\draw [domain=-5:5] (-5,0) .. controls (-1,1.25) ..(5,.75);
		\draw [fill=black](-5,4) circle (1pt);
		\draw  (-5,4) .. controls (-3,5) .. (5,2);
		\draw  (-5,4) .. controls (-3,4.5) .. (5,1);
		\draw  (-5,4) .. controls (-3,4) .. (5,.25);
		\node at (-5.5,0) {$ E_0^{\Lambda}$};
		\node at (-5.5,4) {$ E_1^{\Lambda}$};
		\draw[-, very thick, red] (-5,1.44) -- (-5,3.43);
		\node[red] at (-5.5,2.45) {$ \gamma$};
		\draw[red, dashed] (-5,1.44) -- (-2,1.44);
		\draw[red, dashed] (-5,3.43) -- (-2,3.43);
		\draw[red, very thick](-5.1,1.44) --(-4.9,1.44);
		\draw[red, very thick](-5.1,3.43) --(-4.9,3.43);
		\draw[red, dashed] (-2,3.43) -- (-2,0);
		\draw[red, very thick](-2,.1) --(-2,-.1);
		\node[red] at (-2,-.5) {$ s^{\Lambda}_\gamma$};
		\end{tikzpicture}}
	\caption{The spectral gap of $H_\Lambda(s)$ above the ground state energy (allowing for eigenvalue splitting) is at least $\gamma$ for all $0\leq s\leq s^{\Lambda}_\gamma$.}
\end{figure}

We analyze the stability of the ground state gap in the presence of small, local perturbations. Given two time-independent interactions $\eta, \Phi : \cP_0(\Gamma) \to \cA_{\Gamma}^{\rm loc}$, we consider local Hamiltonians of the form
\be\label{pert_hams}
H_\Lambda(s) = H_\Lambda + s V_{\Lambda^p},
\ee
where
$$
H_\Lambda = \sum_{X\subseteq\Lambda} \eta(X), \quad V_{\Lambda^p} 
= \sum_{\substack{X\subseteq \Lambda \\ X\cap \Lambda^p \neq \emptyset}} \Phi(X). 
$$
Here, $\eta$ is the background (or initial) interaction, $\Phi$ is the perturbation, and $\Lambda^p \subseteq \Lambda$ is the \emph{perturbation region}. Any subset of $\Lambda$ may be chosen as the perturbation region. In our application, the perturbation region will consist of all points $x\in\Lambda$ with a sufficient indistinguishability radius. This will be described in more detail in Section~\ref{sec:LTQO2}. The most traditional choice of perturbation region is $\Lambda^p = \Lambda$, for which $V_\Lambda\in\cA_\Lambda$ is a local Hamiltonian of the form defined in \eqref{ti_ham}.
In Section~\ref{sec:uniform_sequences}, we provide sufficient conditions on the unperturbed interaction $\eta$ so that for any $\Phi \in \cB_F$ with $F$ as in \eqref{stretch_exp_Ffun}, there are two sequences of finite volumes $\Lambda_n^p\subseteq\Lambda_n$ with $\Lambda_n^p\uparrow \Gamma$ so that \eqref{stable_gap} is satisfied.

\subsection{The Spectral Flow}\label{sec:spec_flow}
Much like the Heisenberg dynamics, see (\ref{fv_Heis_dyn}), the spectral flow is a family of automorphisms of the observable algebra $\mathcal{A}_{\Lambda}$. Consider the family of
Hamiltonians $\{ H_{\Lambda}(s) \}_{s \in [0,1]}$  given by (\ref{pert_hams}). For each $0 \leq s \leq 1$ and all $t\in\bR$, denote by $\tau_t^{(s)}:\cA_\Lambda \to \cA_\Lambda$ the Heisenberg dynamics associated with $H_\Lambda(s)$, i.e.
\begin{equation} \label{stab_heis_dyn}
\tau_t^{(s)}(A) := e^{itH_{\Lambda}(s)} A e^{-itH_{\Lambda}(s)}.
\end{equation}
To define the spectral flow, we first introduce its generator. For any $\xi >0$, define $D^\xi: [0, 1] \to \mathcal{A}_{\Lambda}$ by
\begin{equation} \label{gen_spec_flow}
D^\xi(s) = \int_{\mathbb{R}} \tau_t^{(s)}( V_{\Lambda^p}) \, W_{\xi}(t) \, dt \quad \mbox{for all } 0 \leq s \leq 1,
\end{equation}
where $V_{\Lambda^p}$ is the perturbation, see (\ref{pert_hams}), and $W_{\xi} \in L^1( \mathbb{R})$ is 
the real-valued weight function defined e.g. in \cite{nachtergaele:2019}[Section VI.B]. It is straight-forward to check that for any $\xi>0$, $D^\xi$ is pointwise self-adjoint (i.e. $D^\xi(s)^* = D^\xi(s)$ for all $s \in [0,1]$) and continuous in $s$. As such, there is a unique family of unitaries given by the solutions of
\begin{equation} \label{spec-flow-unis}
\frac{d}{ds} U^\xi(s) = - i D^\xi(s) U^\xi(s) \quad \mbox{with } U^\xi(0) = \idty.
\end{equation}
In terms of these unitaries, a family of automorphisms of $\mathcal{A}_{\Lambda}$ is defined by setting 
\begin{equation} \label{spec-flow-auto}
\alpha_s^\xi(A) = U^\xi(s)^* A U^\xi(s) \quad \mbox{for all } A \in \mathcal{A}_{\Lambda} \mbox{ and } 0 \leq s \leq 1 \, .
\end{equation} 
Given a choice of $\xi$, we refer to the family of automorphisms $\{ \alpha_s^\xi \}_{s \in [0,1]}$ as the 
\emph{spectral flow automorphisms}. This is due to the following property that these automorphisms satisfy:
Let $\spec(H_\Lambda(s))= \Sigma_1^\Lambda(s)\cup \Sigma_2^\Lambda(s)$ be as in \eqref{spec_sets} 
and for all $0\leq s \leq 1$ denote by $P(s)$ the spectral projection associated to 
$H_{\Lambda}(s)$ onto $\Sigma_1^{\Lambda}(s)$. Fix $0< \gamma < \gap(H_\Lambda(0))$. It is proven, e.g. in \cite[Theorem~6.3]{nachtergaele:2019}, that for any $0< \xi \leq \gamma$ the spectral flow automorphisms satisfy
\begin{equation} \label{spec_flow_proj}
\alpha_s^{\xi}(P(s)) = P(0) \quad \mbox{for all } 0 \leq s \leq s^{\Lambda}_\gamma.
\end{equation} 
As such, properties of the ground state projections of $H_\Lambda(0)$ extend to the spectral projection of $\alpha_s(H_\Lambda(s))$ associated with $\Sigma_1(s)$.

To ease notation, we will work with the fixed family of spectral flows determined by the choice $\xi = \gamma$, and denote it simply by $\{ \alpha_s \}_{s \in [0,1]}$. An important point, to which we will return in Section~\ref{sec:Step1}, is that if the family of Hamiltonians $\{ H_{\Lambda}(s) \}_{s \in [0,1]}$ given by (\ref{pert_hams}) has sufficiently decay, (e.g. $\Phi \in \mathcal{B}_F$ with $F$ as in \eqref{stretch_exp_Ffun}), then these spectral flow automorphisms satisfy an explicit quasi-locality estimate of the form (\ref{ql_map}). At the heart of the spectral stability argument are three crucial properties of the spectral flow.

\begin{claim}[Properties of the Spectral Flow]
	The spectral flow has the following properties:
	\begin{enumerate}
		\item The spectral flow is a family of automorphisms implemented by unitaries, see (\ref{spec-flow-unis}). As such, $\spec(\alpha_s(H))=\spec(H)$ for any Hamiltonian $H\in\cA_\Lambda$ and so one may analyze $\alpha_s(H)$ to establish spectral gap estimates for $H$.
		\item The spectral flow maps the spectral projection corresponding to the perturbed system
		back to the spectral projection corresponding to the unperturbed systems, see (\ref{spec_flow_proj}).
		\item Given sufficient decay of the perturbation $\Phi$, the spectral flow $\alpha_s$ from \eqref{spec-flow-auto} is quasi-local.
	\end{enumerate}
\end{claim}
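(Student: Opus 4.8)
The three assertions are standard structural properties of the Hastings spectral flow, so the plan is to dispatch each in turn, reducing to results already recalled in the excerpt and to \cite{nachtergaele:2019,bachmann:2012}, and to defer the quantitative form of the last item to Section~\ref{sec:Step1} and Appendix~\ref{sec:est-trans-balled-ints}.

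For (i), I would first observe that $D^\xi$ from \eqref{gen_spec_flow} is pointwise self-adjoint, since $W_\xi$ is real-valued and $\tau_t^{(s)}$ preserves self-adjointness, and norm-continuous in $s$, by continuity of $s\mapsto\tau_t^{(s)}(V_{\Lambda^p})$, the fact that $W_\xi\in L^1(\bR)$, and dominated convergence. On the finite-dimensional algebra $\cA_\Lambda$ the initial value problem \eqref{spec-flow-unis} then has a unique solution; differentiating $U^\xi(s)^*U^\xi(s)$ and $U^\xi(s)U^\xi(s)^*$ and using $D^\xi(s)^*=D^\xi(s)$ shows both are constant, hence $U^\xi(s)$ is unitary, so $\alpha_s^\xi$ from \eqref{spec-flow-auto} is a $*$-automorphism. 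Since unitary conjugation is a similarity transformation, $\spec(\alpha_s^\xi(H))=\spec(H)$ for all $H\in\cA_\Lambda$, which is (i).

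Assertion (ii) is the intertwining relation \eqref{spec_flow_proj}, which I would take from \cite[Theorem~6.3]{nachtergaele:2019}; for completeness I would recall the mechanism. The weight $W_\xi$ is chosen so that its Fourier transform equals $-1/\omega$ on $\{|\omega|\ge\xi\}$. For $0\le s\le s^\Lambda_\gamma$ the spectral patches $\Sigma_1^\Lambda(s)$ and $\Sigma_2^\Lambda(s)$ are separated by at least $\gamma\ge\xi$, so $D^\xi(s)$ solves the commutator equation whose right-hand side is the ``off-diagonal'' part (with respect to $P(s)$ and $\idty-P(s)$) of $\partial_s H_\Lambda(s)=V_{\Lambda^p}$. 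Differentiating $s\mapsto U^\xi(s)^*P(s)U^\xi(s)$ and using this identity together with the first-order formula for $\partial_sP(s)$, one finds the derivative vanishes on $[0,s^\Lambda_\gamma]$; since $U^\xi(0)=\idty$ this gives $\alpha_s^\xi(P(s))=P(0)$.

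For (iii) the plan has two steps. First, bound $D^\xi$ in a weighted $F$-norm: from $\tau_t^{(s)}(\Phi(X))-\Phi(X)=i\int_0^t[H_\Lambda(s),\tau_r^{(s)}(\Phi(X))]\,dr$ and the Lieb--Robinson bound of Theorem~\ref{thm:LRBounds} for $\tau_t^{(s)}$, the component of $\tau_t^{(s)}(\Phi(X))$ at distance $r$ from $X$ is controlled by $\|\Phi\|_F F(r)$ times a factor growing like $e^{v_\Phi|t|}$; integrating against $W_\xi$, whose tails decay faster than any polynomial and essentially sub-exponentially, yields $D^\xi\in\cB_{\tilde F}$ for a weighted $F$-function $\tilde F(r)=e^{-g(r)}F(r)$ with $g$ inherited from $W_\xi$ and a norm bound depending on $\|\Phi\|_F$, $\gamma$, $\nu$, $\kappa$, $C_F$ but not on $\Lambda$. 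Second, since $\alpha_s^\xi$ is, by \eqref{spec-flow-unis}--\eqref{spec-flow-auto}, the flow generated by the time-dependent interaction defining $D^\xi(s)$, a further Lieb--Robinson-type estimate --- now for the flow $\alpha_s^\xi$ itself, applied as in Theorem~\ref{thm:LRBounds} --- produces a bound of the form \eqref{ql_map} with $q$ and $G$ uniform in $\Lambda$. I expect the only genuine work to be in this last item: checking that the decay of $W_\xi$ beats the Lieb--Robinson exponential growth so that $D^\xi$ lands in a decay class strong enough to make $\alpha_s^\xi$ quasi-local with a bound uniform in the finite volume, and tracking the resulting (at best stretched-exponential) decay function $G$. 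Since those quantitative estimates are carried out in Section~\ref{sec:Step1} and Appendix~\ref{sec:est-trans-balled-ints}, here I would state (iii) at the qualitative level --- ``the generator is quasi-local, hence so is the flow, with constants independent of $\Lambda$'' --- and refer forward.
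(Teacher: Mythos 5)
Your proposal is correct and matches the paper's approach; in fact the paper itself offers no proof of this Claim beyond a one-sentence pointer to \cite{nachtergaele:2019} (Section 6), and your sketch is an accurate summary of what that reference establishes — the algebraic/ODE argument for unitarity in (i), the Fourier-support property of $w_\gamma$ and the constancy of $s\mapsto U^\xi(s)^*P(s)U^\xi(s)$ for (ii), and the two-step Lieb–Robinson estimate (first for the generator $D^\xi$, then for the flow it generates) for (iii), with the quantitative decay tracking rightly deferred to Section~\ref{sec:Step1} and Appendix~\ref{sec:est-trans-balled-ints}.
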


\noindent The properties described in this claim, as well as others, were proven in detail in Section 6 of \cite{nachtergaele:2019}. 

\subsection{Anchored Interactions}  \label{sec:balled-int}

\subsubsection{Anchored Interactions} \label{sec:def-ball-int} 

For combinatorial reasons, it becomes cumbersome to work with interactions and perturbations as they are defined in Section~\ref{sec:QSS}. For this reason, we find it convenient to work with \emph{anchored interactions}. We define this notion here, and provide one method for transforming an interaction into an anchored interaction and vice-versa. The vital property of the anchoring procedure we introduce is that it preserves the decay properties of the original interaction, assuming it decays sufficiently fast. The procedure we discuss also has the convenient property that it preserves the local ground state spaces for balls. Anchored interactions can be time-dependent in complete analogy with the definitions and results for time-dependent interactions defined on arbitrary finite subsets.

\begin{defn} \label{def:ball-int}  Given a countable metric space $(\Gamma,d)$, an algebra of local observables $\cA_\Gamma^{\rm loc}$, and a subset $\Lambda \subseteq \Gamma$, we say that mapping $\Phi:\Lambda \times \bZ_{\geq 0}\to \cA_{\Lambda}^{\rm loc}$ is a \emph{anchored interaction on $\Lambda$} if $\Phi(x,n) = \Phi(x,n)^*\in \cA_{b_x^\Lambda(n)}$ for all $(x,n)\in \Lambda \times \bZ_{\geq 0}$.
\end{defn}

For fixed $x\in\Lambda$, we often use $\Vert \Phi(x,n)\Vert$, as a function of $n$, to express the decay of the interaction strength with distance. In that situation it is natural to require the following property:
\be\label{support_property}
\Phi(x,n)\neq 0 \implies  \mbox{ there is a pair of sites } y,z\in b_x^{\Lambda}(n) \mbox{ with } d(y,z) > n-1.
\ee
We will show that for a given Hamiltonian
we can always find an anchored interaction with this property.
Note that we only require $\diam(b_x^{\Lambda}(n)) > n-1$ and not $\supp(\Phi(x,n)) = b_x^\Lambda(n)$.
We say the term $\Phi(x,n)$ is \emph{anchored} at the site $x$, hence the terminology. Depending on $\Lambda$, it is possible that $b_x^\Lambda(n)=b_y^\Lambda(m)$ for two distinct sites $x, y\in\Lambda$. For an anchored interaction the anchoring site can hold significance, and so we allow for the possibility that $\Phi(x,n)\neq\Phi(y,m)$. This is not possible for an \emph{interaction} as defined in Section~\ref{sec:QSS} as this is a function of $X\in\cP_0(\Gamma)$. We impose the second condition on an anchored interaction for two reasons. First, it justifies associating the ball $b_x^\Lambda(n)$ to the interaction term $\Phi(x,n)$. Second, if $\Lambda$ is finite, then $\Phi(x,n)=0$ for all $n> \diam(\Lambda)$ and so $\Phi$ is nonzero for only a finite number of pairs $(x,n)\in \Lambda\times \bZ$.

Given any finite volume $\Lambda\subseteq\Gamma$, and an interaction $\Phi: \cP_0(\Gamma) \to \cA_{\Gamma}^{\rm loc}$ it is always possible rewrite the local Hamiltonian $H_\Lambda$, see \eqref{ti_ham}, as
\begin{equation}\label{fv_ham}
H_\Lambda = \sum_{x\in \Lambda}\sum_{n\geq 0} \Phi_\Lambda(x,n)
\end{equation}
where $\Phi_\Lambda : \Lambda \times \bZ_{\geq 0} \to \cA_\Lambda$ is an anchored interaction on $\Lambda$. In the next section we introduce one procedure for transforming a general interaction $\Phi$ into an equivalent anchored interaction. This is not the only procedure
one could use. For the results on stability, one only needs that the resulting anchored interaction satisfies an \emph{anchored $F$-norm} similar to that in Proposition~\ref{prop:balled_Fnorm} below.

One may also consider time-dependent anchored interactions, which are defined as follows:

\begin{defn} \label{def:ball-int-td}
	Given a countable metric space $(\Gamma,d)$, an algebra of local observables $\cA_\Gamma^{\rm loc}$, an interval $I\subseteq \bR$ (possibly infinite), and a subset $\Lambda \subseteq \Gamma$, we say that mapping $\Phi:\Lambda \times \bZ_{\geq 0}\times I\to \cA_{\Lambda}^{\rm loc}$ is a \emph{anchored interaction on $\Lambda$} if the following three conditions hold:
	\begin{enumerate}
		\item $\Phi(x,n,t)^*=\Phi(x,n,t) \in \cA_{b_x^\Lambda(n)}$ for all triplets $(x,n,t)$.
		\item For all $(x,n)\in \Lambda \times \bZ_{\geq 0}$, the mapping $t\mapsto \Phi(x,n,t)$ is continuous.
	\end{enumerate}
\end{defn} 
For time-dependent anchored interactions, it may again be convenient to require \eqref{support_property}, i.e. if 
\be\label{dep_support_property}
\Phi(x,n,t)\neq 0 \;\; \implies \;\; d(y,z)>n-1 \;\; \text{ for some } \;\;  y,z\in b_x^\Lambda(n).
\ee

\subsubsection{An Anchoring Procedure} \label{sec:ball-proc} As our main interest is perturbed Hamiltonians of the form \eqref{pert_hams}, we define a anchoring process that will respect Hamiltonians defined in terms of a perturbation region. Let $\Phi:\cP_0(\Gamma) \to \cA_\Gamma^{\rm loc}$ be an interaction and fix (possibly infinite) volumes $\Lambda^p \subseteq \Lambda \subseteq \Gamma$. With respect to these volumes, we define an anchored interaction on $\Lambda$, denoted $\Phi_{\Lambda^p}:\Lambda \times \bZ_{\geq 0} \to \cA_{\Lambda}^{\rm loc}$ so that
\[
\Phi_{\Lambda^p}(x,n)=0 \;\; \text{  if  }\;\; x\in\Lambda \setminus \Lambda^p.
\]
We further show that if $\Lambda$ is finite, the local Hamiltonian $H_{\Lambda,\Lambda^p}$ defined by
\be \label{pert_region_hams}
H_{\Lambda,\Lambda^p} := \sum_{\substack{X\subseteq \Lambda \\ X\cap \Lambda^p \neq \emptyset}} \Phi(X)
\ee
can be rewritten in terms of this anchored interaction as
\be \label{balled_fv_ham}
H_{\Lambda,\Lambda^p} = \sum_{\substack{x\in\Lambda^p \\n\geq 0}} \Phi_{\Lambda^p}(x,n).
\ee
Here, note that \eqref{pert_region_hams} is of the same form as the perturbation in \eqref{pert_hams}.  And, of course, \eqref{pert_region_hams} agrees with \eqref{ti_ham} if $\Lambda^p = \Lambda$, an essential requirement.

We will also use anchored versions of the a priori (unperturbed) Hamiltonian which is given by a finite-range, uniformly bounded, frustration-free interaction $\eta$.
The anchoring procedure we introduce benefits from preserving the finite-range and uniform boundedness conditions as well as leaving the local ground state spaces invariant. 

To define our anchoring procedure, let us first denote by $\cS(\Lambda^p)$ the set of all finite volumes that intersect the perturbation region, i.e.
\[\cS(\Lambda^p)=\{X\in \cP_0(\Lambda) \, : \, X\cap\Lambda^p \neq \emptyset\} .\]
We further partition this set by $\cS(\Lambda^p) = \bigcup_{n\geq 0} \cS_n(\Lambda^p)$ where, for all $n\geq 1$,
\be
\cS_n(\Lambda^p) = \{ X\in \cS(\Lambda^p) \, : \, \exists \, x\in\Lambda^p \mbox{ s.t. } \, X\subset b_x^\Lambda(n) \mbox{ and } \, \forall \, x\in \Lambda^p, X\not\subset b^\Lambda_x(n-1) \} .
\ee
Setting $\cS_0 = \{ \{x\} \mid x\in \Lambda^p\}$ for $n=0$, it is clear that $\{ \cS_n(\Lambda^p)\mid n\geq 0\}$ is a partition of $\cS(\Lambda^p)$. Therefore we define the {\em radius of $X$} by $r(X) = n$ if $X\in \cS_n(\Lambda^p)$, and the {\em multiplicity of $X$} as:
\be\label{eq:m(X)}
m(X) = \# \left\{ x\in\Lambda^p | X\subset b^\Lambda_x(r(X)) \right\}.
\ee
Note that $m(X) \geq 1$ for all $X\in\cS(\Lambda^p)$, that $m(X)$ is always finite even if $\Lambda^p$ is infinite,
and that $r(X) -1 <\diam(X)\leq 2r(X)$. The radius and multiplicity, in general, depend on $\Lambda$ and $\Lambda^p$. This, however, will not play an important role in our analysis.

Then, for $x\in\Lambda^p$, we define $\Phi_{\Lambda^p}(x,n)\in\cA_{b_x^\Lambda(n)}$ by
\begin{equation} \label{BalledInt}
\Phi_{\Lambda^p}(x,n) = \sum_{\substack{X \in \cS_n(\Lambda^p): \\X\subset b^\Lambda_x(n)}} \frac{1}{m(X)} \Phi(X),
\end{equation}
with the convention that $\Phi_{\Lambda^p}(x,n)=0$ for empty sums. We set $\Phi_{\Lambda^p}(x,n)=0$ for all $x\in \Lambda\setminus\Lambda^p$.
It is straightforward to see that $\Phi_{\Lambda^p}$ is an anchored interaction in the sense of Definition \ref{def:ball-int}. 
Moreover, (\ref{support_property}) holds. In fact, when $\Phi_{\Lambda^p}(x,n)\neq 0$, there is $X \in\cS_n(\Lambda^p)$ 
with $X \subset b_x^\Lambda(n)$ and $\Phi(X)\neq 0$. For this $X\in\cS_n(\Lambda^p)$, one sees that there exists $y,z\in X\subset b_x^\Lambda(n)$ with $d(y,z)>n-1$.

We now show that using this anchoring procedure, any finite-volume Hamiltonian $H_{\Lambda,\Lambda^p}$ of the form given in \eqref{pert_region_hams} for $\Lambda^p\subseteq\Lambda\in\cP_0(\Gamma)$ can be rewritten as described in \eqref{balled_fv_ham}. Given the definitions of $\cS(\Lambda^p)$ and $\cS_n(\Lambda^p)$, and \eq{BalledInt},  the Hamiltonian $H_{\Lambda,\Lambda^p}$ may be rewritten as
\be \label{int_to_ball}
H_{\Lambda,\Lambda^p} = \sum_{X\in \cS(\Lambda^p)} \Phi(X) = \sum_{n \geq 0} \sum_{X\in \cS_n(\Lambda^p)} \Phi(X).
\ee
Using the definition of $m(X)$, we have
\be\label{balled_equiv}
\sum_{X\in \cS_n(\Lambda^p)} \Phi(X)
= \sum_{X\in \cS_n(\Lambda^p)}\frac{1}{m(X)} \sum_{\substack{x \in \Lambda^p\\ X\subset b^\Lambda_x(n)}}\Phi(X) = \sum_{x \in \Lambda^p} \Phi_{\Lambda^p}(x,n),
\ee
Combining \eq{int_to_ball} and \eq{balled_equiv} we obtain the desired property \eq{balled_fv_ham}.

Next, we analyze the effects of our anchoring procedure on the initial interaction and its associated local Hamiltonians, see \eqref{pert_hams} and the subsequent discussion. Consider a finite-range, uniformly bounded, frustration-free interaction $\eta$ and define the anchored interaction $\eta_\Lambda$ as in \eq{BalledInt} for any (possibly infinite) $\Lambda \subset \Gamma$. Here, we note that in \eqref{BalledInt} one uses $\Lambda^p=\Lambda$ as the background interaction is defined extensively.

The range of $\eta$ is defined as the smallest $R\geq 0$ such that $\eta(X)=0$ for all $X$ with $\diam(X)>R$. For any anchored interaction 
$\Phi_{\Lambda^p}$, we define the \emph{maximal radius} as the smallest integer $R_{\Lambda^p}\geq 0$ such that $\Phi_{\Lambda^p}(x,n)=0$, for all 
$x\in\Lambda^p$ and $n>R_{\Lambda^p}$. Since $\eta_{\Lambda}$ is obtained from $\eta$ by the anchoring procedure defined above, we have the following relationship:
$R_{\Lambda} -1\leq R \leq 2 R_{\Lambda}$. Moreover, the anchored interaction preserves the uniform norm. Namely, letting $\|\eta\|:=\sup_{X\in\cP_0(\Gamma)}\|\eta(X)\|$ denote the uniform bound on the original interaction, for any $x\in\Lambda$ and $0\leq n \leq R_\Lambda$, \[\|\eta_\Lambda(x,n)\| \leq \|\eta\|2^{|b_x(n)|} \leq  \|\eta\|2^{\kappa R_\Lambda^\nu}\]
where we apply $\nu$-regularity to bound the number of terms in the summation from \eqref{BalledInt}. As a consequence of these properties, we may group the anchored terms into a single site-anchored operator
\begin{equation}\label{eq:anchored-background}
h_x : = \sum_{n=0}^{R_\Lambda}\eta_\Lambda(x,n) \in \cA_{b_x(R_\Lambda)} \quad \text{with} \quad \|h\|:=\sup_{x\in\Lambda}\|h_x\|<\infty.
\end{equation}
In the case that $\Lambda$ is finite, the associated local Hamiltonian is equal to $H_\Lambda = \sum_{x\in\Lambda}h_x$, which is nontrivially used in our analysis in Sections~\ref{sec:Step1}-\ref{sec:uniform_sequences}. 

Given the importance of ground state projections in this work, one may wonder if our anchoring procedure effects the ground state spaces of the local Hamiltonians. The natural definitions of the local Hamiltonians $H_{\Lambda_0}$, $\Lambda_0\in\cP_0(\Lambda)$, are as follows:
\be
H^\eta_{\Lambda_0} = \sum_{X\subset \Lambda_0} \eta(X), \quad H^{\eta_{\Lambda}}_{\Lambda_0} = \sum_{\substack{x\in \Lambda,n\geq 0\\ 
		b^\Lambda_x(n) \subset \Lambda_0}}\eta_{\Lambda}(x,n).
\ee
For the anchoring procedure introduced, we claim that 
\be\label{equal_gss}\ker H^\eta_{b^\Lambda_y(m)} = \ker H^{\eta_{\Lambda}}_{b^\Lambda_y(m)}
\ee for all $y\in\Lambda, m\geq 0$. To establish this, it suffices to notice the following two properties, which are easy to verify from the above definitions:
\begin{enumerate}
	\item[(i)] if $X\subset b^\Lambda_y(m)$ and $\eta(X) \neq 0$, then 
	\[ H^{\eta_{\Lambda}}_{b^\Lambda_y(m)}\geq \sum_{n\leq m}\eta_{\Lambda}(y,n) \geq m(X)^{-1}\eta(X),\]
	\item[(ii)] $H^{\eta_{\Lambda}}_{b^\Lambda_y(m)}\leq  H^\eta_{b^\Lambda_y(m)}$ for all $y\in \Lambda$ and $m\geq 0$.
\end{enumerate} 

While the ground state equality in \eqref{equal_gss} is convenient, it is not strictly necessary for the arguments in this work as we will always work with the ground state projections for the local Hamiltonians defined by the original (i.e. unanchored) interaction. In particular, the results of this work still hold for any anchoring procedure as long as for any $b^\Lambda_y(m) \in \cP_0(\Lambda)$,
\[
\ker(H^\eta_{b^\Lambda_y(m)}) \subseteq \ker(H^{\eta_\Lambda}_{b^\Lambda_y(m)}).
\]

\subsubsection{Decay Properties of Anchored Interactions}
As we mentioned before, the purpose of working with an anchored interaction is to simplify certain combinatorial arguments. If an anchored interaction is defined with respect to an interaction $\Phi:\cP_0(\Gamma) \to \cA_{\Gamma}^{\rm loc}$, e.g. as defined in \eqref{BalledInt}, we will want the anchored interaction to inherent properties possessed by $\Phi$, including decay properties. Therefore, we introduce an anchored version of the $F$-norm from \eqref{ti_Fnorm}, from which we will be able to verify similar decay estimates.

\begin{defn}\label{def:Balled_FNorm}
	Let $(\Gamma,d)$ be a countable metric space, and $\Phi_\Lambda:\Lambda \times \bZ_{\geq 0} \to \cA_{\Gamma}^{\rm loc}$ be an anchored interaction. For any integer $m\geq 0$, we say that $\Phi_\Lambda$ has a finite \emph{$m$-th moment anchored $F$-norm} with respect to an $F$-function $F$ and write $\Phi_\Lambda\in\cB_F^m$ if
	\be\label{balled_F_norm}
	\|\Phi_\Lambda\|_{m,F}:=\sup_{x,y\in\Lambda}\frac{1}{F(d(x,y))}\sum_{\substack{n\geq 0, \, z\in\Lambda: \\ x,y\in b_z^\Lambda(n)}}|b_z^\Lambda(n)|^m\|\Phi_\Lambda(z,n)\| < \infty.
	\ee 
	For $m=0$, we simply say that $\Phi_\Lambda$ has a finite anchored $F$-norm and denote this by $\|\Phi_\Lambda\|_F$.
\end{defn}

We note that while we use the same notation, $\cB_F^m$, for the decay classes of interactions and anchored interactions, the correct interpretation should be clear from context and so there should be no confusion. In Proposition~\ref{prop:balled_Fnorm} below, we show that given an interaction $\Phi\in\cB_F^m$ and the anchored interaction $\Phi_{\Lambda^p}$ as in \eqref{BalledInt}, then $\Phi_{\Lambda^p}$ will have a finite $m$-th anchored $F$-norm as long as $F$ decays sufficiently fast.

\begin{prop}\label{prop:balled_Fnorm}
	Let $\Gamma$ be a $\nu$-regular metric space, $\Phi\in \cB_F^m$ for an $F$-function $F$, and $\Lambda^p \subseteq \Lambda\subseteq\Gamma$. Fix any integer $m\geq 0$. Then, for any $x, \, y\in \Lambda$ the anchored interaction defined in \eqref{BalledInt} satisfies
	\begin{equation}\label{balled_Fnorm}
\sum_{\substack{n\geq 0, \, z\in\Lambda^p: \\ x,y\in b_z^\Lambda(n)}} |b_z^\Lambda(n)|^m \|\Phi_{\Lambda^p}(z,n)\|
	\leq
	2^\nu\kappa^{m+2}\|\Phi\|_{F} \sum_{n \geq \lceil d(x,y)/2\rceil} n^{(m+2)\nu}F(n-1).
	\end{equation}
\end{prop}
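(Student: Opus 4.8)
The plan is to begin from the explicit formula \eqref{BalledInt} for the anchored interaction, pass from the anchored terms $\Phi_{\Lambda^p}(z,n)$ to the underlying sets $X\in\cS_n(\Lambda^p)$, and then reorganise the resulting triple sum so that the multiplicities $m(X)$ cancel the overcounting of anchor sites. Concretely, fix $x,y\in\Lambda$. First I would bound the moment weight by $\nu$-regularity, $|b_z^\Lambda(n)|^m\leq\kappa^m n^{m\nu}$, and insert \eqref{BalledInt} together with the triangle inequality $\|\Phi_{\Lambda^p}(z,n)\|\leq\sum_{X}m(X)^{-1}\|\Phi(X)\|$ (the sum over $X\in\cS_n(\Lambda^p)$ with $X\subset b_z^\Lambda(n)$), so that the left-hand side of \eqref{balled_Fnorm} is at most
\[
\sum_{n\geq 0}\kappa^m n^{m\nu}\sum_{\substack{z\in\Lambda^p\\ x,y\in b_z^\Lambda(n)}}\ \sum_{\substack{X\in\cS_n(\Lambda^p)\\ X\subset b_z^\Lambda(n)}}\frac{\|\Phi(X)\|}{m(X)}.
\]

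The central step is to interchange the $z$- and $X$-summations for each fixed $n$. Since every $X\in\cS_n(\Lambda^p)$ has radius $r(X)=n$, the definition \eqref{eq:m(X)} gives $\#\{z\in\Lambda^p : X\subset b_z^\Lambda(n)\}=m(X)$; hence a fixed $X$ is counted at most $m(X)$ times in the $z$-sum and the weight $m(X)^{-1}$ compensates exactly. This reduces the inner double sum to $\sum_X\|\Phi(X)\|$, where now $X$ ranges only over those sets in $\cS_n(\Lambda^p)$ admitting some $z\in\Lambda^p$ with $\{x,y\}\cup X\subset b_z^\Lambda(n)$.

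It then remains to estimate $\sum_X\|\Phi(X)\|$, and here I would extract the geometry. The constraint $x,y\in b_z^\Lambda(n)$ forces $d(x,y)\leq 2n$, so only $n\geq\lceil d(x,y)/2\rceil$ contribute, which produces the lower limit in \eqref{balled_Fnorm}; and $x\in b_z^\Lambda(n)$ with $X\subset b_z^\Lambda(n)$ forces $X\subseteq b_x^\Lambda(2n)$. Since $\diam(X)>r(X)-1=n-1$ (the inequality noted just after \eqref{eq:m(X)}), each such $X$ contains a witness pair $u_X,v_X\in X$ with $d(u_X,v_X)>n-1$, and both points lie in $b_x^\Lambda(2n)$. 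Grouping the sets $X$ by their witness pair and invoking \eqref{decay_nb} with endpoints $u_X,v_X$, together with $F$ non-increasing,
\[
\sum_X\|\Phi(X)\|\ \leq\ \sum_{\substack{u,v\in b_x^\Lambda(2n)\\ d(u,v)>n-1}}\ \sum_{\substack{X'\in\cP_0(\Gamma)\\ u,v\in X'}}\|\Phi(X')\|\ \leq\ |b_x^\Lambda(2n)|^2\,\|\Phi\|_F\,F(n-1).
\]
Reassembling with $|b_x^\Lambda(2n)|\leq\kappa(2n)^\nu$ and the weight $\kappa^m n^{m\nu}$ from the first display then yields the bound \eqref{balled_Fnorm}, the combinatorial constant emerging from these ball-size counts; the $n=0$ term is handled separately, since there $\Phi_{\Lambda^p}(z,0)=\Phi(\{z\})$ while $x,y\in b_z^\Lambda(0)$ means $x=y=z$.

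The step I expect to require the most care is the interchange of the $z$- and $X$-summations: one has to verify that a set $X\in\cS_n(\Lambda^p)$ has exactly $m(X)$ admissible anchors so that the weights truly cancel, and then track carefully which $X$ survive the restriction $\{x,y\}\cup X\subset b_z^\Lambda(n)$ and how their witness pairs localise inside $b_x^\Lambda(2n)$. Once this combinatorial bookkeeping is in place, the remainder is routine $\nu$-regular ball counting together with the single application of the $F$-norm inequality \eqref{decay_nb}.
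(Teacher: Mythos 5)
Your route is the paper's: $\nu$-regularity pulls out $\kappa^m n^{m\nu}$, inserting \eqref{BalledInt} produces the triple sum, the interchange of the $z$- and $X$-sums is controlled because at most $m(X)$ anchors can contribute against the weight $m(X)^{-1}$, the constraint $x,y\in b_z^\Lambda(n)$ forces $n\geq\lceil d(x,y)/2\rceil$ and $X\subset b_x^\Lambda(2n)$, and the witness pair together with \eqref{decay_nb} and monotonicity of $F$ finishes. This is exactly the paper's decomposition.

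What does not quite close is the last display. Counting witness pairs by $|b_x^\Lambda(2n)|^2\leq\kappa^2(2n)^{2\nu}=\kappa^2\,2^{2\nu}n^{2\nu}$ and reassembling with $\kappa^m n^{m\nu}$ gives the prefactor $2^{2\nu}\kappa^{m+2}$, not the stated $2^\nu\kappa^{m+2}$; your concluding sentence should not assert that this reproduces \eqref{balled_Fnorm}. The paper gets the smaller constant by additionally imposing $n-1<d(w_1,w_2)\leq n$ on the witness pair, so the second witness ranges only over $b_{w_1}^\Lambda(n)$, contributing $\kappa n^\nu$ rather than $\kappa(2n)^\nu$ choices. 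Before trying to restore that constraint, note that it is not obviously available: membership in $\cS_n(\Lambda^p)$ guarantees only $n-1<\diam(X)\leq 2n$, and a two-point set $X=\{u,v\}$ with $d(u,v)=2n$ admitting an anchor $z\in\Lambda^p$ has $r(X)=n$ while its unique pair sits at distance $2n$, not $\leq n$. Your looser count is therefore the safe one, and the extra $2^\nu$ is irrelevant for every downstream use of the proposition since only the decay class of $F$ matters, but you should acknowledge the mismatch with the stated constant rather than silently absorb it.
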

It is important to note that while the anchored interaction depends on $\Lambda$ and $\Lambda^p$, the bound on the RHS of \eqref{balled_Fnorm} is independent of both volumes. In the situation that there is an $F$-function $\tilde{F}$ for which
\[
\sum_{n \geq \lceil r/2\rceil} n^{(m+2)\nu}F(n-1) \leq \tilde{F}(r) \;\; \text{for}\;\;r\in[0,\infty),
\]
the above result shows that $\|\Phi_{\Lambda^p}\|_{m,\tilde{F}}<\infty$.

\begin{proof} Fix $x, \, y\in \Lambda$. Then a simple bound using \eqref{BalledInt} and $\nu$-regularity gives
	\begin{equation} \label{balled_first_bd}
	\sum_{n\geq 0} 
	\sum_{ \substack{ z\in \Lambda^p \\ x,y\in b_z^\Lambda(n)}} 
	|b_z^\Lambda(n)|^m\|\Phi_{\Lambda^p}(z,n)\|
	\leq
	\kappa^m\sum_{n \geq 0} n^{m\nu}
	\sum_{\substack{z\in\Lambda_p \\ x,y\in b_z^\Lambda(n)}}
	\sum_{\substack{X\in \cS_n(\Lambda^p) \\ X\subseteq b_z^\Lambda(n)}}\frac{1}{m(X)}
	\|\Phi(X)\|.
	\end{equation}
	Note that since $x,y\in b_z(n)$,
	\[
	d(x,y) \leq d(x,z) + d(y,z) \leq 2n
	\]
	and so the sum over $n$ can be optimized to $n\geq \lceil d(x,y)/2 \rceil$. 
	
	Consider the (inner) double summation from the RHS of \eqref{balled_first_bd}, for which the following change of variables is valid:
	\[
	\sum_{\substack{z\in\Lambda_p \\ x,y\in b_z^\Lambda(n)}}
	\sum_{\substack{X\in \cS_n(\Lambda^p) \\ X\subseteq b_z^\Lambda(n)}} 
	=
	\sum_{\substack{z\in\Lambda_p \\ x,y\in b_z^\Lambda(n)}}
	\sum_{X\in \cS_n(\Lambda^p)} {\rm Ind}_{X\subseteq b_z^\Lambda(n)} 
	=
	\sum_{X\in \cS_n(\Lambda^p)} \sum_{\substack{z\in\Lambda_p: \\ x,y\in b_z^\Lambda(n) \wedge X\subseteq b_z^\Lambda(n)}}
	\]
	where ${\rm Ind_{(\cdot)}}$ is the indicator function.
	Notice that the constraints $x,y\in b_z^\Lambda(n)$ and $X\subseteq b_{z}^\Lambda(n)$ immediately imply that
	\[
	X\subset b_x^\Lambda(2n)\cap b_y^\Lambda(2n).
	\]
	Combining these observations, we find	
	\begin{equation}
	\sum_{\substack{z\in\Lambda_p \\ x,y\in b_z^\Lambda(n)}} 
	\sum_{\substack{X \in \cS_n(\Lambda_p) \\ X\subseteq b_z^\Lambda(n)}}
	\frac{1}{m(X)}\|\Phi(X)\|
	=
	\sum_{\substack{X\in \cS_n(\Lambda^p):\\X\subset b_x^\Lambda(2n)\cap b_y^\Lambda(2n)}} \;\sum_{\substack{z\in\Lambda_p: \\ x,y\in b_z^\Lambda(n) \wedge X\subseteq b_z^\Lambda(n)}}
	\frac{1}{m(X)}	\|\Phi(X)\|.
	\end{equation}
	From the definition of $m(X)$, see \eqref{eq:m(X)}, we see that the inner sum above is bounded from about by $\|\Phi(X)\|$, i.e.
	\[
	\sum_{\substack{z\in\Lambda_p: \\ x,y\in b_z^\Lambda(n) \wedge X\subseteq b_z^\Lambda(n)}}
	\frac{1}{m(X)}	\|\Phi(X)\| \leq \|\Phi(X)\|.
	\]
	Since $\diam(X)>n-1$ for any $X\in \cS_n(\Lambda_p)$ and $\Phi\in \cB_F$ one can further bound as follows:
	\begin{align}
	\sum_{\substack{X\in \cS_n(\Lambda_p) \\ X\subseteq b_x(2n)\cap b_y(2n)}} 
	\|\Phi(X)\| 
	& \leq
	\sum_{\substack{w_1, \, w_2 \in b_x(2n) \\ n-1<d(w_1,w_2) \leq n}}
	\sum_{\substack{X\subseteq \Lambda \\ w_1, \, w_2 \in X}} 
	\|\Phi(X)\| \nonumber\\
	& \leq
	\|\Phi\|_F\sum_{\substack{w_1, \, w_2 \in b_x(2n) \\ n-1<d(w_1,w_2)\leq n}}
	F(d(w_1,w_2)) \nonumber\\
	& \leq
	\kappa^22^\nu\|\Phi\|_F n^{2\nu}F(n-1). \label{balled_uniform_bd}
	\end{align}
	In the final estimate above, we have use $\nu$-regularity and the fact that $w_2\in b_{w_1}(n)$. The result follows from inserting \eqref{balled_uniform_bd} and $n \geq \lceil d(x,y)/2 \rceil$ into \eqref{balled_first_bd}.
\end{proof}

We conclude by discussing how to define an interaction $\Phi:\cP_0(\Lambda)\to \cA_{\Lambda}^{\rm loc}$ from an anchored interaction $\Phi_{\Lambda}:\Lambda \times \bZ_{\geq 0}\to \cA_{\Lambda}^{\rm loc}$, and show that a finite anchored $F$-norm $\|\Phi_{\Lambda}\|_F$ is sufficient for proving that $\Phi\in\cB_F$. This implies that the finite volume Hamiltonians associated with an anchored interaction satisfy a Lieb-Robinson bound if it has a finite anchored $F$-norm. 

To define $\Phi$, for any $X\in \cP_0(\Lambda)$ we set
\be \label{int_from_balledint}
\Phi(X) = \sum_{\substack{(x,n) \in \Lambda \times \bZ_{\geq 0} \\ b_x^\Lambda(n)  = X }} \Phi_{\Lambda}(x,n),
\ee  
with convention that empty sums are taken to be zero. For this interaction we have the following result:
\begin{prop}\label{prop:TriIneq_Fnorm}
	Let $\Phi_{\Lambda}:\Lambda \times \bZ_{\geq 0}\to \cA_{\Lambda}^{\rm loc}$ be an anchored interaction for which $\|\Phi_\Lambda\|_{m,F}<\infty$ for an $F$-function $F$. Then $\Phi\in\cB_F^m$ where $\Phi:\cP_0(\Lambda)\to \cA_{\Lambda}^{\rm loc}$ is the interaction defined in \eqref{int_from_balledint}. In particular, $\|\Phi\|_{m,F} \leq \|\Phi_{\Lambda}\|_{m,F}$.
\end{prop}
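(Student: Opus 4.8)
The plan is essentially a one-idea computation with careful bookkeeping: the bound is just the triangle inequality together with an interchange of summations. I would start from the definition \eqref{mom_Fnorm} of the $m$-th moment $F$-norm of $\Phi$, fix $x,y\in\Lambda$, and substitute the defining relation \eqref{int_from_balledint}, namely $\Phi(X)=\sum_{(z,n):\,b_z^\Lambda(n)=X}\Phi_\Lambda(z,n)$. Applying the triangle inequality to $\|\Phi(X)\|$, and using that $|X|^m=|b_z^\Lambda(n)|^m$ for every pair $(z,n)$ contributing to $\Phi(X)$ (since such pairs satisfy $b_z^\Lambda(n)=X$), the quantity $\sum_{X\ni x,y}|X|^m\|\Phi(X)\|$ is bounded above by a double sum: over $X\in\cP_0(\Lambda)$ with $x,y\in X$, and over pairs $(z,n)$ with $b_z^\Lambda(n)=X$, of $|b_z^\Lambda(n)|^m\|\Phi_\Lambda(z,n)\|$.

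Next I would swap the order of summation, which is legitimate by Tonelli since all terms are non-negative. Because each pair $(z,n)\in\Lambda\times\bZ_{\geq0}$ determines a unique set $X=b_z^\Lambda(n)$ — and this set is automatically finite by $\nu$-regularity, hence lies in $\cP_0(\Lambda)$ — the joint constraint ``$X\ni x,y$ and $b_z^\Lambda(n)=X$'' collapses to the single constraint ``$x,y\in b_z^\Lambda(n)$''. The double sum therefore equals $\sum_{n\geq0,\,z\in\Lambda:\,x,y\in b_z^\Lambda(n)}|b_z^\Lambda(n)|^m\|\Phi_\Lambda(z,n)\|$, which by Definition~\ref{def:Balled_FNorm} (see \eqref{balled_F_norm}) is at most $\|\Phi_\Lambda\|_{m,F}\,F(d(x,y))$. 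Dividing by $F(d(x,y))$ and taking the supremum over $x,y\in\Lambda$ yields $\|\Phi\|_{m,F}\le\|\Phi_\Lambda\|_{m,F}<\infty$, i.e. $\Phi\in\cB_F^m$.

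Two minor points should be recorded along the way but present no real difficulty. First, one should check that $\Phi$ defined by \eqref{int_from_balledint} is a genuine interaction, i.e. $\Phi(X)^*=\Phi(X)\in\cA_X$; this is immediate because every term $\Phi_\Lambda(z,n)$ contributing to $\Phi(X)$ satisfies $\Phi_\Lambda(z,n)^*=\Phi_\Lambda(z,n)\in\cA_{b_z^\Lambda(n)}=\cA_X$, and a finite sum of self-adjoint operators supported in $X$ has the same properties. Second, there is no double counting in the reorganized sum: the assignment $(z,n)\mapsto b_z^\Lambda(n)$ routes the term $\Phi_\Lambda(z,n)$ into exactly one $\Phi(X)$. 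I do not anticipate any genuine obstacle here; the only thing requiring attention is writing the index manipulation in the double sum unambiguously, since the same finite set $X$ can arise as $b_z^\Lambda(n)$ for several distinct pairs $(z,n)$ — but this is exactly why \eqref{int_from_balledint} sums over all such pairs, so the accounting matches on both sides.
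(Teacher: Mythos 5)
Your proof is correct and follows essentially the same route as the paper's: triangle inequality applied to the defining relation \eqref{int_from_balledint}, followed by an interchange of the $X$- and $(z,n)$-summations and the observation that the joint constraint collapses to $x,y\in b_z^\Lambda(n)$, then the anchored $F$-norm bound. The extra remarks you record — that $\Phi(X)^*=\Phi(X)\in\cA_X$, that $|X|^m=|b_z^\Lambda(n)|^m$ on each contributing pair, and that each $(z,n)$ routes into exactly one $X$ — are all correct and left implicit in the paper's write-up.
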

\begin{proof}
	Fix any $x,y\in \Lambda$. Recall that $\Phi_\Lambda(x,n)\in\cA_{b_x^\Lambda(n)}$ Since $\Phi_\Lambda$ is an anchored interaction. Using \eqref{int_from_balledint} and applying the triangle inequality, one finds
	\bea
	\sum_{\substack{X\subseteq \Lambda : \\ x,y\in X}}|X|^m\|\Phi(X)\|
	& \leq &
	\sum_{\substack{X\subseteq \Lambda : \\ x,y\in X}} \sum_{\substack{(z,n)\in\Lambda\times\bZ_{\geq 0} : \\ b_z^\Lambda(n)=X}}|b_z^\Lambda(n)|^m \| \Phi_\Lambda(z,n) \| \nonumber\\
	& = & \sum_{\substack{(z,n)\in\Lambda\times\bZ_{\geq 0} :\\ x,y\in b_z^\Lambda(n)}}|b_z^\Lambda(n)|^m\| \Phi_\Lambda(z,n) \|
	\nonumber\\
	& \leq & \|\Phi_\Lambda\|_{m,F} F(d(x,y)), \label{int_bldint_bd}
	\eea
	which implies $\|\Phi\|_{m,F}\leq \|\Phi_{\Lambda}\|_{m,F}$ as claimed.
\end{proof}	

There are two important insights one can draw from this result.

First, notice that if there is at most one nonzero term in the summation of \eqref{int_from_balledint}, then the anchored interaction is actually an interaction. In this case the first inequality of \eqref{int_bldint_bd} is actually an equality and so the anchored $F$-norm and interaction $F$-norm are the same. This justifies using the same notation for both types of $F$-norms. We almost exclusively work with anchored interactions and anchored $F$-norms. However, it will always be clear from context which type of $F$-norm we are using. 

Second, in the situation that $\Lambda$ is finite, the local Hamiltonian defined by $\Phi_\Lambda$ satisfies
\[
H_\Lambda := \sum_{(x,n)\in\Lambda\times\bZ_{\geq 0}}\Phi_{\Lambda}(x,n) = \sum_{X\subseteq \Lambda}\Phi(X)
\]
where $\Phi$ is as defined in \eqref{int_from_balledint}. If $\|\Phi_{\Lambda}\|_F<\infty$, then Proposition~\ref{prop:TriIneq_Fnorm} implies that the Heisenberg dynamics $\tau_{t,s}^\Lambda$ associated to $H_\Lambda$ satisfies the Lieb-Robinson bound from Theorem~\ref{thm:LRBounds}. Moreover, we can use the anchored $F$-norm, $\|\Phi_{\Lambda}\|_F$, to bound the Lieb-Robinson velocity, specifically
\be \label{balled_LR_vel}
v_\Phi \leq 2C_F\|\Phi_{\Lambda}\|_F.
\ee

\section{On Perturbation Theory for Frustration Free Hamiltonians} \label{sec:Pert_Theory}

\subsection{Introduction}

The main goal of this work is to present an approach for proving persistence of the ground state gap for frustration-free models under a broad class of extensive perturbations. This approach, originally due to Bravyi, Hastings, and Michalakis (BHM) \cite{bravyi:2010}, has some elements in common with other methods that have appeared in the literature, including the recent work of Fr\" ohlich and Pizzo \cite{frohlich:2020,del-vecchio:2020}, but also older work by Albanese \cite{albanese:1989}, Kennedy and Tasaki \cite{kennedy:1992}, and Yarotsky \cite{yarotsky:2006}. Specifically, the aim in all these works is to prove a relative form bound in one way or another. In this section we discuss the general implications of relative form bounds on spectral gaps and also identify a situation for which a form bound can be proved in a straightforward manner. 

The strength of the BHM approach stems from applying a relative form bound after a unitary transformation that brings the problem into a form where the results of this section can be applied. The next several sections are devoted to the analysis necessary to establish this.
At first sight, the Lie-Schwinger block diagonalization approach of Fr\" ohlich and Pizzo does something similar in that it also rests on the construction of a suitable similarity transformation \cite{frohlich:2020}. An important difference, however, is that in the latter work the transformation itself is constructed 
perturbatively by a power series for which one needs to prove a positive radius of convergence.  This is not the case with the approach here. The transformation we use is well-defined for the full parameter range for which a non-vanishing ground state gap exists. 

The method of Fr\" ohlich and Pizzo has so far only been applied to initial Hamiltonians that are on-site and have a unique product ground state.
These are strong limitations but, due to the availability of a convergent power series, it also has the advantage of yielding analyticity of the ground state energy density as a function of the perturbation parameters within the radius of convergence \cite{del-vecchio:2019a}.

We now first prove some general results about relatively bounded perturbations and then look specifically at quantum lattice systems.

\subsection{General Perturbation Theory with Form Bounds} \label{sec:gen_pert}

In this section, we determine spectral gap estimates for perturbations of a gapped, self-adjoint operator, $H$ on a (possibly infinite-dimensional) Hilbert space in the situation that the perturbation is form bounded by $H$. The results we present in this section hold for both bounded and unbounded operators, and so we present the results in a general context. At the end of the section, we discuss how to apply the results to the gapped quantum spin systems of interest.

The first lemma can be regarded as a variational principle for spectral gaps. In the statement we use the convention 
\begin{equation}\label{eq:convention}
\inf \emptyset = +\infty, \qquad\sup \emptyset = -\infty.
\end{equation}

\begin{lemma}[Level Repulsion Principle]\label{lem:variational_principle_for_gaps}
	Let $\cH$ be a complex Hilbert space  and $H$ a densely defined self-adjoint operator on $\cH$ with domain $\cD$. Let $\cK$ be a closed linear subspace of $\cH$ such that $\cK \cap \cD$ is dense in $\cK$ and $\cK^\perp\cap\cD$ is dense in $\cK^\perp$. Define $a,b\in\Rl\cup\{\pm\infty\}$ as follows
	\be\label{VPbounds}
	a = \sup \{\langle \psi , H\psi\rangle \mid \psi \in \cK \cap\cD, \Vert \psi\Vert =1\}, \quad b = \inf \{\langle \psi , H\psi\rangle \mid \psi \in \cK^\perp\cap\cD, \Vert \psi\Vert =1\}.
	\ee
	Then, if $a<b$,
	\be\label{opengap}
	(a,b) \cap \spec (H) = \emptyset.
	\ee
\end{lemma}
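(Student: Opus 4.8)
The plan is to prove the contrapositive in spirit: I will show that if $\lambda \in (a,b)$, then $H - \lambda$ is bounded below in absolute value by a positive constant on a dense subspace, hence $\lambda \notin \spec(H)$. The key structural fact to exploit is the orthogonal decomposition $\cH = \cK \oplus \cK^\perp$ together with the two one-sided bounds: on $\cK$ the quadratic form of $H$ sits at or below $a$, and on $\cK^\perp$ it sits at or above $b$. Writing a general vector as $\psi = \psi_1 + \psi_2$ with $\psi_1 \in \cK$, $\psi_2 \in \cK^\perp$, the quadratic form $\langle \psi, (H-\lambda)\psi\rangle$ splits into a part coming from $\psi_1$ (which is $\leq (a-\lambda)\|\psi_1\|^2 < 0$), a part from $\psi_2$ (which is $\geq (b-\lambda)\|\psi_2\|^2 > 0$), and cross terms $2\,\Re\langle \psi_1, (H-\lambda)\psi_2\rangle$.

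\textbf{Key steps.} First I would fix $\lambda \in (a,b)$ and set $\delta = \min(\lambda - a, b - \lambda) > 0$. Second, I would like to work on the form domain, but since $H$ is only given as densely defined self-adjoint with operator domain $\cD$, the cleanest route is: for $\psi \in \cD$ decomposed as $\psi = \psi_1 + \psi_2$, one does \emph{not} in general have $\psi_1, \psi_2 \in \cD$, so I cannot naively apply \eqref{VPbounds} termwise. The standard fix is to pass to spectral projections of $H$: let $E = \idty_{(-\infty,\lambda]}(H)$ and consider the claim that $\cK$ and $\Ran E$ have a trivial-enough intersection. Concretely, I would argue that if $(a,b)\cap\spec(H) \neq \emptyset$ then there is a spectral subspace of $H$ associated to a small interval $(\lambda-\epsilon,\lambda+\epsilon) \subset (a,b)$ which is nonzero; pick a unit vector $\phi$ in it, so $\phi \in \cD$ and $|\langle\phi, H\phi\rangle - \lambda| \leq \epsilon$, in fact $\|(H-\lambda)\phi\| \leq \epsilon$. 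Decompose $\phi = \phi_1 + \phi_2$ along $\cK \oplus \cK^\perp$. Third, I would regularize: approximate $\phi_1$ by vectors in $\cK\cap\cD$ and $\phi_2$ by vectors in $\cK^\perp\cap\cD$ (possible by the density hypotheses), but to control $\langle\cdot, H\cdot\rangle$ under approximation I really need $H$-form continuity, which is not assumed.

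\textbf{The main obstacle.} The genuine difficulty is precisely this domain issue: reconciling the one-sided form bounds on $\cK\cap\cD$ and $\cK^\perp\cap\cD$ with a vector $\phi\in\cD$ whose components need not lie in $\cD$. I expect the correct and clean argument avoids decomposing $\phi$ and instead goes the other way: take a unit vector $\psi\in\cK\cap\cD$ and a unit vector $\chi\in\cK^\perp\cap\cD$ and consider the two-dimensional subspace they span; on $\cK\cap\cD$ we have $\langle\psi,H\psi\rangle\leq a$ while on $\cK^\perp\cap\cD$ we have $\langle\chi,H\chi\rangle\geq b$; a min-max / interlacing argument on this plane (or more precisely, using that the quadratic form restricted to $\operatorname{span}\{\psi,\chi\}\subset\cD$ has a $2\times 2$ matrix whose eigenvalues are separated, combined with the fact that $\cK\cap\cD$ being dense in $\cK$ lets one approximate the relevant top eigenvector) forces $\operatorname{spec}(H)$ to have no point strictly between $a$ and $b$. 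The slick version: for any $\psi\in\cK\cap\cD$ with $\|\psi\|=1$, $\langle\psi,(H-\lambda)\psi\rangle \leq a-\lambda < 0$, so by the spectral theorem $\langle\psi, \idty_{(-\infty,\lambda)}(H)\psi\rangle > 0$ for \emph{every} such $\psi$, which (by density of $\cK\cap\cD$ in $\cK$ and continuity of $\psi\mapsto\langle\psi,\idty_{(-\infty,\lambda)}(H)\psi\rangle$ in the $\cH$-norm, the projection being bounded) gives $\idty_{(-\infty,\lambda)}(H)\restriction_\cK \neq 0$, i.e. $\cK\cap\Ran\idty_{[\lambda,\infty)}(H) \neq \cK$; symmetrically $\cK^\perp \cap \Ran\idty_{(-\infty,\lambda]}(H) \neq \cK^\perp$. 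Pushing this: I claim $\Ran\idty_{(-\infty,\lambda)}(H) \subseteq \cK$ and $\Ran\idty_{(\lambda,\infty)}(H)\subseteq\cK^\perp$ — indeed if some vector in $\Ran\idty_{(-\infty,\lambda)}(H)$ had nonzero component $\chi$ in $\cK^\perp$, a spectral-calculus estimate shows $\langle\chi,H\chi\rangle$ could be pushed below $b$, contradicting the definition of $b$ after a density approximation of $\chi$ by $\cK^\perp\cap\cD$. Granting those two inclusions, any spectral subspace of $H$ for an interval inside $(a,b)$ would have to lie in $\cK\cap\cK^\perp = \{0\}$, proving \eqref{opengap}. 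I anticipate the delicate point is making the approximation arguments rigorous when $H$ is unbounded, for which I would invoke the standard fact that bounded spectral projections $\idty_I(H)$ map into $\cD$ when $I$ is bounded, reducing all estimates to bounded operators; this is routine but is where the care goes.
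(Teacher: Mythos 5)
Your proposal correctly identifies the central domain obstacle (a vector $\psi\in\cD$ decomposed along $\cK\oplus\cK^\perp$ need not have components in $\cD$), and your instinct to reduce to the bounded case via spectral truncation $H_n = P_n H P_n$ with $P_n = \idty_{[-n,n]}(H)$ is exactly what the paper does for the unbounded case, together with strong resolvent convergence. However, the core of your ``slick version'' contains a false structural claim that sinks the argument.

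You assert that if $a<\lambda<b$ then $\Ran\,\idty_{(-\infty,\lambda)}(H)\subseteq\cK$ and $\Ran\,\idty_{(\lambda,\infty)}(H)\subseteq\cK^\perp$. This is not true. Take $\cH=\Cx^2$, $\cK=\mathrm{span}\{e_1\}$, and
\[
H = \begin{pmatrix} -1 & \epsilon \\ \epsilon & 1 \end{pmatrix},\qquad 0<\epsilon<1.
\]
Then $a=-1$, $b=1$, $a<b$, $\spec(H)=\{\pm\sqrt{1+\epsilon^2}\}$ sits outside $(-1,1)$ as the lemma predicts, yet the eigenvector for the negative eigenvalue is proportional to $(\epsilon,\;1-\sqrt{1+\epsilon^2})$, which has a nonzero $\cK^\perp$-component. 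Your justification — that the $\cK^\perp$-component $\chi$ of a low-energy vector would have $\langle\chi,H\chi\rangle$ ``pushed below $b$'' — fails because the quadratic form does not split across the decomposition $\phi=P\phi+Q\phi$; the off-diagonal block $PHQ$ contributes cross terms. In the example above, $\chi\propto e_2$ and $\langle e_2,He_2\rangle=b$, not below it. The earlier observation in your proposal (that $\idty_{(-\infty,\lambda)}(H)\psi\neq 0$ for every unit $\psi\in\cK\cap\cD$) is correct but by itself does not forbid spectrum at $\lambda$.

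The paper's argument for bounded $H$ takes a different route: it writes $H$ as a $2\times 2$ block operator on $\cK\oplus\cK^\perp$ and factors
\[
H=\begin{pmatrix} A^{-1}&0\\0&B^{-1}\end{pmatrix}\begin{pmatrix} -\idty & X\\ X^* & \idty\end{pmatrix}\begin{pmatrix} A^{-1}&0\\0&B^{-1}\end{pmatrix},
\]
where $-A^{2}=(PHP)^{-1}$ and $B^{2}=(QHQ)^{-1}$ exist because $PHP\leq aP<0$ and $QHQ\geq bQ>0$ (after shifting so $a<0<b$). The middle operator $Y$ satisfies $Y^2\geq\idty$, hence is boundedly invertible, so $H^{-1}$ exists as a product of bounded operators with $\|H^{-1}\|\leq\max(|a|^{-1},b^{-1})$. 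This is the step your proposal is missing; it is precisely what converts the two one-sided form bounds into invertibility despite the cross terms. You would then apply this to $H_n$ and pass to the limit by strong resolvent convergence, as you anticipated.
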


\begin{proof}
	For $\cK=\{0\}$ or $\cK=\cH$, \eq{opengap} is trivially satisfied given \eqref{eq:convention}. The cases where either $a=\infty$ or $b=-\infty$ are also trivial. Therefore, we may assume that both $a$ and $b$ are finite and $a<b$.
	
	We show that $H-\lambda\idty$ has a bounded inverse for all $\lambda\in (a,b)$. Replacing $H$ by $H'=H-\lambda\idty$ in \eqref{VPbounds} changes the constants $a$ and $b$ to $a':= a-\lambda <0$ and $b':=b-\lambda>0$, and hence showing that $0\not\in\spec H'$. In other words, without loss of generality, we may assume that $a<0$ and $b>0$, and then show that $0\not\in\spec(H)$ or, equivalently, that $H$ has a bounded inverse.
	
	We first consider the case that $H$ is bounded, and hence $\mathcal{D} = \mathcal{H}$.  
	Let $P$ be the orthogonal projection onto $\cK$ and define $Q=\idty -P$. 
	Denote by $PHP: \mathcal{K} \to \mathcal{K}$ and $QHQ: \mathcal{K}^\perp \to \mathcal{K}^\perp$ 
	the bounded, self-adjoint restrictions of $H$ to $\mathcal{K}$ and $\mathcal{K}^{\perp}$ respectively.
	The definitions of $a$ and $b$ imply that $PHP \leq aP$ and $QHQ\geq bQ$. 
	Since $a<0$ and $b>0$, we find that $PHP$ is negative definite with a bounded inverse
	$(PHP)^{-1}\in\cB(\cK)$, and $QHQ$ is positive definite with a bounded inverse $(QHQ)^{-1}\in\cB(\cK^\perp)$. Therefore, there are 
	positive $A\in\cB(\cK)$ and $B\in\cB(\cK^\perp)$ such that $(PHP)^{-1} = - A^2$ and  $(QHQ)^{-1}=B^2$. 
	
	Consider the representation of $H$ as a block-operator acting on $\mathcal{H} = \mathcal{K} \oplus \mathcal{K}^\perp$.
	One finds that 
	\be\label{decomp}
	H=\begin{bmatrix} PHP & PHQ \\ QHP & QHQ \end{bmatrix}
	= \begin{bmatrix} A^{-1}&  0\\ 0 & B^{-1} \end{bmatrix} \begin{bmatrix} -\idty & X  \\ X^* & \idty \end{bmatrix} \begin{bmatrix} A^{-1}&  0\\ 0 & B^{-1} \end{bmatrix}
	\ee
	where we have used that $A$ and $B$ both have (bounded) inverses and denoted by 
	$X: \mathcal{K}^\perp \to \mathcal{K}$ be the operator $X = A(PHQ)B$.
	Let $Y$ denote the middle matrix operator above, which is clearly bounded and self-adjoint. One checks that
	$$
	Y^2 = \begin{bmatrix} \idty+ XX^* & 0\\ 0 & \idty +X^*X \end{bmatrix},
	$$
	and thus $Y^2 \geq \idty$. This shows that the interval $(-1,1)$ is contained in the resolvent set of $Y$. As such, 
	$Y^{-1}$ is bounded with norm at most $1$. In this case, we can invert \eq{decomp} and obtain
	\be\label{Hinverse}
	H^{-1} =  \begin{bmatrix} A&  0\\ 0 & B \end{bmatrix} \begin{bmatrix} -\idty & X  \\ X^* & \idty \end{bmatrix}^{-1} \begin{bmatrix} A&  0\\ 0 & B \end{bmatrix},
	\ee
	which is bounded as it is the product of bounded operators; in fact, $\Vert H^{-1}\Vert \leq \max ( |a|^{-1}, b^{-1})$. This concludes the proof
	for the case of bounded $H$.
	
	The general case of unbounded $H$ can be handled by considering a sequence $(H_n)_{n\geq 1}$ of bounded self-adjoint
	operators converging to $H$ in the strong resolvent sense. Let $P_n$, $n\geq 1$, denote the spectral projections of $H$ corresponding to $[-n,n]$,
	and define $H_n = P_n H P_n$. In this case, $H_n$ is bounded and $H_n\to H$ in the strong resolvent sense (see  \cite{bogli:2017} and also \cite[Satz 9.21]{weidmann:2000}). The constants $a_n$ and $b_n$ defined by
	\be\label{VPbounds2}
	a_n = \sup \{\langle \psi , H_n\psi\rangle \mid \psi \in \cK \cap\cD, \Vert \psi\Vert =1\}, \quad b_n = \inf \{\langle \psi , H_n\psi\rangle \mid \psi \in \cK^\perp\cap\cD, \Vert \psi\Vert =1\}
	\ee
	satisfy $a=\lim_n a_n$ and $b=\lim_n b_n$. Applying the result for bounded operators, we conclude that
	$(a_n,b_n) \cap \spec (H_n) = \emptyset$ for all $n \in \mathbb{N}$. 
	By strong resolvent convergence, see e.g. \cite[Theorem VIII.24 (a)]{reed:1980},
	for each $\lambda \in \spec(H)$, there is $\lambda_n \in \spec(H_n)$ with $\lambda_n \to \lambda$.
	This implies that $(a,b) \cap \spec (H) = \emptyset$ as desired.
\end{proof} 

Lemma \ref{lem:variational_principle_for_gaps} is optimal in the sense that it identifies a spectral gap exactly if $\cK$ is the spectral 
subspace of $H$ associated with the spectrum below the gap. If $\cK$ is not an invariant subspace of $H$, then the quantity $b-a$, if 
positive, is a lower bound for the gap. In that case, the lemma shows that the `off-diagonal' terms $PHQ + QHP$ can only push the two 
parts of spectrum further part. This can be regarded as a generalization of the level repulsion observed for a pair of eigenvalues of 
a diagonal Hermitian matrix when one considers the effect of non-vanishing off-diagonal matrix elements. 

\begin{lemma}[Relatively Bounded Perturbations]\label{lem:persistence_of_gaps}
	Let $\cH$ be a complex Hilbert space  and $H$ a densely defined self-adjoint operator on $\cH$ with domain $\cD$. Suppose $V$ is a self-adjoint operator on $\cH$ with $\cD\subset \dom V$, and suppose there exist constants $\alpha\geq 0$ and $\beta \in [0,1)$, such that
	\be\label{relative_bound}
	\vert\langle\psi,V\psi\rangle\vert \leq \alpha \Vert \psi\Vert^2 + \beta \langle \psi, H\psi\rangle, \text{ for all } \psi \in \cD.
	\ee
	Then,
	\be\label{global_bounds}
	(1-\beta)\inf\spec (H) -\alpha \leq \inf\spec(H+V) \text{ and }  \sup\spec(H+V) \leq (1+\beta)\sup\spec (H) +\alpha.
	\ee
	If, in addition, $a<b\in\Rl$ are such that $(a,b) \cap \spec (H) = \emptyset$, then
	\be\label{gap_estimate}
	((1+\beta)a+\alpha,(1-\beta)b-\alpha)\cap \spec (H+V) =\emptyset.
	\ee
\end{lemma}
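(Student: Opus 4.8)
\medskip\noindent\textbf{Proof strategy.}
The plan is to deduce both assertions from the Level Repulsion Principle (Lemma~\ref{lem:variational_principle_for_gaps}) together with the two-sided form inequality equivalent to \eqref{relative_bound}. Throughout one uses the spectral-theorem identities $\inf\spec(T)=\inf\{\langle\psi,T\psi\rangle:\psi\in\dom T,\ \|\psi\|=1\}$ and $\sup\spec(T)=\sup\{\langle\psi,T\psi\rangle:\psi\in\dom T,\ \|\psi\|=1\}$ for a self-adjoint $T$, applied to $T=H+V$. When $V$ is bounded, $H+V$ is self-adjoint on $\cD$ and $\dom(H+V)=\cD$; in the general unbounded case one reads $H+V$ as the form sum, which is well defined and self-adjoint because $\beta<1$, whose operator domain still contains $\cD$ and whose form domain is $Q(H)$, and one interprets $\langle\psi,H\psi\rangle$ below as the quadratic form of $H$. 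In either reading the density hypotheses needed from Lemma~\ref{lem:variational_principle_for_gaps} will be available because $\cD$ is a form core for $H$.

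First I would record the elementary consequence of \eqref{relative_bound}: for every $\psi$ in the form domain,
\be\label{two_sided_form}
(1-\beta)\langle\psi,H\psi\rangle-\alpha\|\psi\|^2\ \le\ \langle\psi,(H+V)\psi\rangle\ \le\ (1+\beta)\langle\psi,H\psi\rangle+\alpha\|\psi\|^2 .
\ee
The global bounds \eqref{global_bounds} are then immediate: restrict to unit vectors, use $1\pm\beta\ge0$ together with $\inf\spec(H)\le\langle\psi,H\psi\rangle\le\sup\spec(H)$, and take the infimum (respectively supremum) over $\psi$, identifying the result with $\inf\spec(H+V)$ (respectively $\sup\spec(H+V)$); the cases $\inf\spec(H)=-\infty$ or $\sup\spec(H)=+\infty$ are vacuous since $1-\beta>0$.

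For the gap estimate, assume $(a,b)\cap\spec(H)=\emptyset$, and assume moreover $(1+\beta)a+\alpha<(1-\beta)b-\alpha$, as otherwise the interval in \eqref{gap_estimate} is empty and there is nothing to prove. Let $P$ be the spectral projection of $H$ for $(-\infty,a]$, set $\cK=\Ran{P}$ and $Q=\idty-P$; since there is no spectrum of $H$ in $(a,b)$, $Q$ is the spectral projection for $[b,\infty)$. Because $P$ and $Q$ are bounded Borel functions of $H$ they map $\cD$ into $\cD$, so $P\cD\subseteq\cK\cap\cD$ is dense in $\cK$ and $Q\cD\subseteq\cK^\perp\cap\cD$ is dense in $\cK^\perp$; thus the hypotheses of Lemma~\ref{lem:variational_principle_for_gaps} hold for $H+V$ with this $\cK$ (note that $\cK$ need not be invariant under $H+V$ — only under $H$, which is all that matters). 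On $\cK\cap\cD$ one has $\langle\psi,H\psi\rangle\le a\|\psi\|^2$, so the upper bound of \eqref{two_sided_form} gives $\langle\psi,(H+V)\psi\rangle\le\big((1+\beta)a+\alpha\big)\|\psi\|^2$, while on $\cK^\perp\cap\cD$ one has $\langle\psi,H\psi\rangle\ge b\|\psi\|^2$, whence $\langle\psi,(H+V)\psi\rangle\ge\big((1-\beta)b-\alpha\big)\|\psi\|^2$. Therefore the constants $a',b'$ of \eqref{VPbounds} associated with $H+V$ and $\cK$ satisfy $a'\le(1+\beta)a+\alpha<(1-\beta)b-\alpha\le b'$, so Lemma~\ref{lem:variational_principle_for_gaps} yields $(a',b')\cap\spec(H+V)=\emptyset$; since $\big((1+\beta)a+\alpha,\ (1-\beta)b-\alpha\big)\subseteq(a',b')$ this is exactly \eqref{gap_estimate}.

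The argument is otherwise routine; the one point requiring care is the operator-theoretic bookkeeping in the unbounded case — checking that $H+V$ is realized as a self-adjoint form sum with form domain $Q(H)\supseteq\cD$, that $\cD$ is a form core so \eqref{two_sided_form} extends from $\cD$ to the whole form domain, and that $\cK\cap\dom(H+V)$ and $\cK^\perp\cap\dom(H+V)$ are still dense in $\cK$ and $\cK^\perp$. All of this follows from the KLMN theorem, which applies precisely because $\beta<1$. In the quantum spin applications of this paper $V$ is bounded, $\dom(H+V)=\cD$, and the issue does not arise.
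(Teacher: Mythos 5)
Your proof is correct, and in the unbounded case it takes a genuinely different route from the paper's. Both proofs derive the global bounds \eqref{global_bounds} from the two-sided form inequality \eqref{two_sided_form}, and both prove the gap estimate by applying Lemma~\ref{lem:variational_principle_for_gaps} with $\cK$ the spectral subspace of $H$ below $a$. But where you apply the Level Repulsion Principle directly to the (possibly unbounded) form sum $H\dot{+}V$, the paper instead \emph{re-truncates}: it takes $P_n$ the spectral projection of $H$ for $[-\alpha\beta^{-1},a+n]$, sets $H_n=P_n H P_n$ and $V_n=P_n V P_n$ (now both bounded), shows the form bound and the $H$-gap persist for the truncated pair, applies the bounded-operator argument, and passes to the limit via strong resolvent convergence, citing \cite[Theorem VIII.24(a)]{reed:1980}. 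Your route is more modular: since Lemma~\ref{lem:variational_principle_for_gaps} already handles unbounded $H$, the extra truncation in the paper is avoidable, and your choice $P=E_H((-\infty,a])$ also removes the paper's separate treatment of $\beta=0$ versus $\beta>0$ (the paper's $P=E_H([-\alpha\beta^{-1},a])$ makes sense only for $\beta>0$, forcing a case split; for $\beta>0$ the two projections coincide because then $H\ge -\alpha\beta^{-1}$, as the paper notes). What the paper's truncation buys is that it sidesteps the KLMN bookkeeping entirely, since $H_n+V_n$ is bounded and hence manifestly self-adjoint.

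One point in your argument is worth being fully explicit about, since it is the crux of applying Lemma~\ref{lem:variational_principle_for_gaps} to $H\dot{+}V$ rather than to truncations: you need $\cK\cap\dom(H\dot{+}V)$ dense in $\cK$, not merely $\cK\cap\cD$ dense in $\cK$. You correctly flag this, and it does go through here, but only because of the specific hypothesis $\cD\subseteq\dom V$: for $\psi\in\cD$ and $\phi\in Q(H)$, take $\phi_n\in\cD$ with $\phi_n\to\phi$ in $Q(H)$-norm; then $q_V(\phi_n,\psi)=\langle\phi_n,V\psi\rangle\to\langle\phi,V\psi\rangle$ and, by $q_H$-continuity of $q_V$, also $q_V(\phi_n,\psi)\to q_V(\phi,\psi)$, so $q_{H+V}(\phi,\psi)=\langle\phi,H\psi+V\psi\rangle$ for all $\phi\in Q(H)$, giving $\psi\in\dom(H\dot{+}V)$. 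Thus $\cD\subseteq\dom(H\dot{+}V)$ and $P\cD\subseteq\cK\cap\dom(H\dot{+}V)$ is dense in $\cK$, exactly as you need. (The paper's phrasing ``$H+V$ is self-adjoint on $\cD$'' is slightly stronger than what KLMN gives and is not quite what it uses; what is needed — and what both proofs ultimately rely on — is that $\cD$ sits inside the domain of the form sum.)
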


\begin{proof}
	If $\beta=0$, then $\Vert V\Vert\leq \alpha$ and the statements in the lemma follow from standard perturbation theory for bounded perturbations \cite{kato:1995}. Thus, we assume that $\beta>0$. In this case, \eq{relative_bound} implies that $H$ is bounded below by $-\alpha\beta^{-1}$, and
	in particular, $H+V$ is self-adjoint on $\cD$ \cite[Theorem X.17]{reed:1975}. The estimates in \eq{global_bounds} follow directly from the relative boundedness expressed by \eq{relative_bound} as
	\be\label{first_pert_bd}
	(1-\beta)\braket{\psi}{H\psi}-\alpha\|\psi\|^2 \leq \braket{\psi}{(H+V)\psi} \leq (1+\beta)\braket{\psi}{H\psi}+\alpha\|\psi\|^2
	\ee
	for all $\psi \in \cD$.
	
	To prove \eq{gap_estimate}, first consider the case of bounded $H$ and $V$, and let $P$ be the spectral projection of $H$ corresponding to the interval $[-\alpha\beta^{-1}, a]$. Applying \eqref{first_pert_bd}
	\bea
	\langle\psi, (H+V) \psi\rangle &\leq& (1+\beta) \langle\psi,H\psi\rangle + \alpha \Vert\psi\Vert^2\leq [(1+\beta) a +\alpha]\Vert\psi\Vert^2, \psi \in P\cH \label{gap_lb}\\
	\langle\psi, (H+V) \psi\rangle &\geq& (1-\beta) \langle\psi,H\psi\rangle - \alpha \label{gap_up} \Vert\psi\Vert^2\geq  [(1-\beta) b -\alpha]\Vert\psi\Vert^2,\psi \in (\idty- P)\cH.
	\eea
	As the conditions on the domain of $H$ and the perturbation $V$ in Lemma \ref{lem:variational_principle_for_gaps} are satisfied with $\cK=\ran P$, the spectral subspace of $H$ corresponding to $[-\alpha \beta^{-1},a]$, the gap from \eq{gap_estimate} now follows by taking the $\sup$ and $\inf$, respectively, of \eqref{gap_lb}-\eqref{gap_up} and applying Lemma \ref{lem:variational_principle_for_gaps}.
	
	To treat the general case in which $H$ and $V$ may be unbounded, let $P_n$ be the spectral projection of $H$ corresponding to the interval $[-\alpha\beta^{-1}, a+n]$,
	for all $n\geq 1$.
	Then, $H_n=P_n H P_n$ is bounded on $\cH_n=P_n\cH$, and \eq{relative_bound} implies that $V_n=P_n V P_n$ is bounded too. $H_n$ and $V_n$ satisfy \eq{relative_bound}
	with the same constants $\alpha$ and $\beta$. We also have $(a,b) \cap \spec (H_n) = \emptyset$ if this condition holds for $H$. Therefore, the argument in the previous paragraph 
	shows that
	\be\label{gap_n}
	((1+\beta)a+\alpha,(1-\beta)b-\alpha)\cap \spec (H_n+V_n) =\emptyset, \text{ for all } n\geq 1.
	\ee
	Since the sequence $(H_n+V_n)_{n\geq 1}$ converges to $H+V$ in the strong resolvent sense,
	an application of \cite[Theorem VIII.24 (a)]{reed:1980} shows that 
	$$
	((1+\beta)a+\alpha,(1-\beta)b-\alpha)\cap \spec (H+V) =\emptyset.
	$$
\end{proof} 
As can be seen from the previous proof, the estimate of the gap may be optimized by proving form bounds separately on each of the spectral subspaces. This is the content of the following corollary.
\begin{cor}\label{cor:fb_gap}
	Let $\cH$ be a complex Hilbert space  and $H$ a densely defined self-adjoint operator on $\cH$ with domain $\cD$ such that $(a,b)  \cap \spec (H) = \emptyset$. Suppose that $V$ is a self-adjoint operator on 
	$\cH$ with $\cD\subset \dom V$ for which there are constants $\alpha',\alpha''\geq 0$ and $\beta \in [0,1)$ such that
	\beann
	|\braket{\psi}{V \psi}| & \leq & \alpha'\|\psi\|^2 + \beta\braket{\psi}{H\psi}, \quad \psi \in P\cH \cap \cD \\
	|\braket{\psi}{V \psi}| & \leq & \alpha''\|\psi\|^2 + \beta\braket{\psi}{H\psi}, \quad \psi \in (\idtyty-P)\cH\cap \cD
	\eeann
	where $P$ is the spectral projection of $H$ associated to $(-\infty,a]$. Then,
	\be\label{optimized_stability}
	((1+\beta)a + \alpha', (1-\beta)b-\alpha'') \cap \spec(H+V) = \emptyset.
	\ee
\end{cor}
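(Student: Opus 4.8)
The plan is to reproduce the argument for \eqref{gap_estimate} in the proof of Lemma~\ref{lem:persistence_of_gaps}, exploiting the observation that its two one-sided estimates, \eqref{gap_lb} on $P\cH$ and \eqref{gap_up} on $(\idty-P)\cH$, are entirely decoupled: each invokes only the form bound on its own spectral subspace. Consequently one may carry the two separate constants $\alpha'$ and $\alpha''$ through the respective estimates and still close the argument with the Level Repulsion Principle, Lemma~\ref{lem:variational_principle_for_gaps}.

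First I would dispose of the trivial case $(1+\beta)a+\alpha'\ge(1-\beta)b-\alpha''$, where the interval in \eqref{optimized_stability} is empty, and assume $(1+\beta)a+\alpha'<(1-\beta)b-\alpha''$ from here on. For bounded $H$ and $V$ --- the situation in all the applications in this paper --- self-adjointness of $H+V$ is automatic. In the unbounded case (with $\beta>0$) I would first note that the hypotheses force $H$ to be bounded below: since the spectral projection $P$ commutes with $H$, for $\psi\in\cD$ one has $\langle\psi,H\psi\rangle=\langle P\psi,HP\psi\rangle+\langle(\idty-P)\psi,H(\idty-P)\psi\rangle$, and the nonnegativity of $|\langle\cdot,V\cdot\rangle|$ in the two form bounds gives $\langle\psi,H\psi\rangle\ge-\beta^{-1}\max\{\alpha',\alpha''\}\,\|\psi\|^2$; from this point on $H+V$ is self-adjoint on $\cD$ exactly as in the proof of Lemma~\ref{lem:persistence_of_gaps}.

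The main step is to apply Lemma~\ref{lem:variational_principle_for_gaps} to the operator $H+V$ with $\cK=\ran P$. Its domain hypotheses hold because $P$, being a spectral projection of $H$, leaves $\cD=\dom H$ invariant, so $\cK\cap\cD=P\cD$ is dense in $\cK$ and $\cK^\perp\cap\cD=(\idty-P)\cD$ is dense in $\cK^\perp$. Let $a_\star$ and $b_\star$ be the quantities defined as in \eqref{VPbounds} for $H+V$ and this $\cK$. If $\psi\in\cK\cap\cD$ with $\|\psi\|=1$, then $\langle\psi,H\psi\rangle\le a$ because $P$ projects onto the part of $\spec(H)$ in $(-\infty,a]$, whence
\[
\langle\psi,(H+V)\psi\rangle\le(1+\beta)\langle\psi,H\psi\rangle+\alpha'\le(1+\beta)a+\alpha',
\]
so that $a_\star\le(1+\beta)a+\alpha'$. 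If $\psi\in\cK^\perp\cap\cD$ with $\|\psi\|=1$, then $\langle\psi,H\psi\rangle\ge b$ because $(a,b)\cap\spec(H)=\emptyset$, whence
\[
\langle\psi,(H+V)\psi\rangle\ge(1-\beta)\langle\psi,H\psi\rangle-\alpha''\ge(1-\beta)b-\alpha'',
\]
so that $b_\star\ge(1-\beta)b-\alpha''$. By the nonvacuous assumption, $a_\star\le(1+\beta)a+\alpha'<(1-\beta)b-\alpha''\le b_\star$, so Lemma~\ref{lem:variational_principle_for_gaps} gives $(a_\star,b_\star)\cap\spec(H+V)=\emptyset$, which yields \eqref{optimized_stability} since $\big((1+\beta)a+\alpha',(1-\beta)b-\alpha''\big)\subseteq(a_\star,b_\star)$.

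I expect the only genuinely fussy point to be the reduction to bounded operators when $H$ or $V$ is unbounded, and there I would reuse the truncation device from the proof of Lemma~\ref{lem:persistence_of_gaps} verbatim: let $P_n$ be the spectral projection of $H$ onto $[-\beta^{-1}\max\{\alpha',\alpha''\},\,a+n]$ (legitimate since $H$ was just shown to be bounded below, so $P\le P_n$), put $H_n=P_nHP_n$ and $V_n=P_nVP_n$, which are bounded and self-adjoint on $\cH_n=P_n\cH$; apply the bounded-case argument above on $\cH_n$ with $\cK=P\cH\subseteq\cH_n$ to get $\big((1+\beta)a+\alpha',(1-\beta)b-\alpha''\big)\cap\spec(H_n+V_n)=\emptyset$ for all $n$; then pass to the limit via the strong resolvent convergence $H_n+V_n\to H+V$ and \cite[Theorem VIII.24 (a)]{reed:1980}, just as in Lemma~\ref{lem:persistence_of_gaps}. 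For bounded $H$ and $V$ none of this is required and the third paragraph completes the argument.
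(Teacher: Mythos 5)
Your proof is correct and follows essentially the same route as the paper, which simply instructs the reader to rerun the proof of Lemma~\ref{lem:persistence_of_gaps} with $\alpha$ replaced by $\alpha'$ in \eqref{gap_lb} and by $\alpha''$ in \eqref{gap_up} before invoking Lemma~\ref{lem:variational_principle_for_gaps}. Your additional care with the lower bound on $H$ and with the truncation in the unbounded case is exactly what that instruction unpacks to, so there is no substantive difference.
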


\begin{proof}
	The proof follows just as the proof of Lemma~\ref{lem:persistence_of_gaps} from replacing $\alpha$ with $\alpha'$, resp. $\alpha''$, in \eqref{gap_lb}, resp. \eqref{gap_up}.
\end{proof}

We now turn to the situation of interest: the stability of the spectral gap of a Hamiltonian, $H$, that is gapped above the ground state energy. In this case, we will consider perturbed Hamiltonians of the form $H(s) = H + V(s)$ where $s\in \bR$. We will always assume that $V(s):\cH\to\cH$ is self-adjoint with $\dom H \subseteq \dom V(s)$ for all $s$. Generally, we will also assume that the perturbation is strongly differentiable in $s$. However, this is not necessary for the next result and so we will not assume this here.

\begin{thm}\label{thm:Pert_Est}
	Let $H\geq 0$ be a self adjoint operator on a dense domain $\cD$ for which $0\in\spec(H)$ and $(0,\gamma)\cap \spec(H) = \emptyset$ for some $\gamma>0$, and denote by $P$ the ground state projection of $H$. Suppose that $H(s), \, s\in\bR$, is a family of perturbed Hamiltonians of the form
	\be \label{pert_ham_decomp}
	H(s) = H + V(s) + A(s) + C(s)\idtyty,
	\ee
	for which there are constants $\alpha, \alpha', \alpha'', \beta >0$ so that for all $s\in\bR$ the following hold:
	\begin{enumerate}
		\item[(i)] $C(s)\in\bR$
		\item[(ii)] $A(s)^*=A(s)\in\cB(\cH)$ with $\|A(s)\|\leq s\alpha$, $\|PA(s)P\|\leq s\alpha'$, and $\|(\idtyty-P)A(s)(\idtyty-P)\|\leq s\alpha''$. 
		\item[(iii)] $V(s)^*=V(s)$ with $\cD\subseteq \dom V(s)$ and $|\braket{\psi}{V(s)\psi}|\leq s\beta\braket{\psi}{H\psi}$ for all $\psi \in \cD$.
	\end{enumerate}
	Then, for all $0\leq s< \beta^{-1}$, $\spec(H(s))=\Sigma_1(s)\cup \Sigma_2(s)$ where
	\[
	\Sigma_1(s) \subseteq [C(s)-s\alpha, C(s)+s\alpha'], \quad \Sigma_2(s) \subseteq [C(s)+(1-s\beta)\gamma-s\alpha'', \, \infty).
	\]
\end{thm}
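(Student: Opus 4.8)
The plan is to deduce the statement from the abstract perturbation results already in hand — principally Corollary~\ref{cor:fb_gap}, together with the global lower bound from Lemma~\ref{lem:persistence_of_gaps} — after two harmless reductions. First, since $\spec(H(s)) = \spec\big(H + V(s) + A(s)\big) + C(s)$, the scalar $C(s)\idtyty$ only translates the spectrum, so it suffices to prove the analogous containment for $\tilde H(s) := H + W(s)$ with $W(s) := V(s) + A(s)$; at the end I set $\Sigma_1(s) := \spec(H(s)) \cap [C(s)-s\alpha,\, C(s)+s\alpha']$ and $\Sigma_2(s) := \spec(H(s)) \setminus \Sigma_1(s)$. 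Second, I group the bounded perturbation $A(s)$ with the form-bounded perturbation $V(s)$ into the single self-adjoint $W(s)$ with $\cD \subseteq \dom W(s)$. The whole point is that $0 \le s < \beta^{-1}$ makes $s\alpha, s\alpha', s\alpha'' \ge 0$ and $s\beta \in [0,1)$, the exact parameter ranges the lemmas require.

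Next I would record the three quadratic-form estimates that feed those lemmas, all immediate from (i)--(iii) and from the fact that $P$ is the ground state projection of $H \ge 0$, so $\ran P = \ker H \subseteq \cD$, $HP = 0$, and $(\idtyty - P)\cH$ is the spectral subspace of $H$ for $[\gamma,\infty)$. For $\psi \in P\cH$, condition (iii) forces $\langle\psi, V(s)\psi\rangle = 0$ (because $\langle\psi, H\psi\rangle = 0$), hence $\langle\psi, W(s)\psi\rangle = \langle\psi, P A(s) P\psi\rangle$, which lies in $[-s\alpha,\, s\alpha']\|\psi\|^2$ by $\|A(s)\| \le s\alpha$ and $\|PA(s)P\| \le s\alpha'$. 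For $\psi \in (\idtyty - P)\cH \cap \cD$, combining $|\langle\psi, A(s)\psi\rangle| = |\langle\psi, (\idtyty-P)A(s)(\idtyty-P)\psi\rangle| \le s\alpha''\|\psi\|^2$ with (iii) gives $|\langle\psi, W(s)\psi\rangle| \le s\alpha''\|\psi\|^2 + s\beta\langle\psi, H\psi\rangle$. And for all $\psi \in \cD$, $\|A(s)\| \le s\alpha$ gives the uniform bound $|\langle\psi, W(s)\psi\rangle| \le s\alpha\|\psi\|^2 + s\beta\langle\psi, H\psi\rangle$.

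Then I would invoke the two results. The density hypotheses needed by Lemma~\ref{lem:variational_principle_for_gaps} (hence by Lemma~\ref{lem:persistence_of_gaps} and Corollary~\ref{cor:fb_gap}) hold with $\cK = \ran P$: one has $\cK \cap \cD = \cK$ since $\ran P \subseteq \cD$, and $(\idtyty - P)\cD \subseteq \cK^\perp \cap \cD$ is dense in $\cK^\perp$ because $\idtyty - P$ is a spectral projection of $H$ and preserves $\cD$. Applying Lemma~\ref{lem:persistence_of_gaps} to $H$ and $W(s)$ with constants $s\alpha$ and $s\beta$, using $\inf\spec H = 0$, yields $\inf\spec \tilde H(s) \ge -s\alpha$. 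Applying Corollary~\ref{cor:fb_gap} to $H$ — whose spectral projection onto $(-\infty, 0]$ is exactly $P$ and for which $(0,\gamma) \cap \spec(H) = \emptyset$ — and $W(s)$, with constants $s\alpha'$, $s\alpha''$, $s\beta$, yields $\big((1+s\beta)\cdot 0 + s\alpha',\, (1-s\beta)\gamma - s\alpha''\big) \cap \spec \tilde H(s) = \emptyset$. Combining the two facts gives $\spec \tilde H(s) \subseteq [-s\alpha,\, s\alpha'] \cup [(1-s\beta)\gamma - s\alpha'',\, \infty)$: the gap interval is excised when it is nonempty, and when $(1-s\beta)\gamma - s\alpha'' \le s\alpha'$ the right-hand union equals $[-s\alpha, \infty)$, which already contains $\spec \tilde H(s)$ by the global lower bound. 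Translating back by $C(s)$ produces the claimed containments for $\Sigma_1(s)$ and $\Sigma_2(s)$.

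I do not expect a serious analytic obstacle here: the substance lives in the abstract lemmas of Section~\ref{sec:gen_pert}, and the remaining work is bookkeeping — the degenerate case $s = 0$ (where (ii)--(iii) force $A(0) = V(0) = 0$), the mild domain/density verifications, and the one genuinely useful observation, namely that $V(s)$ contributes nothing to the quadratic form on $\ran P$ because $HP = 0$ and the bound (iii) is phrased through $\langle\psi, H\psi\rangle$. That observation is precisely what makes the upper edge $C(s) + s\alpha'$ of $\Sigma_1(s)$ independent of $\gamma$ and $\beta$, which is what one needs downstream.
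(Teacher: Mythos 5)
Your proposal is correct and follows essentially the same route as the paper's proof: group $V(s)+A(s)$, normalize away $C(s)$, derive the three quadratic-form bounds on $P\cH$, $(\idtyty-P)\cH\cap\cD$, and $\cD$, then invoke Corollary~\ref{cor:fb_gap} for the interior gap and Lemma~\ref{lem:persistence_of_gaps} for the global lower bound. The only cosmetic difference is that you explicitly note $\langle\psi,V(s)\psi\rangle=0$ on $\ran P$ (which the paper leaves implicit in applying the form bound with $\langle\psi,H\psi\rangle=0$) and you cite the global bound \eqref{global_bounds} where the paper applies the gap estimate \eqref{gap_estimate} with $a=-n$, $b=0$; both yield the identical conclusion.
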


Note that the conditions imply $A(0)=V(0)=0$. As such, the result is trivial for $s=0$ as $H(0)$ is just a constant shift of $H$ by $C(0)$. In many applications, we will have that all quantities $A(s), \, V(s),$ and $C(s)$ are continuous in $s$ and, moreover, $C(s) \to 0$ as $s\to 0$.
\begin{proof}
	Suppose that $H(s)$ has the form described above and fix $0\leq s<\beta^{-1}$. Without loss of generality we may assume $C(s) = 0$. As $(0,\gamma)\cap \spec(H)=\emptyset$ and
	\bea
	\braket{\psi}{V(s)+A(s) \psi} & \leq & s\beta\braket{\psi}{H\psi} + s\alpha'\|\psi\|^2, \;\;  \psi\in P\cH\cap \cD \label{eq:ub1}\\
	\braket{\psi}{V(s)+A(s) \psi} & \leq & s\beta\braket{\psi}{H\psi} + s\alpha''\|\psi\|^2, \;\;  \psi\in (\idty-P)\cH\cap \cD \label{eq:ub2}
	\eea
	Corollary~\ref{cor:fb_gap} implies
	\begin{equation}\label{pert_residue1}
	(s\alpha', (1-s\beta)\gamma - s\alpha'')\cap \spec(H(s)) = \emptyset.
	\end{equation}
	Trivially, $(-n, 0)\cap \spec(H) = \emptyset$ for all $n\in \bN$ and for all $\psi\in \cD$
	\[
	|\braket{\psi}{V(s)+A(s) \psi} |\leq s\beta\braket{\psi}{H\psi} + s\alpha\|\psi\|^2.
	\]
	Applying Lemma~\ref{lem:persistence_of_gaps} shows that
	\be \label{pert_residue2}
	(-\infty, -s\alpha)\cap \spec(H(s)) = \emptyset.
	\ee
	Combining \eqref{pert_residue1} and \eqref{pert_residue2}, it immediately follows that $\spec(H(s))= \Sigma_1(s)\cup\Sigma_2(s)$ where
	\beann
	\Sigma_1(s) & = & \spec(H(s)) \cap [C(s)-s\alpha, C(s)+s\alpha'] \\ 
	\Sigma_2(s) & = & \spec(H(s))\cap [C(s)+(1-s\beta)\gamma-s\alpha'', \, \infty).
	\eeann
\end{proof}

In the above result, we derive stability of the spectral gap above the ground state energy assuming that the Hamiltonian $H(s)$ has a decomposition of the form \eqref{pert_ham_decomp} that satisfies conditions (i)-(iii) of Theorem~\ref{thm:Pert_Est}. In our application, the decomposition of this kind that we find depends extensively on properties of the ground state projections. However, whenever such a decomposition exists, Lemma~\ref{lem:persistence_of_gaps} implies stability of all higher gaps $(a,b)\cap \spec(H)$, not just the ground state gap. This is summarized in the following corollary.

\begin{cor}[Higher Order Gaps]\label{cor:higher_gaps}
	Suppose that $H(s)$, $s\in\bR$, has a decomposition of the form \eqref{pert_ham_decomp} satisfying the assumptions of Theorem~\ref{thm:Pert_Est}, and that $(a,b)\cap \spec(H) = \emptyset$ where $0<a<b$. Then, for all $0\leq s<\beta^{-1}$,
	\be\label{higher_gap}
	((1+s\beta)a+s\alpha, (1-s\beta)b-s\alpha) \cap \spec(H(s)) = \emptyset.
	\ee
	where $\alpha$ is as in Theorem~\ref{thm:Pert_Est}.
\end{cor}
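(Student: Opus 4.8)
The plan is to recognize this as a near-immediate consequence of Lemma~\ref{lem:persistence_of_gaps}, the only real subtlety being that for a gap $(a,b)$ with $a>0$ the relevant spectral subspace of $H$ bears no special relationship to the ground state projection $P$, so one cannot exploit the refined bounds $\alpha',\alpha''$ from hypothesis (ii) of Theorem~\ref{thm:Pert_Est} and must instead fall back on the global norm bound $\|A(s)\|\le s\alpha$. This is precisely why the interval in \eqref{higher_gap} is symmetric in $\alpha$, in contrast to the asymmetric interval around the ground state energy in Theorem~\ref{thm:Pert_Est}.

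Concretely, I would first reduce to $C(s)=0$ exactly as in the proof of Theorem~\ref{thm:Pert_Est}: replacing $H(s)$ by $H(s)-C(s)\idtyty$ merely translates $\spec(H(s))$ by $C(s)$, so it suffices to prove the claim for $H+V(s)+A(s)$ and reinstate the shift at the end. Next, set $V:=V(s)+A(s)$; this is self-adjoint with $\cD\subseteq\dom V$ since $A(s)\in\cB(\cH)$ and $\cD\subseteq\dom V(s)$. Using the triangle inequality for the associated quadratic forms together with hypotheses (ii) and (iii), for every $\psi\in\cD$
\[
|\braket{\psi}{V\psi}|\;\le\;|\braket{\psi}{V(s)\psi}|+|\braket{\psi}{A(s)\psi}|\;\le\;s\beta\braket{\psi}{H\psi}+s\alpha\|\psi\|^2,
\]
which is exactly the relative form bound \eqref{relative_bound} with constants $\alpha_0:=s\alpha\ge 0$ and $\beta_0:=s\beta$; since $0\le s<\beta^{-1}$ one has $\beta_0\in[0,1)$, so all hypotheses of Lemma~\ref{lem:persistence_of_gaps} are in force.

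Finally I would invoke Lemma~\ref{lem:persistence_of_gaps} with this $H$, $V$, the constants $\alpha_0=s\alpha$, $\beta_0=s\beta$, and the assumed gap $(a,b)\cap\spec(H)=\emptyset$ (here $0<a<b\in\bR$, as required). Its conclusion \eqref{gap_estimate} then reads
\[
((1+s\beta)a+s\alpha,\,(1-s\beta)b-s\alpha)\cap\spec(H+V)=\emptyset,
\]
which, upon reinstating the constant shift $C(s)\idtyty$, is precisely \eqref{higher_gap} for $H(s)$. I do not anticipate any genuine obstacle; the one point that deserves an explicit remark is simply that the sharper bounds $\alpha',\alpha''$ cannot be used when $a>0$, since the spectral projection of $H$ onto $(-\infty,a]$ need not be comparable to $P$, so one deliberately uses only $\|A(s)\|\le s\alpha$, yielding the symmetric interval.
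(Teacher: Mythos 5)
Your proof is correct and follows essentially the same route as the paper: both reduce to $C(s)=0$, collapse the refined bounds $\alpha',\alpha''$ on $A(s)$ down to the single global bound $\alpha$ (since the spectral projection onto $(-\infty,a]$ has no useful relation to the ground state projection $P$ when $a>0$), and then invoke Lemma~\ref{lem:persistence_of_gaps} with constants $s\alpha$ and $s\beta$. Your observation about \emph{why} the refined bounds must be abandoned — leading to the symmetric interval in $\alpha$ — is precisely the point, and the organization of packaging $V(s)+A(s)$ into a single form-bounded perturbation before applying the lemma is a clean and equivalent way to present the paper's one-line argument.
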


\begin{proof}
	The proof of this result runs just as that of Theorem~\ref{thm:Pert_Est} with the alteration that $P$ is the spectral projection associated with $(-\infty,a]$. By replacing both $\alpha'$ and $\alpha''$ with $\alpha$ in \eqref{eq:ub1}-\eqref{eq:ub2}, one finds that \eqref{higher_gap} again follows from applying Lemma~\ref{lem:persistence_of_gaps}.
\end{proof}

In general, it is far from obvious when a Hamiltonian $H(s)$ has a decomposition of the form \eqref{pert_ham_decomp}. 
In the time-honored quantum mechanics tradition, considering $\tilde{H}(s)=U(s)^*H(s)U(s)$ with a cleverly chosen unitary transformation 
$U(s)$, can be very helpful. We will use the  unitary transformations given by the spectral flow, see \eqref{gen_spec_flow}-\eqref{spec-flow-auto}.
While the main focus of this paper is spectral gap stability of quantum spin systems, the spectral flow automorphism is well defined and its key property, i.e. \eqref{spec_flow_proj}, holds in a more general context, see \cite{nachtergaele:2019}. As a consequence, the approach presented in this paper can be applied to more general systems. 

%
%
%

\subsubsection{Using the Spectral Flow for Spectral Gap Stability} \label{sec:sf-gap-stab}

Recall that given a quantum spin system on a $\nu$-regular metric space $(\Gamma, d)$ with
quasi-local algebra $\cA_\Gamma$, we consider a family of finite volume Hamiltonians of the form
\[
H_\Lambda(s) = H_\Lambda + s V_{\Lambda^p}, \;\; s\in\bR
\]
with $\Lambda^p\subseteq\Lambda\in\cP_0(\Gamma)$ as defined in \eqref{pert_hams}. We denote by $\gamma_\Lambda$ the spectral gap of $H_\Lambda$ above the ground state energy (which we normalize to be zero), and define $\Sigma_1^\Lambda(s)$ and 
$\Sigma_2^\Lambda(s)$ as before, see \eq{spec_sets}. For any $0<\gamma<\gamma_\Lambda$, the goal is to find a lower bound for 
\begin{equation}
s^\Lambda_\gamma = \sup\{s'\in [0,1] \, : \, \gap(H_\Lambda(s))=\text{dist}(\Sigma_1^\Lambda(s),\Sigma_2^\Lambda(s))\geq \gamma \; \text{  for all } \; 0\leq s \leq s' \}.
\end{equation}
This will be achieved by applying Theorem~\ref{thm:Pert_Est} to the Hamiltonian $\tilde{H}_\Lambda(s)= \alpha_s(H_\Lambda(s))$ 
where $\alpha_s:\cA_\Lambda \to \cA_\Lambda$ is the spectral flow, defined as in \eqref{gen_spec_flow}-\eqref{spec-flow-auto} 
with $\xi =\gamma$. The advantage of considering $\tilde{H}_\Lambda(s)$ is that its spectral projection $\tilde{P}(s)$ associated to $\Sigma^\Lambda_1(s)$ is 
constant for all $0\leq s \leq s^\Lambda_\gamma$. Specifically, if $P(s)$ is the spectral projection of $H_\Lambda(s)$ associated with $\Sigma^\Lambda_1(s)$, then by \eqref{spec_flow_proj}
\[
\tilde{P}(s) = \alpha_s(P(s)) = P(0), \;\;\text{for all}\;\; 0 \leq s \leq s_\Lambda(\gamma).
\] 

If the model defined by $H_\Lambda$ is frustration-free and the ground states satisfy an LTQO condition (i.e. are sufficiently indistinguishable) as
described in
Section \ref{sec:LTQO2}, then we can construct a decomposition of the form \eqref{pert_ham_decomp} from Theorem~\ref{thm:Pert_Est} for any interaction $\Phi$ defining the perturbation $V_{\Lambda^p}$ which has a finite $F$-norm for an $F$-function of sufficient decay. Specifically, we will construct the following type of decomposition:

\begin{claim}[Decomposition of the Equivalent Hamltonian]\label{clm:decompositionH} Under appropriate assumptions on the ground states of $H_\Lambda$, and with sufficient decay on $\Phi$, one can write
	\be\label{qss_pert_decomp}
	\alpha_s(H_\Lambda(s)) = H_\Lambda + V_\Lambda(s)+\Delta_\Lambda(s)+E_\Lambda(s) + C_\Lambda(s)\idtyty
	\ee
	with the following properties:
	\begin{enumerate}
		\item[(i)] $P(0)\Delta_\Lambda(s)P(0) = \Delta_\Lambda(s)$, and there is a constant $\delta_\Lambda>0$ so that $\|\Delta_\Lambda(s)\|\leq s \delta_\Lambda$.
		\item[(ii)] There is a constant $\epsilon_\Lambda >0$ so that $\|E_\Lambda(s)\|\leq s\epsilon_\Lambda$.
		\item[(iii)] There is a constant $\beta_\Lambda>0$ so that $|\braket{\psi}{V_\Lambda(s)\psi}|\leq s\beta_\Lambda \braket{\psi}{H_\Lambda \psi}$ for all $\psi \in \cH_\Lambda$.
	\end{enumerate}
\end{claim}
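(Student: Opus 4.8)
The plan is to produce \eqref{qss_pert_decomp} in two stages, matching Sections~\ref{sec:Step1} and~\ref{sec:LTQO2}. The first stage concerns only quasi-locality and is blind to the ground-state structure: using the fundamental theorem of calculus for the spectral flow, one rewrites $\alpha_s(H_\Lambda(s))$ as $H_\Lambda$ plus an anchored interaction with stretched-exponential decay plus a norm-controlled remainder. The second stage invokes frustration-freeness and the LTQO hypothesis on the perturbation region $\Lambda^p$ to sort the anchored terms (and the remainder) into $V_\Lambda(s)$, $\Delta_\Lambda(s)$, $E_\Lambda(s)$ and a scalar $C_\Lambda(s)$, each respecting the ``budget'' required by Theorem~\ref{thm:Pert_Est}: a relative form bound for $V_\Lambda(s)$, a small $P(0)$-block-diagonal operator for $\Delta_\Lambda(s)$, and a small-norm correction $E_\Lambda(s)$ whose only role downstream is through its diagonal blocks (while $C_\Lambda(s)$ may be extensive, being merely an energy shift).

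\textbf{Stage~1 (Section~\ref{sec:Step1}).} Write $\alpha_s(H_\Lambda(s)) = H_\Lambda + s\,\alpha_s(V_{\Lambda^p}) + \bigl(\alpha_s(H_\Lambda) - H_\Lambda\bigr)$ and expand the last term by the fundamental theorem of calculus, using the spectral-flow generator $D^\gamma$ of \eqref{gen_spec_flow}:
\be
\alpha_s(H_\Lambda) - H_\Lambda = \int_0^s \alpha_r\bigl(i\,[D^\gamma(r),\,H_\Lambda]\bigr)\,dr .
\ee
Insert the anchored background $H_\Lambda = \sum_{x\in\Lambda} h_x$ from \eqref{eq:anchored-background} and the anchored perturbation $V_{\Lambda^p} = \sum_{x\in\Lambda^p,\,m\geq 0}\Phi_{\Lambda^p}(x,m)$ from \eqref{BalledInt}, and localize each summand $\alpha_r(i[D^\gamma(r),h_x])$ and $\alpha_s(\Phi_{\Lambda^p}(x,m))$ onto balls anchored at $x$ by iterated conditional-expectation truncations. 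The generator $D^\gamma(r) = \int_{\bR}\tau_t^{(r)}(V_{\Lambda^p})\,W_\gamma(t)\,dt$ is quasi-local because $\tau_t^{(r)}$ obeys the Lieb--Robinson bound of Theorem~\ref{thm:LRBounds} (with velocity controlled through \eqref{balled_LR_vel}) and $W_\gamma$ decays almost exponentially, and $\alpha_r$ is quasi-local for the same reason (Section~\ref{sec:spec_flow}). Feeding the hypothesis $\Phi\in\cB_F$ with $F$ as in \eqref{stretch_exp_Ffun} through these estimates, and controlling the anchored $F$-norm of the quasi-locally transformed interaction via Theorem~\ref{thm:trans-int-bd}, one arrives at
\be
\alpha_s(H_\Lambda(s)) = H_\Lambda + \sum_{x\in\Lambda^p,\,n\geq 1}\widetilde\Phi(x,n,s) + R_\Lambda(s),
\ee
where $\widetilde\Phi$ is a self-adjoint anchored interaction on $\Lambda$ with $\widetilde\Phi(x,n,s)\in\cA_{b_x^\Lambda(n)}$ and $\|\widetilde\Phi(x,n,s)\|\leq s\,g(n)$ for a summable, $\Lambda$-independent, stretched-exponential $g$, and $R_\Lambda(s) = R_\Lambda(s)^*$ collects the truncation errors, with $\|R_\Lambda(s)\|\leq s\,c_\Lambda$. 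This is the substance of Proposition~\ref{prop:step1-commute} and Theorem~\ref{thm:Step1}.

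\textbf{Stage~2 (Section~\ref{sec:LTQO2}).} Take $\Lambda^p = \{x\in\Lambda : r_x^\Omega(\Lambda)\geq \rho(\Lambda)\}$ for a threshold $\rho(\Lambda)\to\infty$ fixed in Section~\ref{sec:uniform_sequences}; this is where LTQO, via the indistinguishability radius of Definition~\ref{LTQO_def}, enters. First remove ground-state expectations: put $\widehat\Phi(x,n,s) := \widetilde\Phi(x,n,s) - \omega_\Lambda(\widetilde\Phi(x,n,s))\,\idtyty$, so $\omega_\Lambda(\widehat\Phi(x,n,s)) = 0$, and set $C_\Lambda(s) := \sum_{x\in\Lambda^p,\,n\geq 1}\omega_\Lambda(\widetilde\Phi(x,n,s))\in\bR$. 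Next, writing $Q_{x,m} := P_{b_x^\Lambda(m)}$ and suppressing the arguments $(x,n,s)$, decompose each remaining term as
\be
\widehat\Phi = \bigl(\widehat\Phi - Q_{x,n}\,\widehat\Phi\, Q_{x,n}\bigr) + \bigl(Q_{x,n}\,\widehat\Phi\, Q_{x,n} - P(0)\,\widehat\Phi\, P(0)\bigr) + P(0)\,\widehat\Phi\, P(0),
\ee
using $P(0) = P_\Lambda = P(0)Q_{x,m}$ from frustration-freeness \eqref{ff_prop}. The first summand is supported in $b_x^\Lambda(n)$ and has vanishing compression onto $\ker H^{\eta}_{b_x^\Lambda(n)}$. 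The middle summand is further resolved along a tower of nested projections $Q_{x,n}\geq Q_{x,m_1}\geq\cdots\geq Q_{x,r_x^\Omega(\Lambda)}\geq P(0)$, each step contributing a term supported in a larger ball whose compression onto that ball's ground-state space again vanishes, plus a final residue of norm $\leq 2|b_x^\Lambda(n)|\,\|\widehat\Phi(x,n,s)\|\,\Omega(r_x^\Omega(\Lambda)-n)$ by LTQO (Definition~\ref{LTQO_def}, with $k=n$ and the projection $P_{b_x^\Lambda(r_x^\Omega(\Lambda))}$) together with $\omega_\Lambda(\widehat\Phi(x,n,s))=0$. Collecting all the ``vanishing-compression'' pieces over $(x,n)$ gives $V_\Lambda(s)$, which has the structure (support in a ball, vanishing compression onto that ball's ground-state space, summable stretched-exponential decay) demanded by Theorem~\ref{thm:Step3General}; applying that theorem --- which uses $\nu$-regularity and the separating-partition device of Definition~\ref{ass:sep_part} --- yields property~(iii) with a constant $\beta_\Lambda$ of the form $\mathrm{const}\cdot\sum_n(\text{decay rate})$. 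The $P(0)$-block-diagonal pieces $P(0)\widehat\Phi(x,n,s)P(0)$, which by the same LTQO bound satisfy $\|P(0)\widehat\Phi(x,n,s)P(0)\|\leq |b_x^\Lambda(n)|\,\|\widehat\Phi(x,n,s)\|\,\Omega(r_x^\Omega(\Lambda)-n)$, assemble into $\Delta_\Lambda(s)$, giving~(i); and the residues together with $R_\Lambda(s)$ assemble into $E_\Lambda(s)$, giving~(ii). This is the content of Theorems~\ref{thm:step2-(i)+(ii)} and~\ref{thm:rel_bound_cond}.

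\textbf{Main obstacle.} The hard part is Stage~2: one must (a) choose $\rho(\Lambda)$ so the indistinguishability radii are large enough for LTQO to furnish genuine decay $\Omega(r_x^\Omega(\Lambda)-n)$; (b) use strictly larger projection radii than the support radius $n$ in the splitting, so that LTQO is not vacuous; and (c) verify that the ``error'' leftovers --- each individually small but extensive as a bare sum of norms over $x\in\Lambda^p$ --- are in fact controlled, because the stretched-exponential decay of $g$ and of $\Omega$ overwhelms the polynomial growth of $|b_x^\Lambda(n)|$ and the extensive site count, so that $\|\Delta_\Lambda(s)\|\leq s\delta_\Lambda$ and $\|E_\Lambda(s)\|\leq s\epsilon_\Lambda$ with $\Lambda$-dependent (but, under the uniformity assumptions of Section~\ref{sec:uniform_sequences}, bounded) constants. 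In Stage~1 the analogous difficulty is more routine: establishing uniform-in-$\Lambda$ quasi-locality of $D^\gamma$ and $\alpha_s$ and the anchored $F$-norm bound for the transformed interaction, in the spirit of \cite{nachtergaele:2019} and Appendix~\ref{sec:est-trans-balled-ints}.
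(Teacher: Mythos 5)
Your Stage~1 departs from the paper's argument in a way that creates a genuine gap: you split $\alpha_s(H_\Lambda(s)) = H_\Lambda + s\,\alpha_s(V_{\Lambda^p}) + \int_0^s \alpha_r\bigl(i[D^\gamma(r),H_\Lambda]\bigr)\,dr$ by the fundamental theorem of calculus, and then claim this recovers Proposition~\ref{prop:step1-commute}. It does not. The content of Proposition~\ref{prop:step1-commute} is that the \emph{global} anchored terms $\Phi^{(1)}_x(s)$ each commute with $P(0)=P_\Lambda$, and the paper obtains this only because of the identity $\alpha_s(H(s)) - H = (\alpha_s-{\rm id})(\mathcal{F}_s(H)) + (\mathcal{F}_s-\mathcal{F}_0)(H) + s\,\alpha_s(\mathcal{F}_s(V))$, which deliberately interposes the spectrally filtered map $\mathcal{F}_s$ (weight $w_\gamma$, whose Fourier transform is supported in $[-\gamma,\gamma]$) before $\alpha_s$. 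Each piece is then of the form $\alpha_s\circ\mathcal{F}_s$ or $\mathcal{F}_0$, and the crucial property $[\mathcal{F}_s(A),P(s)]=0$ gives $[\alpha_s(\mathcal{F}_s(A)),P(0)] = \alpha_s([\mathcal{F}_s(A),P(s)])=0$. In your decomposition the building blocks are $\alpha_s(\Phi_{\Lambda^p}(x,\cdot))$ and $\int_0^s \alpha_r(i[D^\gamma(r),h_x])\,dr$, and neither commutes with $P(0)$: for instance $[\alpha_s(\Phi_{\Lambda^p}(x,\cdot)),P(0)] = \alpha_s([\Phi_{\Lambda^p}(x,\cdot),P(s)])$, and there is no reason the bare perturbation term commutes with the deformed spectral projection $P(s)$.

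This is not a cosmetic loss. The commutativity is precisely what lets the paper write, without cross terms, $V^{(1)}_{\rm eff}(s)-C(s)\idty = (\idtyty-P_\Lambda)(\cdot)(\idtyty-P_\Lambda) + P_\Lambda(\cdot)P_\Lambda$, identifying these blocks with $V^{(2)}(s)$ and $\Delta(s)$. Without it, the off-diagonal block $P_\Lambda\bigl(V^{(1)}_{\rm eff}(s)-C(s)\idty\bigr)(\idtyty-P_\Lambda) + \text{h.c.}$ survives; it is an extensive sum over $x\in\Lambda^p(K)$ of terms of size $O(s)$, and LTQO provides no control over it (LTQO compares $P_\Lambda A P_\Lambda$ with $\omega_\Lambda(A)P_\Lambda$, saying nothing about $P_\Lambda A(\idtyty-P_\Lambda)$). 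There is nowhere in \eqref{qss_pert_decomp} to put such a term: it is too large for $E_\Lambda(s)$, not supported on balls and not two-sided annihilating for $V_\Lambda(s)$, and not block-diagonal for $\Delta_\Lambda(s)$. A secondary issue in Stage~2 compounds this: your piece $\widehat\Phi - Q_{x,n}\widehat\Phi Q_{x,n}$ (with $Q_{x,n}=P_{b_x^\Lambda(n)}$) only has \emph{vanishing compression} onto $\ker H_{b_x^\Lambda(n)}$, i.e. $Q_{x,n}(\cdot)Q_{x,n}=0$, whereas Theorem~\ref{thm:Step3General} requires the two-sided annihilation $\Phi(x,n)P_{b_x^\Lambda(n)} = P_{b_x^\Lambda(n)}\Phi(x,n)=0$ of \eqref{annih_terms}. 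The paper achieves this by building the $\Theta_1, \Theta_2$-terms from expressions sandwiched between $(\idtyty-P_{b_x^\Lambda(j)})$ or $E_j=P_{b_x^\Lambda(j-1)}-P_{b_x^\Lambda(j)}$, which works only because the starting point $V^{(2)}(s)$ is already wrapped in $(\idtyty-P_\Lambda)$ --- again contingent on the commutativity you have lost. The fix is to use the $\mathcal{F}_s$-insertion decomposition of Section~\ref{sec:conv-lab} rather than the fundamental theorem of calculus; once that is in place, your telescoping intuition for Stage~2 is the right one.
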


The conditions above imply that $V_\Lambda(0)=\Delta_\Lambda(0)=E_\Lambda(0)=0$. For our definition of $C_{\Lambda}(s)$ in Section~\ref{sec:LTQO2}, see specifically \eqref{const-s}, one can verify that $C_{\Lambda}(s) \to 0$ as $s\to 0$.
In any case, Theorem~\ref{thm:Pert_Est} applies with $A(s) =\Delta_\Lambda(s)+E_\Lambda(s)$, $\alpha=\alpha'=(\delta_\Lambda+\epsilon_\Lambda)$ and $\alpha'' = \epsilon_\Lambda$, to give
\[
\Sigma^\Lambda_1(s) \subseteq [C(s)-s(\delta_\Lambda+\epsilon_\Lambda), \, C(s)+s(\delta_\Lambda+\epsilon_\Lambda)],
\quad
\Sigma^\Lambda_2(s) \subseteq [C(s)+(1-s\beta_\Lambda)\gamma_\Lambda-s\epsilon_\Lambda, \infty).
\]
These inclusions imply
\be\label{stab_gap_est}
\gap(H_\Lambda(s)):=\text{dist}(\Sigma^\Lambda_1(s),\Sigma^\Lambda_2(s)) \geq \gamma_{\Lambda}-s(\beta_\Lambda \gamma_\Lambda+\delta_\Lambda+2\epsilon_\Lambda),
\ee
from which the following lower bound holds:
\be\label{s_gamma_est}
s^{\Lambda}_\gamma\geq \frac{\gamma_{\Lambda} - \gamma}{\beta_\Lambda \gamma_\Lambda+\delta_\Lambda+2\epsilon_\Lambda}.
\ee
Using a similar estimate combined with Corollary~\ref{cor:higher_gaps}, we can also obtain a lower bound on the range of $s$ for which higher order gaps remain open.

It is important to note that, in general, the constants $\beta_\Lambda,\delta_\Lambda$, and $\epsilon_\Lambda$ appearing in the lower bound for $s^{\Lambda}_\gamma$ depend on the finite volume $\Lambda$.  Our proof of 
stability of the spectral gap will require an increasing and absorbing sequence of finite volumes $\Lambda_n\uparrow\Gamma$ such that 
$$
\gamma_0 =\inf_{n \geq 1} {\rm gap}(H_{\Lambda_n}) = \inf_{n \geq 1} \gamma_{\Lambda_n}>0 .
$$
In this case, we have defined {\em stability of the spectral gap} by the property that 
$$
\inf_{n \geq 1} s^{\Lambda_n}_\gamma>0 \quad \mbox{for all} \quad 0< \gamma<\gamma_0 \, .
$$ 
This form of stability implies that the quantum spin model has a non-vanishing gap in the thermodynamic limit whenever the perturbation parameter is sufficiently small (see Sections \ref{sec:uniform_sequences} and \ref{sec:automorphic-equivalence} for details). For our proof of stability, we will also require that our
increasing and absorbing sequence of finite volumes can be associated with a suitable choice of perturbation regions $\Lambda_n^p\uparrow \Gamma$, and
that given any $\epsilon, \delta>0$, there is $N \geq 1$ sufficienly large so that $\delta_{\Lambda_n} < \delta$ and $\epsilon_{\Lambda_n} <\epsilon$ for all $n\geq N$. Moreover, we must also show that $\sup_n\beta_{\Lambda_n}\leq \beta<\infty$. This will be the task in the next couple of sections. But first, we 
identify a condition on the perturbation for which a relative form bound is straightforward to derive.

%
%

\subsection{A class of form bounded interactions} \label{sec:form_bd}

In this section we consider a class of perturbations of frustration-free models for which a relative form bound can be derived quite simply.
The unperturbed model is defined on $\cH_\Lambda=\bigotimes_{x\in\Lambda} \cH_x$, where $\cH_x$ are arbitrary, not necessarily finite-dimensional   complex Hilbert spaces. The unperturbed Hamiltonian $H_\Lambda\geq 0$ is assumed to be frustration-free but not necessarily bounded.
As detailed below, the perturbation $V_\Lambda$ will be assumed to be given in terms of a bounded anchored interaction. We will show that if the interaction
terms of the perturbation annihilate the ground states of $H_\Lambda$, one can calculate a constant $\beta >0$ such that
\be \label{foam_bd}
|\braket{\psi}{V_\Lambda \psi}| \leq \beta \braket{\psi}{H_\Lambda \psi} \; \text{ for all } \; \psi\in\text{dom}(H_\Lambda).
\ee

In this section we study a system defined on a finite set $\Lambda$ equipped with a metric $d$. Often, and in later sections, $\Lambda$ will be a 
finite subset of a $\nu$-regular metric space $(\Gamma, d)$, and $\Gamma$ is thought of as infinite, but this does not play a role here.

To state the result we will use two families of subsets of $\Lambda$. The first are the balls, which are labeled by $x\in\Lambda$ and $n\geq 0$:
$$
b_x^\Lambda(n) = \{ y\in \Lambda \mid d(x,y) \leq n\}.
$$
The second family is also labeled by $x\in\Lambda$ and $n\geq 0$, and we denote those sets by $\Lambda(x,n)$. We require 
$b^\Lambda_x(n)\subset
\Lambda(x,n)$, for all $x\in\Lambda$ and $n\geq 0$. For example, the $\Lambda(x,n)$ could be balls for another metric. In some 
situations we may take $\Lambda(x,n)=b_x^{\Lambda}(n)$. The balls are used to describe the decay of the perturbations and to define the indistinguishability 
radius needed for the LTQO condition, see \eqref{LTQO_length}, while the gap of the local Hamiltonians $H_{\Lambda(x,n)}$ will feature prominently in the relative form bound we derive. The indistinguishability radius and LTQO on the one hand and the local ground state gaps on the other, in principle, are two unrelated aspects of the models we study. Therefore, good choices for these two families of finite sets need not be the same. The balls with respect to a natural choice of metric
for expressing the indistinguishability radius, may not be the most convenient shape of finite volumes for estimating the local gaps. Therefore, we maintain the freedom to chose $\Lambda(x,n)$ distinct from the balls $b_x^\Lambda(n)$. 

There will be a further property we require of the sets $\Lambda(x,n)$. We formulate it here for the case where $\ell:=\diam(\Lambda)$ is finite, but the definition and discussion carries over to the infinite situation without change. Let $\cS = \{ \Lambda(x,n)\subset\Lambda \mid x\in\Lambda, 1 \leq n\leq \ell\}$. 
In the following definition $c$ and $\zeta$ are positive constants.


\begin{defn}[Family of Partitions of $(c,\zeta)$-Polynomial Growth Separating $\cS$] \label{ass:sep_part}
	$\cT=\{\cT_n \mid 1 \leq n\leq \ell\}$ is a family of partitions of $(c,\zeta)$-polynomial growth separating $\cS$ if for each $n$,
	$\caT_n = \{T_n^i : i\in \caI_n \}$ is a partition of $\Lambda$ with $|\caI_n|\leq c n^\zeta$ and such that
	\be \label{sep_property}
	\Lambda(x,n)\cap\Lambda(y,n) = \emptyset \;\text{  for all } \; x,y\in T_n^i \; \text{with}\; x \neq y.
	\ee
\end{defn}

The canonical choice of $ \Lambda(x,0) := \{ x \}$ and $\mathcal{T}_0 = \{ \Lambda(x,0) : x \in \Lambda \}$ allows for
these notions to be extended to $n=0$, but this will not play an important role in our arguments. Moreover, in application, such partitions $\cT_n$ may only be required for values $n$ larger than some threshold $R$, see, e.g. Theorem~\ref{thm:Step3General} below.

The collection of volumes $\caS$ that can be separated by a family of partitions of polynomial growth enters the proof of Theorem~\ref{thm:Step3General} through a combinatorial argument. It is because of this argument that working with anchored interactions is convenient. 

\begin{ex}[Separating Partition on $\bZ^\nu$] Consider $\Gamma = \bZ^\nu$ with, e.g., the $\ell^\infty$-metric, and let $\Lambda = [-L, L]^{\nu}$, $L \geq 1$. 
	Take $\cS$ to be the collection of balls, i.e. $\Lambda(x,n) = b_x^\Lambda(n)$ for all relevant $x$ and $n$. 
	We construct a family of partitions $\cT=\{\cT_n \, : \, 1\leq n\leq 2L+1\}$ which is of $(3^{\nu}, \nu)$-polynomial growth and separates $\cS$.
	To define the $n$-th partition, first set $\caI_n = [0,2n+1)^\nu$. Clearly, this set satisfies the polynomial condition as
	\[| \caI_n| = (2n+1)^\nu \leq (3n)^\nu.\]
	We then define the $n$-th parition $\caT_n = \{T_n^x \, : \, x\in I_n \}$ of $\Lambda$ by
	\[T_n^x = \{z\in \Lambda \, : \, z_i \equiv x_i \mod(2n+1), \, i = 1, \ldots, \nu  \}. \]
	Fix $x\in\cI_n$. By construction, $d(y,z)\geq 2n+1$ for any two distinct sites $y,z\in\caT_n^x$ and so $b_y^\Lambda(n)\cap b_z^\Lambda(n) = \emptyset$. Thus, $\cT_n$ separates $\caS$ as desired.
\end{ex}

%

The local Hamiltonians $H_{\Lambda_0}$, for any $\Lambda_0\subset \Lambda$, are defined in terms of 
a finite-range, frustration-free interaction $\eta: \cP_0(\Lambda) \to \cA_{\Lambda}^{\rm loc}$.
As mentioned before, $\Lambda$ is a fixed finite set here. Let $P_{\Lambda_0}$ be the orthogonal projection 
onto the ground state space, $\ker(H_{\Lambda_0})$, where
\begin{equation} \label{gen_ff_ham}
H_{\Lambda_0} = \sum_{X \subset \Lambda_0} \eta(X).
\end{equation}
Recall that the frustration-free property guarantees that if $\Lambda_0 \subset \Lambda_1$, then 
\begin{equation}\label{FFProp}
P_{\Lambda_0}P_{\Lambda_1} = P_{\Lambda_1}P_{\Lambda_0} = P_{\Lambda_1}.
\end{equation}
%

Let $R\geq 0$ denote the range of the interaction $\eta$ and assume $R\leq \ell:=\diam(\Lambda)$. In general, $R$ should be thought of 
`small' relative to $\ell=\diam(\Lambda)$. 
%
Given a collection of finite volumes $\cS=\{\Lambda(x,n)\}$, we define the \emph{local gaps}, $\gamma(n)$, $1\leq n\leq \ell$, by
\be\label{local_gap}
\gamma(n) = \inf_{x\in\Lambda} {\rm gap}(H_{\Lambda(x,n)}).
\ee
In concrete situations of interest, we usually have $H_{\Lambda(x,n)}\neq 0$,  for all $x\in\Lambda$ and $n\geq R$. To deal with the situations where
some $H_{\Lambda(x,n)}= 0$, we define ${\rm gap}(0)=\infty$. 


Given $\cS$ with a family of separating partitions of $(c,\zeta)$-growth as defined in Definition ~\ref{ass:sep_part}, Theorem~\ref{thm:Step3General} provides conditions on which a self-adjoint operator  $\Phi \in \mathcal{A}_{\Lambda}$ written as
\begin{equation} \label{balled_int}
V_\Lambda = \sum_{x \in \Lambda} \sum_{n=R}^{\ell}\Phi(x,n), \quad \Phi(x,n)^*= \Phi(x,n) \in \mathcal{A}_{b_x^\Lambda(n)} 
\end{equation}
can be form-bounded by $H_{\Lambda}$ with an explicit constant $\beta$, as in (\ref{foam_bd}). Note that in \eq{balled_int} we have grouped terms together 
so that the summation starts with $n=R$; compare, e.g., with more general anchored interactions as in (\ref{fv_ham}). We do this as there are many cases of interest for which $H_{b_x^\Lambda(n)}=0$ (i.e. $P_{b_x^\Lambda(n)}=\idty$) for $n< R$. Thus, given \eqref{annih_terms} below, we choose to use the range $R$ as the lower bound in \eqref{balled_int}. However, this is not strictly necessary as long as \eqref{annih_terms} holds.

\begin{thm}\label{thm:Step3General}
	Let $H_{\Lambda}$ be a frustration-free Hamiltonian (not necessarily bounded), and $\cS$ be a 
	collection of sub-volumes of $\Lambda$ with positive local gaps $\gamma(n)>0$ for $n\geq R$. Assume there exists a family of partitions, $\cT = \{ \cT_n : R\leq n\leq \ell \}$, of $(c, \zeta)$-polynomial growth
	that separates $\cS$. Suppose $V_\Lambda \in \mathcal{A}_{\Lambda}$ is as in (\ref{balled_int}) and satisfies
	\begin{equation} \label{annih_terms}
	\Phi(x,n)P_{b_x^\Lambda(n)} = P_{b_x^\Lambda(n)} \Phi(x,n) = 0,\quad x\in\Lambda,\, R\leq n\leq \ell.
	\end{equation}
	Then, for all $\psi \in \dom{H_\Lambda}$,
	\begin{equation}
	\label{RelBdGeneral}
	\left| \braket{\psi}{V_\Lambda \, \psi} \right|
	\; \leq \;
	\beta
	\braket{\psi}{H_\Lambda\, \psi}
	\end{equation}
	where, given $G(n)=\max_{x\in\Lambda} \|\Phi(x,n)\|$:
	\be
	\beta= c\sum_{n=R}^{\ell} \frac{n^\zeta G(n)}{\gamma(n)}. 
	\ee
\end{thm}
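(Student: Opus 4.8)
The plan is to use the separating partitions to chop $V_\Lambda$ into pieces on which the frustration-freeness of $H_\Lambda$ can be exploited directly. First I would regroup
\[
V_\Lambda \;=\; \sum_{n=R}^{\ell}\;\sum_{i\in\caI_n}\;\sum_{x\in T_n^i}\Phi(x,n),
\]
using that each $\caT_n=\{T_n^i:i\in\caI_n\}$ is a partition of $\Lambda$, and take the innermost sum (one cell, one scale) as the basic object to estimate. Before that, I would promote the annihilation hypothesis \eqref{annih_terms} from $b_x^\Lambda(n)$ to $\Lambda(x,n)$: since $b_x^\Lambda(n)\subseteq\Lambda(x,n)$, the frustration-free relation \eqref{FFProp} gives $P_{b_x^\Lambda(n)}P_{\Lambda(x,n)}=P_{\Lambda(x,n)}$ together with its adjoint, so \eqref{annih_terms} yields $\Phi(x,n)P_{\Lambda(x,n)}=P_{\Lambda(x,n)}\Phi(x,n)=0$, and hence $\Phi(x,n)=(\idtyty-P_{\Lambda(x,n)})\Phi(x,n)(\idtyty-P_{\Lambda(x,n)})$. (Terms with $H_{b_x^\Lambda(n)}=0$ have $P_{b_x^\Lambda(n)}=\idtyty$, hence $\Phi(x,n)=0$ and drop out; with the convention $\mathrm{gap}(0)=\infty$ the corresponding summand of $\beta$ also vanishes, so these terms cause no trouble.)

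For the per-cell estimate, fix $n\ge R$, $i\in\caI_n$, and $\psi\in\dom H_\Lambda$. The factorization of $\Phi(x,n)$ together with $\|\Phi(x,n)\|\le G(n)$ gives $|\braket{\psi}{\Phi(x,n)\psi}|\le G(n)\|(\idtyty-P_{\Lambda(x,n)})\psi\|^2$, and the definition \eqref{local_gap} of $\gamma(n)$ gives $H_{\Lambda(x,n)}\ge\gamma(n)(\idtyty-P_{\Lambda(x,n)})$ as quadratic forms, so $\|(\idtyty-P_{\Lambda(x,n)})\psi\|^2\le\gamma(n)^{-1}\braket{\psi}{H_{\Lambda(x,n)}\psi}$. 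The separating property \eqref{sep_property} now does the crucial work: the sets $\{\Lambda(x,n):x\in T_n^i\}$ are pairwise disjoint, so any nonempty $Y$ is contained in $\Lambda(x,n)$ for at most one $x\in T_n^i$; since each $\eta(Y)\ge0$ by frustration-freeness, $\sum_{x\in T_n^i}H_{\Lambda(x,n)}$ is a sub-sum of distinct nonnegative terms of $H_\Lambda=\sum_{X\subseteq\Lambda}\eta(X)$, whence $\sum_{x\in T_n^i}H_{\Lambda(x,n)}\le H_\Lambda$ as forms. Chaining the three inequalities and summing over $x\in T_n^i$ gives
\[
\Bigl|\Bigl\langle\psi,\sum_{x\in T_n^i}\Phi(x,n)\,\psi\Bigr\rangle\Bigr|
\;\le\;\frac{G(n)}{\gamma(n)}\sum_{x\in T_n^i}\braket{\psi}{H_{\Lambda(x,n)}\psi}
\;\le\;\frac{G(n)}{\gamma(n)}\braket{\psi}{H_\Lambda\psi}.
\]

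To finish I would sum over the at most $cn^\zeta$ cells $i\in\caI_n$ and then over $R\le n\le\ell$, obtaining $|\braket{\psi}{V_\Lambda\psi}|\le\bigl(c\sum_{n=R}^{\ell}n^\zeta G(n)/\gamma(n)\bigr)\braket{\psi}{H_\Lambda\psi}=\beta\braket{\psi}{H_\Lambda\psi}$. The domain technicalities are benign: $V_\Lambda$ is a finite sum of bounded operators, hence bounded, so its quadratic form is everywhere defined, while $H_\Lambda\ge H_{\Lambda(x,n)}\ge0$ as forms supplies the form-domain inclusions needed to justify each step on $\dom H_\Lambda$. I do not expect a serious obstacle; the one genuinely substantive point is recognizing that Definition~\ref{ass:sep_part} is engineered precisely so that within a single cell the perturbation terms live on disjoint regions — which is exactly what lets frustration-freeness absorb an entire cell's worth of terms into one copy of $H_\Lambda$ — after which the polynomial bound on the number of cells is the only price paid for reassembling the full perturbation.
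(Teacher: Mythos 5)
Your proof is correct, and it proves the same inequality with the same skeleton as the paper: lift the annihilation condition from $b_x^\Lambda(n)$ to $\Lambda(x,n)$ via the frustration-free nesting $P_{b_x^\Lambda(n)}P_{\Lambda(x,n)}=P_{\Lambda(x,n)}$; use the local gap to convert $Q_{\Lambda(x,n)}:=\idty-P_{\Lambda(x,n)}$ into $\gamma(n)^{-1}H_{\Lambda(x,n)}$; use the separating property to bound $\sum_{x\in T_n^i}H_{\Lambda(x,n)}\le H_\Lambda$ as a disjoint sub-sum of the nonnegative interaction terms; and then pay a factor $|\caI_n|\le cn^\zeta$ for reassembling the cells at each scale.

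Where you depart from the paper is in the per-cell estimate, and your version is genuinely leaner. The paper (following Michalakis–Zwolak) introduces the configuration space $C_i=\{0,1\}^{T_n^i}$ and the commuting family of spectral projections $S(\sigma)=\prod_{x}[\sigma_x Q_{\Lambda(x,n)}+(1-\sigma_x)P_{\Lambda(x,n)}]$, proves $[V_n^i,S(\sigma)]=0$, block-diagonalizes $V_n^i$, and then verifies $\sum_\sigma|\sigma|S(\sigma)=Q_n^i$. You bypass all of that by observing directly that $\Phi(x,n)=Q_{\Lambda(x,n)}\Phi(x,n)Q_{\Lambda(x,n)}$, so $|\langle\psi,\Phi(x,n)\psi\rangle|\le G(n)\langle\psi,Q_{\Lambda(x,n)}\psi\rangle$, and sum over $x\in T_n^i$. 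This gives $\sum_{x\in T_n^i}|\langle\psi,\Phi(x,n)\psi\rangle|\le G(n)\langle\psi,Q_n^i\psi\rangle$, which is at least as strong as the paper's $|\langle\psi,V_n^i\psi\rangle|\le G(n)\langle\psi,Q_n^i\psi\rangle$ and arrives there without any combinatorics of configurations. You also recover, for free, the sharper form bound by $\sum_{x,n}Q_{\Lambda(x,n)}$ that the paper points out after its proof. One minor point worth making explicit when you write this up: the form inequality $\langle\psi,Q_{\Lambda(x,n)}\psi\rangle\le\gamma(n)^{-1}\langle\psi,H_{\Lambda(x,n)}\psi\rangle$ for $\psi\in\dom H_\Lambda$ uses that $0\le H_{\Lambda(x,n)}\le H_\Lambda$ puts $\dom H_\Lambda$ inside the form domain of $H_{\Lambda(x,n)}$, which is exactly the kind of remark you gesture at in your last paragraph and is all that is needed.
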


Note that in the case that $\gamma>0$ is a local uniform gap for $\caS$, that
\begin{equation}\label{eq:rel_bd_loc_gap}
\beta \leq  \frac{c}{\gamma} \sum_{n=R}^{\ell} n^{\zeta} G(n).
\end{equation}

The proof of Theorem~\ref{thm:Step3General} below, which follows the argument of \cite[Proposition 2]{michalakis:2013}, uses the collection of finite volumes $\caS$ as well as the associated family of partitions of polynomial growth, $\cT$, separating $\cS$ to define self-adjoint operators $Q_n^i$ and $\Phi_n^i$ as follows. For each $n$ with $R\leq n\leq \ell$, 
let $P_{\Lambda(x,n)}$ be the orthogonal projection onto the ground state space of $H_{\Lambda(x,n)}$, and define $Q_{\Lambda(x,n)}=
\idty - P_{\Lambda(x,n)}$. Denoting by $\caT_n = \{ T_n^i :i \in \caI_n \}$ the $n$-th separating partition, for each $i \in \caI_n$ we
define self-adjoint operators
\begin{equation} \label{two_ops}
H_n^i = \sum_{x\in T_n^i} H_{\Lambda(x,n)}, \quad Q_n^i = \sum_{x\in T_n^i} Q_{\Lambda(x,n)},
\quad
V_{n}^i = \sum_{x\in T_n^i} \Phi(x,n).
\end{equation}
Since the partition part $T_n^i$ has the separating property, i.e. $\Lambda(x,n)\cap \Lambda(y,n) = \emptyset$ for all distinct pairs $x, \, y\in T_n^i$, we see that each of these operators is a sum of commuting terms, i.e. for all $x, y \in T_n^i$
\begin{equation} \label{commuting_hams}
[H_{\Lambda(x,n)}, H_{\Lambda(y,n)}] = 0, \quad [Q_{\Lambda(x,n)}, Q_{\Lambda(y,n)}] = 0, \quad [\Phi(x,n), \Phi(y,n)] = 0 .
\end{equation} 
Moreover, since each $H_{\Lambda(x,n)}$ is non-negative, the separating property of $T_n^i$ also implies that $H_n^i \leq H_\Lambda$. Therefore, since the local gaps necessarily satisfy $\gamma(n)Q_{\Lambda(x,n)} \leq H_{\Lambda(x,n)}$, the following operator inequalities hold:
\begin{equation} \label{fb_base_ham_est}
\gamma(n)  \sum_{i \in \caI_n} Q_n^i\leq \sum_{i \in \caI_n} H_n^i \leq |\caI_n| H_{\Lambda} \leq cn^\zeta H_\Lambda.
\end{equation}

The proof below uses these facts to determine the claimed form bound.

\begin{proof}[Proof of Theorem~\ref{thm:Step3General}]
	
	As above, fix $R \leq n \leq \ell$, and for each $i \in \caI_n$ denote by $C_i$ the collection of
	configurations associated to the part $T_n^i \in \caT_n$, i.e.
	\begin{equation}
	C_i  = \{0,1\}^{T_n^i}  = \left\{ (\sigma_x )_{x \in T_n^i} : \sigma_x \in \{0, 1 \} \right\} \,.
	\end{equation}
	For any $\sigma = (\sigma_x)\in C_i$, define the quantity
	\begin{equation}
	S( \sigma) = \prod_{x\in T_n^i}\left[ \sigma_x(\idty-P_{\Lambda(x,n)}) + (1-\sigma_x)P_{\Lambda(x,n)}\right].
	\end{equation}
	By the separating property, i.e. \eqref{sep_property}, it follows that $[P_{\Lambda(x,n)}, P_{\Lambda(y,n)}]=0$ for all $x,y \in T_n^i$. As a consequence, the set $\{ S( \sigma) : \sigma \in C_i \}$ forms a mutually orthogonal family of
	orthogonal projections that sum to the identity, i.e.
	\begin{equation} \label{SigmaProps}
	S(\sigma) = S(\sigma)^*, 
	\quad
	S(\sigma)S(\sigma') = \delta_{\sigma, \, \sigma'}S(\sigma).
	\;\, \text{and} \;\,
	\sum_{\sigma \in C_i} S(\sigma) = \idtyty,
	\end{equation}
	
	Let $V_n^i$ be as in \eqref{two_ops}. We first show that for any $i \in \caI_n$ and each $\psi \in \mathcal{H}_{\Lambda}$,
	\begin{equation} \label{fb_est_1}
	| \langle \psi, V_n^i \psi \rangle | \leq G(n) \langle \psi, S_n^i \psi \rangle 
	\end{equation}
	where $S_n^i\in \cA_\Lambda$ is defined by
	\begin{equation}
	S_n^i = \sum_{ \sigma \in C_i} | \sigma| S( \sigma) \quad \mbox{with} \quad | \sigma| = \sum_{x \in T_n^i} \sigma_x.
	\end{equation}
	Since $b_x^\Lambda(n) \subseteq \Lambda(x,n)$, the frustration-free property \eqref{FFProp} implies
	\[P_{\Lambda(x,n)} = P_{b_x^\Lambda(n)} P_{\Lambda(x,n)} = P_{\Lambda(x,n)} P_{b_x^\Lambda(n)}.\] 
	Considering \eqref{sep_property} and (\ref{annih_terms}), the above implies
	\begin{equation}
	[\Phi(x,n), P_{\Lambda(y,n)}]=0 \quad \mbox{for all } x,y \in T_n^i. 
	\end{equation} 
	As a consequence,
	$[ V_n^i, S(\sigma)]=0$ for all $\sigma \in C_i,$ and so
	\begin{equation}
	S(\sigma)V_n^iS(\sigma') = \delta_{\sigma, \, \sigma'} \sum_{\substack{x \in T_n^i :\\ \sigma_x=1}}
	S(\sigma) \Phi(x,n) S(\sigma)
	\end{equation}
	for all $\sigma, \sigma' \in C_i$ where we have used both (\ref{SigmaProps}) and (\ref{annih_terms}). The bound
	\begin{equation} \label{Ssig_bd}
	\| S(\sigma)V_n^iS(\sigma) \| \leq | \sigma| G(n) 
	\end{equation}
	readily follows. Given this and (\ref{SigmaProps}), the estimate
	\begin{eqnarray}
	| \langle \psi, V_n^i \psi \rangle| & \leq & \sum_{\sigma, \sigma' \in C_i} | \langle \psi, S(\sigma) V_n^i S(\sigma') \psi \rangle|  \nonumber\\ & \leq & \sum_{\sigma \in C_i} \| S(\sigma) V_n^i S(\sigma) \|  \langle \psi, S(\sigma) \psi \rangle \nonumber \\
	& \leq & G(n) \langle \psi, S_n^i \psi \rangle
	\end{eqnarray}
	holds for any $\psi \in \mathcal{H}_{\Lambda}$ as claimed in \eq{fb_est_1}.
	
	Now, let $Q_n^i$ be as in \eqref{two_ops}. We claim that for any $i \in \caI_n$ and each $\psi \in \mathcal{H}_{\Lambda}$,
	\begin{equation} \label{fb_est_2}
	\langle \psi, S_n^i \psi \rangle  =  \langle \psi, Q_n^i \psi \rangle, 
	\end{equation}
	Since $V_\Lambda= \sum_{n=R}^{\ell} \sum_{i \in \caI_n} V_n^i$, (\ref{fb_est_1}) and (\ref{fb_est_2}) would imply
	\begin{equation}
	| \langle \psi, V_\Lambda \psi \rangle| \leq \sum_{n=R}^{\ell} G(n) \sum_{i \in \caI_n}  \langle \psi, S_n^i \psi \rangle \leq  \sum_{n=R}^{\ell} G(n) \sum_{i \in \caI_n}  \langle \psi, Q_n^i \psi \rangle,
	\end{equation}
	and (\ref{RelBdGeneral}) would follow from (\ref{fb_base_ham_est}). Thus, to complete the proof we need only verify (\ref{fb_est_2}).
	
	To prove (\ref{fb_est_2}), first note that from the separating property of $T_n^i$, and the fact that $P_{\Lambda(x,n)}$ is the orthogonal projection onto the kernel of $H_{\Lambda(x,n)}$, one has
	\begin{equation}
	[Q_{\Lambda(x,n)}, P_{\Lambda(y,n)}]=0 \quad \forall \; x,y \in T_n^i \quad \Rightarrow \quad [Q_{\Lambda(x,n)}, S(\sigma)]=0 \quad \forall \;\sigma \in C_i.
	\end{equation} 
	From this, we conclude that for all $\sigma, \sigma' \in C_i$,
	\be
	S(\sigma) Q_{\Lambda(x,n)} S(\sigma') = \delta_{\sigma, \sigma'} Q_{\Lambda(x,n)} S(\sigma) =  \delta_{\sigma, \sigma'} ( \idtyty - P_{\Lambda(x,n)}) S(\sigma)  =  \delta_{\sigma, \sigma'} \sigma_x S(\sigma).  
	\ee
	The following identities are then straightforward:
	\begin{eqnarray}
	\langle \psi, Q_n^i \psi \rangle & = & \sum_{x \in T_n^i} \sum_{\sigma, \sigma' \in C_i} \langle \psi, S(\sigma) Q_{\Lambda(x,n)} S(\sigma') \psi \rangle \nonumber \\ & =& \sum_{\sigma \in C_i} \sum_{x \in T_n^i} \sigma_x \langle \psi, S(\sigma) \psi \rangle \nonumber \\ & =&  \langle \psi, S_n^i \psi \rangle
	\end{eqnarray}
	as claimed in (\ref{fb_est_2}). 
\end{proof}

Note that the proof provides a form bound of $V_\Lambda$ by $\sum_{x\in\Lambda,n\geq R} Q_{\Lambda(x,n)}$, which in general is stronger than
\eq{RelBdGeneral}.

%
%

\newcommand{\Vee}{V}

\section{Initial Steps and Quasi-Locality} \label{sec:Step1}

\subsection{Introduction}\label{sec:intro_Step1}

In this section, we start the analysis of the transformed quantum spin Hamiltonians as a first step toward the establishing the properties
outlined in Claim~\ref{clm:decompositionH}. In general, we will use the set-up and notations introduced in Section~\ref{sec:set-up}.
Concretely, as in Section~\ref{sec:stab+sf}, we consider local Hamiltonians of the form
\begin{equation} \label{ham_label}
H_{\Lambda}(s) = H_{\Lambda} + s \Vee_{\Lambda^p} \quad \mbox{with} \quad 0 \leq s \leq 1.
\end{equation}
Here and throughout this section $\Lambda$ is a fixed finite subset of  a $\nu$-regular metric space $(\Gamma, d)$. Balls are defined with respect to $\Lambda$:  
$b_x^{\Lambda}(n) = \{ y \in \Lambda : d(x, y) \leq n \}$ for $x \in \Lambda$ 
and $n \geq 0$. We will refer to $H_{\Lambda}(0) = H_{\Lambda}$ as the {\it initial Hamiltonian}. The perturbation, $\Vee_{\Lambda^p}$, is defined with reference to 
a {\em perturbation region} $\Lambda^p\subseteq\Lambda$.
As discussed in Section~\ref{sec:balled-int}, we will further assume that both $H_{\Lambda}$ and $\Vee_{\Lambda^p}$ have been written in 
{\it anchored form}, and in particular, we take
\begin{equation} \label{ham+int}
H_{\Lambda} = \sum_{x \in \Lambda} h_x \quad \mbox{and} \quad \Vee_{\Lambda^p} = \sum_{x \in \Lambda^p} \sum_{n \geq R} \Phi(x,n). \, 
\end{equation}
We will assume that the initial Hamiltonian has an interaction radius bounded by some $R \geq 0$, meaning
$h_x^* = h_x \in \mathcal{A}_{b_x^{\Lambda}(R)}$ for all $x \in \Lambda$. 
Later we will also assume that the initial Hamiltonian is
generated by a frustration-free interaction, however, it is not needed for this section. 
For convenience, we will always assume that $\inf\spec H_\Lambda =0$.
The perturbation $\Vee_{\Lambda^p}$ is an anchored interaction on $\Lambda$ as in Definition~\ref{def:ball-int}:
$\Phi(x,n)^*=\Phi(x,n) \in \mathcal{A}_{b_x^{\Lambda}(n)}$. 
Note that, in general, by redefining $\Phi(x,R)$ we can assume without loss of generality that
$\Phi(x,n) =0$ if $n< R$, as we have above.
The anchored forms of $H_\Lambda$ and $\Vee_{\Lambda^p}$ may have been derived by the procedure 
described in Section~\ref{sec:ball-proc}, but this is not necessary for the analysis which follows. 
For notational convenience we will write
\begin{equation} \label{pert-fam-1}
H(s) = H + s \Vee
\end{equation}
for this family of Hamiltonians satisfying the assumptions detailed above.

Our analysis investigates the ground state gap, ${\rm gap}(H(s))$, as defined in (\ref{fv_pert_gap}). 
It is convenient to set
\be \label{def:gam_0}
\gamma_\Lambda = \gap(H(0)) = \gap(H).
\ee
Then, for any  $0 < \gamma < \gamma_\Lambda$, recall the quantity $s_\gamma^\Lambda$ is as defined in \eq{def:s_lam_gam}:
\begin{equation} \label{def:s_gam}
s_\gamma^\Lambda = \sup\{s' \in [0,1] : \gap(H(s)) \geq \gamma \mbox{ for all } 0 \leq s \leq s' \}.
\end{equation}
If $s_\gamma^\Lambda<1$, then $ \gap(H(s_\gamma^{\Lambda})) = \gamma$. In other words, adding
$s_\gamma^{\Lambda} \Vee$ to $H$ reduces the gap from $\gamma_\Lambda$ to $\gamma$.

The first step towards the results in Claim~\ref{clm:decompositionH} is to define an anchored interaction $\Phi^{(1)}$ 
such that the Hamiltonian transformed by the spectral flow satisfies
\be\label{expandPhi1}
\alpha_s(H(s)) = H + \Vee^{(1)}(s) \quad \mbox{with} \quad \Vee^{(1)}(s) = \sum_{x\in\Lambda}\Phi_x^{(1)}(s)
\quad \mbox{and} \quad \Phi_x^{(1)}(s) = \sum_{n\geq R}\Phi^{(1)}(x,n,s).
\ee
Moreover, the following two properties hold.
First, with $P(0)$ the spectral projection onto the ground state space of $H(0) = H$, we have
\be\label{trans_pert_commute}
[\Phi_x^{(1)}(s), \, P(0)] = 0 \quad \mbox{for all } x \in \Lambda  \mbox{ and } 0 \leq s\leq s_\gamma^\Lambda.
\ee
Second, the anchored terms described in (\ref{expandPhi1}) satisfy the estimate
\be \label{trans_pert_decay}
\|\Phi^{(1)}(x,n,s)\| \leq sG^{(1)}(n) \quad  \mbox{for all } x \in \Lambda, n\geq R, \mbox{ and }  0 \leq s\leq 1.
\ee
Here $G^{(1)}(n)$ vanishes as $n\to\infty$ at a certain rate, specified by a {\em decay class}, a notion we define in Definition \ref{def:dec-class}.
These two properties are proved in Proposition~\ref{prop:step1-commute} and Theorem \ref{thm:Step1}, respectively.

We will express the decay assumptions on the perturbation using a particular form of $F$-function on $(\Gamma, d)$.
Let $F_0 : [0, \infty) \to (0, \infty)$ be an $F$-function on $(\Gamma, d)$. By $\nu$-regularity, we can take $F_0$ as in (\ref{poly-dec-F}), 
but it is not necessary to assume this explicit form here. In any case, we will call this $F_0$ the base $F$-function. As in Section~\ref{sec:Fnorm}, 
given any $g : [0, \infty) \to [0, \infty)$ which is non-decreasing and sub-additive, the function $F: [0, \infty) \to (0, \infty)$
defined by
\begin{equation} \label{weighted-F-fun}
F(r) = e^{- g(r)} F_0(r) \quad \mbox{for any } r \geq 0
\end{equation} 
is also an $F$-function on $(\Gamma, d)$. 
We will further assume that the weight $e^{-g}$ decays at least as fast as a {\it stretched exponential}, i.e. there is some $a>0$ and $0< \theta \leq 1$ for which
\begin{equation} \label{stab_g_grows}
g(r) \geq a r^{\theta} \,  \quad \mbox{for all } r \geq 0 \, .
\end{equation}

For ease of later reference, we summarize the basic assumptions on the initial (unperturbed) Hamiltonian $H$ and the perturbation $\Vee$, see (\ref{pert-fam-1}), 
in the following.

\begin{assumption}[Assumptions on $H$ and $\Vee$]\label{ass:basic}
	For a quantum spin system defined on a finite $(\Lambda,d)$ we impose the following conditions.
	\begin{enumerate}
		\item[(i)]$H$ is described by a finite-range, frustration-free interaction in anchored form of maximal radius $R\geq 0$ as follows: 
		\be
		H=\sum_{x\in\Lambda} h_x \quad \mbox{with} \quad h^*_x=h_x\in \cA_{b^\Lambda_x(R)}.
		\ee
		\item[(ii)]The perturbation is given in terms of $\Lambda^p\subset \Lambda$ and $\Phi(x,n)^*=\Phi(x,n) \in \cA_{b^\Lambda_x(n)}$ for $R\leq n \leq \diam(\Lambda)$, i.e.
		\be
		\Vee = \sum_{x\in\Lambda^p}\sum_{n\geq R} \Phi(x,n),
		\ee
		and there is a weighted $F$-function $F$ as in \eq{weighted-F-fun} for $g$ of the form \eq{stab_g_grows}, 
		and a constant $\Vert \Phi \Vert_{1,F}$ such that
		\begin{equation} \label{ini-int-dec}
		\sum_{x \in \Lambda^p} \sum_{\stackrel{n \geq R:}{y,z \in b_x^{\Lambda}(n)}} | b_x^{\Lambda}(n)| \| \Phi(x,n) \| \leq \| \Phi \|_{1,F} F(d(y,z)) \, ,
		\mbox{ for all } y,z\in\Lambda,
		\end{equation}
		and
		\begin{equation} \label{int-bd-con}
		\| \Phi(x,m) \| \leq 
		\| \Phi \|_{1,F} F(\max(0,m-1)), \quad m\geq R.
		\end{equation}
	\end{enumerate}
\end{assumption}

We note that for an anchored interaction satisfying \eqref{support_property}, for example the ones derived for a general interaction as in Section~\ref{sec:ball-proc},
\eq{int-bd-con} follows from \eq{ini-int-dec}.

\subsection{Application of the Spectral Flow} \label{sec:HGSF}\label{sec:conv-lab}

An essential tool for our analysis here is the spectral flow discussed in Section \ref{sec:stab+sf}.
Consider a fixed value $0 < \gamma < \gamma_\Lambda$. For each $0 \leq s \leq 1$, we denote by $\alpha_s$ the 
spectral flow automorphism of $\mathcal{A}_{\Lambda}$ as defined in (\ref{spec-flow-auto}). Here we
have taken $\xi = \gamma$ and we suppress this in our notation. A crucial property is that 
for $s\in [0,s_\gamma^\Lambda]$, that is when $\gap(H(s))\geq \gamma$, we have $\alpha(P(s))=P(0)$, where
$P(s)$ is the spectral projection of $H(s)$ corresponding to $\Sigma_1(s)$ as defined in \eq{spec_sets}
with $\Sigma_1(0)=\{0\}$.

For each fixed $s\in [0,1]$, we have the Heisenberg dynamics associated to (\ref{pert-fam-1}):
\begin{equation} \label{stab_dynamics}
\tau_t^{(s)}(A) = e^{it H(s)} A e^{-itH(s)}, \quad \mbox{for } A \in \mathcal{A}_{\Lambda} \mbox{ and } t \in \mathbb{R}.
\end{equation}
We also consider the family of linear maps $\{ \mathcal{F}_s \}_{s \in [0,1]}$ with $\mathcal{F}_s : \mathcal{A}_{\Lambda} \to \mathcal{A}_{\Lambda}$ given by 
\begin{equation} \label{def_wio_F}
\mathcal{F}_s(A) = \int_{\mathbb{R}} \tau_t^{(s)}(A) w_{\gamma}(t) \, dt \quad \mbox{for all } A \in \mathcal{A}_{\Lambda} \mbox{ and } 0 \leq s \leq 1\, .
\end{equation}
Here $w_{\gamma}$ is the real-valued function in $L^1( \mathbb{R})$ defined in (6.32) of Section VI.B in \cite{nachtergaele:2019}, and
we will refer to $\{ \mathcal{F}_s \}_{s \in [0,1]}$ as the family of integral operators with 
weight function $w_{\gamma}$. As in the case of the spectral flow, we have suppressed the dependence of this family on
the value of $\gamma >0$ to ease notation. One readily checks that:
\begin{enumerate}
	\item[(i)] Since $H(s)$ generates the Heisenberg dynamics $\tau_t^{(s)}$,
	\begin{equation} \label{F-gen-ham-inv}
	\mathcal{F}_s(H(s)) = H(s) \quad \mbox{for all } 0 \leq s \leq 1.
	\end{equation}
	\item[(ii)] With this particular choice of weight function $w_{\gamma}$,
	\begin{equation} \label{FA_commutes}
	[ \mathcal{F}_s(A), P(s)] = 0 \quad \mbox{for all } A \in \mathcal{A}_{\Lambda} \mbox{ and each } 0 \leq s \leq s_{\gamma}^{\Lambda} \, .
	\end{equation}
\end{enumerate}
Equation (\ref{FA_commutes}) follows from the fact that the Fourier transform of $w_{\gamma}$ has support in $[-\gamma,\gamma]$,
which immediately implies $(\idty- P(s)) \mathcal{F}_s(A) P(s) =0$. See, e.g., \cite{hastings:2004b} or \cite[Lemma 6.8 ]{nachtergaele:2019}. 

Consider now the difference
\begin{equation}
\alpha_s(H(s)) - H =   \alpha_s( \mathcal{F}_s(H(s))) - \mathcal{F}_0(H) \quad \mbox{for all } 0 \leq s \leq 1 \, ,
\end{equation}
where we have used (\ref{F-gen-ham-inv})  to insert the corresponding integral operators which leave their generating Hamiltonians
invariant. Using $H(s) = H + s\Vee$, this difference can be rewritten as
\begin{eqnarray} \label{rewrite-diff}
\alpha_s(H(s)) - H 
& = & ( \alpha_s - {\rm id})( \mathcal{F}_s(H)) + ( \mathcal{F}_s - \mathcal{F}_0)(H) + s \alpha_s( \mathcal{F}_s(\Vee)) \nonumber \\
& = & \mathcal{K}_s^1(H) + \mathcal{K}_s^2(H) + \mathcal{K}_s^3 (\Vee), 
\end{eqnarray}
where we introduced three families of linear maps $\{ \mathcal{K}_s^i\}_{s \in [0,1]}$,
with $\mathcal{K}_s^i : \mathcal{A}_{\Lambda} \to \mathcal{A}_{\Lambda}$ for each $0 \leq s \leq 1$ and
$i=1,2,3$, given by
\begin{equation} \label{3_fams_maps}
\mathcal{K}_s^1 = (\alpha_s - {\rm id}) \circ \mathcal{F}_s, \quad \mathcal{K}_s^2 = \mathcal{F}_s - \mathcal{F}_0, \quad \mbox{and} \quad \mathcal{K}_s^3 = s \alpha_s \circ \mathcal{F}_s \, .
\end{equation} 

With an eye towards our goal of applying Theorem~\ref{thm:Pert_Est}, we summarize (\ref{rewrite-diff}) differently,
\begin{equation} \label{step1_transform}
\alpha_s(H(s)) = H + \Vee^{(1)}(s) \quad \mbox{for all }  0 \leq s \leq 1, 
\end{equation}
where we have set $\Vee^{(1)}(s) =  \mathcal{K}_s^1(H) + \mathcal{K}_s^2(H) + \mathcal{K}_s^3 (\Vee)$. By expanding 
as in \eq{expandPhi1}
we further write
\begin{equation} \label{phi1x-def}
\Vee^{(1)}(s) = \sum_{x \in \Lambda} \Phi_x^{(1)}(s) \quad \mbox{with} \quad 
\Phi_x^{(1)}(s) = \mathcal{K}_s^1(h_x) +  \mathcal{K}_s^2(h_x) + \chi_{\Lambda^p}(x) \cdot \sum_{n \geq R} \mathcal{K}_s^3( \Phi(x,n)).
\end{equation}
Here $\chi_{\Lambda^p}$ is the characteristic function of the perturbation region $\Lambda^p \subseteq \Lambda$, see (\ref{ham+int}).

\begin{prop} \label{prop:step1-commute}
	With assumptions and notation as above, for all $x \in \Lambda$,
	\begin{equation} \label{global-terms-commute}
	[ \Phi_x^{(1)}(s), P(0)] = 0 \quad \mbox{for all } 0 \leq s \leq s_{\gamma}^{\Lambda} \, ,
	\end{equation}
	where $P(0)$ is the spectral projection onto the ground state space of $H(0) = H$ and
	$s_{\gamma}^{\Lambda}$ is as in (\ref{def:s_gam}).
\end{prop}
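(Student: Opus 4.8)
The plan is to exploit the two structural facts about the spectral flow recorded just before the statement: that $\mathcal{F}_s(A)$ commutes with the spectral projection $P(s)$ for every $A\in\mathcal{A}_\Lambda$ and $0\le s\le s_\gamma^\Lambda$ (equation \eqref{FA_commutes}), and that $\alpha_s(P(s))=P(0)$ on the same range of $s$ (equation \eqref{spec_flow_proj}, which applies here since the spectral flow is taken with $\xi=\gamma$ and $\gamma<\gamma_\Lambda=\gap(H)$). The only real work is an algebraic rearrangement of $\Phi_x^{(1)}(s)$ into pieces to which these two facts apply directly.

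First I would unpack the definitions \eqref{3_fams_maps} of $\mathcal{K}_s^1,\mathcal{K}_s^2,\mathcal{K}_s^3$ inside \eqref{phi1x-def}. Write $W_x:=\sum_{n\ge R}\Phi(x,n)$; this is a \emph{finite} sum because $\Lambda$ is finite, so $W_x\in\mathcal{A}_\Lambda$, and by linearity one may pull $\mathcal{F}_s$ and $\alpha_s$ through it, giving $\sum_{n\ge R}\mathcal{K}_s^3(\Phi(x,n))=s\,\alpha_s(\mathcal{F}_s(W_x))$. The $\mathcal{K}^1$ and $\mathcal{K}^2$ contributions then telescope: $\mathcal{K}_s^1(h_x)+\mathcal{K}_s^2(h_x)=\big(\alpha_s(\mathcal{F}_s(h_x))-\mathcal{F}_s(h_x)\big)+\big(\mathcal{F}_s(h_x)-\mathcal{F}_0(h_x)\big)=\alpha_s(\mathcal{F}_s(h_x))-\mathcal{F}_0(h_x)$. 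Hence
\[
\Phi_x^{(1)}(s)=\alpha_s\big(\mathcal{F}_s(h_x)\big)-\mathcal{F}_0(h_x)+\chi_{\Lambda^p}(x)\, s\,\alpha_s\big(\mathcal{F}_s(W_x)\big).
\]

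Second, I would check that each of these three summands commutes with $P(0)$ for $0\le s\le s_\gamma^\Lambda$. For the middle term, \eqref{FA_commutes} applied at $s=0$ (which lies in $[0,s_\gamma^\Lambda]$ since $s_\gamma^\Lambda>0$) gives $[\mathcal{F}_0(h_x),P(0)]=0$. For the first and third terms I would apply $\alpha_s$ — which, being an automorphism, preserves commutators — to the identities $[\mathcal{F}_s(h_x),P(s)]=0$ and $[\mathcal{F}_s(W_x),P(s)]=0$ from \eqref{FA_commutes}, obtaining $[\alpha_s(\mathcal{F}_s(h_x)),\alpha_s(P(s))]=0$ and likewise for $W_x$; then substituting $\alpha_s(P(s))=P(0)$ from \eqref{spec_flow_proj} yields $[\alpha_s(\mathcal{F}_s(h_x)),P(0)]=0=[\alpha_s(\mathcal{F}_s(W_x)),P(0)]$. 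Summing the three contributions gives $[\Phi_x^{(1)}(s),P(0)]=0$, as claimed.

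There is no serious obstacle here; the statement is essentially a bookkeeping consequence of the cited spectral-flow properties, and this is presumably why it is isolated as a short proposition. The one point I would be careful to state explicitly is the order of operations in the step ``apply $\alpha_s$, then use the spectral-flow identity'': since $\mathcal{F}_s$ is only a linear map and not an automorphism, one cannot hope to commute $\mathcal{F}_s(h_x)$ past $P(0)$ directly, so one must invoke \eqref{FA_commutes} \emph{before} transporting by $\alpha_s$. I would also flag in passing that the finiteness of $\Lambda$ is exactly what legitimizes treating $W_x$ as an element of $\mathcal{A}_\Lambda$ to which \eqref{FA_commutes} can be applied.
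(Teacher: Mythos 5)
Your proof is correct and follows the paper's argument almost verbatim: the same telescoping of $\mathcal{K}_s^1 + \mathcal{K}_s^2$ into $\alpha_s\circ\mathcal{F}_s - \mathcal{F}_0$, the same application of \eqref{FA_commutes} at $s=0$ and inside $\alpha_s$, and the same use of \eqref{spec_flow_proj}. The only cosmetic difference is that you bundle the finite sum $\sum_{n\ge R}\Phi(x,n)$ into a single observable $W_x$ before invoking \eqref{FA_commutes}, whereas the paper keeps the sum over $n$ explicit.
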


\begin{proof}
	Note that for each $x \in \Lambda$ and any $0 \leq s \leq 1$, one has
	\begin{equation}
	\mathcal{K}_s^1(h_x) +  \mathcal{K}_s^2(h_x)  = (\alpha_s - {\rm id})( \mathcal{F}_s(h_x)) + (\mathcal{F}_s - \mathcal{F}_0)(h_x) = (\alpha_s \circ \mathcal{F}_s)(h_x) - \mathcal{F}_0(h_x) \, .
	\end{equation}
	Thus for $x$ and $s$ as above,
	\begin{equation} \label{phi_1_x_comm}
	\Phi^{(1)}_x(s) = ( \alpha_s \circ \mathcal{F}_s)(h_x) - \mathcal{F}_0(h_x) + s \chi_{\Lambda^p}(x) \sum_{n \geq R} (\alpha_s \circ \mathcal{F}_s)( \Phi(x,n)) \, .
	\end{equation}
	We now use (\ref{FA_commutes}). In fact, the case of $s=0$ implies that $[ \mathcal{F}_0(h_x), P(0)]=0$. Moreover, 
	\begin{equation}
	[ (\alpha_s \circ \mathcal{F}_s)(A), P(0)] = \alpha_s \left( [ \mathcal{F}_s(A), P(s)] \right) = 0  \mbox{ for all } A \in \mathcal{A}_{\Lambda}
	\mbox{ whenever } 0 \leq s \leq s_\gamma^\Lambda \, ,
	\end{equation}
	where we have also used (\ref{spec_flow_proj}). Given (\ref{phi_1_x_comm}), we now conclude that $[ \Phi^{(1)}_x(s), P(0)]=0$ for all $x \in \Lambda$ and $0 \leq s \leq s_\gamma^\Lambda$.  This completes the proof.
\end{proof}

%
%

\subsection{Application of Quasi-Locality and Local Decompositions} \label{sec:QL+LD}
In this section, we review the notions of quasi-locality and local decompositions.
For the interested reader, more details on this can be found in \cite[Section IV]{nachtergaele:2019}.

A linear map $\mathcal{K} : \mathcal{A}_{\Lambda} \to \mathcal{A}_{\Lambda}$ is said to satisfy a quasi-locality bound of order 
$q \geq 0$ if there is a non-increasing function $G:[0, \infty) \to [0, \infty)$ with $\lim_{r \to \infty}G(r) =0$
for which given any sets $X,Y \subset \Lambda$ and observables $A \in \mathcal{A}_X$ and $B \in \mathcal{A}_Y$, the bound
\begin{equation} \label{gen_ql_map}
\| [ \mathcal{K}(A), B ] \| \leq |X|^q \| A \| \| B \| G(d(X,Y)) 
\end{equation}
holds. Any linear map $\mathcal{K}$ satisfying (\ref{gen_ql_map}) will be referred to as {\it quasi-local}. 
As is well-known, Lieb-Robinson bounds are useful in demonstrating quasi-locality of the Heisenberg dynamics
associated to sufficiently short-range interactions. For the analysis at hand, we will need 
explicit quasi-locality bounds for other maps as well; for example, the 
spectral flow automorphism, as introduced in Section~\ref{sec:stab+sf}, and various weighted integral operators, see e.g. (\ref{def_wio_F}).  
Before describing these particular estimates, let us continue with some generalities. 

One important fact about quasi-local maps is that they can be approximated by strictly local maps with
errors quantified in terms of their decay function, i.e. the function $G$ in (\ref{gen_ql_map}) above.
To make this precise, we recall the notion of localizing maps and local decompositions.
For any $X \subseteq \Lambda$, denote by $\tilde{\Pi}_X^{\Lambda}$ the normalized 
partial trace over $\mathcal{H}_{\Lambda \setminus X}$, i.e. the unique linear map 
$\tilde{\Pi}_{X}^{\Lambda} : \mathcal{A}_{\Lambda} \to \mathcal{A}_{X}$ for which 
\begin{equation} \label{pi_partial_trace}
\tilde{\Pi}_{X}^{\Lambda}(A \otimes B) = {\rm Tr}[B] A \quad \mbox{for all } A \in \mathcal{A}_{X} \mbox{ and } B \in \mathcal{A}_{\Lambda \setminus X} \, ,
\end{equation}
where ${\rm Tr}[B]$ denotes the normalized trace of $B$ as an operator on the finite dimensional $\mathcal{H}_{\Lambda \setminus X}$.
If $X = \Lambda$, then the above map is understood to act as the identity, i.e.
$\tilde{\Pi}_{\Lambda}^{\Lambda}(A) = A$ for all $A \in \mathcal{A}_{\Lambda}$.
We will denote by $\Pi_{X}^{\Lambda}$ the map from $\mathcal{A}_{\Lambda}$ to $\mathcal{A}_{\Lambda}$
defined by $A \mapsto \tilde{\Pi}_{X}^{\Lambda}(A) \otimes \idty_{\Lambda \setminus X}$. 
We refer to these maps $\{ \Pi_{X}^{\Lambda} \}_{X \subset \Lambda}$ 
as localizing, or strictly local, maps on $\mathcal{A}_{\Lambda}$. The choice of notation stems from a more
general discussion of localizing maps for quantum lattice systems, see \cite[Section IV]{nachtergaele:2019}.
As we are only considering quantum spin systems here, the normalized 
partial trace is a simple realization of these more general maps. 

Given these localizing maps, it is also convenient to introduce the corresponding local 
decompositions. Let $x \in \Lambda$, $n \geq 0$, and for any $m \geq n$ define 
a map $\Delta_{x,n; m}^{\Lambda} : \mathcal{A}_{\Lambda} \to \mathcal{A}_{\Lambda}$ by setting
\begin{equation} \label{def:Delta}
\Delta_{x,n;m}^{\Lambda} = \left\{ \begin{array}{cl} \Pi_{b_x^{\Lambda}(n)}^{\Lambda} & \mbox{if } m =n, \\
\Pi_{b_x^{\Lambda}(m)}^{\Lambda} - \Pi_{b_x^{\Lambda}(m-1)}^{\Lambda} & \mbox{if } m >n. \end{array} \right.
\end{equation}
Note that each $\Delta_{x,n;m}^{\Lambda}$ has range contained in $\mathcal{A}_{b_x^{\Lambda}(m)}$, regarded as a sub-algebra of $\mathcal{A}_{\Lambda}$.
Moreover, one has that
\begin{equation} \label{sum-to-m} 
\sum_{m=n}^M \Delta_{x,n;m}^{\Lambda}(A) = \Pi_{b_x^{\Lambda}(M)}^{\Lambda}(A) \quad \mbox{for each } M \geq n \mbox{ and all } A \in \mathcal{A}_{\Lambda} \, .
\end{equation}
Since $\Lambda$ is finite, for each $x \in \Lambda$, there is $M$ sufficiently large so that $b_x^{\Lambda}(M) = \Lambda$. In this case,
for any integer $n$, we have that 
\begin{equation} \label{telescope}
A = \sum_{m \geq n} \Delta_{x,n;m}^{\Lambda}(A) \quad \mbox{for all } A \in \mathcal{A}_{\Lambda} \, 
\end{equation}
and the sum on the RHS above is finite.

A more general version of the following lemma is given in \cite[Lemma 5.1]{nachtergaele:2019}. 
It provides a simple estimate for 
strictly local approximations of quasi-local maps.

\begin{lemma} \label{lem:qlm_loc_est}
	Let $\mathcal{K} : \mathcal{A}_{\Lambda} \to \mathcal{A}_{\Lambda}$ be a quasi-local map satisfying (\ref{gen_ql_map}).
	For any $x \in \Lambda$, $n \geq 0$, $A \in \mathcal{A}_{b_x^{\Lambda}(n)}$, and each $m \geq n$, one has that 
	\begin{equation}
	\| \mathcal{K}(A) - \Pi_{b_x^{\Lambda}(m)}^{\Lambda}(\mathcal{K}(A)) \| \leq 2 |b_x^{\Lambda}(n)|^q \| A \| G(m-n) 
	\end{equation}
	and as a result
	\begin{equation} \label{Delta_bd}
	\| \Delta_{x,n;m}^{\Lambda} (\mathcal{K}(A)) \| \leq 4 |b_x^{\Lambda}(n)|^q \| A \| G(m-n-1) \quad \mbox{for all } m > n \, .
	\end{equation}
\end{lemma}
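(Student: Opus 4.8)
\textbf{Proof plan for Lemma~\ref{lem:qlm_loc_est}.}

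The plan is to reduce the strictly-local approximation error for $\mathcal{K}(A)$ to the quasi-locality bound \eqref{gen_ql_map} by testing against observables supported in the complement of the ball $b_x^\Lambda(m)$. First I would recall the basic fact about the normalized partial trace: for any $Y \subseteq \Lambda$ and any $C \in \mathcal{A}_\Lambda$, the difference $C - \Pi_Y^\Lambda(C)$ is characterized by the property that $\Pi_Y^\Lambda(C)$ is the unique element of $\mathcal{A}_Y$ minimizing the distance to $C$ (up to a factor of $2$), and more usefully, that
\[
\| C - \Pi_Y^\Lambda(C) \| \leq 2 \sup\left\{ \| [C, B] \| : B \in \mathcal{A}_{\Lambda \setminus Y}, \, \|B\| \leq 1 \right\}.
\]
This is the standard estimate relating the partial-trace localization error to commutators with observables in the complementary region; it is exactly \cite[Lemma 5.1]{nachtergaele:2019} specialized to spin systems, which the excerpt permits me to invoke. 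Applying this with $C = \mathcal{K}(A)$ and $Y = b_x^\Lambda(m)$, and noting that any $B \in \mathcal{A}_{\Lambda \setminus b_x^\Lambda(m)}$ is supported on a set $Z$ with $d(b_x^\Lambda(n), Z) \geq m - n$ (since $A \in \mathcal{A}_{b_x^\Lambda(n)}$ and $Z \cap b_x^\Lambda(m) = \emptyset$), the quasi-locality bound \eqref{gen_ql_map} gives $\|[\mathcal{K}(A), B]\| \leq |b_x^\Lambda(n)|^q \|A\| \|B\| G(m-n)$. Taking the supremum over $\|B\| \leq 1$ and inserting the factor $2$ yields the first displayed inequality.

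For the second inequality \eqref{Delta_bd}, I would just unfold the definition \eqref{def:Delta}: for $m > n$,
\[
\Delta_{x,n;m}^\Lambda(\mathcal{K}(A)) = \Pi_{b_x^\Lambda(m)}^\Lambda(\mathcal{K}(A)) - \Pi_{b_x^\Lambda(m-1)}^\Lambda(\mathcal{K}(A)),
\]
and insert $\pm \mathcal{K}(A)$ to write this as $\big(\Pi_{b_x^\Lambda(m)}^\Lambda(\mathcal{K}(A)) - \mathcal{K}(A)\big) + \big(\mathcal{K}(A) - \Pi_{b_x^\Lambda(m-1)}^\Lambda(\mathcal{K}(A))\big)$. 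Applying the first inequality to each term (with $m$ and $m-1$ respectively) and using that $G$ is non-increasing so $G(m-n) \leq G(m-n-1)$, the triangle inequality gives the bound $4 |b_x^\Lambda(n)|^q \|A\| G(m-n-1)$, as claimed.

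I do not anticipate a genuine obstacle here; the lemma is essentially a bookkeeping consequence of the general localization estimate for quasi-local maps together with the telescoping structure of the $\Delta$'s. The only point requiring minor care is the support/distance accounting — making sure that when $B$ is supported in $\Lambda \setminus b_x^\Lambda(m)$ and $A$ in $b_x^\Lambda(n)$, the separation fed into $G$ is indeed at least $m - n$ (respectively $m - n - 1$ after the telescoping shift), which is immediate from the triangle inequality in $(\Gamma, d)$ restricted to $\Lambda$. If one wanted to avoid citing \cite[Lemma 5.1]{nachtergaele:2019}, the displayed commutator-to-partial-trace estimate can be proved directly by writing $\mathcal{H}_{\Lambda \setminus Y}$ in an operator basis and averaging $\mathcal{K}(A)$ by conjugation with a Haar-random unitary on that factor, but invoking the cited lemma is cleaner.
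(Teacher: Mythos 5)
Your proof is correct and takes the same approach as the paper's source: the paper does not give an inline proof of this lemma but simply cites \cite[Lemma 5.1]{nachtergaele:2019}, which is precisely the commutator-to-partial-trace localization estimate you invoke, combined with the telescoping $\pm\mathcal{K}(A)$ trick for \eqref{Delta_bd}. The distance accounting is fine (if $z\notin b_x^\Lambda(m)$ and $y\in b_x^\Lambda(n)$, then $d(y,z)\geq d(x,z)-d(x,y)>m-n$, so $G(d(X,Y))\leq G(m-n)$ since $G$ is non-increasing), and the monotonicity step $G(m-n)\leq G(m-n-1)$ closes the second bound.
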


%
%

\subsubsection{First Estimates} \label{sec:1ests}

The goal of this section is to prove Theorem~\ref{thm:Step1}.
%
We begin by stating a crucial technical estimate that we will prove in Section~\ref{sec:just-tech}. 

\begin{lemma} \label{lem:qlm_ests}
	Under Assumption \ref{ass:basic}, the three families of maps $\{ \mathcal{K}_s^i \}_{s \in [0,1]}$, $i=1,2,3$,
	as defined in (\ref{3_fams_maps}), satisfy the following locally bounded and quasi-local estimates:
	\begin{enumerate}
		\item[(i)] Locally bounded: There are numbers $B_i \geq 0$ and $p_i \geq 0$ for which: given any $X \subset \Lambda$,
		\begin{equation} \label{ki_lb_est}
		\| \mathcal{K}_s^i(A) \| \leq s \cdot B_i |X|^{p_i} \| A \| \quad \mbox{for any } A \in \mathcal{A}_X \, ,
		\end{equation}
		In fact, $p_1=2$, $p_2 = 1$, and $p_3=0$.
		\item[(ii)]Quasi-local: There are numbers $q_i \geq 0$ and non-increasing functions $G_i:[0, \infty) \to [0, \infty)$ for which: 
		given $X,Y \subset \Lambda$ and observables $A \in \mathcal{A}_X$ and $B \in \mathcal{A}_Y$, 
		\begin{equation}  \label{ki_ql_est}
		\| [ \mathcal{K}_s^i(A) , B ] \| \leq s \cdot |X|^{q_i} \| A \| \| B \| G_i(d(X,Y)) \, .
		\end{equation}
		In fact, $\lim_{r \to \infty} G_i(r) = 0$ and $q_1 =2$, $q_2 =2$, and $q_3 =1$.
	\end{enumerate}
\end{lemma}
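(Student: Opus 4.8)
The three families $\mathcal{K}^i_s$ are built from two ingredients: the spectral flow automorphism $\alpha_s$ and the weighted integral operators $\mathcal{F}_s$ (together with their $s$-derivatives, which appear implicitly in $\mathcal{K}^2_s = \mathcal{F}_s - \mathcal{F}_0$). The strategy is to reduce all the bounds to known quasi-locality estimates for $\alpha_s$ and $\mathcal{F}_s$ from \cite[Section~6]{nachtergaele:2019}, combined with elementary composition and difference identities. First I would record the building-block estimates: for a weighted integral operator $\mathcal{F}_s$ with weight $w_\gamma\in L^1(\mathbb R)$, one has the trivial norm bound $\|\mathcal{F}_s(A)\|\leq \|w_\gamma\|_1 \|A\|$ and a quasi-locality bound $\|[\mathcal{F}_s(A),B]\|\leq \|A\|\|B\||X| G_{\mathcal F}(d(X,Y))$ following from the Lieb--Robinson bound for $\tau^{(s)}_t$ (Theorem~\ref{thm:LRBounds}) integrated against $w_\gamma$; here the decay $G_{\mathcal F}$ inherits a stretched-exponential tail from \eqref{stab_g_grows} because $w_\gamma$ has sub-exponential tails. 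For the spectral flow, \cite{nachtergaele:2019} gives $\alpha_s$ quasi-local with some decay $G_\alpha$, again stretched-exponential given Assumption~\ref{ass:basic}(ii), and $\|\alpha_s(A)-A\|$ controlled by $s$ times a quasi-local quantity (since the generator $D^\xi(s)$ has $O(s)$ norm on local observables, being an integral of $\tau^{(s)}_t(V_{\Lambda^p})$ weighted by $W_\xi$).

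\textbf{Key steps.} (1) \emph{Locally bounded estimates \eqref{ki_lb_est}.} For $\mathcal{K}^3_s = s\,\alpha_s\circ\mathcal F_s$: since both $\alpha_s$ and $\mathcal F_s$ are norm-nonincreasing up to the constant $\|w_\gamma\|_1$, we get $\|\mathcal{K}^3_s(A)\|\leq s\|w_\gamma\|_1\|A\|$, so $p_3=0$. For $\mathcal{K}^1_s=(\alpha_s-\mathrm{id})\circ\mathcal F_s$: write $\alpha_s(\mathcal F_s(A))-\mathcal F_s(A)$ as an integral of $\frac{d}{du}\alpha_u(\mathcal F_s(A)) = i\alpha_u([D^\xi(u),\mathcal F_s(A)])$ over $u\in[0,s]$; estimating the commutator of the generator with the quasi-local observable $\mathcal F_s(A)$ (which is essentially localized near $X$ with controlled tails) using the quasi-locality of $\mathcal F_s$ and summing the generator terms over $\Lambda$ with the $F$-norm bound \eqref{ini-int-dec} gives a bound $\leq s\,B_1|X|^2\|A\|$, hence $p_1=2$ (one power of $|X|$ from the commutator geometry, one from summing $F(d(\cdot,\cdot))$ over the support). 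For $\mathcal{K}^2_s=\mathcal F_s-\mathcal F_0$: similarly $\mathcal F_s(A)-\mathcal F_0(A)=\int_0^s \frac{d}{du}\mathcal F_u(A)\,du$, and $\frac{d}{du}\mathcal F_u(A) = \int_{\mathbb R} i[\tau^{(u)}_t([\text{stuff}]), \dots]\dots$; differentiating the Heisenberg dynamics in $s$ brings down $\Vee = \sum_{x\in\Lambda^p}\sum_{n\geq R}\Phi(x,n)$ integrated in time, producing one factor of $s$ and one power of $|X|$, so $p_2=1$.

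(2) \emph{Quasi-local estimates \eqref{ki_ql_est}.} Here I would use the composition rule for quasi-local maps: if $\mathcal K$ and $\mathcal K'$ are quasi-local with decays $G,G'$ and orders $q,q'$, then $\mathcal K\circ\mathcal K'$ is quasi-local with order $q+q'$ (roughly) and decay roughly $G'(\cdot/2)+G(\cdot/2)$ after localizing the inner map at half the distance (this is exactly the kind of computation in \cite[Section~5]{nachtergaele:2019}, using Lemma~\ref{lem:qlm_loc_est} to replace $\mathcal K'(A)$ by a local approximation). Applying this: $\mathcal{K}^3_s = s\,\alpha_s\circ\mathcal F_s$ has order $q_3 = 1$ (from $\mathcal F_s$) and stretched-exponential decay $G_3$. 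For $\mathcal{K}^1_s$ and $\mathcal{K}^2_s$, the extra structure of "$\alpha_s - \mathrm{id}$" and "$\mathcal F_s - \mathcal F_0$" does not improve the commutator decay but does supply the factor $s$; the order is $q_1=q_2=2$ because of the generator summation over sites (each $\mathcal{K}^1_s$, $\mathcal{K}^2_s$ effectively involves the quasi-local map composed with the difference construction which costs an extra power of $|X|$ from summing the interaction terms intersecting the support). In all cases $G_i$ inherits a stretched-exponential (hence vanishing) tail from \eqref{stab_g_grows}, using that $w_\gamma$ and $W_\xi$ have sub-exponentially decaying tails.

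\textbf{Main obstacle.} The routine but delicate part is bookkeeping the decay functions: one must verify that differentiating the Heisenberg evolution in $s$ and integrating the generator $D^\xi(s)=\int\tau^{(s)}_t(V_{\Lambda^p})W_\xi(t)\,dt$ against the weight still yields a genuine (non-increasing, vanishing) decay function, and that the interaction sums $\sum_{x\in\Lambda^p,n\geq R}$ converge with the right power of $|X|$ and uniformly in $\Lambda$. This is where Assumption~\ref{ass:basic}(ii), specifically the $F$-norm bound \eqref{ini-int-dec} and the stretched-exponential growth of $g$, is essential: it guarantees the generator commutator bounds sum to something finite and that convolving two stretched-exponential tails (as happens in the composition step) still gives a stretched exponential, possibly with a smaller constant $a$ and the same or adjusted $\theta$. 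The precise tracking of how $a$, $\theta$, and the polynomial prefactor degrade under these operations—while conceptually standard given \cite{nachtergaele:2019}—is the bulk of the work deferred to Section~\ref{sec:just-tech}.
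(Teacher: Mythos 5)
Your overall strategy matches the paper's: reduce everything to quasi-locality and local-boundedness estimates for $\alpha_s$ and the weighted integral operators, then invoke the composition machinery of \cite[Section~5]{nachtergaele:2019} (specifically Lemmas~5.8 and 5.10 there). The treatment of $\mathcal{K}^1_s$ via $(\alpha_s-\mathrm{id})(A)=i\int_0^s\alpha_r([D(r),A])\,dr$, and the trivial treatment of $\mathcal{K}^3_s$, are exactly the paper's arguments, and you arrive at the correct orders $p_i$, $q_i$.

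The one place your account is materially vaguer than the paper's is $\mathcal{K}^2_s=\mathcal F_s-\mathcal F_0$. You write it as $\int_0^s \frac{d}{du}\mathcal F_u(A)\,du$ and say, loosely, that "differentiating the Heisenberg dynamics in $s$ brings down $\Vee$," leaving the bookkeeping undone. The paper instead lands on the exact closed-form identity
\[
\mathcal{K}^2_s \;=\; s\,(\mathcal{G}_s\circ\delta^{\Vee}),
\]
where $\delta^{\Vee}(A)=i[\Vee,A]$ is the derivation associated to the perturbation and $\mathcal{G}_s$ is the second weighted integral operator (with weight $W_\gamma$). The derivation uses the distributional relation $\tfrac{d}{dt}W_\gamma(t)=-w_\gamma(t)+\delta_0(t)$ between the two weight functions. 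This identity matters: it expresses $\mathcal{K}^2_s$ as the composition of a strictly local, order-one quasi-local map ($\delta^{\Vee}$, controlled by the $F$-norm bound in Assumption~\ref{ass:basic}(ii)) with an integral operator whose quasi-locality is already known from Lemma~\ref{lem:wio-prag}, so that Lemma~5.8 of \cite{nachtergaele:2019} applies directly and gives $p_2=1$, $q_2=2$ with a decay function in the correct class without any further manipulation. Your Duhamel-style differentiation would eventually produce the same thing, but without the clean identity you would have to redo a non-trivial amount of the convolution-of-weights computation that the paper gets for free. Everything else you wrote is a faithful (if compressed) account of the paper's argument.
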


Lemma~\ref{lem:qlm_ests} demonstrates that the three families of maps introduced  in \eq{3_fams_maps}
are locally bounded and quasi-local with estimates that are uniform in $s\in [0,1]$.
As will be established in the next section, our quasi-locality bounds
yield explicit decay functions $G_i$, for $i=1,2,3$. Rather than compiling all the various estimates we obtain, we prefer to
describe a class of decay functions which captures, in principle, the worst case scenario
in all these bounds. To this end, for any $\xi >0$, introduce a function $f_\xi:[0, \infty) \to (0, \infty)$  by setting
\begin{equation} \label{fxi-dec}
f_\xi(r) = \left\{ \begin{array}{cl} \frac{e^2}{4} & \mbox{if } 0 \leq r \leq \xi^{-1} e^2, \\
\frac{ \xi r}{ (\ln( \xi r))^2} & \mbox{if } r > \xi^{-1} e^2. \end{array} \right.
\end{equation} 
In what follows, we will frequently make reference to the following decay class:
\begin{defn} \label{def:dec-class} Let $\eta$, $\xi$, and $\theta$ be positive numbers. 
	We will say that a function $G: [0, \infty) \to (0, \infty)$ is of \emph{decay class $(\eta, \xi, \theta)$} if for every $0 < \eta' < \eta$, there are positive numbers 
	$C_1$, $C_2$, $a$, and $d$, with $C_1 \geq C_2e^{-\eta' f_{\xi}(a d^{\theta})}$, for which the estimate
	\begin{equation}
	G(r) \leq \left\{ \begin{array}{cl} C_1 & \mbox{if } 0 \leq r \leq d \\
	C_2 e^{- \eta' f_{\xi}(ar^{\theta})} & \mbox{if } r > d \end{array} \right.
	\end{equation}
	holds for all $r \geq 0$. Here, $f_{\xi}$ is as in (\ref{fxi-dec}) above.
\end{defn} 

\begin{remk} \label{rem:dec-class} 
	Our estimates will frequently use basic properties of these decay classes. 
	First, note that each of these decay classes is closed under addition and 
	multiplication by non-negative scalars. Next, if $G$ is in a particular decay class, then
	for any $p > 0$ and $c>0$, the functions
	\begin{equation}
	G_1(r) = (1+r)^p G(r) \quad \mbox{and moreover} \quad G_2(r) = \sum_{n \geq \lfloor cr \rfloor} n^p G(n)
	\end{equation}
	are both in the same decay class, as one easily checks.
\end{remk}

\begin{remk} \label{rem:dec-triple} The proof of Lemma~\ref{lem:qlm_ests} actually establishes that each of the functions $G_i$
	is of a particular decay class. More precisely, let $\eta>0$ be the number in (\ref{def:eta}) below. 
	Take $\xi = \gamma / 2v$ to be the ratio of $\gamma$, the fixed number 
	(strictly less than $\gamma_\Lambda$) which
	is used in the definition of the spectral flow, and $2v$ where $v$ is an estimate on the Lieb-Robinson velocity for
	the dynamics corresponding to the family of Hamiltonians $H(s)$ under investigation, see (\ref{stab-dyn-lrb}) and (\ref{lrb-vest}). 
	Finally, let $\theta$ be as in (\ref{stab_g_grows}).  The proof of Lemma~\ref{lem:qlm_ests} demonstrates that
	each $G_i$ is of decay class $( \eta, \frac{\gamma}{2v}, \theta)$. 
	In particular, if $\gamma$, $v$, and $\| \Phi \|_{1, F}$ satisfy volume-independent estimates,
	then these decay functions may be chosen independent of the volume.   
\end{remk}

We can now state the main result of this section.

\begin{thm} 
	\label{thm:Step1}
	Under Assumption \ref{ass:basic}, the transformed Hamiltonian, see (\ref{step1_transform}), can be written as
	\begin{equation} \label{uni-equi-rep-step-1}
	\alpha_s(H(s)) = H + \Vee^{(1)}(s)  \quad \mbox{with} \quad \Vee^{(1)}(s) = \sum_{x \in \Lambda} \sum_{m \geq R} \Phi^{(1)}(x,m,s)
	\end{equation}
	for all $0 \leq s \leq 1$. Here the terms above satisfy 
	\begin{equation}
	\Phi^{(1)}(x,m,s)^* = \Phi^{(1)}(x,m,s) \in \mathcal{A}_{b_x^{\Lambda}(m)} \quad \mbox{for all } x \in \Lambda, \, m \geq R, \mbox{ and } 0 \leq s \leq 1,
	\end{equation}
	and for $x$, $m$, and $s$ as above, the bound 
	\begin{equation} \label{phi1_est}
	\| \Phi^{(1)}(x, m, s) \| \leq s \cdot G^{(1)}(m)  \, 
	\end{equation}
	holds for some function $G^{(1)} : [0, \infty) \to (0, \infty)$ in the decay class $(\eta, \frac{\gamma}{2v}, \theta)$. Here, the parameters in this
	triple are as in Remark~\ref{rem:dec-triple}.
\end{thm}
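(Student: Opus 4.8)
The plan is to start from the representation \eqref{phi1x-def} of $\Vee^{(1)}(s)$ and push each of the three map families through a local decomposition. Recall that $\Phi_x^{(1)}(s) = \mathcal{K}_s^1(h_x) + \mathcal{K}_s^2(h_x) + \chi_{\Lambda^p}(x)\sum_{n\ge R}\mathcal{K}_s^3(\Phi(x,n))$ with $h_x\in\mathcal{A}_{b_x^\Lambda(R)}$ and $\Phi(x,n)\in\mathcal{A}_{b_x^\Lambda(n)}$. None of $\mathcal{K}_s^i(h_x)$ or $\mathcal{K}_s^3(\Phi(x,n))$ is strictly local, but by Lemma~\ref{lem:qlm_ests} each $\mathcal{K}_s^i$ is quasi-local of order $q_i$ with decay function $s\,G_i$, and by Remark~\ref{rem:dec-triple} each $G_i$ is of decay class $(\eta,\tfrac{\gamma}{2v},\theta)$. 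Since $\Lambda$ is finite, the telescoping identity \eqref{telescope} gives $\mathcal{K}_s^i(h_x) = \sum_{m\ge R}\Delta_{x,R;m}^\Lambda(\mathcal{K}_s^i(h_x))$ and $\mathcal{K}_s^3(\Phi(x,n)) = \sum_{m\ge n}\Delta_{x,n;m}^\Lambda(\mathcal{K}_s^3(\Phi(x,n)))$, all terms lying in the indicated balls. Exchanging $\sum_{n\ge R}\sum_{m\ge n} = \sum_{m\ge R}\sum_{n=R}^{m}$ and collecting the contributions supported in $b_x^\Lambda(m)$, I would define, for $m>R$,
\[
\Phi^{(1)}(x,m,s) = \Delta_{x,R;m}^\Lambda(\mathcal{K}_s^1(h_x)) + \Delta_{x,R;m}^\Lambda(\mathcal{K}_s^2(h_x)) + \chi_{\Lambda^p}(x)\sum_{n=R}^{m}\Delta_{x,n;m}^\Lambda(\mathcal{K}_s^3(\Phi(x,n))),
\]
and for $m=R$ the same expression with each $\Delta_{x,R;R}^\Lambda$ read as $\Pi_{b_x^\Lambda(R)}^\Lambda$. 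By construction $\Phi^{(1)}(x,m,s)\in\mathcal{A}_{b_x^\Lambda(m)}$; it is self-adjoint because $\alpha_s$ is implemented by unitaries, $\mathcal{F}_s$ integrates the $*$-automorphisms $\tau_t^{(s)}$ against the real weight $w_\gamma$, and the localizing maps $\Pi_{b_x^\Lambda(k)}^\Lambda$ (normalized partial traces tensored with $\idty$) are $*$-preserving, so each $\mathcal{K}_s^i$ and each $\Delta_{x,n;m}^\Lambda$ maps self-adjoint to self-adjoint. Summing over $x$ and undoing the telescoping recovers $\Vee^{(1)}(s)$, hence \eqref{uni-equi-rep-step-1}.

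For the norm bound, fix $m>R$. Lemma~\ref{lem:qlm_loc_est}, applied with the quasi-locality data of $\mathcal{K}_s^i$, gives $\|\Delta_{x,R;m}^\Lambda(\mathcal{K}_s^i(h_x))\| \le 4\,|b_x^\Lambda(R)|^{q_i}\,\|h_x\|\,s\,G_i(m-R-1)$ for $i=1,2$ and $\|\Delta_{x,n;m}^\Lambda(\mathcal{K}_s^3(\Phi(x,n)))\| \le 4\,|b_x^\Lambda(n)|^{q_3}\,\|\Phi(x,n)\|\,s\,G_3(m-n-1)$ for $R\le n<m$, while the $n=m$ term is bounded by $s\,B_3\|\Phi(x,m)\|$ via \eqref{ki_lb_est}. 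Using $\nu$-regularity to bound $|b_x^\Lambda(k)|\le\kappa(1+k)^\nu$, $\|h_x\|\le\|h\|$, and \eqref{int-bd-con} to bound $\|\Phi(x,n)\|\le\|\Phi\|_{1,F}F(\max(0,n-1))$, we obtain
\[
\|\Phi^{(1)}(x,m,s)\| \le s\Big(c_R\big(G_1(m-R-1)+G_2(m-R-1)\big) + c_\nu\!\!\sum_{n=R}^{m}(1+n)^{q_3\nu}F(\max(0,n-1))\,G_3(m-n-1)\Big),
\]
with $c_R,c_\nu$ depending only on $R,\nu,\kappa,\|h\|,\|\Phi\|_{1,F}$; for $m=R$ the right-hand side is $s$ times a constant by \eqref{ki_lb_est} (with $p_1=2$, $p_2=1$, $p_3=0$). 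It remains to check that the bracketed expression, viewed as a function of $m$, lies in the decay class $(\eta,\tfrac{\gamma}{2v},\theta)$.

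The first two shifted summands lie in this class by the shift invariance recorded in Remark~\ref{rem:dec-class}. For the convolution term, by \eqref{stab_g_grows} the weighted $F$-function obeys $F(r)\le F_0(r)e^{-ar^\theta}$, which decays faster than any member of the class, while $G_3$ is in the class; since $(n-1)+(m-n-1)=m-2$, at least one of these two arguments is $\ge\tfrac{m-2}{2}$, so monotonicity gives $F(n-1)G_3(m-n-1)\le F(0)G_3(\tfrac{m-2}{2}) + G_3(0)F(\tfrac{m-2}{2})$, whence
\[
\sum_{n=R}^{m}(1+n)^{q_3\nu}F(\max(0,n-1))\,G_3(m-n-1) \le (1+m)^{q_3\nu+1}\big(F(0)G_3(\tfrac{m-2}{2}) + G_3(0)F(\tfrac{m-2}{2})\big).
\]
Here $G_3(\tfrac{m-2}{2})$ is again in the decay class (the definition permits replacing $a$ by $a\,2^{-\theta}$), $F(\tfrac{m-2}{2})$ is stretched-exponentially small and hence a fortiori in the class, and multiplying by the polynomial $(1+m)^{q_3\nu+1}$ preserves membership by Remark~\ref{rem:dec-class}. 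Adding the three contributions yields a single $G^{(1)}$ of decay class $(\eta,\tfrac{\gamma}{2v},\theta)$ with $\|\Phi^{(1)}(x,m,s)\|\le s\,G^{(1)}(m)$, which is \eqref{phi1_est}.

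The substantive input is Lemma~\ref{lem:qlm_ests}: the uniform-in-$s$ locally bounded and quasi-local estimates for $\mathcal{K}_s^1,\mathcal{K}_s^2,\mathcal{K}_s^3$, together with the identification of their decay class. Its proof (deferred to Section~\ref{sec:just-tech}) is where the real work sits, resting on the Lieb--Robinson bound of Theorem~\ref{thm:LRBounds} for the dynamics $\tau_t^{(s)}$, on the known quasi-locality of the spectral flow $\alpha_s$, and on the integral-operator bounds for $\mathcal{F}_s$. Granting that lemma, the only delicate step here is verifying that assembling countably many strictly local pieces stays inside the decay class; the stretched-exponential hypothesis \eqref{stab_g_grows} on $g$ is exactly what makes the convolution with $F$ harmless, as used above.
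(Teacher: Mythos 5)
Your proposal is correct and follows essentially the same route as the paper: the same $\Delta$-based local decomposition of $\mathcal{K}_s^1(h_x)$, $\mathcal{K}_s^2(h_x)$, and $\mathcal{K}_s^3(\Phi(x,n))$; the same appeal to Lemma~\ref{lem:qlm_loc_est} and Lemma~\ref{lem:qlm_ests} for the norm estimates; and the same use of decay-class closure properties to produce $G^{(1)}$. The only cosmetic difference is that the paper splits the convolution sum explicitly at $n = \lfloor m/2 \rfloor$, while you use the symmetric ``at least one argument exceeds $(m-2)/2$'' observation and pull out a polynomial count of terms --- the resulting bound is the same up to constants and polynomial factors absorbed by Remark~\ref{rem:dec-class}. (One small wrinkle: your displayed convolution sum is written up to $n=m$, where $G_3(m-n-1)=G_3(-1)$ is outside the domain; this should be the $n=m$ local-bound term you already singled out, so the upper limit should be $m-1$.)
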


\begin{remk}
	\indent
	\begin{enumerate}
		\item[(i)] As will be clear from the proof below, a choice for the decay function $G^{(1)}$ 
		can be made explicit in terms of various, previously introduced decay functions.  More importantly, when 
		$\gamma$, $v$, and $\| \Phi \|_{1, F}$ satisfy volume-independent estimates,
		$G^{(1)}$ may be chosen in a volume-independent manner as well.  
		\item[(ii)] We note that, technically, no gap assumption is used in the proof of Theorem~\ref{thm:Step1}. 
		In fact, if the spectral flow used to transform the Hamiltonian, see (\ref{uni-equi-rep-step-1}), is defined
		with respect to any $\xi>0$, as in (\ref{spec-flow-auto}), then Theorem~\ref{thm:Step1} still holds
		and the resulting decay function $G^{(1)}$ is in the decay class $(\eta, \frac{\xi}{2v}, \theta)$.  
	\end{enumerate}
\end{remk}

%

\begin{proof}
	Recall from Section~\ref{sec:conv-lab}, see (\ref{rewrite-diff})-(\ref{phi1x-def}), that for all $0 \leq s \leq 1$
	\begin{equation} \label{uni-equi-rep-step-1_2}
	\alpha_s(H(s)) = H + \Vee^{(1)}(s)  \quad \mbox{with} \quad \Vee^{(1)}(s) =  \sum_{x \in \Lambda} \Phi^{(1)}_x(s),
	\end{equation}
	where 
	\begin{equation} \label{phi1x-2}
	\Phi_x^{(1)}(s) = \mathcal{K}_s^1(h_x) +  \mathcal{K}_s^2(h_x) + \chi_{\Lambda^p}(x) \cdot \sum_{n \geq R} \mathcal{K}_s^3( \Phi(x,n))  \, .
	\end{equation}
	For $i=1,2,3$, the families of maps $\{ \mathcal{K}_s^i \}_{s \in [0,1]}$ are as in (\ref{3_fams_maps}),
	and $\chi_{\Lambda^p}$ is the characteristic function of $\Lambda^p \subseteq \Lambda$. {F}rom Lemma~\ref{lem:qlm_ests}, each of
	the maps $\mathcal{K}_s^i$ is locally bounded and quasi-local. In this case, the terms on the right-hand-side
	of (\ref{phi1x-2}) can be approximated by strictly local terms with error estimates controlled using Lemma~\ref{lem:qlm_loc_est} as follows:
	\begin{equation} \label{phi_1_x_local}
	\Phi^{(1)}_x(s) = \sum_{m \geq R} \Phi^{(1)}(x,m,s)
	\end{equation}
	where we used the local decompositions from \eqref{def:Delta} to define
	\begin{equation} \label{phi_1_x_bits}
	\Phi^{(1)}(x,m,s) = \Delta_{x, R; m}^{\Lambda}(\mathcal{K}_s^1(h_x)) +  \Delta_{x,R;m}^{\Lambda}(\mathcal{K}_s^2(h_x))+  \chi_{\Lambda^p}(x) \cdot \sum_{n=R}^m \Delta_{x,n;m}^{\Lambda}( \mathcal{K}_s^3(\Phi(x,n) )) \, .
	\end{equation}
	We need only estimate these terms as in (\ref{phi1_est}). 
	
	First, consider the case of $m=R$. One has that
	\begin{eqnarray} \label{phi_1_x_bits_1_bd}
	\| \Phi^{(1)}(x, R, s)  \| & \leq & \| \Pi_{b_x^{\Lambda}(R)}^{\Lambda}(\mathcal{K}_s^1(h_x) )\| +  \| \Pi_{b_x^{\Lambda}(R)}^{\Lambda}( \mathcal{K}_s^2(h_x) )\|  + \| \Pi_{b_x^{\Lambda}(R)}^{\Lambda}( \mathcal{K}_s^3( \Phi(x,R))) \|   \nonumber \\
	& \leq & s \| h_x \| \sum_{i=1}^2 B_i |b_x^{\Lambda}(R)|^{p_i}  + s \| \Phi(x,R) \| B_3 |b_x^{\Lambda}(R)|^{p_3} 
	\end{eqnarray}
	where we have used the form of the local decompositions, see (\ref{def:Delta}), and Lemma~\ref{lem:qlm_ests}(i).
	Since $\Lambda$ is finite, each of $\max_{x \in \Lambda} \| h_x \|$, $\max_{x \in \Lambda} \|\Phi(x,R) \|$, and 
	$\max_{x \in \Lambda}|b_x^{\Lambda}(R)|$ are as well, and it is clear that an estimate of the form (\ref{phi1_est}) holds.
	
	For $m>R$, we estimate as follows:
	\begin{eqnarray} \label{l_n_bits_est}
	\| \Phi^{(1)}(x, m, s)  \| & \leq & \| \Delta_{x,R;m}^{\Lambda}(\mathcal{K}_s^1(h_x) )\| +  \| \Delta_{x,R;m}^{\Lambda}( \mathcal{K}_s^2(h_x) )\|  + 
	\sum_{n=R}^m \| \Delta_{x,n;m}^{\Lambda}( \mathcal{K}_s^3( \Phi(x,n))) \| \nonumber \\
	& \leq & 4 s \left( \| h_x \| \sum_{i=1}^2  |b_x^{\Lambda}(R)|^{q_i} G_i(m-R-1) + \sum_{n=R}^{m-1} |b_x^{\Lambda}(n)|^{q_3} \| \Phi(x,n) \| G_3(m-n-1) \right)  \nonumber \\
	& \mbox{ } & \quad + s B_3 |b_x^{\Lambda}(m)|^{p_3} \| \Phi(x,m) \| 
	\end{eqnarray}
	where we have used Lemma~\ref{lem:qlm_ests}(ii) as input for the bounds on the local decompositions proven
	in Lemma~\ref{lem:qlm_loc_est}.The first two terms on the right-hand-side of (\ref{l_n_bits_est}) clearly decay in $m$, and moreover, the
	final term, which corresponds to $n=m$ in (\ref{phi_1_x_bits}) and arises from 
	local bounds as in (\ref{phi_1_x_bits_1_bd}),  may be further estimated using (\ref{int-bd-con}). 
	We need only extract decay from the terms with $R \leq n \leq m-1$. Since $G_3$ is non-increasing and $q_3=1$, we find that 
	\begin{equation} \label{sum_est_s1_1}
	\sum_{n=R}^{\lfloor m/2 \rfloor -1}  |b_x^{\Lambda}(n)|^{q_3} \| \Phi(x,n) \| G_3(m-n-1) \leq \| \Phi \|_{1,F} F(R-1) G_3(m/2)
	\end{equation}
	whereas
	\begin{equation}  \label{sum_est_s1_2}
	\sum_{n= \lfloor m/2 \rfloor}^{m -1}  |b_x^{\Lambda}(n)|^{q_3} \| \Phi(x,n) \| G_3(m-n-1) \leq G_3(0) \| \Phi \|_{1,F} F(\lfloor m/2 \rfloor -1)
	\end{equation}
	both using (\ref{int-bd-con}) to control the interaction terms. 
	
	To summarize, we have shown that for $m>R$,
	\begin{eqnarray} \label{explicit_est_1}
	\| \Phi^{(1)}(x, m, s)  \| & \leq & s \left( 4 \max_{x \in \Lambda} \| h_x \| \right) \sum_{i=1}^2  |b_x(R)|^{q_i} G_i(m-R-1)  \nonumber \\
	& \mbox{ } & + s \| \Phi \|_{1,F} \left[ 4 F(R-1) G_3(m/2) + 4 G_3(0) F(\lfloor m/2 \rfloor -1) + B_3 \kappa^{p_3} m^{p_3 \nu} F(m-1) \right]  
	\end{eqnarray}
	As indicated in Remark~\ref{rem:dec-triple}, each function $G_i$, for $i =1,2,3$, is in the decay class $(\eta, \frac{\gamma}{2v}, \theta)$, and 
	the function $F$ decays even faster. Using Remark~\ref{rem:dec-class}, we conclude that the bound above
	is also in this decay class, and this completes the proof.
\end{proof}

\begin{remk} 
	If the initial perturbation $V$ is a anchored interaction, as in Definition~\ref{def:ball-int-td}, satisfying \eqref{support_property} then
	by arguing as in Appendix~\ref{sec:est-trans-balled-ints}, it is clear that $\Vee^{(1)}(s)$ is a
	$s$-dependent anchored interaction that satisfies \eqref{dep_support_property}. In fact, one easily checks that 
	$s \mapsto \Phi^{(1)}(x,m,s)$ is continuous for each choice of $(x,m)$, and this
	result also follows from the more general discussion found in \cite[Section IV.B.1]{nachtergaele:2019}. 
\end{remk}

We end this section with an estimate of the global terms $\Phi^{(1)}_x$ anchored at sites $x$ {\em outside} the original perturbation region
$\Lambda^p$.

\begin{lemma} \label{lem:FarDecay}
	Under Assumption~\ref{ass:basic}, consider the transformed Hamiltonian $\alpha_s(H(s))$, as in (\ref{step1_transform}).
	Let $N \geq R$ and take $(\eta, \frac{\gamma}{2v}, \theta)$ as in Remark~\ref{rem:dec-triple}.
	Then there is a function $G:[0, \infty) \to (0, \infty)$ of 
	decay class $(\eta, \frac{\gamma}{2v}, \theta)$ for which the global term  $\Phi_x^{(1)}(s)$ as in (\ref{phi1x-def}) satisfies
	\begin{equation}\label{FarDeltaDecay}
	\|\Phi_x^{(1)}(s)\| \leq 2 s \| h_x \| \cdot G(N)
	\end{equation}
	 for all $x \in \Lambda$ with $d(x,\Lambda^p) \geq N$ and $0 \leq s \leq 1.$
\end{lemma}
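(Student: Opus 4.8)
The plan is to exploit the fact that for a site $x$ far from the perturbation region $\Lambda^p$, the only contributions to $\Phi_x^{(1)}(s)$ that could be large come from the maps $\mathcal{K}_s^1$ and $\mathcal{K}_s^2$ acting on $h_x$, which are automatically quasi-local, while the $\mathcal{K}_s^3$-part is simply absent because $\chi_{\Lambda^p}(x)=0$. Recall from \eqref{phi1x-def} that
\begin{equation}
\Phi_x^{(1)}(s) = \mathcal{K}_s^1(h_x) + \mathcal{K}_s^2(h_x) + \chi_{\Lambda^p}(x)\cdot\sum_{n\geq R}\mathcal{K}_s^3(\Phi(x,n)),
\end{equation}
so when $d(x,\Lambda^p)\geq N \geq R$ we have $x\notin\Lambda^p$ and hence $\Phi_x^{(1)}(s) = \mathcal{K}_s^1(h_x)+\mathcal{K}_s^2(h_x)$. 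The key observation is that $\mathcal{K}_s^1$ and $\mathcal{K}_s^2$ are each defined as a difference of two maps (the spectral flow/identity composed with $\mathcal{F}_s$, and $\mathcal{F}_s - \mathcal{F}_0$ respectively), each of which leaves $H$ invariant in the sense that $\mathcal{F}_s(H(s)) = H(s)$. I will use this to rewrite $\mathcal{K}_s^1(h_x)+\mathcal{K}_s^2(h_x)$ in a form from which decay in the distance to $\Lambda^p$ can be extracted.

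First I would recall the identity used in the proof of Proposition~\ref{prop:step1-commute}:
\begin{equation}
\mathcal{K}_s^1(h_x)+\mathcal{K}_s^2(h_x) = (\alpha_s\circ\mathcal{F}_s)(h_x) - \mathcal{F}_0(h_x).
\end{equation}
Since the full Hamiltonian satisfies $\mathcal{F}_s(H(s))=H(s)$ and $\alpha_s(H(s))$ has the decomposition \eqref{step1_transform}, one can subtract off the pieces of $H$ and $\Vee$ that are anchored away from $x$ using the quasi-locality of $\alpha_s$ and $\mathcal{F}_s$. Concretely, the difference $(\alpha_s\circ\mathcal{F}_s)(h_x) - \mathcal{F}_0(h_x)$ measures how much the composed quasi-local map $\alpha_s\circ\mathcal{F}_s$ differs from the (trivial $s=0$) composed map $\mathcal{F}_0$ on the local observable $h_x$. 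Both families are generated by $V_{\Lambda^p}$ (through the spectral flow generator \eqref{gen_spec_flow} and the integral operator \eqref{def_wio_F}, whose weights are built from the Heisenberg dynamics of $H(s)=H+sV_{\Lambda^p}$), so the $s$-derivative of $(\alpha_s\circ\mathcal{F}_s)(h_x)$ involves commutators of $h_x$ with objects supported (up to quasi-local tails) in $\Lambda^p$. The distance $d(x,\Lambda^p)\geq N$ then forces these commutators to be small: invoking the quasi-locality estimates from Lemma~\ref{lem:qlm_ests}(ii) for $\mathcal{K}_s^1$ and $\mathcal{K}_s^2$ — but applied in the ``dual'' direction, testing $\mathcal{K}_s^i(h_x)$ against the region $\Lambda^p$ where the generators live — yields a bound of the form $s\|h_x\|\,(\text{const})\,G_i(N)$, with $G_i$ of decay class $(\eta,\frac{\gamma}{2v},\theta)$. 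Summing the $i=1,2$ contributions and absorbing the polynomial prefactor $|b_x^\Lambda(R)|^{q_i}$ (bounded by $\kappa R^\nu$ via $\nu$-regularity) into the decay class via Remark~\ref{rem:dec-class} gives \eqref{FarDeltaDecay} with a suitable $G$.

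The main obstacle I anticipate is making precise the claim that ``$\mathcal{K}_s^1(h_x)+\mathcal{K}_s^2(h_x)$ decays in $d(x,\Lambda^p)$'' rather than merely in the support size of $h_x$ — the quasi-locality bound \eqref{ki_ql_est} as stated controls $\|[\mathcal{K}_s^i(h_x),B]\|$ for $B$ far from $\supp h_x$, which is not directly what is needed. The correct route is to use that $\mathcal{K}_s^1$ and $\mathcal{K}_s^2$ annihilate (or nearly annihilate, up to the $\mathcal{K}_s^3$ term) any observable commuting with $V_{\Lambda^p}$ through the structure of the spectral flow generator and the weight function $w_\gamma$; more practically, one differentiates in $s$ and writes $\frac{d}{ds}(\alpha_s\circ\mathcal{F}_s)(h_x)$ as an integral of nested commutators $[D^\gamma(s'),\cdot]$ and $[\tau^{(s')}_t(V_{\Lambda^p}),\cdot]$ acting on $h_x$, each of which is controlled by a Lieb-Robinson bound in the separation $d(x,\Lambda^p)\geq N$. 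The bookkeeping of these nested-commutator estimates, together with verifying the resulting decay function lies in the claimed class, is the technical heart; it parallels (and can cite) the quasi-locality analysis of \cite[Section IV]{nachtergaele:2019} and the proof of Lemma~\ref{lem:qlm_ests}, so no genuinely new estimate is required, only a careful redeployment of the existing ones localized at $\Lambda^p$ instead of at the support of the tested observable.
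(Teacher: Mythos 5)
Your proposal is essentially the paper's argument: you correctly note that the $\mathcal{K}_s^3$-contribution vanishes for $x\notin\Lambda^p$, and that the residual $\mathcal{K}_s^1(h_x)+\mathcal{K}_s^2(h_x)=(\alpha_s\circ\mathcal{F}_s)(h_x)-\mathcal{F}_0(h_x)$ must decay in $N=d(x,\Lambda^p)$ because the $s$-dependence of these maps is generated by an interaction anchored in $\Lambda^p$. The bookkeeping you defer is carried out in the paper by keeping $\mathcal{K}_s^1$ and $\mathcal{K}_s^2$ separate rather than differentiating the composite: one first truncates $\mathcal{F}_s(h_x)$ to the local operator $A=\Pi^{\Lambda}_{b_x^{\Lambda}(\lfloor N/2\rfloor)}(\mathcal{F}_s(h_x))$ and applies $(\alpha_s-{\rm id})(A)=i\int_0^s\alpha_r([D(r),A])\,dr$ together with the anchored-generator bound \eqref{Psi-int-bd}, handles the tail $(\alpha_s-{\rm id})(\mathcal{F}_s(h_x)-A)$ by telescoping via Lemma~\ref{lem:qlm_loc_est}, and then exploits the factored form $\mathcal{K}_s^2(h_x)=s(\mathcal{G}_s\circ\delta^{\Vee})(h_x)$ to estimate $\|\delta^{\Vee}(h_x)\|$ in a single step from the $F$-norm and the separation $N\geq R$ --- a cleaner route than the Duhamel-type double integral that a direct $s$-differentiation of $\mathcal{F}_s$ would produce.
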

We prove this lemma at the end of the next section. 

\subsubsection{Technical details of the quasi-local estimates} \label{sec:just-tech}

In this section, we will prove the technical estimates claimed in Lemma~\ref{lem:qlm_ests}
as well as Lemma~\ref{lem:FarDecay} which will be useful for arguments in Section~\ref{sec:LTQO2}.

The estimates which form the core of Lemma~\ref{lem:qlm_ests} are proven using various bounds
on the composition of quasi-local maps as established in \cite[Section V.C]{nachtergaele:2019}. As input, we must first
quantify quasi-local bounds on the Heisenberg dynamics, various integral operators, and the
spectral flow. We turn to this topic first. 

Under Assumption~\ref{ass:basic}, it is well-known that the Heisenberg dynamics associated to the Hamiltonians $H(s)$, 
see (\ref{pert-fam-1}), satisfies a quasi-locality bound. In fact, 
an application of Theorem~\ref{thm:LRBounds}, here we first use Proposition~\ref{prop:TriIneq_Fnorm}, shows that
for $X,Y \subset \Lambda$ with $X \cap Y = \emptyset$ and any $A \in \mathcal{A}_X$ and $B \in \mathcal{A}_Y$, one has that
\begin{equation} \label{stab-dyn-lrb}
\| [ \tau_t^{(s)}(A), B] \| \leq \frac{2 \| F_0 \|}{C_F} |X| \| A \| \| B \| e^{v|t| - g(d(X, Y))} \, ,
\end{equation}
see e.g. (\ref{velocity_bd}). For the sake of uniform estimates, note that $v$ is no larger than
\begin{equation} \label{lrb-vest}
v \leq  2 C_F \left( \frac{\kappa R^\nu}{F(R)} \cdot \max_{x \in \Lambda} \| h_x \| + \| \Phi\|_{1, F} \right) \, .
\end{equation}
To be clear, if $H$ is obtained from a uniformly bounded interaction and (\ref{ini-int-dec}) holds with
$\Lambda^p = \Lambda = \Gamma$, then this bound on $v$ is uniform with respect to $s \in [0,1]$ and
all finite volumes $\Lambda \subset \Gamma$. 

Let us now turn to estimates for two families of integral operators. 
For each $\xi >0$, define two families of 
linear maps $\{ \mathcal{F}^\xi_s \}_{s \in [0,1]}$ and $\{ \mathcal{G}^\xi_s \}_{s \in [0,1]}$,
with $\mathcal{F}^\xi_s \, , \mathcal{G}^\xi_s : \mathcal{A}_{\Lambda} \to \mathcal{A}_{\Lambda}$, given by  
\begin{equation} \label{F+G-maps}
\mathcal{F}^\xi_s(A) = \int_{\mathbb{R}} \tau_t^{(s)}(A) w_{\xi}(t) \, dt \quad \mbox{and} \quad \mathcal{G}^\xi_s(A) = \int_{\mathbb{R}} \tau_t^{(s)}(A) W_{\xi}(t) \, dt
\end{equation}
for all $A \in \mathcal{A}_{\Lambda}$ and $0 \leq s \leq 1$. As above, $\tau_t^{(s)}$ is the Heisenberg dynamics associated to $H(s)$ 
and here $w_{\xi}, W_{\xi} \in L^1(\mathbb{R})$ are the real-valued weight functions defined in \cite[Section VI.B]{nachtergaele:2019}. 
Both of these families of maps are bounded uniformly in $s$. In fact, one has that
\begin{equation} \label{lb_wios}
\| \mathcal{F}^\xi_s(A) \| \leq \| A \| \quad \mbox{and} \quad \| \mathcal{G}^\xi_s(A) \| \leq \| W_{\xi} \|_1 \| A \| \quad \mbox{for all } A \in \mathcal{A}_{\Lambda} \mbox{ and } 0 \leq s \leq 1 \, ,
\end{equation}
where we have used that $w_{\xi}$ is $L^1$-normalized.

An important consequence of the results proven in \cite[Section VI]{nachtergaele:2019}, see specifically Lemma 6.5, Lemma 6.10, and Lemma 6.11, 
is that the integral operators defined above in (\ref{F+G-maps}) satisfy quasi-locality estimates that are uniform with respect to
$0 \leq s \leq 1$. The following lemma summarizes the above-mentioned results proven in \cite{nachtergaele:2019}.
Before we state it, recall that in (\ref{fxi-dec}) we introduced a sub-additive, non-decreasing function 
$f_\xi : [0, \infty) \to (0, \infty)$ for any $\xi>0$. Moreover, let $\eta>0$ be the number defined by setting
\begin{equation} \label{def:eta}
\eta \left(1 + \sum_{n=2}^ \infty \frac{1}{n \ln(n)^2} \right) = 1 \, .
\end{equation}
One readily checks that $\eta \in (2/7,1)$. 

\begin{lemma} \label{lem:wio-dec} For each $\xi>0$, let $\{ \mathcal{F}_s^{\xi} \}_{s \in [0,1]}$ and 
	$\{ \mathcal{G}_s^{\xi} \}_{s \in [0,1]}$ denote the families of integral operators introduced in (\ref{F+G-maps}) above.
	If the corresponding Heisenberg dynamics satisfies (\ref{stab-dyn-lrb}), then with $\eta >0$ as in (\ref{def:eta}): given any
	$0<\epsilon <1$ and all $X,Y \subset \Lambda$ the bound
	\begin{equation} \label{ql_wios}
	\sup_{0 \leq s \leq 1} \| [ \mathcal{K}_s^\xi(A), B] \| \leq 2 \| A \| \| B \| |X| G_{\mathcal{K}}^\epsilon(d(X,Y)) 
	\end{equation}
	holds for all $A \in \mathcal{A}_X$, $B \in \mathcal{A}_Y$, and $\mathcal{K} \in \{ \mathcal{F}, \mathcal{G} \}$. There is a
	number $d_\epsilon^*>0$ for which one may take 
	\begin{equation} \label{gen_F_ql_dec_est}
	G_{\mathcal{F}}^{\epsilon}(d) = \left\{ \begin{array}{cl} 1 \,  & \mbox{if } 0 \leq d \leq d_{\epsilon}^* \\ 
	\min\left\{ 1, c\left( \frac{C \xi}{v} + \frac{27}{7} e^4 f_{\xi_{\epsilon}}(g(d))^2 \right) e^{- \eta f_{\xi_{\epsilon}}(g(d))} \right\}\,  & \mbox{otherwise}, \end{array} \right.
	\end{equation}
	and
	\begin{equation} \label{gen_G_ql_dec_est}
	G_{\mathcal{G}}^{\epsilon}(d) = \left\{ \begin{array}{cl} \| W_{\xi} \|_1 \,  & \mbox{if } 0 \leq d \leq d_{\epsilon}^* \\ 
	\min\left\{ \| W_{\xi} \|_1, \left( \frac{C }{2v} + \frac{243}{49 \xi \eta} c e^4 f_{\xi_{\epsilon}}(g(d))^3 \right) e^{- \eta f_{\xi_{\epsilon}}(g(d))} \right\} \,  & \mbox{otherwise}. \end{array} \right.
	\end{equation}
	In fact, one may take $d_{\epsilon}^*$ to be the smallest value of $d$ for which
	\begin{equation} \label{ld_ge}
	\max\left[ 9, \sqrt{ \frac{ \eta \xi_{\epsilon}}{ \epsilon}} \right] \leq \ln( \xi_{\epsilon} g(d)) \quad \mbox{where } \xi_{\epsilon} = \frac{(1 - \epsilon) \xi}{v} \, . 
	\end{equation}
	Here $v$ and $g$ are as in (\ref{stab-dyn-lrb}) and we set $C = \frac{2 \| F_0 \|}{C_F}$. In addition, $c$ is related to the $L^1$-normalization of $w_\xi$, see \cite[Section VI.B]{nachtergaele:2019}. 
\end{lemma}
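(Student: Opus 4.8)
The plan is to derive these bounds from the general theory of weighted integral operators developed in \cite[Section~VI]{nachtergaele:2019}: the lemma is essentially a repackaging of Lemmas~6.5, 6.10, and~6.11 there, specialized to the situation where the Heisenberg dynamics $\tau_t^{(s)}$ obeys the Lieb--Robinson estimate \eqref{stab-dyn-lrb}. First I would record the identity, valid by linearity and norm-continuity of the integrand,
\[
[\mathcal{K}_s^\xi(A), B] = \int_{\mathbb{R}} [\tau_t^{(s)}(A), B]\, \rho(t)\, dt, \qquad \rho \in \{ w_\xi, W_\xi \},
\]
and split the integral at a cutoff $|t| = T$. On $|t| \le T$ one inserts \eqref{stab-dyn-lrb}, bounding the contribution by $C\,|X|\,\|A\|\,\|B\|\, e^{-g(d(X,Y))} \int_{|t|\le T} e^{v|t|}\,|\rho(t)|\, dt$ with $C = 2\|F_0\|/C_F$; on $|t| > T$ one uses the trivial bound $\|[\tau_t^{(s)}(A), B]\| \le 2\|A\|\,\|B\|$, bounding that contribution by $2\|A\|\,\|B\|\int_{|t|>T}|\rho(t)|\,dt$. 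The two quantitative inputs then needed are an exponential-moment bound for $\rho$ on $[-T,T]$ and a tail bound for $\rho$ beyond $T$; both are furnished by the decay estimates for $w_\xi$ and $W_\xi$ established in \cite[Section~VI.B]{nachtergaele:2019}, where it is shown that these weights decay faster than every polynomial but only at the sub-exponential rate captured by $f_\xi$ of \eqref{fxi-dec} (this is forced by the compact support of $\widehat{w_\xi}$ in $[-\xi,\xi]$). The sub-additivity and monotonicity of $f_\xi$ are used in manipulating these bounds.

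Next I would optimize over the cutoff $T$. Writing $d = d(X,Y)$, the inner term behaves, up to lower-order factors, like $C\,|X|\,\|A\|\,\|B\|\, e^{vT - g(d)}$, while the outer term behaves like $e^{-\,\text{(a multiple of)}\, f_\xi(T)}$; balancing these forces $vT$ to be comparable to $g(d)$, and substituting $T \approx g(d)/v$ back yields a bound of the form $\bigl(\text{polynomial in } f_{\xi_\epsilon}(g(d))\bigr)\, e^{-\eta f_{\xi_\epsilon}(g(d))}$ with $\xi_\epsilon = (1-\epsilon)\xi/v$. The small parameter $\epsilon$ accounts for the slack lost in passing from the exponential moment to the pure tail and in the various $\ln$-factors; the constant $\eta$ of \eqref{def:eta} is exactly the normalization produced by summing the sub-exponential corrections $\sum_{n\ge 2}(n\ln^2 n)^{-1}$; and the prefactor exponents ($f_{\xi_\epsilon}(g(d))^2$ for $\mathcal{F}$, $f_{\xi_\epsilon}(g(d))^3$ for $\mathcal{G}$), together with the explicit constants $c$ and $\|W_\xi\|_1$, come directly from tracking the estimates in the three cited lemmas. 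The threshold $d_\epsilon^*$ in \eqref{ld_ge} is simply the smallest $d$ for which this optimization is safely in its asymptotic regime, i.e.\ for which $\ln(\xi_\epsilon g(d))$ dominates the relevant constants; for $d \le d_\epsilon^*$ one retreats to the uniform bounds \eqref{lb_wios}, giving commutator estimates $\le 2\|A\|\,\|B\|\,|X|$ times the constants appearing in the first lines of \eqref{gen_F_ql_dec_est} and \eqref{gen_G_ql_dec_est}. Uniformity in $s\in[0,1]$ is then automatic: the only $s$-dependent ingredient is \eqref{stab-dyn-lrb}, which holds with the $s$-independent velocity bound \eqref{lrb-vest}.

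The hard part will be bookkeeping rather than anything conceptual. One must check that the weight-function estimates of \cite[Section~VI.B]{nachtergaele:2019} are stated with enough explicit dependence on $\xi$ and on the velocity $v$ to reproduce precisely the constants displayed in \eqref{gen_F_ql_dec_est}--\eqref{ld_ge}, and that the $T$-optimization genuinely yields the exponent $\eta\, f_{\xi_\epsilon}(g(d))$ with the stated $\eta$ rather than some smaller constant. Since all of this is carried out in the reference, the proof ultimately reduces to invoking Lemmas~6.5, 6.10, and~6.11 of \cite{nachtergaele:2019} with $C = 2\|F_0\|/C_F$ and the velocity estimate \eqref{lrb-vest}, and reading off the stated formulas.
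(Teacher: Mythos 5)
Your proposal matches the paper's own treatment: the paper does not prove this lemma but states it as a direct summary of Lemmas~6.5, 6.10, and 6.11 of \cite{nachtergaele:2019} applied with $C = 2\|F_0\|/C_F$ and the Lieb--Robinson velocity bound from \eqref{lrb-vest}, which is exactly the reduction you land on. Your sketch of the underlying mechanism (splitting the integral at a cutoff $T$, bounding the inner part via \eqref{stab-dyn-lrb} and the outer part via the sub-exponential tails of $w_\xi, W_\xi$ forced by the compact support of $\widehat{w_\xi}$, then optimizing $T \approx g(d)/v$) is a faithful account of what those cited lemmas carry out.
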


For this proof of stability, decay functions, such as those in (\ref{gen_F_ql_dec_est}) and (\ref{gen_G_ql_dec_est}) above, will frequently enter.
Rather than tracking the precise details of all of these bounds, we find it convenient use the notion of decay class 
introduced in Definition~\ref{def:dec-class} to characterize the worst case estimate these bounds produce. With this in mind,
we re-state a more pragmatic version of Lemma~\ref{lem:wio-dec}. 

\begin{lemma} \label{lem:wio-prag} Under Assumption~\ref{ass:basic},  take $0 < \gamma < \gamma_\Lambda$
	where $\gamma_\Lambda$ denotes the initial spectral gap as in (\ref{def:gam_0}). 
	Denote by $\{ \mathcal{F}_s \}_{s \in [0,1]}$ and $\{ \mathcal{G}_s \}_{s \in [0,1]}$
	the families of integral operators defined as in (\ref{F+G-maps}) with $\xi = \gamma$ in both cases.  
	For each $\mathcal{K} \in \{ \mathcal{F}, \mathcal{G} \}$, there is a function $G_{\mathcal{K}}$ of decay class 
	$(\eta, \frac{\gamma}{2v}, \theta)$ for which given any $X, Y \subset \Lambda$,  
	\begin{equation} \label{wios_prag_dec}
	\sup_{0 \leq s \leq 1} \| [ \mathcal{K}_s(A), B] \| \leq 2 \| A \| \| B \| |X| G_{\mathcal{K}}(d(X,Y)) 
	\end{equation}
	for all $A \in \mathcal{A}_X$ and $B \in \mathcal{A}_Y$. 
\end{lemma}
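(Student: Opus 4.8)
The statement is a reformulation of Lemma~\ref{lem:wio-dec} in the language of decay classes, so the plan is essentially to unwind Definition~\ref{def:dec-class} against the explicit bounds \eqref{gen_F_ql_dec_est} and \eqref{gen_G_ql_dec_est}. First I would record the hypotheses: under Assumption~\ref{ass:basic}, the Heisenberg dynamics $\tau_t^{(s)}$ associated with $H(s)$ satisfies the Lieb--Robinson bound \eqref{stab-dyn-lrb} with velocity $v$ bounded as in \eqref{lrb-vest}, uniformly in $s \in [0,1]$ and (given the stated uniformity of $\gamma$, $v$, $\|\Phi\|_{1,F}$) in $\Lambda$. Taking $\xi = \gamma$ in \eqref{F+G-maps} and applying Lemma~\ref{lem:wio-dec}, for each fixed $0 < \epsilon < 1$ the commutator bound \eqref{ql_wios} holds with the decay functions $G_{\mathcal{F}}^\epsilon$, $G_{\mathcal{G}}^\epsilon$ of \eqref{gen_F_ql_dec_est}--\eqref{gen_G_ql_dec_est}. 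So the whole task reduces to checking that, for an appropriate choice of $\epsilon = \epsilon(\eta')$, each $G_{\mathcal{K}}^\epsilon$ fits the template of decay class $(\eta, \frac{\gamma}{2v}, \theta)$.

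\textbf{Key steps.} Fix $0 < \eta' < \eta$. Step one: relate the parameter $\xi_\epsilon = (1-\epsilon)\xi/v = (1-\epsilon)\gamma/v$ appearing in \eqref{gen_F_ql_dec_est}--\eqref{gen_G_ql_dec_est} to the target rate $\gamma/(2v)$; since $f_{\xi}$ is increasing in $\xi$ (directly from \eqref{fxi-dec}) and we only need a \emph{lower} bound $f_{\xi_\epsilon}(r) \geq$ something, I would choose $\epsilon = \epsilon(\eta')$ small enough that $(1-\epsilon)\gamma/v \geq \gamma/(2v)$, i.e.\ $\epsilon \leq 1/2$, which also forces $\xi_\epsilon \geq \frac{\gamma}{2v}$ and hence $f_{\xi_\epsilon}(r) \geq f_{\gamma/(2v)}(r)$ for all $r$. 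Step two: use the stretched-exponential lower bound \eqref{stab_g_grows}, $g(r) \geq a r^\theta$, together with the monotonicity of $f_{\xi_\epsilon}$, to get $f_{\xi_\epsilon}(g(d)) \geq f_{\gamma/(2v)}(a d^\theta)$. Step three: absorb the prefactor. The bounds \eqref{gen_F_ql_dec_est}, \eqref{gen_G_ql_dec_est} are of the shape $(\text{polynomial in } f_{\xi_\epsilon}(g(d))) \cdot e^{-\eta f_{\xi_\epsilon}(g(d))}$; since $\eta' < \eta$ and a polynomial grows slower than $e^{(\eta - \eta')t}$, there is a constant $C_2 = C_2(\eta', \gamma, v, F_0, C_F)$ with $(\text{poly}) \cdot e^{-\eta t} \leq C_2 e^{-\eta' t}$ for all $t \geq 0$, which combined with step two yields $G_{\mathcal{K}}^\epsilon(d) \leq C_2 e^{-\eta' f_{\gamma/(2v)}(a d^\theta)}$ for $d$ beyond the crossover. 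Step four: identify the threshold $d$ of Definition~\ref{def:dec-class} with $d_\epsilon^*$ from \eqref{ld_ge}, take $C_1 = \max\{1, \|W_\xi\|_1\}$ (the flat part of \eqref{gen_F_ql_dec_est}/\eqref{gen_G_ql_dec_est}), and verify the consistency condition $C_1 \geq C_2 e^{-\eta' f_{\gamma/(2v)}(a d^\theta)}$ at $d = d_\epsilon^*$, enlarging $C_1$ if necessary. This exhibits $G_{\mathcal{F}}$ and $G_{\mathcal{G}}$ in decay class $(\eta, \frac{\gamma}{2v}, \theta)$, and \eqref{wios_prag_dec} is then just \eqref{ql_wios} with these relabeled decay functions.

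\textbf{Main obstacle.} The only genuine friction is bookkeeping around the two-variable dependence $(\epsilon, \eta')$: Definition~\ref{def:dec-class} quantifies over all $\eta' < \eta$ with the constants $C_1, C_2, a, d$ allowed to depend on $\eta'$, whereas Lemma~\ref{lem:wio-dec} quantifies over $\epsilon < 1$ with $d_\epsilon^*$ depending on $\epsilon$; one must check that the map $\eta' \mapsto \epsilon(\eta')$ can be chosen so that the resulting $d_\epsilon^*$ is finite and the polynomial-vs-exponential absorption in step three works uniformly down to the crossover point. Since $f_\xi$ is only piecewise defined, the absorption and the threshold check should each be done separately on $\{0 \leq d \leq d_\epsilon^*\}$ and $\{d > d_\epsilon^*\}$, but on neither piece is anything deeper than elementary calculus required; the polynomial prefactors $f_{\xi_\epsilon}(g(d))^2$ and $f_{\xi_\epsilon}(g(d))^3$ are the worst-case terms and they are handled identically. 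No gap hypothesis beyond $0 < \gamma < \gamma_\Lambda$ (needed only so that the spectral flow and weight functions $w_\gamma, W_\gamma$ are defined) enters, and the volume-independence claim is inherited verbatim from the volume-independence of $v$, $\gamma$, and $\|\Phi\|_{1,F}$ noted after \eqref{lrb-vest}.
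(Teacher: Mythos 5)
Your proposal is correct and follows the same route as the paper: the paper's proof is essentially a one-paragraph remark that the lemma is Lemma~\ref{lem:wio-dec} with $\xi = \gamma$ and $\epsilon = 1/2$, identifying $\eta$, $v$, $\theta$ as the decay-class parameters, and your write-up fleshes out the bookkeeping needed to check that $G_{\mathcal{K}}^\epsilon$ actually satisfies Definition~\ref{def:dec-class}. The one small thing worth noting is that your ``main obstacle'' about the coupling $\eta' \mapsto \epsilon(\eta')$ is not really an obstacle: fixing $\epsilon = 1/2$ once and for all gives $\xi_\epsilon = \gamma/(2v)$ exactly (so step one becomes an equality and step two only needs monotonicity of $f_{\gamma/(2v)}$ in its argument together with $g(r) \geq a r^\theta$), and then the polynomial-versus-exponential absorption works for every $\eta' < \eta$ with $C_2 = C_2(\eta')$ and the fixed threshold $d_{1/2}^*$, so $\epsilon$ need not depend on $\eta'$ at all.
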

\begin{proof}
	This lemma is just a special case of Lemma~\ref{lem:wio-dec} where we have taken $\xi = \gamma$. Although the assumption that
	$\gamma$ is related to the size of the initial spectral gap is not used in this estimate, this is the specific value of
	$\xi$ which will be used in all our applications, e.g. it is for this value that Proposition~\ref{prop:step1-commute} holds. Moreover, we have taken $\epsilon = 1/2$ to be concrete, but this is not crucial. Finally, 
	the particular parameters of the decay class are: 
	$\eta> 2/7$ is as in (\ref{def:eta}),
	$v$ may be taken as in (\ref{lrb-vest}), and $\theta$ is from the stretched exponential decay of the weight, see (\ref{stab_g_grows}). 
\end{proof}

\begin{remk} As is clear from Lemma~\ref{lem:wio-dec}, the decay functions, 
	denoted by $G_{\mathcal{K}}$ in Lemma~\ref{lem:wio-prag} above, depend only on $\gamma$ and $v$. 
	In particular, if $\gamma$ and $v$ are assumed to be volume independent, then so too are these
	decay functions.  
\end{remk}
The final step before proving Lemma~\ref{lem:qlm_ests} is to recall the quasi-locality estimates for the spectral flow 
established in \cite[Section VI.E.2]{nachtergaele:2019}. As before, take 
$0< \gamma < \gamma_\Lambda$ and denote the generator of the
spectral flow by 
\begin{equation} \label{gen_spec_flow2}
D(s)  = \int_{\mathbb{R}} \tau_t^{(s)}( \Vee) \, W_{\gamma}(t) \, dt = \mathcal{G}_s( \Vee) \quad \mbox{for all } 0 \leq s \leq 1.
\end{equation}
Recalling (\ref{gen_spec_flow}) and (\ref{F+G-maps}), we have taken $\xi = \gamma$ and suppressed the
dependence of the maps $D$ and $\mathcal{G}$ on $\gamma$. Note that we have written the above 
generator $D$ as the composition of a 
strictly local interaction, i.e. $\Vee$, with a quasi-local map, i.e. $\mathcal{G}_s$.  
A proof of quasi-locality estimates for the dynamics generated by such a
quasi-locally {\it transformed} interaction is the content of \cite[Section V.D]{nachtergaele:2019}. For the convenience of the
reader, we briefly review these proofs, specifically in the context of anchored interactions, in Appendix~\ref{sec:est-trans-balled-ints} below.
The basic idea is that local decompositions, see (\ref{def:Delta}), can be used to re-write the generator $D(s)$ as 
\begin{equation} \label{gen_spec_flow_asint}
D(s) = \sum_{x \in \Lambda^p} \sum_{m \geq R} \Psi(x, m, s) 
\end{equation}
where for each $x \in \Lambda^p$, $m \geq R$, and $0 \leq s \leq 1$, we have set 
\begin{equation}
\Psi(x,m, s) = \sum_{n=R}^m \Delta_{x,n;m}^{\Lambda}( \mathcal{G}_s( \Phi(x,n))) \, .
\end{equation}

For the desired quasi-locality bounds, we will need estimates on the above interaction terms.
Here we use results proven in Appendix~\ref{sec:est-trans-balled-ints}. 
First, recall that the family of maps $\{ \mathcal{G}_s \}_{s \in [0,1]}$ 
satisfies a uniform (in $s$) local bounded, see (\ref{lb_wios}),
as well as a uniform quasi-local estimate of order one, see Lemma~\ref{lem:wio-prag}. 
Next, recall that $\Vee$ satisfies (\ref{ini-int-dec}). Together, these estimates imply that
Theorem~\ref{thm:trans-int-bd} holds pointwise in $s$, and moreover, since the corresponding decay functions $G_{\mathcal{G}}$ and $F$
are both in the decay class $(\eta, \frac{\gamma}{2v}, \theta)$, we have further satisfied the assumptions of 
Corollary~\ref{cor:trans-int-F-dec}. We conclude that for every $0 < \mu < \eta$, there is an $F$-function $F_{\Psi}^{\mu}$ on
$(\Gamma, d)$ for which 
\begin{equation} \label{Psi-int-bd}
\sum_{x \in \Lambda^p} \sum_{\stackrel{m \geq R:}{y,z \in b_x(m)}} \| \Psi(x,m,s) \| \leq F_{\Psi}^{\mu}(d(y,z))
\end{equation} 
and we stress this bound is uniform with respect to $0 \leq s \leq 1$. 
Moreover, for any $\zeta > \nu +1$ there are positive numbers $C_1$, $C_2$, $a$, and $d$, with $C_1 \geq C_2 e^{- \mu f_{\gamma / 2v}(a d^{\theta})}$, 
for which one may take $F_{\Psi}^{\mu}$ with the form $F_{\Psi}^{\mu} = F_{\Psi, 0} \cdot F_{\Psi, \mu}^{\rm dec}$ where:
\begin{equation} \label{F-Psi-factors}
F_{\Psi,0}(r) = \frac{1}{(1+r)^{\zeta}} \quad \mbox{and} \quad F_{\Psi, \mu}^{\rm dec}(r) = \left\{ \begin{array}{cl} C_1 & \mbox{if } 0 \leq r \leq d, \\
C_2 e^{- \mu f_{\gamma/ 2v}(a r^{\theta})} & \mbox{if } r >d. \end{array} \right.
\end{equation}

\begin{remk} With an eye towards future statements of uniformity, note that Theorem~\ref{thm:trans-int-bd} 
	and Corollary~\ref{cor:trans-int-F-dec} together demonstrate that the choice of decay functions $F_{\Psi}^{\mu}$ appearing
	in (\ref{Psi-int-bd}) above can be made explicit in terms of the decay function $F$ associated to $\Vee$ and the decay function
	$G_{\mathcal{G}}$ as introduced in Lemma~\ref{lem:wio-prag}. In cases where $\gamma$ and $v$ can be estimated
	uniformly in the finite volume, in particular a volume-independent analogue of (\ref{ini-int-dec}) is assumed, then the choices of
	decay functions in (\ref{Psi-int-bd}) and (\ref{F-Psi-factors}) may be taken volume-independent as well.
\end{remk}

As a consequence of (\ref{Psi-int-bd}), we obtain quasi-local estimates for the spectral flow from Lieb-Robinson bounds. 
More precisely, as is discussed in Section~\ref{sec:stab+sf}, see (\ref{spec-flow-unis}) and (\ref{spec-flow-auto}),
the map $D(s)$ is the generator of the spectral flow automorphism $\alpha_s$. In this case, 
an application of Theorem~\ref{thm:LRBounds} (combined again with Proposition~\ref{prop:TriIneq_Fnorm}) 
shows that for any $0< \mu < \eta$ and any $X,Y \subset \Lambda$ with $X \cap Y = \emptyset$, the quasi-local estimate
\begin{equation} \label{ql_spec_flow}
\| [ \alpha_s(A), B ] \|  \leq  \frac{2 \| A \| \|B \|}{C} \left( e^{2sC} - 1 \right) \sum_{x \in X} \sum_{y \in Y} F_{\Psi}^{\mu}(d(x,y)) 
\end{equation}
holds for all $A \in \mathcal{A}_X$, $B \in \mathcal{A}_Y$, and $0 \leq s \leq 1$. Here, to ease notation, we have denoted by
$C = C_{F_{\Psi}^{\mu}}$ the convolution constant associated to the $F$-function $F_{\Psi}^{\mu}$.

We can now present to proof of Lemma~\ref{lem:qlm_ests}. 

\begin{proof}[Proof of Lemma~\ref{lem:qlm_ests}:]
	We will treat each family of maps separately.
	
	{\it Estimates for $\{ \mathcal{K}_s^1 \}_{s \in [0,1]}$:} Recall that for each $0 \leq s \leq 1$, the map $\mathcal{K}_s^1 : \mathcal{A}_{\Lambda} \to \mathcal{A}_{\Lambda}$ is defined by 
	\begin{equation}
	\mathcal{K}_s^1(A) = [ (\alpha_s - {\rm id}) \circ \mathcal{F}_s](A) \quad \mbox{for all } A \in \mathcal{A}_{\Lambda} \, .
	\end{equation}
	For this family of maps, we will use \cite[Lemma 5.10]{nachtergaele:2019} to obtain both the local bound and the
	quasi-local estimate. To apply Lemma 5.10, we need 
	a priori estimates for the maps being composed. 
	
	Let us first consider $\mathcal{F}_s$. A local bound of order zero for $\mathcal{F}_s$ was established in (\ref{lb_wios}). 
	Moreover, the bound (\ref{wios_prag_dec}) in Lemma~\ref{lem:wio-prag}
	demonstrates a quasi-locality estimate for $\mathcal{F}_s$ of order 1. In the latter bound, Lemma~\ref{lem:wio-prag} also establishes that 
	the decay function $G_{\mathcal{F}}$ is in the decay class $(\eta, \frac{\gamma}{2v}, \theta)$, see
	Definition~\ref{def:dec-class} for more details. As the notation suggests, we stress that both of these 
	estimates hold uniformly with respect to $0 \leq s \leq 1$. 
	
	Let us now consider $\alpha_s - {\rm id}$. Here it will be crucial that the pre-factors in the estimates for $\alpha_s - {\rm id}$
	are linear in $s$. To see this, we proceed as follows. Note that for each $A \in \mathcal{A}_{\Lambda}$ and $0 \leq s \leq 1$, the equality
	\begin{equation} \label{spec_flow_-idty}
	(\alpha_s - {\rm id})(A)  = \alpha_s(A) - A = \int_0^s \frac{d}{dr} \alpha_r (A) \, dr = i \int_0^s \alpha_r([D(r), A] ) \, dr, 
	\end{equation}
	holds. For any $0< \mu < \eta$ and each $A \in \mathcal{A}_X$, the bound
	\begin{eqnarray} \label{lb_spec_flow_-1}
	\| ( \alpha_s - {\rm id})(A) \| & \leq & 2 \| A \| \sum_{z \in \Lambda^p} \sum_{\stackrel{m \geq R:}{b_z^{\Lambda}(m) \cap X \neq \emptyset}} \int_0^s \| \Psi(z,m,r) \| \,dr \nonumber \\
	& \leq & 2 \| A \| \sum_{x \in X} \sum_{w \in \Lambda^p} \int_0^ s \sum_{z \in \Lambda^p} \sum_{\stackrel{m \geq R:}{x,w  \in b_z^{\Lambda}(m)}} \| \Psi(z,m,r) \| \, dr  \nonumber \\
	& \leq & 2 s \| A \| \| F_{\Psi}^{\mu} \| |X| 
	\end{eqnarray}
	follows from (\ref{gen_spec_flow_asint}) and (\ref{Psi-int-bd}). In this case, the local bound claimed in (\ref{ki_lb_est})
	now follows from \cite[Lemma 5.10(i)]{nachtergaele:2019}. Here we have used that the local bound for $\alpha_s- {\rm id}$ is
	of order one for all $0 \leq s \leq 1$, and moreover, each moment of the decay function $G_{\mathcal{F}}$ is finite. 
	One may take $p_1=2$. 
	
	For the quasi-local estimate on $\alpha_s - {\rm id}$, with a linear pre-factor in $s$, we argue as follows. It is clear that 
	for any $X,Y \subset \Lambda$, the bound
	\begin{equation}
	\| [ (\alpha_s - {\rm id})(A), B ] \| \leq 2 \| (\alpha_s - {\rm id})(A) \| \|B \| 
	\end{equation}
	holds for all $A \in \mathcal{A}_X$ and $B \in \mathcal{A}_Y$. For any $0 < \mu < \eta$, we have 
	the local bound (\ref{lb_spec_flow_-1}) and moreover, if $X \cap Y = \emptyset$, then 
	\begin{equation}
	\| [ (\alpha_s - {\rm id})(A), B ] \|  = \| [ \alpha_s(A), B ] \|  \leq  4 se^{2 C} \| F_{\Psi,0} \| |X| \| A\| \| B \| F_{\Psi, \mu}^{\rm dec}(d(X,Y))
	\end{equation}
	where for the final inequality above, we have used (\ref{ql_spec_flow}), the mean value theorem, and the factorized form of the
	$F$-function $F_{\Psi}^{\mu}$, see (\ref{F-Psi-factors}).
	In this case, the estimate claimed in (\ref{ki_ql_est}) now follows 
	from \cite[Lemma 5.10(ii)]{nachtergaele:2019}. One may take $q_1 =2$. For sufficiently large $r$, the resulting decay function $G_1$ has the form
	\begin{equation}
	G_1(r) \sim (r/2)^{\nu} F_{\Psi, \mu}^{\rm dec}(r/2) + \sum_{n= \lfloor r/2 \rfloor}^{\infty} (1+ n)^{\nu}G_{\mathcal{F}}(n) \, .
	\end{equation}
	Here $\sim$ indicates that we have ignored certain $r$-independent pre-factors. These include
	factors from the local bounds on $\alpha_s - {\rm id}$ and $\mathcal{F}_s$, factors from the
	quasi-local bounds on $\alpha_s - {\rm id}$ and $\mathcal{F}_s$, and factors of $\kappa$ from the
	$\nu$-regularity assumption. One can, however, make an explicit choice for the resulting decay function using the statement of \cite[Lemma 5.10]{nachtergaele:2019}.  In any case, we conclude that $G_1$ is of decay class $(\eta, \frac{\gamma}{2v}, \theta)$, see e.g. 
	comments in Remark~\ref{rem:dec-class}.

	{\it Estimates for $\{ \mathcal{K}_s^2 \}_{s \in [0,1]}$:} Recall that for each $0 \leq s \leq 1$, the map $\mathcal{K}_s^2 : \mathcal{A}_{\Lambda} \to \mathcal{A}_{\Lambda}$ is defined by 
	\begin{equation}
	\mathcal{K}_s^2(A) = \mathcal{F}_s(A) - \mathcal{F}_0(A) \quad \mbox{for all } A \in \mathcal{A}_{\Lambda} \, .
	\end{equation}
	To estimate, we find it useful to observe that $\mathcal{K}_s^2$ can be re-written as a composition. 
	Recall that the mapping $\delta^{\Vee} : \mathcal{A}_{\Lambda} \to \mathcal{A}_{\Lambda}$ defined by 
	\begin{equation}
	\delta^{\Vee}(A) = i [ \Vee, A] \quad \mbox{for all } A \in \mathcal{A}_{\Lambda} \,
	\end{equation}
	is called the derivation associated to $\Vee$. In terms of this mapping, note that
	\begin{eqnarray} \label{K2_comp}
	\mathcal{K}_s^2(A) = \mathcal{F}_s(A) - \mathcal{F}_0(A) & = & \int_{\mathbb{R}} \left( \tau_t^{(s)}(A) - \tau_t^{(0)}(A) \right) w_{\gamma}(t) \, dt \nonumber \\
	& = & \int_{\mathbb{R}} \int_0^t \frac{d}{dr} \tau_r^{(s)} \circ \tau^{(0)}_{t-r}(A) \,  dr \, w_{\gamma}(t) \, dt \nonumber \\
	& = & \int_{\mathbb{R}} \int_0^t i \tau_r^{(s)} \left( [ H(s) - H(0), \tau^{(0)}_{t-r}(A)] \right) \,  dr \, w_{\gamma}(t) \, dt \nonumber \\
	& = & s ( \mathcal{G}_s \circ \delta^{\Vee})(A).
	\end{eqnarray}
	In the last line above, we used that the distributional derivative of $W_{\gamma}$ satisfies 
	\begin{equation}
	\frac{d}{dt} W_{\gamma}(t) = - w_{\gamma}(t) + \delta_0(t) 
	\end{equation}
	as is discussed in \cite[Section VI.B]{nachtergaele:2019}.
	
	We are now in a position to apply \cite[Lemma 5.8]{nachtergaele:2019}. 
	As before, we first collect the relevant a priori bounds on the maps being composed. 
	For $\mathcal{G}_s$, (\ref{lb_wios})
	establishes a local bound of order zero, while (\ref{wios_prag_dec}) in Lemma~\ref{lem:wio-prag}
	provides a quasi-locality estimate of order one. Again, the corresponding decay function $G_{\mathcal{G}}$ is 
	in the decay class $(\eta, \frac{\gamma}{2v}, \theta)$ and both of these 
	estimates hold uniformly with respect to $0 \leq s \leq 1$.
	
	The assumed norm bound on $\Vee$, see (\ref{ini-int-dec}), guarantees that the corresponding derivation $\delta^{\Vee}$ is locally bounded and
	quasi-local. More precisely, the local bound of order one
	\begin{equation}
	\| \delta^{\Vee}(A) \| \leq 2 \| \Phi \|_{1,F} \| F \| |X| \| A \| \quad \mbox{for all } A \in \mathcal{A}_X \mbox{ with } X \subset \Lambda \, ,
	\end{equation}
	holds, and whenever $X,Y \subset \Lambda$ with $X \cap Y = \emptyset$, one has that 
	\begin{equation}
	\| [ \delta^{\Vee}(A), B ] \| \leq 4 \| \Phi \|_{1,F} \| F_0\| |X| \|A \| \| B \| e^{-g(d(X,Y))} 
	\end{equation}
	for any $A \in \mathcal{A}_X$ and $B \in \mathcal{A}_Y$. More details on these calculations 
	can be found in \cite[Section V.B, Example 5.4]{nachtergaele:2019}. Note that the weight $e^{-g}$
	decays at the stretched exponential rate governed by (\ref{stab_g_grows}) which is faster than the previously indicated decay classes. 
	
	Applying \cite[Lemma 5.8 (i)]{nachtergaele:2019}, we find a local bound of the form (\ref{ki_lb_est}) with $p_2 =1$. 
	Using \cite[Lemma 5.8 (ii)]{nachtergaele:2019}, a quasi-locality bound of the form (\ref{ki_ql_est}) holds with $q_2=2$.
	The corresponding decay function $G_2$ is, for sufficiently large values of $r$, given by
	\begin{equation}
	G_2(r) \sim (r/2)^{\nu} G_{\mathcal{G}}(r/2) + e^{-g(r/2)} 
	\end{equation}
	where we have again ignored certain pre-factors. Since $G_{\mathcal{G}}$ is of decay class $(\eta, \frac{\gamma}{2v}, \theta)$, so too is $G_2$.
	
	{\it Estimates for  $\{ \mathcal{K}_s^3 \}_{s \in [0,1]}$:} Recall that for each $0 \leq s \leq 1$, the map $\mathcal{K}_s^3 : \mathcal{A}_{\Lambda} \to \mathcal{A}_{\Lambda}$ is defined by 
	\begin{equation}
	\mathcal{K}_s^3(A) = s (\alpha_s \circ \mathcal{F}_s)(A) \quad \mbox{for all } A \in \mathcal{A}_{\Lambda} \, .
	\end{equation}
	It is clear that, for each $0 \leq s \leq 1$, both $\alpha_s$ and $\mathcal{F}_s$ are of norm one. As such,
	a local bound of the form (\ref{ki_lb_est}) holds for $\mathcal{K}_s^3$, and one may take $p_3=0$. Quasi-locality bounds of order one for $\alpha_s$
	and $\mathcal{F}_s$ have already been discussed, see (\ref{ql_spec_flow}) and (\ref{wios_prag_dec}) respectively.  
	An application of \cite[Lemma 5.8 (ii)]{nachtergaele:2019} demonstrates a quasi-locality estimate for 
	$\mathcal{K}_s^3$. One may take $q_3=1$ and a corresponding decay function $G_3$ is, for sufficiently large values of
	$r$, given by 
	\begin{equation}
	G_3(r) \sim (r/2)^{\nu} F_{\Psi, \mu}^{\rm dec}(r/2) + G_{\mathcal{F}}(r/2)  
	\end{equation} 
	As before, we conclude that $G_3$ is of decay class $(\eta, \frac{\gamma}{2v}, \theta)$.
\end{proof}

Finally, we include the proof of Lemma~\ref{lem:FarDecay}.

\begin{proof}[ Proof of Lemma~\ref{lem:FarDecay}:] Let $N \geq R$ and take $x \in \Lambda$ with $N \leq d(x, \Lambda^p)$. By (\ref{phi1x-def}), we have that
	\begin{equation} \label{phi1x-far}
	\Phi_x^{(1)}(s) = \mathcal{K}_s^1(h_x) + \mathcal{K}_s^2(h_x) \quad \mbox{for all } 0 \leq s \leq 1 \, . 
	\end{equation}
	We will estimate the norm of the terms on the right-hand-side of (\ref{phi1x-far}) separately. Recall that
	\begin{equation}
	\mathcal{K}_s^1(h_x) = (\alpha_s - {\rm id})( \mathcal{F}_s(h_x)) \, .
	\end{equation}
	Let us denote by $A = \Pi_{b_x^{\Lambda}( \lfloor N/2 \rfloor)}^{\Lambda}( \mathcal{F}_s(h_x))$. In this case, it is clear that
	\begin{equation}
	\| \mathcal{K}_s^1(h_x) \| \leq \| (\alpha_s - {\rm id})(A) \| + \| (\alpha_s - {\rm id})( \mathcal{F}_s(h_x) - A) \| \, .
	\end{equation}
	Arguing as in (\ref{spec_flow_-idty}) - (\ref{lb_spec_flow_-1}), we find that
	\begin{equation}
	\| ( \alpha_s - {\rm id})(A) \| \leq 2 \| h_x \| \sum_{y \in b_x^{\Lambda}( \lfloor N/2 \rfloor)} \sum_{w \in \Lambda^p} \int_0^s \sum_{z \in \Lambda^p} 
	\sum_{\stackrel{m \geq R:}{y,w \in b_z^{\Lambda}(m)}} \| \Psi(z,m, r) \| \, dr 
	\end{equation}
	Here we have used that $\| A \| \leq \|h_x \|$. Given this, we conclude from (\ref{Psi-int-bd}) that 
	\begin{eqnarray}
	\| ( \alpha_s - {\rm id})(A) \| & \leq & 2 s \| h_x \|   \sum_{y \in b_x( \lfloor N/2 \rfloor)} \sum_{w \in \Lambda^p} F_{\Psi}^{\mu}(d(y,w)) \nonumber \\
	& \leq & 2 s \kappa \| h_x \| \| F_{\Psi,0} \| ( N/2)^{\nu} F_{\Psi, \mu}^{\rm dec}(N/2)  
	\end{eqnarray}
	where in the final bound we used that $d(y,w) \geq N/2$ for each choice of $y$ and $w$ as above. In fact,
	$$
	N \leq d(x, \Lambda^p) \leq d(x,w) \leq d(x,y) + d(y,w) \leq \lfloor N/2 \rfloor + d(y,w) 
	$$
	This term has decay as claimed in (\ref{FarDeltaDecay}).
	
	Using the telescoping property of the local decompositions, see (\ref{sum-to-m}) and (\ref{telescope}), one sees that
	\begin{equation}
	\mathcal{F}_s(h_x) - A = ({\rm id} - \Pi^{\Lambda}_{b_x^{\Lambda}( \lfloor N/2 \rfloor )})(\mathcal{F}_s(h_x)) = \sum_{m \geq \lfloor N/2 \rfloor + 1} \Delta^{\Lambda}_{x,R;m} ( \mathcal{F}_s(h_x)) \, .
	\end{equation} 
	The norm bound
	\begin{eqnarray}
	\| ( \alpha_s - {\rm id})( \mathcal{F}_s(h_x) - A) \|  & \leq &  \sum_{m \geq \lfloor N/2 \rfloor + 1} \| (\alpha_s - {\rm id})(\Delta^{\Lambda}_{x,R;m} ( \mathcal{F}_s(h_x))) \| 
	\nonumber \\
	& \leq & 2 s \| F_{\Psi}^{\mu} \| \sum_{m \geq \lfloor N/2 \rfloor + 1} | b_x^{\Lambda}(m)| \| \Delta_{x,R;m}^{\Lambda}( \mathcal{F}_s(h_x)) \| \nonumber \\ 
	& \leq & 16 s \| F_{\Psi}^{\mu} \| \|h_x\| |b_x^{\Lambda}(R)|  \sum_{m \geq \lfloor N/2 \rfloor + 1} | b_x^{\Lambda}(m)| G_{\mathcal{F}}(m-R-1) 
	\end{eqnarray}
	follows from the local bound proven in (\ref{lb_spec_flow_-1}) and an application of (\ref{Delta_bd}) from Lemma~\ref{lem:qlm_loc_est} 
	which applies given the result of Lemma~\ref{lem:wio-prag}. This term also has decay as claimed in (\ref{FarDeltaDecay}).
	
	Lastly, we note that using (\ref{K2_comp}) and (\ref{lb_wios}), the bound
	\begin{equation}
	\| \mathcal{K}_s^2(h_x) \| = s \| ( \mathcal{G}_s \circ \delta^{\Vee})(h_x) \| \leq s \|W_{\gamma} \|_1 \| \delta^{\Vee}(h_x) \|
	\end{equation}
	is clear. Moreover, the estimate
	\begin{eqnarray}
	\| \delta^{\Vee}(h_x) \| & \leq & \sum_{z \in \Lambda^p} \sum_{\stackrel{n \geq R:}{b_z^{\Lambda}(n) \cap b_x^{\Lambda}(R) \neq \emptyset}} \| [ \Phi(z,n), h_x] \| \nonumber \\
	& \leq & 2 \| h_x \| \sum_{y \in b_x^{\Lambda}(R)} \sum_{w \in \Lambda^p} \sum_{z \in \Lambda^p} \sum_{\stackrel{n \geq R:}{w,y \in b_z^{\Lambda}(n)}} \| \Phi(z,n) \|  \nonumber \\
	& \leq & 2 \| h_x \| \| \Phi \|_{1,F} \sum_{y \in b_x(R)} \sum_{w \in \Lambda^p} F(d(y,w)) \nonumber \\
	& \leq & 2 \| h_x \| \| \Phi \|_{1,F} \| F_0 \| |b_x^{\Lambda}(R)| e^{-g(N-R)} 
	\end{eqnarray}
	follows from (\ref{ini-int-dec}) and the form of the corresponding weighted $F$-function. Note also that 
	$$
	N \leq d(x, \Lambda^p) \leq d(x,w) \leq d(x,y) + d(y,w) \leq R + d(y,w) 
	$$
	
	We have shown that 
	\begin{eqnarray}
	\| \Phi_x^{(1)}(s) \| & \leq & \| \mathcal{K}_s^1(h_x) \| + \| \mathcal{K}_s^2(h_x) \| \nonumber \\
	& \leq & 2 s \| h_x \| \left( \kappa \| F_{\Psi,0} \| (N/2)^{\nu} F_{\Psi, \mu}^{\rm dec}(N/2) + 8 \| F_{\Psi}^{\mu} \| |b_x^{\Lambda}(R)| \sum_{m \geq \lfloor N/2 \rfloor} |b_x^{\Lambda}(m+1)| G_{\mathcal{F}}(m-R) \right) \nonumber \\
	& \mbox{ } & \quad +  2 s \| W_{\gamma} \|_1 \| h_x \| \| \Phi \|_{1,F} \|F_0 \| |b_x^{\Lambda}(R)| e^{-g(N-R)} 
	\end{eqnarray}
	Since all functions of $N$ can be appropriately estimated, this completes the proof.
\end{proof}

\section{Local topological quantum order and conditions for relative boundedness}\label{sec:LTQO2}

For the finite-volume family of Hamiltonians $H(s) = H + s \Vee \in \mathcal{A}_{\Lambda}$, as in (\ref{pert-fam-1}), 
we showed in Theorem \ref{thm:Step1} that the unitarily equivalent family $\alpha_s(H(s))$ can be re-written as
\begin{equation} \label{step1-dec}
\alpha_s(H(s)) = H + \Vee^{(1)}(s) \quad \mbox{with} \quad \Vee^{(1)}(s) = \sum_{x \in \Lambda,m\geq R} \Phi^{(1)}(x,m,s) \, ,
\end{equation} 
with $\Vert  \Phi^{(1)}(x,m,s) \Vert \leq s G^{(1)} (m)$ for a function $G^{(1)}$ of decay class $(\eta,\frac{\gamma}{2v},\theta)$.
The goal of this section is to
complete the decomposition described in Section~\ref{sec:sf-gap-stab} and show that $\alpha_s(H(s))$ can be further re-written as
\begin{equation} \label{step2-dec}
\alpha_s(H(s)) = H + \Vee^{(2)}(s) + \Delta(s) + E(s) +C(s) \idty 
\end{equation}
with terms satisfying the properties described in Claim~\ref{clm:decompositionH}, see Theorems~\ref{thm:step2-(i)+(ii)} and 
\ref{thm:rel_bound_cond} below. 
To do this we need to assume an additional property of the ground states of the initial Hamiltonian $H$. This property, which 
is referred to as local topological quantum order (LTQO), is expressed in terms of the indistinguishability radius we introduced in 
Section~\ref{sec:LTQO}.

The idea is that the ground states are indistinguishable by perturbations acting in a region where `LTQO holds,' which typically excludes the 
boundary of $\Lambda$. This is motivations the following definition of the perturbation region, $\Lambda^p\subseteq\Lambda$.
Fix a non-increasing function $\Omega: \mathbb{R} \to [0, \infty)$ and let $r_x^\Omega$ be the corresponding indistinguishability radius associated to
$x \in \Lambda$, see Definition \ref{LTQO_def}. Let $K, L\geq 0$  with $K \geq R$, the bound on the interaction radius of the initial Hamiltonian $H$, which we furthermore assume to be frustration free. Then,
define a perturbation region $\Lambda^p = \Lambda^p(K,L)$ by setting
\begin{equation} \label{def:L_p}
\Lambda^p  = \{x\in\Lambda \, : r_y^\Omega \geq L+K \text{ for all  } y\in b_x^{\Lambda}(K)\}.
\end{equation} 
The estimates proven in this section will depend on $K$, $L$, and various decay functions.  
When considering the thermodynamic limit, appropriate choices for $L$ and $K$ will, in particular, depend on 
the rate at which these functions decay; more on this in Section~\ref{sec:uniform_sequences}.
Let us further introduce 
\begin{equation} \label{eff_pert_reg}
\Lambda^p(K) = \{ x \in \Lambda \, : \, d(x, \Lambda^p) \leq K \}
\end{equation}
which we refer to as the {\it effective perturbation region}.
Given (\ref{def:L_p}) and (\ref{eff_pert_reg}), all sites in this
effective perturbation region are guaranteed to have an indistinguishability radius of at least $L+K$:
\be \label{LTQO_length_bd}
L+K \leq r_x^\Omega
\quad \text{for all} \quad x\in \Lambda^p(K) \, .
\ee

\begin{figure} \label{fig:PertRegion}
	\begin{center}
		\begin{tikzpicture}
		\fill[black!7!white, draw = black] (-3,0) rectangle (5,5);
		\draw[red, very thick](-2.25,.75) rectangle (4.25,4.25);
		\draw[blue, very thick] (-1.5,1.5) rectangle (3.5,3.5);
		\node[red] at (-1.5, 1)  {$\Lambda^p(K)$};
		\node at (-2.75,.25) {$\Lambda$};
		\node[blue] at (-1, 1.75) {$\Lambda^p$};
		\end{tikzpicture}
	\end{center}
	\caption{The perturbation region $\Lambda^p$ and effective perturbation region $\Lambda^p(K)$. Sites in $\Lambda \setminus \Lambda^p(K)$ have
		a fixed distance from the perturbation region. Sites in $\Lambda^p(K)$ have indistinguishability radii with a fixed lower bound.} 
\end{figure}
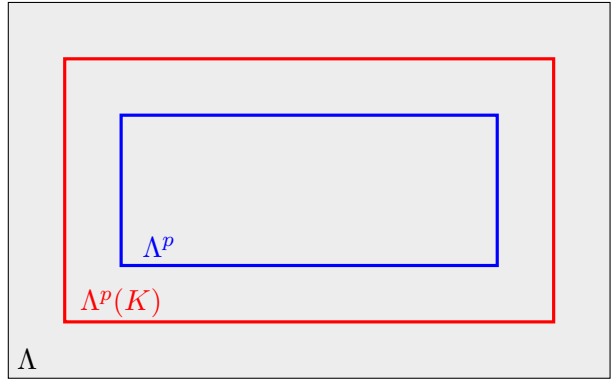

We use this effective perturbation region to partition the global terms of $\Vee^{(1)}(s)$, see (\ref{phi1x-def}), into a main term and a remainder term as follows:
\begin{equation} 
\Vee^{(1)}(s) =  \Vee^{(1)}_{\rm eff}(s) + E(s),
\end{equation}
with
\bea
\Vee^{(1)}_{\rm eff}(s) &=& \sum_{x \in \Lambda^p(K)} \Phi^{(1)}_x(s) \label{def:phi_eff} \\
E(s) &=& \sum_{x \in \Lambda \setminus \Lambda^p(K)} \Phi^{(1)}_x(s) \, .  \label{def:E}
\eea
Let $\omega$ denote the ground state functional associated to the initial Hamiltonian $H$, see  (\ref{ground_state_fun}), and set 
\begin{equation} \label{const-s}
C(s) = \omega \left( \Vee^{(1)}_{\rm eff}(s) \right) \, 
\end{equation}
to be the ground state expectation of $\Vee^{(1)}_{\rm eff}(s)$. Given (\ref{def:phi_eff}) and (\ref{const-s}), an 
application of Theorem~\ref{thm:Step1} shows that, in finite volume, $C(s) \to 0$ as $s \to 0$. In any case,  
we can now write
\be \label{final_decomp}
\Vee^{(1)}(s) 
=  \Vee^{(2)}(s) + \Delta(s) + E(s) +C(s) \idty \, ,
\ee
where we have set
\begin{equation} \label{phi2-s}
\Vee^{(2)}(s) = ( \idty - P_{\Lambda} ) \left( \Vee^{(1)}_{{\rm eff}}(s)  - C(s) \idty \right)  ( \idty - P_{\Lambda} )
\end{equation}
and 
\begin{equation} \label{delta-s}
\Delta(s) = P_{\Lambda} \left( \Vee^{(1)}_{{\rm eff}}(s)  - C(s) \idty \right) P_{\Lambda} \, .
\end{equation}
Note that (\ref{final_decomp}) with $\Vee^{(2)}(s)$ and $\Delta(s)$ as defined in (\ref{phi2-s}) and (\ref{delta-s}), 
holds for all $0 \leq s \leq s_{\gamma}^{\Lambda}$ by Proposition~\ref{prop:step1-commute}. The off-diagonal terms vanish
as all the global terms $\Phi^{(1)}_x(s)$ commute with the ground state projection $P_{\Lambda}$ associated to $H$;
note that in the notation of Section~\ref{sec:Step1}, $H = H(0)$ and $P_{\Lambda} = P(0)$. Thus, we have established the desired
form of (\ref{step2-dec}).

We now show that the terms in (\ref{step2-dec}) satisfy the properties described in Claim~\ref{clm:decompositionH}. 
It is easiest to estimate the remainder terms $\Delta(s)$ and $E(s)$ and so we do this first. Before we do so, we
introduce the following quantity as it appears in a number of our estimates. Set 
\begin{equation} \label{def:ckl}
C(K,L) = 2 \sum_{m \geq K+1} G^{(1)}(m) + \kappa \left( \sum_{m \geq 0} m^{\nu} G^{(1)}(m) \right) \Omega(L)
\end{equation}
where $G^{(1)}$ is the decay function obtained in the proof of Theorem~\ref{thm:Step1}, $\kappa$ and $\nu$ are from (\ref{nu-reg}), 
and $\Omega$ is the non-increasing function used to define the indistinguishability radius.
In applications, this quantity will be small for large values of $K$ and $L$. The following result makes explicit that properties (i) and (ii) of Claim~\ref{clm:decompositionH}
hold for the decomposition (\ref{step2-dec}) just obtained. 

\begin{thm}\label{thm:step2-(i)+(ii)}
	Under Assumption~\ref{ass:basic}, fix a non-increasing function $\Omega: \mathbb{R} \to [0, \infty)$ and
	define $E(s)$ and $\Delta(s)$, as in (\ref{def:E}) and (\ref{delta-s}) respectively, for all $0 \leq s \leq 1$.
	\begin{enumerate}
		\item[(i)] For all $s$, one has that $P_\Lambda\Delta(s)P_\Lambda = \Delta(s)$ and $\| \Delta (s) \| \leq s \delta$ where one may take
		\be \label{LTQO_delta}
		\delta =\vert \Lambda^{p}(K)\vert C(K,L) \,
		\ee
		and $C(K,L)$ is as in (\ref{def:ckl}) above. 
		\item[(ii)] For all $s$, $\|E(s)\| \leq s\varepsilon$ where one may take 
		\be \label{LTQO_alpha}
		\varepsilon = 2 \left( \sum_{x\in\Lambda \setminus \Lambda^p(K)}  \Vert h_x\Vert \right) G(K) 
		\ee
		and $G$ is the decay function obtained in the proof of Lemma~\ref{lem:FarDecay}.
	\end{enumerate}
\end{thm}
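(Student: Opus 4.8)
\textbf{Proof plan for Theorem~\ref{thm:step2-(i)+(ii)}.}

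The plan is to handle (i) and (ii) separately, each by combining the norm estimate $\| \Phi^{(1)}(x,m,s)\| \le s\, G^{(1)}(m)$ from Theorem~\ref{thm:Step1} with either the LTQO/indistinguishability estimate from Definition~\ref{LTQO_def} (for part (i)) or the far-decay bound from Lemma~\ref{lem:FarDecay} (for part (ii)).

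For part (i), the identity $P_\Lambda \Delta(s) P_\Lambda = \Delta(s)$ is immediate from the definition \eqref{delta-s}. For the norm bound, I would write $\Delta(s) = \sum_{x \in \Lambda^p(K)} P_\Lambda\big(\Phi^{(1)}_x(s) - \omega(\Phi^{(1)}_x(s))\idty\big)P_\Lambda$, where the subtracted constant on each site is harmless because $\Delta(s)$ already subtracts $C(s)\idty = \omega(\Vee^{(1)}_{\rm eff}(s))\idty$ and we only need to bound $\|\Delta(s)\| \le \sum_{x \in \Lambda^p(K)} \| P_\Lambda \Phi^{(1)}_x(s) P_\Lambda - \omega(\Phi^{(1)}_x(s)) P_\Lambda\|$. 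The key is then to bound a single term $\| P_\Lambda \Phi^{(1)}_x(s) P_\Lambda - \omega(\Phi^{(1)}_x(s)) P_\Lambda\|$ by $s\, C(K,L)$ for $x \in \Lambda^p(K)$. To do this, I split $\Phi^{(1)}_x(s) = \sum_{m \ge R}\Phi^{(1)}(x,m,s)$ into the part with $R \le m \le K$, which is supported in $b_x^\Lambda(K)$, and the tail with $m > K$. For the tail, I bound each term by $\| P_\Lambda \Phi^{(1)}(x,m,s) P_\Lambda\| + |\omega(\Phi^{(1)}(x,m,s))| \le 2 s\, G^{(1)}(m)$, summing to at most $2s\sum_{m \ge K+1}G^{(1)}(m)$, the first term in $C(K,L)$. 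For the localized part: since $x \in \Lambda^p(K)$, every term $\Phi^{(1)}(x,m,s)$ with $m \le K$ is supported in $b_x^\Lambda(m) \subseteq b_x^\Lambda(K)$, so I can invoke \eqref{LTQO_length} with the ball $b_x^\Lambda(r_x^\Omega)$ in place of the full $\Lambda$ — or more precisely, use that $r_x^\Omega \ge L + K$ together with the frustration-free projection identity $P_\Lambda = P_\Lambda P_{b_x^\Lambda(n)}$ (as in the proof of Proposition~\ref{cor:LTQO}) to pass between $P_{b_x^\Lambda(m+L)}$ and $P_\Lambda$. Applying \eqref{LTQO_length} with $k = m$, $n = m + L \le r_x^\Omega$ gives $\| P_\Lambda \Phi^{(1)}(x,m,s) P_\Lambda - \omega(\Phi^{(1)}(x,m,s))P_\Lambda\| \le |b_x^\Lambda(m)|\,\|\Phi^{(1)}(x,m,s)\|\,\Omega(L) \le s\,\kappa m^\nu\, G^{(1)}(m)\,\Omega(L)$ using $\nu$-regularity. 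Summing over $R \le m \le K$ (and extending the sum to all $m \ge 0$) yields $s\,\kappa\big(\sum_{m \ge 0} m^\nu G^{(1)}(m)\big)\Omega(L)$, the second term in $C(K,L)$. Multiplying by $|\Lambda^p(K)|$ gives $\delta$.

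For part (ii), this is essentially a direct application of Lemma~\ref{lem:FarDecay}. By definition \eqref{def:E}, $E(s) = \sum_{x \in \Lambda \setminus \Lambda^p(K)} \Phi^{(1)}_x(s)$, and for each such $x$ we have $d(x, \Lambda^p) > K$ (indeed, $x \notin \Lambda^p(K)$ means exactly $d(x,\Lambda^p) > K$, hence $\ge K+1 \ge K$; one should check whether the argument needs $\ge K$ or $> K$ and adjust, but Lemma~\ref{lem:FarDecay} is stated for $d(x,\Lambda^p) \ge N$ so taking $N = K$ suffices provided $K \ge R$, which is assumed). Applying Lemma~\ref{lem:FarDecay} with $N = K$ gives $\|\Phi^{(1)}_x(s)\| \le 2 s \|h_x\| G(K)$ for each $x \in \Lambda \setminus \Lambda^p(K)$, and the triangle inequality over the finitely many such $x$ yields $\|E(s)\| \le 2s\big(\sum_{x \in \Lambda \setminus \Lambda^p(K)}\|h_x\|\big)G(K) = s\varepsilon$.

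\textbf{Main obstacle.} The delicate point is the bookkeeping in part (i): making sure the constant subtraction is applied consistently (so that $C(s)$ really equals $\sum_x \omega(\Phi^{(1)}_x(s))$ restricted to $\Lambda^p(K)$, matching the definition of $\Vee^{(1)}_{\rm eff}(s)$), and correctly using the frustration-free relation between $P_\Lambda$ and the local ground-state projections $P_{b_x^\Lambda(n)}$ to bring the LTQO estimate \eqref{LTQO_length} — which is naturally phrased with nested balls inside $\Lambda$ — to bear on the full-volume projection $P_\Lambda$. The mechanism is exactly the one already used in the proof of Proposition~\ref{cor:LTQO}, so this is a matter of adapting that argument rather than inventing anything new; the rest is routine summation against the decay class $(\eta, \tfrac{\gamma}{2v}, \theta)$, whose finite moments guarantee both sums in $C(K,L)$ converge.
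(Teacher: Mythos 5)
Your proposal is correct and follows essentially the same route as the paper: part (ii) is a direct triangle-inequality application of Lemma~\ref{lem:FarDecay} with $N=K$, and part (i) splits $\Phi^{(1)}_x(s)$ into a tail $m>K$ (bounded crudely by $2sG^{(1)}(m)$) and a head $R\le m\le K$ handled via the frustration-free identity $P_\Lambda = P_\Lambda P_{b_x^\Lambda(n)}$ together with the indistinguishability estimate \eqref{LTQO_length}, then sums against the decay function and $\nu$-regularity. The only cosmetic difference is that the paper applies \eqref{LTQO_length} with the fixed radius $n=L+K$ (yielding $\Omega(L+K-m)\le\Omega(L)$) whereas you use the $m$-dependent radius $n=m+L$ (yielding $\Omega(L)$ directly); both are valid under $r_x^\Omega\ge L+K$ and give the same constant $C(K,L)$.
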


\begin{proof}
	We first prove (ii). Using (\ref{def:E}), it is clear that
	\begin{equation}
	\| E(s) \|  \leq \sum_{x \in \Lambda \setminus \Lambda^p(K)} \| \Phi_x^{(1)}(s) \| \leq 2 s \left( \sum_{x \in \Lambda \setminus \Lambda^p(K)} \| h_x \| \right) G(K)
	\end{equation}
	where the final inequality above follows from (\ref{eff_pert_reg}) and an application of Lemma~\ref{lem:FarDecay}.
	
	Now, consider (i). The claim that $P_{\Lambda} \Delta(s) P_{\Lambda} = \Delta(s)$ for all $0 \leq s \leq 1$ is immediate
	given (\ref{delta-s}). To prove (\ref{LTQO_delta}), it is convenient to introduce some additional notation. 
	Recall (\ref{def:phi_eff}) and (\ref{const-s}). Linearity guarantees that we may write
	\begin{equation} \label{phi1-x-c}
	\Vee^{(1)}_{{\rm eff}}(s) - C(s) \idty = \sum_{x \in \Lambda^p(K)} \Phi^{(1)}_{x, \omega}(s) \quad \mbox{where} \quad  \Phi^{(1)}_{x, \omega}(s) =  \Phi^{(1)}_x(s) -
	\omega \left( \Phi^{(1)}_x(s) \right) \idty \, .
	\end{equation}
	Moreover, in terms of the decomposition established in the proof of Theorem~\ref{thm:Step1}, see specifically (\ref{phi_1_x_local}), we may further write
	\begin{equation} \label{phi1-x-m-c}
	\Phi^{(1)}_{x, \omega}(s) = \sum_{m  \geq R} \Phi^{(1)}_{x, \omega, m}(s) \quad \mbox{where} \quad  \Phi^{(1)}_{x, \omega, m}(s) =  \Phi^{(1)}(x,m,s) 
	- \omega \left( \Phi^{(1)}(x,m,s) \right) \idty \, .
	\end{equation}
	
	Now, using (\ref{delta-s}) and (\ref{phi1-x-c}), the triangle inequality yields
	\begin{equation} \label{silly-del-est}
	\| \Delta(s) \| \leq \sum_{x \in \Lambda^p(K)} \| P_{\Lambda} \Phi_{x, \omega}^{(1)}(s) P_{\Lambda} \| \, .
	\end{equation}
	For each fixed $x \in \Lambda^p(K)$, using (\ref{phi1-x-m-c}), we may further estimate
	\begin{eqnarray} \label{Del-est-1}
	\| P_{\Lambda} \Phi_{x, \omega}^{(1)}(s) P_{\Lambda} \| & \leq & \sum_{m \geq R} \| P_{\Lambda} \Phi_{x, \omega, m}^{(1)}(s) P_{\Lambda} \| \nonumber \\
	& = & \sum_{m = R}^K \| P_{\Lambda} \Phi_{x, \omega, m}^{(1)}(s) P_{\Lambda} \| + \sum_{m > K} \| P_{\Lambda} \Phi_{x, \omega, m}^{(1)}(s) P_{\Lambda} \|  \, .
	\end{eqnarray}
	With the final term above, we use the bound (\ref{phi1_est}) proven in Theorem~\ref{thm:Step1}, i.e. 
	\begin{equation} \label{Del-est-2}
	\sum_{m > K} \| P_{\Lambda} \Phi_{x, \omega, m}^{(1)}(s) P_{\Lambda} \|  \leq 2 \sum_{m > K} \| \Phi^{(1)}(x,m,s) \| \leq 2 s \sum_{m>K}G^{(1)}(m)
	\end{equation}
	For the remaining term, we use the frustration-free property of the ground state, which implies that for each $x \in \Lambda^p(K)$,
	we have
	\begin{equation}
	P_{\Lambda} = P_{\Lambda}P_{b_x(L+K)} = P_{b_x(L+K)}P_{\Lambda} \, .
	\end{equation}
	Using the bound (\ref{LTQO_length}) from Definition~\ref{LTQO_def}, as well as (\ref{LTQO_length_bd}), we find that for  $x \in \Lambda^p(K)$ and $R \leq m \leq K$, 
	\begin{eqnarray} \label{use-LTQO-1}
	\| P_{\Lambda} \Phi^{(1)}_{x, \omega, m}(s) P_{\Lambda} \| & \leq & \| P_{b_x(L+K)} \Phi^{(1)}(x, m, s)P_{b_x(L+K)} - \omega( \Phi^{(1)}(x, m, s)) P_{b_x(L+K)}\| \nonumber \\
	& \leq & |b_x^{\Lambda}(m)| \| \Phi^{(1)}(x,m,s) \| \Omega(L+K - m). 
	\end{eqnarray}
	
	Combining (\ref{Del-est-1}), (\ref{Del-est-2}), and (\ref{use-LTQO-1}), we obtain that for each $x \in \Lambda^p(K)$
	\begin{equation} \label{proj-phi1-x-c-est}
	\| P_{\Lambda} \Phi_{x, \omega}^{(1)}(s) P_{\Lambda} \|  \leq s \left( 2 \sum_{m \geq K+1} G^{(1)}(m) + \Omega(L) \sum_{m \geq R} |b_x^{\Lambda}(m)| G^{(1)}(m) \right) ,
	\end{equation}
	where we have again applied (\ref{phi1_est}) from Theorem~\ref{thm:Step1}; now to the right-hand-side of (\ref{use-LTQO-1}).
	Recalling (\ref{def:ckl}), the bound claimed in (\ref{LTQO_delta}) follows from (\ref{silly-del-est}), (\ref{proj-phi1-x-c-est}), and $\nu$-regularity, i.e. (\ref{nu-reg}).
\end{proof} 

The remainder of the section is devoted to proving property (iii) in Claim~\ref{clm:decompositionH}. 
To establish this form bound, we will show that the term $\Vee^{(2)}(s)$, see (\ref{phi2-s}) above, can be written as an
$s$-dependent, anchored interaction which satisfies the assumptions of Theorem~\ref{thm:Step3General}. 
That is the content of Theorem~\ref{thm:rel_bound_cond} below. First, we prove the following lemma. 
For its statement, recall the notation established in the proof of Theorem~\ref{thm:step2-(i)+(ii)}; namely (\ref{phi1-x-c}) and (\ref{phi1-x-m-c}) 

\begin{lemma}\label{lem:phi_2_est}
	Under the assumptions of Lemma~\ref{lem:qlm_ests}, fix a non-increasing function $\Omega: \mathbb{R} \to [0, \infty)$.
	Let $x\in \Lambda^p(K)$. For all $0 \leq s \leq s_{\gamma}^{\Lambda}$ and any $m \leq n \leq r_x^\Omega$, we have
	\begin{eqnarray} \label{SumLTQO}
	\left\| \sum_{k=R}^m \Phi_{x, \omega, k}^{(1)}(s)P_{b_x^{\Lambda}(n)} \right\|
	& \leq & s \left[ 2 \sum_{k \geq m+1} G^{(1)}(k) + \left( \sqrt{ 8\kappa} \sum_{k \geq 0} G^{(1)}(k) \right) \sqrt{m^\nu\Omega(n-m)} \right] \nonumber \\
	& \mbox{ } & \quad + s \left[  2 \sum_{k \geq K+1} G^{(1)}(k) + \kappa \left( \sum_{k \geq 0} k^{\nu} G^{(1)}(k) \right) \Omega(L) \right]  \, .
	\end{eqnarray}
\end{lemma}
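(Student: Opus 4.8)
The statement concerns the partial sums $\sum_{k=R}^m \Phi^{(1)}_{x,\omega,k}(s) P_{b_x^\Lambda(n)}$, where $\Phi^{(1)}_{x,\omega,k}(s) = \Phi^{(1)}(x,k,s) - \omega(\Phi^{(1)}(x,k,s))\idty$. My plan is to split the sum at the scale $m$ in a first instance, and then inside the piece with $k \le m$ split again at the scale $K$, matching the two bracketed terms on the right-hand side of (\ref{SumLTQO}). The key inputs are: the decay estimate $\|\Phi^{(1)}(x,k,s)\| \le s G^{(1)}(k)$ from Theorem~\ref{thm:Step1}; the frustration-free ground state projection property (\ref{ff_prop}); the LTQO estimate (\ref{LTQO_length}) together with the consequence proven in Proposition~\ref{cor:LTQO}; and the indistinguishability-radius lower bound (\ref{LTQO_length_bd}), i.e. $r_x^\Omega \ge L+K$ for $x \in \Lambda^p(K)$.

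\textbf{Step 1: reduce to $k \le m$.} Write
\[
\sum_{k=R}^m \Phi^{(1)}_{x,\omega,k}(s) P_{b_x^\Lambda(n)} = \left(\sum_{k=R}^{m}\Phi^{(1)}_{x,\omega,k}(s)\right) P_{b_x^\Lambda(n)},
\]
and observe there is nothing to split yet — the sum already runs only to $m$. So the first genuine split is on the index range $R \le k \le m$ itself against the threshold $K$: for $k > K$ I use the crude bound $\|\Phi^{(1)}_{x,\omega,k}(s) P_{b_x^\Lambda(n)}\| \le \|\Phi^{(1)}_{x,\omega,k}(s)\| \le 2\|\Phi^{(1)}(x,k,s)\| \le 2sG^{(1)}(k)$, summed over $k \ge K+1$, which gives the term $2s\sum_{k\ge K+1}G^{(1)}(k)$ in the second bracket. (Here the factor $2$ comes from $\|A - \omega(A)\idty\| \le 2\|A\|$.) This handles the case $m > K$; if $m \le K$ this piece is empty and one simply doesn't incur it — but including it as an upper bound is harmless.

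\textbf{Step 2: the terms with $k \le \min(m,K)$ — apply LTQO.} For $R \le k \le m$ with $n \le r_x^\Omega$ and $\Phi^{(1)}(x,k,s) \in \cA_{b_x^\Lambda(k)}$, I want to estimate $\|\Phi^{(1)}_{x,\omega,k}(s) P_{b_x^\Lambda(n)}\|$. Writing $A = \Phi^{(1)}(x,k,s)$ and using $\omega(A) = \omega_\Lambda(A)$ (the ground state functional of $H$), I have $\Phi^{(1)}_{x,\omega,k}(s) P_{b_x^\Lambda(n)} = (A - \omega_\Lambda(A)\idty)P_{b_x^\Lambda(n)}$. The trick is the standard one: bound $\|(A-\omega_\Lambda(A))P_{b_x^\Lambda(n)}\|^2 = \|P_{b_x^\Lambda(n)}(A-\omega_\Lambda(A))^*(A-\omega_\Lambda(A))P_{b_x^\Lambda(n)}\|$ and relate it to the LTQO quantity. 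More directly, I split each term as
\[
(A - \omega_\Lambda(A)) P_{b_x^\Lambda(n)} = (\idty - P_{b_x^\Lambda(n)})(A - \omega_\Lambda(A))P_{b_x^\Lambda(n)} + (P_{b_x^\Lambda(n)} A P_{b_x^\Lambda(n)} - \omega_\Lambda(A)P_{b_x^\Lambda(n)}),
\]
so the second summand is bounded by $|b_x^\Lambda(k)|\|A\|\Omega(n-k) \le \kappa k^\nu \|A\|\Omega(n-k)$ directly from (\ref{LTQO_length}), and the first summand has norm equal to $\|(\idty-P_{b_x^\Lambda(n)})AP_{b_x^\Lambda(n)}\|$, which I bound via the trick $\|(\idty-P)AP\|^2 = \|PA^*(\idty-P)AP\| = \|PA^*AP - PA^*PAP\| \le \|PA^*AP - \omega_\Lambda(A^*A)P\| + \|PA^*PAP - \omega_\Lambda(A^*A)P\|$, each term $\le \kappa k^\nu \|A\|^2 \Omega(n-k)$ by (\ref{LTQO_length}) applied to $A^*A$ and to the rank-one-type combination — giving $\|(\idty-P_{b_x^\Lambda(n)})AP_{b_x^\Lambda(n)}\| \le \|A\|\sqrt{2\kappa k^\nu \Omega(n-k)}$, exactly the shape in Proposition~\ref{cor:LTQO}. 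Hence $\|\Phi^{(1)}_{x,\omega,k}(s)P_{b_x^\Lambda(n)}\| \le \|A\|(\sqrt{2\kappa k^\nu \Omega(n-k)} + \kappa k^\nu \Omega(n-k))$. Since $\Omega$ is non-increasing, $k \le m$, and we may bound $\Omega(n-k) \le \Omega(n-m)$ in the square-root term and, for the purpose of matching the second bracket, also note $\Omega(n-k)$ can be crudely bounded using $n - k \ge (L+K) - m \ge L$ when $m \le K$ — actually more carefully I want: the square-root contributions sum to $\sqrt{2\kappa m^\nu \Omega(n-m)}\sum_{k\ge 0}\|\Phi^{(1)}(x,k,s)\| \le \sqrt{8\kappa}\, s\,(\sum_{k\ge0}G^{(1)}(k))\sqrt{m^\nu\Omega(n-m)}$ (absorbing a factor $2$ from $\|A-\omega(A)\| \le 2\|A\|$ into the $\sqrt{8}$), while the linear $\kappa k^\nu \Omega(\cdot)$ contributions, using $n-k \ge L+K-m \ge L$ for $k \le m \le K \le L+K-m$... wait — this needs $m \le K$; in general I bound $\Omega(n-k) \le \Omega(L)$ only after noting $n - k \ge r_x^\Omega$... no. The honest route: for $k \le m$ the linear term contributes $\kappa \sum_k k^\nu \|A_k\|\Omega(n-k)$, and I bound this crudely by $\kappa(\sum_{k\ge0}k^\nu sG^{(1)}(k))\Omega(n-m)$; to get the $\Omega(L)$ in the statement I use that the relevant regime is $n \ge L+K$ hmm — actually the cleanest is to note $n - m \ge$ something; but since the lemma statement literally has $\Omega(L)$ in the second bracket, I will produce it by the case analysis $m \le K$ (where $n - m \ge L+K - K = L$) versus $m > K$ (handled in Step 1 by the crude $2sG^{(1)}$ tail). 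Assembling the $m \le K$ case gives precisely both brackets; the $m > K$ case is dominated by the first bracket alone plus Step 1's tail. So the union of the two cases is bounded by the full right-hand side of (\ref{SumLTQO}).

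\textbf{Step 3: assemble.} Adding the contributions of Steps 1 and 2 and using $\nu$-regularity (\ref{nu-reg}) to replace $|b_x^\Lambda(k)|$ by $\kappa k^\nu$ throughout, and $\sqrt{2\kappa}\cdot 2 = \sqrt{8\kappa}$, yields exactly (\ref{SumLTQO}). \textbf{The main obstacle} I anticipate is bookkeeping: correctly tracking which threshold ($m$ versus $K$) controls which term and ensuring the $\Omega(L)$ in the second bracket really arises — this forces the two-case split on whether $m \le K$, and one must check that in the case $m > K$ the second bracket is not actually needed because the tail $k > K$ is already absorbed by the crude $2sG^{(1)}(k)$ estimate. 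The other slightly delicate point is the factor-of-$2$ from $\|A - \omega(A)\idty\| \le 2\|A\|$ versus the $\sqrt{2}$ from Proposition~\ref{cor:LTQO}, which combine into the constants $2$ and $\sqrt{8\kappa}$ as written; these must be chased consistently.
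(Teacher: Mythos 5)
Your proposal has several genuine gaps that cannot be patched within your framework. The most serious is in Step 2, where (writing $P := P_{b_x^\Lambda(n)}$) you assert $\|(\idty - P)AP\| \le \|A\|\sqrt{2\kappa k^\nu\Omega(n-k)}$ for a local observable $A$. That is not Proposition~\ref{cor:LTQO}, which controls only the \emph{difference} $\bigl|\|AP\| - \|AP_\Lambda\|\bigr|$; it is a strictly stronger claim, and it is false. Concretely, for self-adjoint $A$ one has $\|(\idty-P)AP\|^2 = \|PA^2P - (PAP)^2\|$; LTQO gives $PAP \approx \omega_\Lambda(A)P$ and $PA^2P \approx \omega_\Lambda(A^2)P$, so this quantity is approximately the variance $\omega_\Lambda(A^2) - \omega_\Lambda(A)^2$, which has no reason to be of order $\|A\|^2\Omega(n-k)$. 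The exact spot your argument breaks is the claimed estimate of $\|PA^*PAP - \omega_\Lambda(A^*A)P\|$: by LTQO, $PA^*PAP \approx |\omega_\Lambda(A)|^2 P$, not $\omega_\Lambda(A^*A)P$, and the gap between the two constants is precisely the variance. Your term-by-term treatment therefore cannot give a bound better than $O(\|A\|)$ on the off-diagonal block.

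The paper sidesteps this entirely. It applies Proposition~\ref{cor:LTQO} \emph{once} to the whole partial sum $A_m = \sum_{k=R}^m\Phi^{(1)}_{x,\omega,k}(s)$, reducing the problem to bounding $\|A_m P_\Lambda\|$, and then \emph{completes $A_m$ to the full global term} $\Phi^{(1)}_{x,\omega}(s) = A_m + \sum_{k\geq m+1}\Phi^{(1)}_{x,\omega,k}(s)$. This completion is the missing idea in your proof: it costs the tail $2\sum_{k\geq m+1}G^{(1)}(k)$ (the first term of the first bracket, which your argument never produces), but in exchange one can invoke Proposition~\ref{prop:step1-commute}, giving $[\Phi^{(1)}_x(s), P_\Lambda]=0$. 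Hence $\Phi^{(1)}_{x,\omega}(s)P_\Lambda = P_\Lambda\Phi^{(1)}_{x,\omega}(s)P_\Lambda$ is a diagonal block, and it \emph{is} small: it is bounded by $sC(K,L)$ exactly as in (\ref{proj-phi1-x-c-est}). This is also where $\Omega(L)$ comes from: the LTQO estimate is applied there at the \emph{fixed} scale $L+K$, using $r_x^\Omega\geq L+K$ for $x\in\Lambda^p(K)$ and the frustration-free identity $P_\Lambda = P_\Lambda P_{b_x^\Lambda(L+K)}$, which yields $\Omega(L+K-k)\leq\Omega(L)$ for $k\leq K$. Your route tries to extract $\Omega(L)$ from $\Omega(n-k)$ via $n-m\geq L$, which is not among the hypotheses; the lemma only assumes $m\leq n\leq r_x^\Omega$, and $m=n$ is allowed, in which case $\Omega(n-m)=\Omega(0)$ is as large as possible.
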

As we did with the statement of Theorem~\ref{thm:step2-(i)+(ii)}, see specifically (\ref{def:ckl}), 
it is convenient to label the some of the terms above as they frequently appear below.
For any $0 \leq m \leq n$, set 
\begin{equation} \label{def:Dmn}
D(m,n)  =  2 \sum_{k \geq m+1}G^{(1)}(k) + \left( \sqrt{ 8\kappa} \sum_{k \geq 0} G^{(1)}(k) \right) \sqrt{m^\nu\Omega(n-m)} \, . 
\end{equation}
The quantity above behaves similarly to $C(K,L)$ in the sense that
$D(m,n)$ should be {\it small} if $m$ and $n-m$ are all sufficiently {\it large}. 
\begin{proof}
	Fix $x \in \Lambda^p(K)$ and consider the observable $A_m = \sum_{k=R}^m \Phi^{(1)}_{x, \omega, k}(s) \in \cA_{b_x^\Lambda(m)}$. 
	Since $m \leq n \leq r_x^{\Omega}$, an application of Proposition~\ref{cor:LTQO} shows that
	\begin{equation} \label{app_cor}
	\| A_m P_{b_x^{\Lambda}(n)} \| \leq \| A_m P_{\Lambda} \| + \| A_m \| \sqrt{2|b_x^{\Lambda}(m)|\Omega(n-m)} \, .
	\end{equation}
	Going back to (\ref{phi1-x-m-c}), we have that
	\begin{equation}
	\Phi^{(1)}_{x,\omega}(s) = \sum_{k \geq R} \Phi^{(1)}_{x, \omega, k}(s) = A_m + \sum_{k \geq m+1} \Phi^{(1)}_{x, \omega, k}(s)
	\end{equation}
	and therefore, 
	\begin{equation} \label{ap0}
	\| A_m P_{\Lambda} \| \leq \left\| \Phi^{(1)}_{x,\omega}(s) P_{\Lambda} \right\| + 2 \sum_{k \geq m+1} \| \Phi^{(1)}(x,k, s) \| .
	\end{equation}
	Using Proposition~\ref{prop:step1-commute}, the first term above may be re-written as in the LHS of (\ref{proj-phi1-x-c-est}) and
	estimated by $sC(K,L)$. The bound claimed in (\ref{SumLTQO}) now follows from the naive bound
	\begin{equation} \label{tri-ineq-A}
	\| A_m \|  \leq  \sum_{k=R}^m \| \Phi^{(1)}_{x,\omega, k}(s) \|  \leq  2 \sum_{k=R}^m \| \Phi^{(1)}(x,k,s) \| 
	\end{equation}
	and two applications of (\ref{phi1_est}) from Theorem~\ref{thm:Step1}; once for the final term on the RHS of (\ref{ap0}) and 
	once for final estimate in (\ref{tri-ineq-A}).
\end{proof}

Finally, to allow for some additional flexibility in the application of the estimates below, we divide the terms in certain 
sums according to a function $f$ with specified properties. 
This function should be regarded as an additional free parameter in this set-up. To keep track of terms, 
we find is convenient to introduce the quantity $\ell_x = \ell_x(\Lambda)$ defined for each $x \in \Lambda$ by setting
\begin{equation} \label{max-ball-rad-x}
\ell_x = \min \{ n \in\bZ_{\geq 0} : b_x^{\Lambda}(n) = \Lambda \} \, .
\end{equation}

\begin{thm} \label{thm:rel_bound_cond}
	Let $f:[0,\infty)\to[0,\infty)$ be any differentiable function with $f(0) =0$ and $0<f'(t)< 1$ for all $t \geq 0$. 
	Under Assumption~\ref{ass:basic}, one can write 
	\begin{equation*}
	\Vee^{(2)}(s) = \sum_{x \in \Lambda^p(K)} \sum_{n=R}^{\ell_x} \Phi^{(2)}(x,n,s)
	\end{equation*}
	with terms satisfying: for all $x \in \Lambda^p(K)$, $n \geq R$, and $0\leq s \leq s_{\gamma}^{\Lambda}$, the following holds 
	\begin{enumerate}
		\item[(i)] $\Phi^{(2)}(x,n,s)^* = \Phi^{(2)}(x,n,s) \in \mathcal{A}_{b_x^{\Lambda}(n)}$
		\item[(ii)] $ P_{b_x^{\Lambda}(n)} \Phi^{(2)}(x,n,s) = \Phi^{(2)}(x,n,s) P_{b_x^{\Lambda}(n)} = 0$
		\item[(iii)] $\| \Phi^{(2)}(x,n,s) \| \leq  2s G^{(2)}(n)$, where
		\begin{equation} \label{Phi2_norm_bds}
		G^{(2)}(n) = 
		\begin{cases}
		G^{(1)}(f(n))+D(\lceil f(n) \rceil -1, n-1) + C(K,L) & R \leq n < L+K \\
		\sum_{k \geq f(L+K)} G^{(1)}(k) + D(\lceil f(L+K)\rceil -1, L+K-1) +C(K,L)    & \phantom{ R \leq }n \geq L+K
		\end{cases}
		\end{equation}
		and the quantities $C(K,L)$ and $D(m,n)$ are as in (\ref{def:ckl}) and (\ref{def:Dmn}) respectively.
	\end{enumerate}
	
\end{thm}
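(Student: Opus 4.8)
The idea is to manufacture the anchored interaction $\Phi^{(2)}(x,n,s)$ by applying the local decomposition machinery of Section~\ref{sec:QL+LD} to the \emph{effective} main term $\Vee^{(1)}_{\rm eff}(s)-C(s)\idty = \sum_{x\in\Lambda^p(K)}\Phi^{(1)}_{x,\omega}(s)$, and then sandwiching each resulting local piece with $\idty-P_{b_x^\Lambda(\cdot)}$ so that (ii) holds. Concretely, for each $x\in\Lambda^p(K)$ one starts from the telescoping representation $\Phi^{(1)}_{x,\omega}(s) = \sum_{m\geq R}\Phi^{(1)}_{x,\omega,m}(s)$ from \eqref{phi1-x-m-c}, groups the terms into ``blocks'' indexed by a running scale $n$ using the free function $f$ (so that block $n$ collects the contributions at internal radius roughly up to $f(n)$, with the leftover tail deposited at the final radius $\ell_x$ or at $L+K$), and for each block subtracts the ground-state-projected part: schematically
\begin{equation*}
\Phi^{(2)}(x,n,s) = (\idty - P_{b_x^\Lambda(n)})\,\Big(\text{partial sum of }\Phi^{(1)}_{x,\omega,k}(s)\text{ assigned to radius }n\Big)\,(\idty - P_{b_x^\Lambda(n)}),
\end{equation*}
with an extra correction at the top scale so that $\sum_{n}\Phi^{(2)}(x,n,s)$ telescopes back to $(\idty-P_\Lambda)\Phi^{(1)}_{x,\omega}(s)(\idty-P_\Lambda)$, which by \eqref{phi2-s} and Proposition~\ref{prop:step1-commute} equals the $x$-contribution to $\Vee^{(2)}(s)$. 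Properties (i) and (ii) are then immediate from the construction: self-adjointness is preserved by the symmetric sandwich, support in $b_x^\Lambda(n)$ follows since $\Phi^{(1)}(x,k,s)\in\cA_{b_x^\Lambda(k)}\subseteq\cA_{b_x^\Lambda(n)}$ for $k\leq n$ and $P_{b_x^\Lambda(n)}$ is supported there too, and (ii) holds because each term carries a factor $(\idty-P_{b_x^\Lambda(n)})$.

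\textbf{The norm bound (iii) is the heart of the matter.} Here one writes $\Phi^{(2)}(x,n,s)$ as a difference of two ``truncated'' operators and estimates each via Lemma~\ref{lem:phi_2_est}. The key algebraic identity is that, with $A_m = \sum_{k=R}^m \Phi^{(1)}_{x,\omega,k}(s)$, one has
\begin{equation*}
(\idty - P_{b_x^\Lambda(n)})A_m(\idty - P_{b_x^\Lambda(n)}) = A_m - P_{b_x^\Lambda(n)}A_m - A_m P_{b_x^\Lambda(n)} + P_{b_x^\Lambda(n)}A_m P_{b_x^\Lambda(n)},
\end{equation*}
so the ``projected'' corrections are controlled by $\|A_m P_{b_x^\Lambda(n)}\|$, which is exactly the quantity bounded in \eqref{SumLTQO}. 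For a block at scale $n<L+K$ one takes $m=\lceil f(n)\rceil$ and combines: the tail $\sum_{k>f(n)}\|\Phi^{(1)}(x,k,s)\|\leq s\sum_{k>f(n)}G^{(1)}(k)$ from the difference $A_{?}-A_{\lceil f(n)\rceil}$ (giving the $G^{(1)}(f(n))$-type term after one more application of \eqref{phi1_est}, up to adjusting indices), the $D(\lceil f(n)\rceil-1,\,n-1)$ term from the square-root LTQO loss at scale $n-1$ versus $n$, and the $C(K,L)$ term from the $\|\Phi^{(1)}_{x,\omega}(s)P_\Lambda\|$ piece as in \eqref{proj-phi1-x-c-est}; the factor $2$ in $\|\Phi^{(2)}\|\leq 2sG^{(2)}(n)$ absorbs the double-counting between the $P_{b_x^\Lambda(n)}A_m$ and $P_{b_x^\Lambda(n)}A_m P_{b_x^\Lambda(n)}$ contributions. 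For $n\geq L+K$ one freezes the internal scale at $L+K$ (all of $\Phi^{(1)}_{x,\omega}(s)$ at radii up to $f(L+K)$ has already been captured, the rest is tail), which produces the second line of \eqref{Phi2_norm_bds}. The condition $0<f'<1$ with $f(0)=0$ is used precisely to guarantee $f(n)<n$ for $n>0$ (so $\lceil f(n)\rceil\leq n$ and the sandwich at radius $n$ makes sense) while still letting $f(n)\to\infty$, so that $G^{(1)}(f(n))$ and $D(\lceil f(n)\rceil-1,n-1)$ both decay — balancing the competing errors.

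\textbf{Assembling.} After establishing the per-site, per-scale bound, summing over $n=R,\dots,\ell_x$ and over $x\in\Lambda^p(K)$ gives back $\Vee^{(2)}(s)$ by the telescoping property, and the three claimed properties hold termwise. One technical point to be careful about is the bookkeeping of the ``leftover tail'': one must verify that the convention of dumping all radii $>f(n)$-but-$\leq\ell_x$ into a single operator at scale $\ell_x$ (and similarly the transition at $L+K$) is consistent, i.e. that $\Phi^{(2)}(x,\ell_x,s)$ still obeys the second line of \eqref{Phi2_norm_bds} and that no term is counted twice. \textbf{The main obstacle} is getting the three error sources in \eqref{Phi2_norm_bds} to line up exactly — in particular tracking how \eqref{SumLTQO} from Lemma~\ref{lem:phi_2_est}, which is stated for $A_m P_{b_x^\Lambda(n)}$ with general $m\leq n\leq r_x^\Omega$, feeds into the sandwiched operator at the specific choices $m=\lceil f(n)\rceil$ and scale $n$ versus $n-1$, and checking that the inequality $L+K\leq r_x^\Omega$ from \eqref{LTQO_length_bd} is available at every scale where LTQO is invoked (it is, since we only ever apply it for $n\leq L+K\leq r_x^\Omega$ on the effective region). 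Everything else — self-adjointness, supports, the telescoping sum — is routine once the decomposition is set up.
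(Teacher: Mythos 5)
Your plan has the right ingredients --- a scale-by-scale decomposition scheduled by $f$, the quantities $D$ and $C(K,L)$ from Lemma~\ref{lem:phi_2_est}, the $G^{(1)}$-tail from Theorem~\ref{thm:Step1}, and the reason $0<f'<1$ with $f(0)=0$ is needed is exactly as you say. But the displayed definition of $\Phi^{(2)}(x,n,s)$ that you give does not telescope: if you sandwich a non-overlapping block of $\Phi^{(1)}_{x,\omega,k}(s)$'s by $\idty-P_{b_x^\Lambda(n)}$ at scale $n$, the deficit relative to $(\idty-P_\Lambda)\Phi^{(1)}_{x,\omega}(s)(\idty-P_\Lambda)$ arises at \emph{every} scale, not only as ``a correction at $\ell_x$'', and the tautological expansion $(\idty-P)A(\idty-P)=A-PA-AP+PAP$ that you call the key identity cannot control it, since its unprojected $A$-term has no decay in $n$.

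The device that makes the decomposition precise in the paper is the mutually orthogonal family $E_j$ of \eqref{Ejprojs}, whose partial sums satisfy $\sum_{j=R}^n E_j = \idty - P_{b_x^\Lambda(n)}$. One expands $(\idty-P_\Lambda)\Phi^{(1)}_{x,\omega,m}(s)(\idty-P_\Lambda)=\sum_{j,k}E_j\Phi^{(1)}_{x,\omega,m}(s)E_k$, separates for each $m$ the diagonal block up to scale $m_f=\lfloor f^{-1}(m)\rfloor$ (this is $\Theta_1$, controlled by $G^{(1)}(f(\cdot))$ via \eqref{phi1_est}) from the off-diagonal operators $A_{j,m}$ of \eqref{def_ajm}, and then re-sums the $A_{j,m}$ at fixed $j$ to form $\Theta_2$, which is precisely what \eqref{ApplyLTQOLem} and Lemma~\ref{lem:phi_2_est} are built to bound; property (ii) holds because every summand carries a factor of $E_j$ or of $\idty-P_{b_x^\Lambda(m_f)}$. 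Your passing remark about writing $\Phi^{(2)}(x,n,s)$ as a difference of two sandwiched \emph{cumulative} truncations $\chi_n=(\idty-P_{b_x^\Lambda(n)})A_{\lceil f(n)\rceil-1}(\idty-P_{b_x^\Lambda(n)})$ would in fact close the gap --- the frustration-free relation $P_{b_x^\Lambda(n)}P_{b_x^\Lambda(n-1)}=P_{b_x^\Lambda(n)}$ gives (ii), and expanding $\chi_n-\chi_{n-1}$ reproduces the paper's $\Theta_1+\Theta_2$ --- but that scheme is different from the one your schematic actually describes, and it needs to be committed to and carried out explicitly, including the freezing of the internal radius at $\lceil f(L+K)\rceil-1$ for $n\geq L+K$, rather than asserted via a ``leftover tail'' at the top scale.
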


\begin{proof}[Proof of Theorem~\ref{thm:rel_bound_cond}]
	Recall that $\Vee^{(2)}(s)$ is as defined in (\ref{phi2-s}).
	To obtain the terms $\Phi^{(2)}(x,n,s)$, satisfying the conditions of Theorem~\ref{thm:rel_bound_cond},
	we use the notation introduced in the proof of Theorem~\ref{thm:step2-(i)+(ii)} to expand.
	By inserting (\ref{phi1-x-c}), we have that
	\[
	\Vee^{(2)}(s) = \sum_{x \in \Lambda^p(K)} (\idtyty-P_\Lambda)\Phi_{x,\omega}^{(1)}(s)(\idtyty-P_\Lambda)
	\]	
	and moreover, with (\ref{phi1-x-m-c}) we have that for each $x \in \Lambda^p(K)$, 
	\begin{equation} \label{phi2_sum_3}
	(\idtyty - P_\Lambda) \Phi^{(1)}_{x,\omega}(s) ( \idtyty - P_\Lambda) = \sum_{m = R}^{\ell_x} (\idtyty - P_\Lambda) \Phi^{(1)}_{x, \omega, m}(s) ( \idtyty - P_\Lambda).
	\end{equation}
	In the argument below, it will be convenient to write $\Phi^{(2)}(x,n,s)$ as the sum of two terms
	\begin{equation*}
	\Phi^{(2)}(x,n,s) = \Theta_1(x,n,s) + \Theta_2(x,n,s)
	\end{equation*}
	each of which will separately satisfy the conditions of Theorem~\ref{thm:rel_bound_cond}.
	
	First, to any $x \in \Lambda^p(K)$, set 
	\begin{equation} \label{def:theta1-ellx}
	\Theta_1(x, \ell_x, s) = \sum_{m \geq f(L + K) } (\idtyty - P_\Lambda) \Phi^{(1)}_{x, \omega, m}(s) ( \idtyty - P_\Lambda) \, .
	\end{equation} 
	Since $b_x^{\Lambda}(\ell_x) = \Lambda$, such terms clearly satisfy conditions (i) and (ii) above.
	Moreover, an application of Theorem~\ref{thm:Step1} shows that
	\begin{equation} \label{Theta1Decay_large}
	\|  \Theta_1(x, \ell_x, s) \| \leq 2 \sum_{m \geq  f(L+K) } \| \Phi^{(1)}(x, m, s) \|  \leq 2s  \sum_{m \geq  f(L+K) } G^{(1)}(m) \, .
	\end{equation}
	
	Only those terms in (\ref{phi2_sum_3}) with $R \leq m <  f(L+K) $ remain to be analyzed. 
	For these, let us introduce the following sequence of operators:
	\begin{equation} \label{Ejprojs}
	E_j 
	\; = \; 
	\left\{ \begin{array}{ll} 
	\idtyty - P_{b_x^{\Lambda}(R)} & j=R \\ 
	P_{b_x^{\Lambda}(j-1)} - P_{b_x^{\Lambda}(j)} & R+1 \leq j \leq \ell_x \\ 
	P_\Lambda & j = \ell_x+1 \,.
	\end{array} \right.
	\end{equation}
	One readily checks that each $E_j$ is an orthogonal projection, and moreover,
	\begin{equation}
	E_jE_k = \delta_{j,k} E_k \quad \mbox{with} \quad \idtyty = \sum_{j=R}^{\ell_x+1}E_j \, .
	\end{equation}
	In words, this family of projections is mutually orthogonal and sums to the identity. It is also useful to observe that partial sums telescope, i.e. 
	\begin{equation} \label{telescopic_prop}
	\idtyty - P_{b_x^{\Lambda}(n)} = \sum_{j=R}^n E_j \quad \mbox{for all } R \leq n \leq \ell_x \, .
	\end{equation} 
	
	Now, for $R \leq m < f(L+K) $, set $m_f = \lfloor f^{-1}(m) \rfloor$. Using \eqref{telescopic_prop}, a short calculation shows
	\begin{align}
	(\idtyty - P_\Lambda) \Phi_{x, \omega, m}^{(1)}(s) (\idtyty - P_\Lambda)
	& = 
	\sum_{j,k=R}^{\ell_x} E_j \Phi_{x, \omega, m}^{(1)}(s) E_k \label{Phi2Decompose} \\
	& = 
	(\idtyty-P_{b_x^{\Lambda}(m_f)}) \Phi_{x, \omega, m}^{(1)}(s) (\idtyty-P_{b_x^{\Lambda}(m_f)}) \nonumber
	\quad+ 
	\sum_{f^{-1}(m) < j \leq \ell_x} A_{j,m},
	\end{align}
	where, for all $j > f^{-1}(m)$,  we define
	\begin{equation}\label{def_ajm}
	A_{j,m} = E_j \Phi_{x, \omega, m}^{(1)}(s) (\idtyty - P_{b_x^{\Lambda}(j-1)})  + (\idtyty - P_{b_x^{\Lambda}(j)}) \Phi_{x, \omega, m}^{(1)}(s) E_j. 
	\end{equation}
	
	{F}rom the expression (\ref{Phi2Decompose}), we will extract two types of terms: $\Theta_1$-terms and
	$\Theta_2$-terms. Let us first continue defining the $\Theta_1$-terms. Note that for each integer $m$ with
	$R< m < f(L+K)$, 
	\[
	(m-1)_f < m_f < L+K \, .
	\]
	This follows from our assumptions on $f$ and, e.g., an application of the mean-value theorem.
	As such, each choice of $m_f$ corresponds to a unique integer $m$ and so the term
	\begin{equation} \label{Theta1}
	\Theta_1(x, m_f ,s) 
	\; = \;
	(\idtyty-P_{b_x^{\Lambda}(m_f)}) \Phi_{x, \omega, m}^{(1)}(s) (\idtyty-P_{b_x^{\Lambda}(m_f)})
	\end{equation}
	is well-defined. Note further that we are only considering values of 
	$m_f < L+K \leq r_x^\Omega \leq \ell_x$, and so there is no overlap with (\ref{def:theta1-ellx}).
	For these terms in (\ref{Theta1}), conditions (i) and (ii) are clear. 
	In fact, since $f(m)<m$ and hence $m<f^{-1}(m)$, we have that
	$m \leq \lfloor f^{-1}(m) \rfloor = m_f$, and thus $\Theta_1(x,m_f,s) \in \mathcal{A}_{b_x^{\Lambda}(m_f)}$. 
	To obtain a norm bound, note that since $m_f \leq f^{-1}(m)$ and
	$f$ is increasing, we have $f(m_f) \leq m$. In this case, again
	Theorem~\ref{thm:Step1} shows that 
	\begin{equation} \label{Theta1Decay_small}
	\|\Theta_1(x, m_f,s)\|
	\; \leq \;
	2 \|\Phi^{(1)}(x, m,s)\|
	\; \leq \;
	2sG^{(1)}(f(m_f)).
	\end{equation} 
	Setting $\Theta_1(x, m, s)=0$ for any integer values not considered above, 
	we have shown that the $\Theta_1$-terms satisfy the conditions of Theorem~\ref{thm:rel_bound_cond} with the norm estimate
	\be \label{Theta1_Decay}
	\|\Theta_1(x,n,s)\| \leq 2s \cdot 
	\begin{cases}
		G^{(1)}(f(n)) & R \leq n < L+K \\
		\sum_{m \geq f(L+K)} G^{(1)}(m) & \phantom{R \leq} n \geq L+K
	\end{cases}
	\ee
	Going back to (\ref{phi2_sum_3}), let us summarize the progress: for each $x \in \Lambda^p(K)$, we have written
	\begin{equation} \label{Theta1Decay2}
	(\idtyty - P_\Lambda) \Phi^{(1)}_{x,\omega}(s) ( \idtyty - P_\Lambda)
	= \sum_{n=R}^{\ell_x} \Theta_1(x,n,s) +  \sum_{R\leq m <f(L+K)} \sum_{f^{-1}(m) < j \leq \ell_x}  A_{j,m}.
	\end{equation}
	We will re-organize the final sum above and label these as $\Theta_2$-terms. 
	
	For defining the $\Theta_2$-terms, it will be convenient to note that the 
	operators $A_{j,m}$, see (\ref{def_ajm}), satisfy the following for each $R+1 \leq j \leq \ell_x$:
	\begin{enumerate}
		\item[(i)] $A_{j,m}^* = A_{j,m} \in \mathcal{A}_{b_x^{\Lambda}(j)}$ for all $m$ with $f^{-1}(m)<j$.
		\item[(ii)] $A_{j,m}P_{b_x^{\Lambda}(j)}= 0$ for all $m$ with $f^{-1}(m)<j$.
		\item[(iii)] For any $n$ with $f^{-1}(n)<j$, Lemma~\ref{lem:phi_2_est} becomes relevant as
		\begin{equation} \label{ApplyLTQOLem}
		\left\|\sum_{m=R}^n A_{j,m}\right\| 
		\; \leq \; 
		2 \left\|\sum_{m=R}^n\Phi_{x, \omega, m}^{(1)} E_j\right\|
		\; \leq \;
		2 \left\|\sum_{m=R}^n\Phi_{x, \omega, m}^{(1)} P_{b_x^{\Lambda}(j-1)}\right\|.
		\end{equation} 
		Here, the final inequality follows from using \eqref{ff_prop} to write $ E_j =  P_{b_x^{\Lambda}(j-1)}E_j$. 
	\end{enumerate}
	
	We now define the $\Theta_2$-terms. To do so, we interchange the double sum from (\ref{Theta1Decay2}) and isolate those terms for which Lemma~\ref{lem:phi_2_est} applies. Namely, note that 
	\begin{equation*}
	\sum_{R\leq m <f(L+K)} \; \sum_{f^{-1}(m) < j \leq \ell_x} 
	A_{j,m}
	\; = \;
	\sum_{R \leq m < f(L+K)} \;
	\sum_{f^{-1}(m)<j < L+K}
	A_{j,m}
	+
	\sum_{R \leq m < f(L+K)} \;
	\sum_{L+K \leq j  \leq \ell_x}
	A_{j,m}.
	\end{equation*}
	For the second collection of terms above, one has that
	\begin{equation}
	\sum_{R \leq m < f(L+K)} \;
	\sum_{L+K \leq j  \leq \ell_x}
	A_{j,m}
	=
	\sum_{L+K \leq j  \leq \ell_x} \;
	\sum_{R \leq m < f(L+K)}
	A_{j,m},
	\end{equation}
	and so we set
	\begin{equation}
	\label{Theta2_LargeSupp}
	\Theta_2(x,j,s) = \sum_{R \leq m < f(L+K)} A_{j,m} \quad \mbox{for all } L+K \leq j \leq \ell_x \, .
	\end{equation}
	Based on the observations above, it is clear that these terms satisfy conditions (i) and (ii). 
	Moreover,
	\begin{align}
	\|\Theta_2(x,j,s) \| 
	\; & \leq \;
	2 \left\|\sum_{m=R}^{\lceil f(L+K)\rceil -1}\Phi_{x, \omega, m}^{(1)} P_{b_x^{\Lambda}(j-1)}\right\|
	\; \leq \;
	2 \left\|\sum_{k=R}^{\lceil f(L+K)\rceil -1}\Phi_{x, \omega, m}^{(1)} P_{b_x^{\Lambda}(L+K-1)}\right\|.
	\end{align}
	Here, for the final inequality we have used $P_{b_x^{\Lambda}(j-1)} = P_{b_x^{\Lambda}(L+K-1)}P_{b_x^{\Lambda}(j-1)}$. Applying Lemma~\ref{lem:phi_2_est}, 
	for $L+K \leq j \leq \ell_x$ we obtain the following uniform estimate:
	\begin{equation} \label{Theta2Decay_large}
	\|\Theta_2(x,j,s) \| \leq 2s \left[ D(\lceil f(L+K)\rceil -1, L+K-1) + C(K,L) \right]
	\end{equation}
	where $C(K,L)$ and $D(m,n)$ are as defined in (\ref{def:ckl}) and (\ref{def:Dmn}) respectively.	
	
	For the remaining terms, observe that 
	\begin{equation*}
	\sum_{R \leq m < f(L+K)} \;
	\sum_{f^{-1}(m)<j < L+K}
	A_{j,m}
	\; = \;
	\sum_{f^{-1}(R)<j<L+K} \;
	\sum_{R \leq m < f(j)}
	A_{j,m}
	\end{equation*}
	and thus if we set 
	\begin{equation}
	\Theta_2(x,j,s)
	\; = \; 
	\sum_{R \leq m < f(j)}
	A_{j,m} \quad \; \mbox{for } \; f^{-1}(R)<j<L+K \, ,
	\end{equation}
	then, as before, these terms satisfy conditions (i) and (ii), and
	\begin{align} 
	\|\Theta_2(x,j,s)\|
	\; \leq & \;
	2 \left\| \sum_{m=R}^{\lceil f(j) \rceil -1}
	\Phi_{x, \omega, m}^{(1)}P_{b_x^{\Lambda}(j-1)}  \right\| \, 
	\; \leq \; 2s \left[ D( \lceil f(j) \rceil -1, j-1) + C(K,L) \right] \, .	\label{Theta2Decay_small} 
	\end{align}
	Once again, we define $\Theta_2(x, j,s)=0$ for any values of $j$ not considered above.
	
	Finally, set $\Phi^{(2)}(x,n,s) = \Theta_1(x,n,s) + \Theta_2(x,n,s)$. By construction, it is clear that these $\Phi^{(2)}$-terms satisfy the conditions of
	Theorem~\ref{thm:rel_bound_cond}, and moreover, combining the decay bounds from \eqref{Theta1_Decay}, \eqref{Theta2Decay_large}, and \eqref{Theta2Decay_small} we find that
	\begin{equation*}
	\|\Phi^{(2)}(x,n,s)\| 
	\; \leq \; 
	2s G^{(2)}(n)
	\end{equation*}
	as desired.		
\end{proof}

\section{Uniform Sequences and the Thermodynamic Limit}\label{sec:uniform_sequences}

\subsection{Introduction}

So far, we have determined conditions under which we can estimate the spectral gaps of a continuous family of quantum spin Hamiltonians on a fixed finite volume. Often we are interested in a uniform lower bound for the spectral gap for a collection of finite volume families of arbitrarily large size so that we can derive stability properties of the infinite systems. In this section, we focus on formulating conditions on families of models labeled by finite sets $\Lambda$ that imply such uniform estimates. We leave the discussion of the thermodynamic limit itself to Section~\ref{sec:automorphic-equivalence}.  We focus on conditions that let us establish a uniform lower bound for the spectral gap above the ground state along a sequence of finite systems of increasing size as this is also the foundation for studying higher gaps, see e.g. Corollary~\ref{cor:higher_gaps}. We will refer to such sequences as {\em uniform sequences}. 

Each finite volume Hamiltonian is defined in terms of one or more interactions, such as the maps $\eta$ and $\Phi$  in Section~\ref{sec:LTQO2}. These interactions will for the most part {\em not} depend on $n$ except for necessary modifications generally designated as {\em boundary conditions}. Boundary conditions can be expressed in a number of different ways and we discuss two common cases: 1) open boundary conditions, and 2) boundary conditions arising from modifying the metric on $\Lambda_n$. The typical situation we have in mind for the second case is {\em periodic} or {\em twisted periodic boundary conditions}; for example, when a finite rectangle in $\Ir^2$ is embedded on a torus. More generally, it is often of interest to define the model on a sequence of triangulations (or other discretizations) of a compact manifold.

In the latter case we assume one can 
extend the interaction to include `boundary' terms in the natural way. More precisely, these are the situations where there is {\em no} boundary and
we will refer to this case as {\em geometric boundary conditions}. There are other ways to define boundary conditions, which may involve $n$-dependence of both the interaction and the metric. These can be handled by small modifications of the discussion below.

\subsection{Uniform sequences of finite systems}\label{sec:uniformity}

The goal of this section is to describe conditions on a model which allow for the results obtained in 
Section~\ref{sec:Step1} and Section~\ref{sec:LTQO2} to hold uniformly along a sequence of finite-volumes. 
Our discussion of {\it uniformity} covers both common cases of boundary conditions discussed above. 

{\em Uniform Sequences of Finite Volumes:} 
Most interesting quantum spin models are defined over a metric space $(\Gamma, d)$ for which
$\Gamma$ has infinite cardinality. To apply the results of Section~\ref{sec:Step1} and Section~\ref{sec:LTQO2}, we
must first restrict the model to an appropriate choice of finite subsets. 
Let $\{ \Lambda_n \}_{n\geq 1}$ be an increasing and absorbing sequence of finite subsets of $\Gamma$.
To allow for possible boundary conditions, we will further regard each finite subset as a metric space on its own, i.e.
for each $n \geq 1$, associate a metric space $(\Lambda_n, d_n)$ to $\Lambda_n$ with a metric $d_n$ satisfying: for each $x,y \in \Gamma$, there is
$n(x,y) \geq 1$ sufficiently large so that $d_n(x,y) = d(x,y)$ for all $n \geq n(x,y)$. By allowing $n$-dependence of the metric, we are really including cases where $\Lambda_n$ does not have a natural embedding in $\Gamma$ but, strictly speaking, it can always be considered as a subset.

Finally, we also assume that these metrics are uniformly $\nu$-regular, i.e. there are positive numbers $\kappa$ and $\nu$ for which given $n \geq 1$, $m \geq 1$, and $x \in \Lambda_n$, 
\[
|b_x^{\Lambda_n}(m)| \leq \kappa m^\nu \quad \text{where} \quad b_x^{\Lambda_n}(m) = \{y\in \Lambda_n \mid d_n(x,y)\leq m\}.
\]
We will use $\diam_n(X)$ to represent the diameter of a set $X \subseteq \Lambda_n$ with respect to the metric $d_n$.
We will refer to any sequence $\{ \Lambda_n \}_{n \geq 1}$ of finite subsets of $\Gamma$ satisfying the conditions described above
as a {\it uniform sequence of finite volumes}. 

{\em Uniformity in the Initial Hamiltonian:}
We will assume that the initial, unperturbed Hamiltonian can be associated with a finite-range, uniformly bounded, 
frustration free interaction $\eta$. To make this association precise, let $\{ \Lambda_n \}_{n \geq 1}$ be a 
uniform sequence of finite volumes. For each $n \geq 1$, we 
assume that there is a (non-negative) frustration free interaction $\eta_n$ on $\Lambda_n$ and write
\begin{equation} \label{unif_unpert_ham}
H_{\Lambda_n} = \sum_{X \subseteq \Lambda_n} \eta_n(X) \, . 
\end{equation}
We assume that the sequence $\{ \eta_n \}_{n \geq 1}$ approximates $\eta$ in the sense that
for each $X \in \mathcal{P}_0( \Gamma)$, there is $n(X) \geq 1$ for which 
$\eta_n(X) = \eta(X)$ for all $n \geq n(X)$. We further assume that the sequence $\{ \eta_n \}_{n \geq 1}$ 
has a uniform finite range, in that there is a number $R' \geq 0$ for which 
$\eta_n(X) =0$ whenever $\diam_n(X)>R'$, and is uniformly bounded in the sense that
\begin{equation}
\sup_{n \geq 1} \| \eta_n \|_{\infty} < \infty \quad \mbox{where} \quad \| \eta_n \|_{\infty} = \sup_{X \subseteq \Lambda_n} \| \eta_n(X) \| \,.
\end{equation}
Lastly, we assume that there is a uniform gap above the ground state energy meaning that
\begin{equation} \label{g_infty} 
\gamma_0 = \inf_{n \geq 1} {\rm gap}(H_{\Lambda_n}) > 0 \, .
\end{equation}
We will refer to any sequence $\{ H_{\Lambda_n} \}_{n \geq 1}$ of Hamiltonians generated by a uniform sequence of finite volumes and a corresponding
sequence of interactions $\{ \eta_n \}_{n \geq 1}$ satisfying the constraints above as a {\it uniformly gapped sequence of 
	initial Hamiltonians}. 

\begin{ex}
	The situation $d_n = d\restriction_{\Lambda_n}$ and $\eta_n = \eta$ for each $n\geq 1$ corresponds to traditional open boundary conditions. 
	By modifying the metric on each finite volume $\Lambda_n$, models with periodic boundary conditions
	can also be accommodated as above. In fact, this construction allows for models with various 
	boundary conditions such as when $d_n = d\restriction_{\Lambda_n}$ but $\eta$ is modified along the boundary, i.e. $\eta_n = \eta + \eta_{\partial \Lambda_n}$.
\end{ex}

\begin{remk} \label{rem:ball_ini_ham}
	Given a uniformly gapped sequence of initial Hamiltonians $\{ H_{\Lambda_n} \}_{n \geq 1}$, 
	the anchoring procedure described in Section~\ref{sec:ball-proc}
	applies. In this case, for each $n \geq 1$, one may write 
	\begin{equation}
	H_{\Lambda_n} = \sum_{x \in \Lambda_n} h_x^{(n)}
	\end{equation}
	with $h_x^{(n)} \geq 0$ and $h_x^{(n)} \in \mathcal{A}_{b_x^{\Lambda_n}(R)}$ for all $x \in \Lambda_n$. 
	Here $R \geq 0$ is the maximal radius associated with this anchoring. This $R$ is independent of $n$ and 
	satisfies $R \leq R'+1$. Moreover, as discussed in Section~\ref{sec:ball-proc}, 
	one has that
	\begin{equation} \label{global_bd_ini_ham}
	\| h \| = \sup_{n \geq 1} \| h^{(n)} \|_{\infty} < \infty \quad \mbox{where} \quad \| h^{(n)} \|_{\infty} = \sup_{x \in \Lambda_n} \| h_x^{(n)} \| \, .
	\end{equation}
\end{remk}

\begin{remk} \label{rem:LTQO_reg} Let $\Omega: \mathbb{R} \to [0, \infty)$ be a non-increasing function.
	Given a uniformly gapped sequence of initial Hamiltonians $\{ H_{\Lambda_n} \}_{n \geq1}$ and two
	sequences of non-negative numbers $\{ K_n \}_{n \geq 1}$, with $K_n \geq R$, and $\{ L_n \}_{n \geq 1}$, one can define 
	LTQO regions:
	\begin{equation} \label{def:L_n^p}
	\Lambda_n^p = \{ x \in \Lambda_n : r_y^{\Omega}( \Lambda_n) \geq K_n + L_n \mbox{ for all } y \in b_x^{\Lambda_n}(K_n) \} \, ,
	\end{equation}
	compare with (\ref{def:L_p}), on which our stability argument allows for perturbations.  Note that here the quantity $r_y^{\Omega}(\Lambda_n)$ represents
	the indistinguishability radius of $H_{\Lambda_n}$ at $y \in \Lambda_n$, see
	Definition~\ref{LTQO_def}. One analogously defines effective perturbation regions
	\begin{equation} \label{def:L_n^pK_n}
	\Lambda_n^p(K_n) = \{ x \in \Lambda_n : d_n(x, \Lambda_n^p) \leq K_n \} \, ,
	\end{equation}
	compare with (\ref{eff_pert_reg}). We note that, for all $n \geq 1$, both of these subsets of $\Lambda_n$ 
	are defined with respect to the same fixed non-increasing function $\Omega$. 
	As we will see, the notion that a model (one for which a choice of $\Omega$ has already been made) satisfies 
	our stability bounds {\it uniformly} requires an appropriate choice of the sequences $\{ K_n \}_{n \geq 1}$ and $\{ L_n \}_{n \geq 1}$. Such a 
	choice, however, is not independent of the perturbation; more on that soon. 
\end{remk} 

{\em Uniformity in the Separating Partitions:}
In Section~\ref{sec:form_bd}, see specifically Theorem~\ref{thm:Step3General}, we made use
of separating partitions, and we here briefly remark on how this notion can also be made uniform. 
Let $\{ \Lambda_n \}_{n \geq 1}$ be a uniform sequence of finite volumes. 
For each $n \geq 1$, denote by $\ell_n = \lceil\diam_n\Lambda_n\rceil$. 
Choose $\mathcal{S}^{(n)}$ to be
a collection of subsets of $\Lambda_n$ satisfying
\begin{equation}\label{unif_gap_volumes}
\caS^{(n)} = \left\{\Lambda_{n}(x,m) \mid x\in \Lambda_n,\, R\leq m \leq \ell_n\right\}, \mbox{ with } b_x^{\Lambda_n}(m)\subset \Lambda_n(x,m).
\end{equation}
For brevity, we say that any such sequence $\{ \mathcal{S}^{(n)} \}_{n \geq 1}$ is a sequence of subsets associated to $\{ \Lambda_n \}_{n \geq 1}$
which contains balls. Corresponding to such a sequence $\{ \mathcal{S}^{(n)} \}_{n \geq 1}$, 
we will require that there exists a sequence $\{ \mathcal{T}^{(n)} \}_{n \geq 1}$ of 
families of separating partitions which satisfies a polynomial growth bound 
independent of $n$. More precisely, we will assume that there are positive numbers $c$ and $\zeta$ for which given any $n \geq 1$, 
there is a family $\mathcal{T}^{(n)}$ of partitions of $\Lambda_n$ which separates $\mathcal{S}^{(n)}$ and is of 
$(c, \zeta)$-polynomial growth in the sense of Definition~\ref{ass:sep_part}. This means, if we 
write $\mathcal{T}^{(n)} = \{ \mathcal{T}_m^{(n)} \mid 1 \leq m \leq \ell_n \}$ and denote 
each partition of $\Lambda_n$ by $\mathcal{T}_m^{(n)} = \{ \mathcal{T}_{m,i}^{(n)} \mid i \in \mathcal{I}_m^{(n)} \}$,
then
\begin{enumerate}
	\item[(i)] \emph{Separation:} $\Lambda_n(x,m)\cap\Lambda_n(y, m)=\emptyset$ for any distinct pair $x, y\in T^{(n)}_{m,i}$.
	\item[(ii)] \emph{Uniform polynomial growth:} $| \cI_m^{(n)}| \leq c m^\zeta$ for all $n,m$.
\end{enumerate}
In this case, we say that $\{ \mathcal{T}^{(n)} \}_{n \geq 1}$ is a sequence of families of partitions which 
separates $\{ \mathcal{S}^{(n)} \}_{n \geq 1}$ and satisfies a uniform polynomial growth bound. 

\begin{remk}
	Generally, the existence of such families of partitions with uniform polynomial growth is not hard to establish.  
	Typically, one knows the existence of such sets on $\Gamma$. 
	In fact, if there is a collection of finite volumes 
	\be
	\caS = \{\Gamma(x,m) \in \cP_0(\Gamma) \mid x\in \Gamma,\, m\geq R\},
	\mbox{ with } b_x^{\Gamma_n}(m)\subset \Gamma_n(x,m).
	\ee
	and a corresponding family $\mathcal{T} = \{ \mathcal{T}_m \mid m \geq R \}$ of partitions $\mathcal{T}_m$ of $\Gamma$
	with $\mathcal{T}_m = \{\mathcal{T}_{m,i} \mid i \in \mathcal{I}_m \}$ satisfying:
	\begin{enumerate}
		\item[(i)] $| \cI_m| \leq c m^\zeta$;
		\item[(ii)] for $x\neq y\in T^i_m$, $\Gamma(x,m)\cap\Gamma(y,m)=\emptyset$,
	\end{enumerate}
	then, along any uniform sequence $\{ \Lambda_n \}_{n \geq 1}$, an obvious choice for $\mathcal{S}^{(n)}$ and $\mathcal{T}^{(n)}$ is
	obtained through intersection, namely
	\begin{equation}
	\Lambda_n(x,m)= \Gamma(x,m)\cap \Lambda_n \quad \mbox{and} \quad \cT^{(n)}_m=\{ T_{m,i}\cap\Lambda_n \mid i\in \cI_m\}.
	\end{equation} 
	Moreover, using $\nu$-regularity, one can show that the set of balls $\Gamma(x,m) = b_x(m)$ always corresponds to a 
	family of partitions $\mathcal{T}$ which separates these balls and is of $(\kappa,\nu)$-polynomial growth. 
\end{remk}

\begin{remk} \label{rem:uni_local_gaps}
	Let $\{ \Lambda_n \}_{n \geq 1}$ be a uniform sequence of finite volumes,
	$\{ \mathcal{S}^{(n)} \}_{n \geq 1}$ be any sequence of subsets associated to $\{ \Lambda_n \}_{n \geq 1}$
	which contains balls, and $\{ H_{\Lambda_n} \}_{n \geq 1}$ be a uniformly gapped sequence of initial Hamiltonians.
	For each $n \geq 1$ and any $R \leq m \leq \ell_n$, a {\it local gap} is defined by setting
	\be \label{loc_gap}
	\gamma_n(m) =\inf\left\{ \gap(H_{\Lambda_n(x,m)}) \mid \Lambda_n(x,m) \in \mathcal{S}^{(n)} \right\}
	\ee
	where the corresponding local Hamiltonians are given by 
	\be
	H_{\Lambda_{n}(x,m)} = \sum_{X\subseteq \Lambda_{n}(x,m)} \eta_n(X).
	\ee
	Since $\Lambda_n =\Lambda_n(x,\ell_n)\in S^{(n)}$ for any $n\geq 1$ and $x\in\Lambda_n$, the infimum of these local gaps 
	produces a lower bound on $\gamma_0$, as in (\ref{g_infty}).
\end{remk}

\begin{ex}
	The freedom of choosing appropriate sub-volumes $\caS^{(n)}$ can be useful for optimizing the lower bound on $\gamma_0$. 
	Consider the one-species PVBS model on $\Gamma = \bZ^\nu$ 
	as analyzed, e.g., in \cite{bishop:2016a}. This is an example of a model where local gaps 
	are sensitive to the boundary geometry. In fact, for a particular choice of parameters, 
	the spectral gap for the Hamiltonians associated with balls $b_x^{\Lambda_n}(m)$ closes as $n\to\infty$, but 
	the corresponding gaps remain non-vanishing on volumes $\Lambda_n(x,m)\supset b_x^{\Lambda_n}(m)$ with slightly slanted boundaries, 
	see \cite{bishop:2016a}. 
\end{ex}

{\em Uniformity in the Perturbations:}
Here we discuss a class of perturbations to which our stability results 
will apply. Let $F_0$ be an $F$-function on $(\Gamma, d)$. 
This base $F$-function depends only on $\Gamma$, and one may take it 
as in (\ref{poly-dec-F}). Let $\theta \in (0, 1]$. The class of perturbations
we consider are determined by a weighted $F$-function $F(r) = e^{-g(r)}F_0(r)$
with a weight $e^{-g}$ for which there is some $a>0$ such that
\begin{equation} \label{g_grows_theta}
g(r) = a r^{\theta} \quad \mbox{for all } r \geq 0 \, .
\end{equation}
Let $F$ denote such a weighted $F$-function and $\{ \Lambda_n \}_{n \geq 1}$ a uniform
sequence of finite volumes. We say that a sequence of interactions $\{ \Phi_n \}_{n \geq 1}$, 
with each $\Phi_n$ an anchored interaction on $\Lambda_n$,
decays like $F$ uniformly along $\{ \Lambda_n \}_{n \geq 1}$
if there is a non-negative number $\| \Phi \|_{1,F}$ for which
\begin{equation} \label{global_term_bd}
\| \Phi_n(x,m) \| \leq \| \Phi \|_{1,F} F( \max(0, m-1)) \quad \mbox{ for all } n \geq 1, \, x \in \Lambda_n, \mbox{ and } R \leq m \leq \ell_n \, ,
\end{equation}
and moreover, $\sup_{n \geq 1} \| \Phi_n \|_{1,F} \leq \| \Phi \|_{1, F}$ where
\begin{equation} \label{phin_1F_norm}
\| \Phi_n \|_{1,F} = \sup_{x,y \in \Lambda_n} \frac{1}{F(d_n(x,y))} \sum_{z \in \Lambda_n} \sum_{\stackrel{R\leq m \leq \ell_n:}{x,y \in b_z^{\Lambda_n}(m)}} |b_z^{\Lambda_n}(m)| 
\| \Phi_n(z,m) \| .
\end{equation}

Here, to be consistent with the notation of Section~\ref{sec:Step1} and Section~\ref{sec:LTQO2},
we organize the terms in the anchored interactions to start with $m=R$, the maximal radius associated with
the initial Hamiltonians as discussed in Remark~\ref{rem:ball_ini_ham}. 

In the case of open boundary conditions, for any anchored interaction $\Phi \in \mathcal{B}^1_{F}( \Gamma)$
one may take 
\begin{equation}
\| \Phi \|_{1,F} = \sup_{x,y \in \Gamma} \frac{1}{F(d(x,y))} \sum_{z \in \Gamma} \sum_{\stackrel{m \geq R:}{x,y \in b_z(m)}} 
|b_z(m)| \| \Phi(z,m) \| \, .
\end{equation}

The previous discussions motivate the following definition of a class of {\it perturbation models}.
For these we can then formulate the uniformity assumptions in Assumption~\ref{ass:uni_pert_model} below that are 
sufficient to prove stability in Theorem~\ref{thm:uni_stab}. 


\begin{defn}[Perturbation Models] \label{def:pert_model} Consider a quantum spin system defined on a
	$\nu$-regular metric space $(\Gamma, d)$. A perturbation model on $(\Gamma, d)$ 
	consists of the following:
	\begin{enumerate}
		\item[(i)] A uniform sequence of finite volumes $\{ \Lambda_n \}_{n \geq 1}$.
		\item[(ii)] A sequence $\{ \mathcal{S}^{(n)} \}_{n \geq 1}$ of subsets of $\{ \Lambda_n \}_{n \geq 1}$ 
		containing balls and a corresponding sequence $\{ \mathcal{T}^{(n)} \}_{n \geq 1}$ of families of 
		partitions which separates $\{ \mathcal{S}^{(n)} \}_{n \geq 1}$ and satisfies a uniform polynomial growth bound. 
		\item[(iii)] A uniformly gapped sequence of frustration-free 
		Hamiltonians $\{ H_{\Lambda_n} \}_{n \geq 1}$.
		\item[(iv)] A non-increasing function $\Omega : \mathbb{R} \to [0, \infty)$ with $\lim_{r\to\infty}\Omega(r)=0$ that defines the indistinguishability 
		radii, $r_x^\Omega(\Lambda_n)$, for each initial Hamiltonian $H_{\Lambda_n}$, and two
		sequences of non-negative numbers $\{ K_n \}_{n \geq 1}$ and $\{ L_n \}_{n \geq 1}$, which define the perturbation regions $\Lambda^p_n$ as in \eqref{def:L_n^p}.
		\item[(v)] A weighted $F$-function $F$ on $(\Gamma, d)$ with weight 
		satisfying (\ref{g_grows_theta}) for some $\theta \in (0,1]$,  and a sequence of anchored interactions $\{ \Phi_n \}_{n \geq 1}$ which decays like $F$ uniformly along $\{ \Lambda_n \}_{n \geq 1}$.
	\end{enumerate}
	For each perturbation model and any $0 \leq s \leq 1$, a sequence of perturbed Hamiltonians is given by
	\begin{equation} \label{pert_model_hams}
	H_{\Lambda_n}(s) = H_{\Lambda_n} + s \Vee_{\Lambda_n^p} \quad \mbox{where} \quad 
	\Vee_{\Lambda_n^p} = \sum_{x \in \Lambda_n^p} \sum_{\stackrel{m \geq R:}{b_x(m) \subseteq \Lambda_n}} \Phi_n(x,m) \, .
	\end{equation}
\end{defn}

\begin{remk} \label{rem:uni_vel_bd} Consider a perturbation model
	on $(\Gamma, d)$ in the sense of Definition~\ref{def:pert_model}. 
	For each $n \geq 1$, the results of Section~\ref{sec:Step1} and Section~\ref{sec:LTQO2}
	apply to the perturbed Hamiltonians $H_{\Lambda_n}(s)$ as in (\ref{pert_model_hams}) above. In fact, if $\| \Phi \|_{1,F}$ 
	is the value estimating the corresponding sequence of perturbations $\{ \Phi_n \}_{n \geq 1}$,
	as in (\ref{global_term_bd}) and (\ref{phin_1F_norm}), then a bound on the Lieb-Robinson velocity
	associated to the Heisenberg dynamics generated by $H_{\Lambda_n}(s)$ is 
	\begin{equation}
	v = 2C_F \left( \frac{ \kappa R^{\nu}}{F(R)} \| h \| + \| \Phi \|_{1, F} \right) 
	\end{equation}
	compare with (\ref{lrb-vest}). Here $\| h \|$ is the uniform estimate on the initial Hamiltonians, see
	comments in Remark~\ref{rem:ball_ini_ham}, and we stress that this value of $v$ is uniform in
	$n \geq 1$ and $0 \leq s \leq 1$. In fact, regarding the sequence of initial Hamiltonians $\{ H_{\Lambda_n} \}_{n \geq 1}$
	and the weighted $F$-function $F$ fixed, this value of $v$ is further uniform with respect to any
	sequence of perturbations $\{ \Phi'_n \}_{n \geq 1}$ satisfying $\| \Phi' \|_{1,F} \leq \| \Phi \|_{1,F}$. 
\end{remk}

In order for a perturbation model as defined above to have a stable spectral gap
we will further need to assume uniform boundedness of certain estimates.
To turn this property into an assumption for models, let us first recall some notation from Section~\ref{sec:LTQO2}. 

Consider a perturbation model as in Definition~\ref{def:pert_model}. 
Take $\gamma >0$ with  $\gamma < \gamma_0$, the uniform ground state gap, see
(\ref{g_infty}), of our sequence of initial Hamiltonians. One checks that, for each $n \geq 1$, the
Hamiltonian $H_{\Lambda_n}$ and perturbation $V_{\Lambda_n^p}$, see e.g. 
Remark~\ref{rem:uni_vel_bd}, satisfy the conditions of Assumption~\ref{ass:basic}.
In this case, Theorem~\ref{thm:step2-(i)+(ii)} applies, and the numbers
\begin{equation} \label{def:delta_n}
\delta_n = | \Lambda_n^p(K_n)| C(K_n, L_n) 
\end{equation}
compare with (\ref{LTQO_delta}), and
\begin{equation} \label{def:epsilon_n}
\epsilon_n = 2 \left( \sum_{x \in \Lambda_n \setminus \Lambda_n^p(K_n)} \| h_x \| \right) G(K_n)
\end{equation}
compare with (\ref{LTQO_alpha}), are relevant for our stability analysis. The quantity $C(K_n,L_n)$, which 
appears in (\ref{def:delta_n}) above, is as defined in (\ref{def:ckl}). This quantity 
is defined with respect to a decay function which is obtained by applying Theorem~\ref{thm:Step1}
in the finite volume $\Lambda_n$. Given the uniformity imposed by Definition~\ref{def:pert_model},
there is a single choice which works in all finite volumes, and we 
continue to denote this particular decay function by $G^{(1)}$. To be clear, this function $G^{(1)}$
depends on the choice of $\gamma$ in that it depends on estimates for the spectral flow $\alpha_s^{\Lambda_n}( \cdot)$
which is defined for $\xi = \gamma$ as in Section~\ref{sec:spec_flow}. Moreover, $G^{(1)}$ also depends on the weighted $F$-function
$F$, and the sequence of perturbations $\{ \Phi_n \}_{n \geq 1}$ through $\| \Phi \|_{1,F}$. As previously observed, 
given $F$ and a value of $\| \Phi \|_{1,F}$, this decay function $G^{(1)}$ holds uniformly for
all sequences of anchored interactions $\{ \Phi'_n\}_{n \geq 1}$ satisfying $\| \Phi' \|_{1,F} \leq \| \Phi \|_{1,F}$.  
Arguing similarly, there is a single choice of decay function corresponding to the proof of Lemma~\ref{lem:FarDecay}
which holds uniformly in the sense described above. We call this function $G$, and it is
the function which we use in (\ref{def:epsilon_n}).

For the same perturbation model, let us also fix, independent of $n$, a differentiable function $f: [0, \infty) \to [0, \infty)$ 
with $f(0) =0$ and $0< f'(t) <1$ for all $t \geq 0$.  Given this, one further checks that Theorem~\ref{thm:rel_bound_cond} applies 
in each finite volume $\Lambda_n$  and determines an anchored interaction $\Vee_n^{(2)}(s)$  for all
$0 \leq s \leq s_{\gamma}^{\Lambda_n}$. Recall that $s_{\gamma}^{\Lambda_n}>0$ is as in (\ref{def:s_lam_gam}).
Given the conclusions of Theorem~\ref{thm:rel_bound_cond}, it is clear that $H_{\Lambda_n}$, $\mathcal{S}^{(n)}$, and $\Vee_n^{(2)}(s)$
satisfy the conditions of Theorem~\ref{thm:Step3General} for each $n \geq 1$ and all $0 \leq s \leq s_{\gamma}^{\Lambda_n}$.
As a result, each $\Vee_n^{(2)}(s)$ is form bounded by $H_{\Lambda_n}$ with a pre-factor given by
\begin{equation} \label{def:beta_n}
\beta_n = 2 c \sum_{m = R}^{\ell_n} \frac{m^{\zeta} G_n^{(2)}(m)}{ \gamma_n(m)} \, .
\end{equation}
Recall that the family of partitions of $\Lambda_n$ which separates $\mathcal{S}^{(n)}$ is uniformly of $(c, \zeta)$-polynomial growth, and
$\gamma_n(m)$ is the local gap defined in (\ref{loc_gap}). Moreover, $G_n^{(2)}$ is the decay function obtained in the
application of Theorem~\ref{thm:rel_bound_cond}; namely, setting $M_n = K_n+L_n$
\begin{equation} \label{dec_fun_g2}
G^{(2)}_n(m) = \begin{cases}
G^{(1)}(f(m))+ D(\lceil f(m) \rceil - 1, m -1) + C(K_n,L_n) & R\leq m < M_n \\
\sum_{k \geq f(M_n)} G^{(1)}(k) + D( \lceil f(M_n) \rceil - 1, M_n -1) + C(K_n,L_n) & \phantom{R\leq \;} m \geq M_n . \\
\end{cases}
\end{equation} 
\begin{assumption}[Uniform Perturbation Model] \label{ass:uni_pert_model} We say that a perturbation model, as in Definition~\ref{def:pert_model},
	is a uniform perturbation model if:
	\begin{enumerate}
		\item[(i)] The quantities $\delta_n$ and $\epsilon_n$ from \eqref{def:delta_n}-\eqref{def:epsilon_n} are uniformly bounded from above, i.e.
		\begin{equation}
		\delta = \sup_{n \geq 1} \delta_n < \infty \quad \mbox{and} \quad \epsilon = \sup_{n \geq 1} \epsilon_n < \infty \, .
		\end{equation}
		\item[(ii)] There exists a function $f$ as above for which $\beta_n$ from \eqref{def:beta_n} satisfies
		\begin{equation}
		\beta  = \sup_{n \geq 1} \beta_n < \infty \, .
		\end{equation}
	\end{enumerate}
\end{assumption}

We now state our main result on stability for perturbation models as in Definition~\ref{def:pert_model}. Recall that {\it stability of the spectral gap}, as described 
at the end of Section~\ref{sec:sf-gap-stab}, is the property that 
\begin{equation}
s_{\gamma} = \inf_{n \geq 1} s_{\gamma}^{\Lambda_n} >0 \quad \mbox{for all} \quad 0 < \gamma < \gamma_0 \, .
\end{equation}

\begin{thm} \label{thm:uni_stab} 
	Every uniform perturbation model has a stable spectral gap.
\end{thm}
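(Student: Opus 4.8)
The plan is to assemble the finite-volume machinery developed in Sections~\ref{sec:Step1} and \ref{sec:LTQO2} and feed it into the abstract perturbation result Theorem~\ref{thm:Pert_Est}, uniformly in $n$. First I would fix $0<\gamma<\gamma_0$ and work on each $\Lambda_n$ separately: since a uniform perturbation model satisfies Assumption~\ref{ass:basic} in every finite volume (with $R$, $\|h\|$, and $\|\Phi\|_{1,F}$ all $n$-independent by Remarks~\ref{rem:ball_ini_ham} and \ref{rem:uni_vel_bd}), Theorem~\ref{thm:Step1} gives the decomposition $\alpha_s^{\Lambda_n}(H_{\Lambda_n}(s))=H_{\Lambda_n}+\Vee_n^{(1)}(s)$ with a decay function $G^{(1)}$ that, by the uniform velocity bound $v$, can be chosen once and for all independent of $n$. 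Then the construction of Section~\ref{sec:LTQO2}, with perturbation regions $\Lambda_n^p$, $\Lambda_n^p(K_n)$ as in \eqref{def:L_n^p}--\eqref{def:L_n^pK_n}, yields
\be
\alpha_s^{\Lambda_n}(H_{\Lambda_n}(s)) = H_{\Lambda_n} + \Vee_n^{(2)}(s) + \Delta_n(s) + E_n(s) + C_n(s)\idty
\ee
for $0\le s\le s_\gamma^{\Lambda_n}$, where Theorem~\ref{thm:step2-(i)+(ii)} controls $\|\Delta_n(s)\|\le s\delta_n$ and $\|E_n(s)\|\le s\epsilon_n$, and Theorem~\ref{thm:rel_bound_cond} together with Theorem~\ref{thm:Step3General} (applicable because the separating partitions $\mathcal{T}^{(n)}$ are uniformly of $(c,\zeta)$-polynomial growth and the local gaps $\gamma_n(m)$ are positive) gives the form bound $|\langle\psi,\Vee_n^{(2)}(s)\psi\rangle|\le s\beta_n\langle\psi,H_{\Lambda_n}\psi\rangle$.

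Next I would invoke Assumption~\ref{ass:uni_pert_model}: it supplies $\delta=\sup_n\delta_n<\infty$, $\epsilon=\sup_n\epsilon_n<\infty$, and a single admissible $f$ with $\beta=\sup_n\beta_n<\infty$. With these uniform constants in hand, apply Theorem~\ref{thm:Pert_Est} to $\tilde H_{\Lambda_n}(s)=\alpha_s^{\Lambda_n}(H_{\Lambda_n}(s))$ with $H=H_{\Lambda_n}$, $V(s)=\Vee_n^{(2)}(s)$, $A(s)=\Delta_n(s)+E_n(s)$, $C(s)=C_n(s)$, $\alpha=\alpha'=\delta+\epsilon$, $\alpha''=\epsilon$, and the unperturbed gap $\gamma_n=\gap(H_{\Lambda_n})\ge\gamma_0$. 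As spelled out in \eqref{stab_gap_est}--\eqref{s_gamma_est}, and using that $\alpha_s^{\Lambda_n}$ is a unitary conjugation so $\spec(\tilde H_{\Lambda_n}(s))=\spec(H_{\Lambda_n}(s))$, together with $\alpha_s^{\Lambda_n}(P(s))=P(0)$ from \eqref{spec_flow_proj} to identify the spectral projections, one obtains for $0\le s<\beta^{-1}$,
\be
\gap(H_{\Lambda_n}(s)) \ge \gamma_n - s(\beta_n\gamma_n+\delta_n+2\epsilon_n) \ge \gamma_0 - s(\beta\gamma_n+\delta+2\epsilon),
\ee
which is an honest estimate as long as $\gamma_n$ is bounded above — but $\gamma_n\le\gamma_n(R)$ or, more simply, one can keep the $n$-dependent bound $\gamma_n-s(\beta_n\gamma_n+\delta_n+2\epsilon_n)$ and only pass to uniform constants after noting $\beta_n\gamma_n+\delta_n+2\epsilon_n\le \beta\gamma_n+\delta+2\epsilon$ and that $s_\gamma^{\Lambda_n}$ is, by \eqref{s_gamma_est}, bounded below by $(\gamma_n-\gamma)/(\beta_n\gamma_n+\delta_n+2\epsilon_n)$.

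There is a bootstrapping subtlety I would address carefully: the decomposition above and the projection identity $\alpha_s^{\Lambda_n}(P(s))=P(0)$ are only known to hold for $0\le s\le s_\gamma^{\Lambda_n}$, yet we want to conclude a lower bound on $s_\gamma^{\Lambda_n}$ itself. The standard resolution (as in the discussion around \eqref{def:s_lam_gam}) is a continuity argument: $s\mapsto\gap(H_{\Lambda_n}(s))$ is continuous, $\gap(H_{\Lambda_n}(0))=\gamma_n>\gamma$, and on $[0,s_\gamma^{\Lambda_n}]$ the bound $\gap(H_{\Lambda_n}(s))\ge\gamma_n-s(\beta_n\gamma_n+\delta_n+2\epsilon_n)$ holds; if $s_\gamma^{\Lambda_n}<(\gamma_n-\gamma)/(\beta_n\gamma_n+\delta_n+2\epsilon_n)$ then at $s=s_\gamma^{\Lambda_n}$ we would have $\gap(H_{\Lambda_n}(s_\gamma^{\Lambda_n}))>\gamma$, contradicting $\gap(H_{\Lambda_n}(s_\gamma^{\Lambda_n}))=\gamma$ (the latter from $s_\gamma^{\Lambda_n}<1$, which we assume WLOG). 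Hence
\be
s_\gamma^{\Lambda_n} \ge \frac{\gamma_n-\gamma}{\beta_n\gamma_n+\delta_n+2\epsilon_n} \ge \frac{\gamma_0-\gamma}{\beta\gamma_n+\delta+2\epsilon}.
\ee
The main obstacle is precisely the book-keeping that makes the final quotient $n$-independent: $\gamma_n$ appears in the denominator and is a priori only bounded below. I would handle this by observing that $\gamma_n = \gap(H_{\Lambda_n}) \le \|H_{\Lambda_n(x,R)}\|$-type quantities are in fact uniformly bounded above (the initial interaction is uniformly bounded and finite-range, so $\gamma_n\le \Gamma_{\max}<\infty$ for some $n$-independent $\Gamma_{\max}$ — e.g. twice the operator norm of any single anchored term $h_x$ times a fixed combinatorial factor, or even more crudely $\gamma_n \le \gamma_n(R)\le \|h\|\cdot(\text{const})$). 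Substituting this bound gives $s_\gamma^{\Lambda_n}\ge (\gamma_0-\gamma)/(\beta\Gamma_{\max}+\delta+2\epsilon)=:s_\gamma>0$, a strictly positive lower bound independent of $n$, so $s_\gamma=\inf_n s_\gamma^{\Lambda_n}>0$ for every $0<\gamma<\gamma_0$, which is exactly stability of the spectral gap. The only remaining routine point is checking that all hypotheses of Theorems~\ref{thm:Step1}, \ref{thm:step2-(i)+(ii)}, \ref{thm:rel_bound_cond}, \ref{thm:Step3General}, and \ref{thm:Pert_Est} are met in each $\Lambda_n$ with constants drawn from Definition~\ref{def:pert_model} and Assumption~\ref{ass:uni_pert_model}, which is immediate from how those definitions were tailored.
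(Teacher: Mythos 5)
Your overall strategy matches the paper's: apply the decomposition from Theorems~\ref{thm:Step1}, \ref{thm:step2-(i)+(ii)}, \ref{thm:rel_bound_cond}, take the constants $\delta,\epsilon,\beta$ volume-uniform via Assumption~\ref{ass:uni_pert_model}, and feed the result into Theorem~\ref{thm:Pert_Est}. The bootstrapping observation about the domain of validity $s\le s_\gamma^{\Lambda_n}$ is correct and useful. But the last step of the book-keeping has a gap.

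Because you invoke Theorem~\ref{thm:Pert_Est} with the volume-dependent $\gamma_n=\gap(H_{\Lambda_n})$ playing the role of the gap parameter, you are left with $\gamma_n$ in both numerator and denominator of your lower bound for $s_\gamma^{\Lambda_n}$; after replacing the numerator by $\gamma_0-\gamma$, the remaining $\gamma_n$ in the denominator forces you to assert an $n$-independent \emph{upper} bound $\Gamma_{\max}\ge\gamma_n$. This is not established by the hypotheses of a uniform perturbation model. The claim $\gamma_n\le\gamma_n(R)$ does not follow from frustration-freeness: a first excited state of $H_{\Lambda_n(x,R)}$ need not extend to a state of $\cH_{\Lambda_n}$ of comparably low $H_{\Lambda_n}$-energy, because the terms of $H_{\Lambda_n}-H_{\Lambda_n(x,R)}$ that straddle $\Lambda_n(x,R)$ contribute positively and cannot simply be switched off. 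The alternative bound by $\|h_x\|$ times a combinatorial factor presupposes that some fixed local observable produces an excited component of uniformly bounded-below norm, which neither Definition~\ref{def:pert_model} nor Assumption~\ref{ass:uni_pert_model} provides. So $\Gamma_{\max}$ is asserted, not proved.

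The fix is what the paper actually does: apply Theorem~\ref{thm:Pert_Est} with $\gamma_0$ in place of $\gamma_n$ from the start. This is legitimate because $\gamma_0\le\gamma_n$ already gives $(0,\gamma_0)\cap\spec(H_{\Lambda_n})=\emptyset$, and it yields
\[
\gap(H_{\Lambda_n}(s))\ge\gamma_0-s(\gamma_0\beta+\delta+2\epsilon)
\]
with no $n$-dependence at all, hence $s_\gamma^{\Lambda_n}\ge(\gamma_0-\gamma)/(\gamma_0\beta+\delta+2\epsilon)$ directly. Equivalently, even keeping your $\gamma_n$, one observes that $x\mapsto(x-\gamma)/(\beta x+\delta+2\epsilon)$ is \emph{increasing} in $x$ (its derivative, after clearing the square in the denominator, is $\beta\gamma+\delta+2\epsilon>0$), so your bound $s_\gamma^{\Lambda_n}\ge(\gamma_n-\gamma)/(\beta\gamma_n+\delta+2\epsilon)$ together with $\gamma_n\ge\gamma_0$ already gives the desired uniform lower bound; the mistake was replacing only the numerator by $\gamma_0-\gamma$ and then treating the leftover $\gamma_n$ in the denominator as an adversary.
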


\begin{proof}
	Consider any uniform perturbation model, i.e. a model as in Definition~\ref{def:pert_model} that satisfies Assumption~\ref{ass:uni_pert_model}, and
	let $0< \gamma < \gamma_0$, where $\gamma_0$ is uniform ground state gap from (\ref{g_infty}). 
	Gathering our results, we have shown that for any $n \geq 1$ and
	all $0 \leq s \leq s_{\gamma}^{\Lambda_n}$ the decomposition:
	\begin{equation}
	\alpha_s^{\Lambda_n}(H_{\Lambda_n}(s)) = H_{\Lambda_n} + \Vee_n^{(2)}(s) + \Delta_n(s) + E_n(s) + C_n(s) \idty
	\end{equation}
	holds; we refer to the beginning of Section~\ref{sec:LTQO2} for a review of the relevant notation. Moreover,
	we have checked that the properties listed in Claim~\ref{clm:decompositionH} are satisfied with the parameters $( \delta_{\Lambda}, \epsilon_{\Lambda}, \beta_{\Lambda})$
	replaced by $( \delta_n, \epsilon_n, \beta_n)$. In fact, by estimating with the corresponding supremums, for these uniform perturbation
	models, the same parameters may be replaced by $( \delta, \epsilon, \beta)$. Similar to the discussion after Claim~\ref{clm:decompositionH}, a finite volume
	application of Theorem~\ref{thm:Pert_Est} yields that for all $0 \leq s < \min\{\beta^{-1}, s_{\gamma}^{\Lambda_n} \}$, 
	\begin{equation}
	\Sigma_1^{\Lambda_n}(s) \subseteq [C_n(s) - s( \delta + \epsilon), C_n(s) + s( \delta + \epsilon)] \quad \mbox{and} \quad
	\Sigma_2^{\Lambda_n}(s) \subseteq [C_n(s) + ( 1 - s \beta) \gamma_0 - s \epsilon , \infty).
	\end{equation}
	Note that in each application of Theorem~\ref{thm:Pert_Est} above we use $\gamma_0$ for $\gamma$ in the statement of the theorem. This demonstrates that 
	\be\label{split_diamter}
	{\rm diam}( \Sigma_1^{\Lambda_n}(s)) \leq 2 s ( \delta + \epsilon)
	\ee and moreover, 
	\begin{equation} \label{f_uni_gap_est}
	{\rm gap}(H_{\Lambda_n}(s)) = {\rm dist} \left( \Sigma_1^{\Lambda_n}(s), \Sigma_2^{\Lambda_n}(s) \right) \geq \gamma_0 -
	s( \gamma_0 \beta + \delta + 2 \epsilon) \, .
	\end{equation}
	{F}rom (\ref{f_uni_gap_est}), it is clear that ${\rm gap}(H_{\Lambda_n}(s)) \geq \gamma$ holds whenever $s$ is small enough so that
	\begin{equation} 
	\gamma_0 - s( \gamma_0 \beta + \delta + 2 \epsilon) \geq \gamma \quad \Longleftrightarrow \quad s\leq \frac{\gamma_0 - \gamma}{ \gamma_0 \beta + \delta + 2 \epsilon} \, .
	\end{equation}
	In this case, 
	\begin{equation}
	s_\gamma=\inf_{n \geq 1} s_{\gamma}^{\Lambda_n} \geq \frac{\gamma_0 - \gamma}{ \gamma_0 \beta + \delta + 2 \epsilon} > 0, 
	\end{equation}
	and thus the model is stable.
\end{proof}

For many applications, one is primarily interested in establishing stability for large finite volumes.
In such cases, it suffices to replace the supremums considered in Assumption~\ref{ass:uni_pert_model} with 
limit superiors, and the corresponding gap estimates likely improve. 

Another common situation concerns uniform perturbation models for which the quantities $\delta_n$ and $\epsilon_n$,
see (\ref{def:delta_n}) and (\ref{def:epsilon_n}), become vanishingly small, i.e. 
\begin{equation} \label{lim=0}
\lim_{n \to \infty} ( \delta_n + \epsilon_n) = 0 \, .
\end{equation} 
In this situation, for any $0\leq s \leq s_\gamma$, the diameter of the ground state splitting $\diam(\Sigma_1^{\Lambda_n}(s))$, see \eqref{split_diamter}, tends to zero as $n\to\infty$. It is for this case we will be able to show spectral gap stability in the thermodynamic limit. This is the topic of Section~\ref{sec:automorphic-equivalence}, see specifically Corollary~\ref{cor:GNSgap}. For convenience of later reference, we state this as a corollary.

\begin{cor} \label{cor:zerod+e} 
	Consider a uniform perturbation model for which (\ref{lim=0}) holds. In this case, for each $0< \gamma < \gamma_0$, 
	there is $s_{\gamma} >0$ for which we have $\gamma_n'(s)$ and $ \epsilon_n'(s)$ such that
	\begin{equation}
	{\rm spec}\left(H_{\Lambda_n}(s) - \gsE_{\Lambda_n}(s) \idty \right) \subseteq [0, \epsilon_n'(s)] \cup [\gamma_n'(s) + \epsilon_n'(s), \infty) \quad 
	\mbox{for all } 0 \leq s \leq s_{\gamma} \, ,
	\end{equation}
	where $\gsE_{\Lambda_n}(s) = {\rm min \, spec} \left(H_{\Lambda_n}(s) \right)$ and
	\begin{equation}
	\liminf_{n \to \infty} \gamma_n'(s) \geq \gamma \quad \mbox{while} \quad \lim_{n \to \infty} \epsilon_n'(s) = 0 \, .
	\end{equation}
\end{cor}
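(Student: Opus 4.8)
The plan is to extract Corollary~\ref{cor:zerod+e} directly from the proof of Theorem~\ref{thm:uni_stab} by re-reading its gap estimates in a form that is centered at the perturbed ground state energy $\gsE_{\Lambda_n}(s) = \min\spec(H_{\Lambda_n}(s))$ rather than at the constant $C_n(s)$. First I would fix a uniform perturbation model satisfying (\ref{lim=0}) and a value $0<\gamma<\gamma_0$, and set $s_\gamma = (\gamma_0-\gamma)/(\gamma_0\beta+\delta+2\epsilon)$, exactly as in the proof of Theorem~\ref{thm:uni_stab}. For each $n\geq 1$ and each $0\le s\le s_\gamma$, that proof already gives the two-sided localization
\begin{equation*}
\Sigma_1^{\Lambda_n}(s) \subseteq [C_n(s) - s(\delta+\epsilon),\, C_n(s) + s(\delta+\epsilon)], \qquad
\Sigma_2^{\Lambda_n}(s) \subseteq [C_n(s) + (1-s\beta)\gamma_0 - s\epsilon,\, \infty),
\end{equation*}
which in particular shows $\gsE_{\Lambda_n}(s) = \min\Sigma_1^{\Lambda_n}(s) \in [C_n(s)-s(\delta+\epsilon), C_n(s)+s(\delta+\epsilon)]$, so $|C_n(s) - \gsE_{\Lambda_n}(s)| \le s(\delta+\epsilon) \le s_\gamma(\delta_n+\epsilon_n) \cdot$(harmless rescaling); more precisely $|C_n(s)-\gsE_{\Lambda_n}(s)|\le s(\delta_n+\epsilon_n)$, since the same decomposition with $(\delta,\epsilon,\beta)$ replaced by the $n$-dependent $(\delta_n,\epsilon_n,\beta_n)$ is valid for that fixed $n$.

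Next I would translate the spectrum. Shifting by $\gsE_{\Lambda_n}(s)$, the set $\spec(H_{\Lambda_n}(s)) - \gsE_{\Lambda_n}(s)\idty$ splits as $\Sigma_1' \cup \Sigma_2'$ where $\Sigma_1' = \Sigma_1^{\Lambda_n}(s) - \gsE_{\Lambda_n}(s) \subseteq [0,\, \diam\Sigma_1^{\Lambda_n}(s)]$ by definition of the ground state energy, and $\Sigma_2' \subseteq [\,(1-s\beta_n)\gamma_n - \text{(gap to }\Sigma_1'\text{)}, \infty)$. Concretely, using (\ref{split_diamter}) with the $n$-dependent constants, $\diam\Sigma_1^{\Lambda_n}(s) \le 2s(\delta_n+\epsilon_n)$, and combining with (\ref{f_uni_gap_est}) (again in its $n$-dependent form, where $\gamma_n := \mathrm{gap}(H_{\Lambda_n}) \ge \gamma_0$), $\mathrm{dist}(\Sigma_1^{\Lambda_n}(s),\Sigma_2^{\Lambda_n}(s)) \ge \gamma_n - s(\gamma_n\beta_n+\delta_n+2\epsilon_n)$. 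Therefore I can define
\begin{equation*}
\epsilon_n'(s) = 2s(\delta_n + \epsilon_n), \qquad \gamma_n'(s) = \gamma_n - s(\gamma_n\beta_n + \delta_n + 2\epsilon_n) - \epsilon_n'(s),
\end{equation*}
so that $\Sigma_1' \subseteq [0,\epsilon_n'(s)]$ and $\Sigma_2' \subseteq [\gamma_n'(s) + \epsilon_n'(s), \infty)$, which is the asserted inclusion. (One should double-check signs so that $\gamma_n'(s) + \epsilon_n'(s) = \min\Sigma_2' - \gsE_{\Lambda_n}(s)$ lower-bounds the bottom of the translated upper band; the bookkeeping is routine once one notes $\min\Sigma_2^{\Lambda_n}(s) - \max\Sigma_1^{\Lambda_n}(s) = \mathrm{gap}$ and $\max\Sigma_1^{\Lambda_n}(s) - \gsE_{\Lambda_n}(s) = \diam\Sigma_1^{\Lambda_n}(s)$.)

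Finally I would take $n\to\infty$. Since $\sup_n\beta_n = \beta < \infty$ and $\gamma_n \le \gamma_0' := $ (any uniform upper bound, or simply keep $\gamma_n$; it is bounded because the interactions are uniformly bounded and finite-range), and since by hypothesis (\ref{lim=0}) we have $\delta_n + \epsilon_n \to 0$, it follows at once that $\epsilon_n'(s) = 2s(\delta_n+\epsilon_n) \to 0$ for each fixed $s$, while $\liminf_{n\to\infty}\gamma_n'(s) \ge \liminf_n \gamma_n - s\,(\sup_n\gamma_n)\beta - 0 = \gamma_0 - s\gamma_0\beta \ge \gamma_0 - s_\gamma\gamma_0\beta \ge \gamma$, using $\liminf_n\gamma_n\ge\gamma_0$ and the definition of $s_\gamma$. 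The only mild obstacle I anticipate is the careful re-centering: making sure that replacing $C_n(s)$ by $\gsE_{\Lambda_n}(s)$ only moves band edges by at most $\diam\Sigma_1^{\Lambda_n}(s) \le 2s(\delta_n+\epsilon_n)$ in each direction, and then absorbing that shift consistently into the definitions of $\epsilon_n'(s)$ and $\gamma_n'(s)$ so that both inclusions hold simultaneously. This is purely a matter of organizing the three inequalities already produced inside the proof of Theorem~\ref{thm:uni_stab}; no new analysis is required.
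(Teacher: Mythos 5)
Your proposal is correct and follows essentially the same route the paper intends---the corollary carries no separate written proof because it is meant as a reorganization of the inclusions produced inside the proof of Theorem~\ref{thm:uni_stab}, centered at $\gsE_{\Lambda_n}(s)$ instead of $C_n(s)$, with the uniform constants replaced by the $n$-dependent $(\delta_n,\epsilon_n,\beta_n)$ so that (\ref{lim=0}) can be applied. Your definitions $\epsilon_n'(s)=2s(\delta_n+\epsilon_n)$ and the shift bookkeeping are exactly right.

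One small slip worth flagging: in the final passage to the limit you write
\begin{equation*}
\liminf_{n\to\infty}\gamma_n'(s) \;\geq\; \liminf_n\gamma_n \;-\; s\,(\sup_n\gamma_n)\,\beta \;=\; \gamma_0 - s\gamma_0\beta\, ,
\end{equation*}
which substitutes $\sup_n\gamma_n=\gamma_0$; but $\gamma_0$ is $\inf_n\gamma_n$, and there is no uniform \emph{upper} bound on $\gamma_n$ built into the hypotheses. The intended estimate survives because for $0\leq s\leq s_\gamma<\beta^{-1}$ one has $0<1-s\beta\leq 1-s\beta_n$, so
\begin{equation*}
\gamma_n(1-s\beta_n) \;\geq\; \gamma_0(1-s\beta_n) \;\geq\; \gamma_0(1-s\beta),
\end{equation*}
and hence $\liminf_n\gamma_n'(s)\geq\gamma_0(1-s\beta)\geq\gamma$ using (\ref{lim=0}). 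Alternatively, and slightly more cleanly (this is what the proof of Theorem~\ref{thm:uni_stab} does explicitly), one can avoid $\gamma_n$ altogether by invoking Theorem~\ref{thm:Pert_Est} with $\gamma=\gamma_0$ (allowed since $(0,\gamma_0)\cap\spec(H_{\Lambda_n})=\emptyset$), which makes $\gamma_n'(s)=\gamma_0(1-s\beta_n)-s(3\delta_n+4\epsilon_n)$ and renders the limit step immediate. Either way, your decomposition and conclusion are correct.
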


The previous results also provide estimates on the convergence rates for $\gamma_n'(s)$ and $ \epsilon_n'(s)$. 

\subsection{Applications}\label{sec:uniform_sequences_geometric-bc}

In Definition~\ref{def:pert_model} and Assumption~\ref{ass:uni_pert_model} we formulated general conditions
under which we proved stability of the spectral gap uniform in the volume. Most items in these conditions are
part of the standard setting and straightforward to verify: a suitable sequence of finite volumes, a uniformly gapped 
sequence of initial Hamiltonians, and a class of perturbations that are sufficiently short range.
Following the philosophy of Bravyi-Hastings-Michalakis, we also introduced a decaying function
$\Omega$ to give a quantitive expression of the LTQO property through our definition of the 
indistinguishability radius \eq{LTQO_length}. Physically, this property expresses the local indistinguishability of the different
ground states of the finite-volume Hamiltonians. Verifying a sufficient quantitative version of the LTQO condition
is, however, less straightforward. In particular, it is clear from the discussion preceding Assumption~\ref{ass:uni_pert_model}
that doing this involves a combination of a number of characteristics of the models all at once.
In this section, we discuss what this usually comes down to in practice in both settings of boundary conditions expressed by 
boundary terms, and geometric boundary conditions.
%

\subsubsection{Verifying Assumptions} \label{sec:veri_ass}
In this section, we briefly discuss Assumption~\ref{ass:uni_pert_model} and identify certain situations under which it clearly holds. 
Throughout, we assume $\Gamma$ is infinite. 
First, observe that if Assumption~\ref{ass:uni_pert_model} (ii) holds, then the sequences $\{ K_n \}_{n \geq 1}$ and $\{ L_n \}_{n \geq 1}$
must satisfy $K_n \to \infty$ and $L_n \to \infty$. In fact, their sum, $K_n+L_n$, cannot be bounded because 
the function $G_n^{(2)}$, see (\ref{dec_fun_g2}), is eventually equal to a non-zero constant whereas the truncated moment in (\ref{def:beta_n}) 
grows as $\ell_n = \lceil {\rm diam}_n( \Lambda_n) \rceil \to \infty$. Therefore, from the structure of $G_n^{(2)}$, it is clear that both these
sequences must be unbounded. For this reason, Assumption~\ref{ass:uni_pert_model} implies that any point $x$ which is
eventually in all LTQO regions, i.e. $x \in \bigcap_{n=m}^{\infty} \Lambda_n^p$, for some $m$, must satisfy
$\lim_{n\to\infty} r_x^{\Omega}( \Lambda_n) = \infty$. 

One concludes that in the general formulation we have given, the approach is not particularly well-suited when $\Gamma$ has a boundary,
as this is where LTQO can often fail. Note, however, that in one dimension the method can be adapted to yield useful results \cite{Moon:2018}. A similar 
modification could also handle other cases of systems with a finite boundary.


One can check that Assumption~\ref{ass:uni_pert_model} (i) holds whenever
\begin{equation} \label{check_ass_1}
\sup_{n \geq 1} \ell_n^{\nu} \left[ C(K_n, L_n) + G(K_n) \right] < \infty \, .
\end{equation}
In fact, since $\nu$-regularity guarantees that $| \Lambda_n| \leq \kappa \ell_n^{\nu}$, one has that
\begin{equation}
\delta_n \leq \kappa \ell_n^{\nu} C(K_n, L_n) \quad \mbox{and} \quad \epsilon_n \leq 2 \kappa \| h \| \ell_n^{\nu}G(K_n) \, .
\end{equation}
Similarly, if
\begin{equation}
\limsup_{n \to \infty } \ell_n^{\nu} \left[ C(K_n, L_n) + G(K_n) \right] =0 \, ,
\end{equation}
then (\ref{lim=0}) holds. 

Under the additional assumption that the local gaps associated to the unperturbed Hamiltonians decay no faster than a power law,
we can formulate a similar statement about Assumption~\ref{ass:uni_pert_model} (ii). More precisely, recall that for each $n \geq 1$
and all $R \leq m \leq \ell_n$, we defined finite-volume, local gaps $\gamma_n(m)$ in (\ref{loc_gap}). We will say that these
local gaps decay no faster than a power law if there are numbers $C>0$ and $k \geq 0$ for which, given any $n \geq 1$,  
\begin{equation} \label{power_law_local_gap}
\gamma_n(m) \geq \frac{C}{m^k} \quad \mbox{for all } R \leq m \leq \ell_n \, . 
\end{equation}
The situation $k=0$ corresponds to the case of uniformly bounded local gaps, and can be checked in some applications.  Given (\ref{power_law_local_gap}),
one derives from (\ref{def:beta_n}) the bound
\begin{equation}
\beta_n \leq \frac{2c}{C} \sum_{m=R}^{\ell_n} m^{ \zeta + k} G_n^{(2)}(m) 
\end{equation}
From this we conclude that when the local gaps decay no faster than a power law, as in (\ref{power_law_local_gap}), 
Assumption~\ref{ass:uni_pert_model} (ii) is satisfied when
\begin{equation} \label{check_ass_2}
\sum_{m=R}^{\infty} m^{\zeta +k} \left( G^{(1)}(f(m)) +D( \lceil f(m) \rceil -1, m-1) \right) < \infty \quad \mbox{and} \quad 
\sup_{n \geq 1} \ell_n^{ \zeta + k +1}G_n^{(2)}(K_n+L_n) < \infty \, .
\end{equation}

Since we only considered perturbation models satisfying Definition~\ref{def:pert_model}(v), the
functions $G$ and $G^{(1)}$ which frequently enter above are both of decay class $(\eta, \frac{\gamma}{2v}, \theta)$, see 
Definition~\ref{def:dec-class}. In this case, all moments of these functions are necessarily finite. As a result, 
verifying conditions (\ref{check_ass_1}) and (\ref{check_ass_2}) above primarily entails checking that
one has adequate decay of $\Omega$, coupled with appropriate growth of the sequences 
$\{ K_n \}_{n \geq 1}$ and $\{L_n \}_{n \geq 1}$.

Finally, let us remark that if one assumes at least power law bounds for the decay of $\Omega$ and the growth of
$\{ K_n \}_{n \geq 1}$ and $\{ L_n \}_{n \geq 1}$, a sufficient condition for stability can be given in terms of an 
inequality for the exponents. Specifically, suppose there are $\alpha_1, \alpha_2 \in (0,1)$
such that the sequences $\{ K_n \}_{n \geq 1}$ and $\{ L_n \}_{n \geq 1}$ satisfy
\begin{equation} \label{def:L_nK_n}
K_n \geq \left\lceil\ell_n^{\alpha_1}\right\rceil \quad \mbox{and} \quad L_n \geq  \left\lceil\ell_n^{\alpha_2}\right\rceil \quad \mbox{for each } n \geq 1, 
\end{equation}
where $\ell_n = \lceil {\rm diam}_n( \Lambda_n) \rceil$. 
Then, Assumption~\ref{ass:uni_pert_model} holds if $\Omega(r)$ is $O(r^{-q})$ with $q$ sufficiently large. Assumption~\ref{ass:uni_pert_model} (i) holds if $q \geq \alpha_2^{-1} \nu$.
Recalling (\ref{def:ckl}), we find that (\ref{check_ass_1}) holds whenever
\begin{equation} \label{make_go_0}
\ell_n^{\nu} \Omega(\ell_n^{\alpha_2}) < \infty \, .
\end{equation}
Here we have used that both $G$ and $G^{(1)}$ are in decay class $(\eta, \frac{\gamma}{2v}, \theta)$. Also one sees that (\ref{lim=0}) holds if $q > \alpha_2^{-1} \nu$. 
Furthermore, it suffices to take $f(x) = ax$ for some $a\in (0,1)$ and Assumption~\ref{ass:uni_pert_model} (ii) holds if
\begin{equation} \label{large_q}
q \geq \underline{\alpha}^{-1} \left( 2 (1+ \zeta +k) + \overline{\alpha} \nu \right)
\end{equation}
where we have set  $\underline{\alpha} = \min(\alpha_1, \alpha_2 )$ and $\overline{\alpha} = \max(\alpha_1, \alpha_2 )$. One checks
that (\ref{large_q}) implies Assumption~\ref{ass:uni_pert_model} (i), and so for any such value of $q$, Assumption~\ref{ass:uni_pert_model}  
holds with  (\ref{lim=0}). To see that (\ref{large_q}) is sufficient, note that (\ref{check_ass_2}) holds
whenever both
\begin{equation} \label{omeg_int_bd}
\sum_{m=R}^{\infty} m^{\zeta +k + \nu/2} \sqrt{ \Omega((1-a)m -1)} < \infty 
\end{equation}
and
\begin{equation} \label{omeg_sup_bd}
\sup_{n \geq 1} \ell_n^{\zeta + k +1} \left( \Omega(\ell_n^{\alpha_2}) + (\ell_n')^{\nu/2} \sqrt{\Omega( (1-a)\ell_n' -1) } \right)< \infty
\end{equation}
hold. Here we have set $\ell_n' = \ell_n^{\alpha_1} + \ell_n^{\alpha_2}$ and used (\ref{def:Dmn}). These claims are readily checked.

\subsubsection{The case of geometric boundary conditions} \label{subsec:PBC}

In \cite{bravyi:2010,bravyi:2011} and \cite{michalakis:2013} only sequences of finite systems defined on boxes in $\Ir^\nu$ with 
periodic boundary conditions are considered. Periodic boundary conditions are a special case of what we have called geometric 
boundary conditions induced by embedding finite subsets of lattices in a compact manifold (without a boundary).
Another example are twisted embeddings on a torus, which is a natural setting to address Lieb-Schultz-Mattis type questions\cite{yao:2019}. 
All these situations are described by an increasing and absorbing sequence of finite $\Lambda_n\subset\Gamma$ that are equipped with a metric 
$d_n$ that pointwise converges to the metric $d$ on $\Gamma$.	

The detailed estimates of Section \ref{sec:LTQO2} are more than is required to handle this
situation and allow us to identify specific classes of interactions for which we can prove simpler and more useful bounds. 

We consider here a perturbation model as in Definition~\ref{def:pert_model}. For typical examples such as, e.g. the Toric Code Model, there is a natural choice of decay function $\Omega$ for which the 
corresponding indistinguishability radius $r_x^\Omega(\Lambda_n)$, see  \eq{LTQO_length},
is proportional to the size of the smallest topologically non-trivial closed path in $\Lambda_n$ that contains $x$.
In such case, the following assumption is typically known, or readily checked.

\begin{assumption} \label{ind_rad_grow} For a given perturbation model, there is an increasing sequence $\{ r_n \}_{n \geq 1}$ of
	positive numbers with 
	\be\label{LTQO_geometric-bc}
	r^\Omega_x(\Lambda_n) \geq r_n \quad  \mbox{ for all } n\geq 1 \mbox{ and } x\in\Lambda_n.
	\ee
\end{assumption}
Due to assumptions we will make later we will also require $r_n\to\infty$. Given Assumption~\ref{ind_rad_grow}, our arguments simplify, and we here briefly describe these changes.

Consider a perturbation model such that Assumption~\ref{ind_rad_grow} holds. In such a case, it is convenient to take $K_n = L_n = r_n/2$ in Definition~\ref{def:pert_model}. With this choice, both the perturbation region, see (\ref{def:L_n^p}), and the effective perturbation region, see (\ref{def:L_n^pK_n}),
are extensive; in fact, $\Lambda_n^p = \Lambda_n^p(r_n/2) = \Lambda_n$ for all $n \geq 1$. As a result, the quantity $\epsilon_n$, see (\ref{def:epsilon_n}), satisfies $\epsilon_n=0$. 
%
%
%
%
%
The analogue of a \emph{uniform perturbation model}, i.e. Assumption~\ref{ass:uni_pert_model}, in this case is as follows.
\begin{assumption} \label{ass:new_uni} We say that a perturbation model satisfying Assumption~\ref{ind_rad_grow} is uniform if:
	\begin{enumerate}
		\item One has that $\delta = \sup_{n \geq 1} \delta_n < \infty$.
		\item There is a non-negative function $f$ with $f(0) = 0$ and $0< f'(t) < 1$ for all $t \geq 0$ such that
		\begin{equation}
		\beta = \sup_{n \geq 1} \beta_n < \infty \quad \mbox{where} \quad \beta_n = \sum_{m=R}^{\ell_n}\frac{m^\zeta}{\gamma_n(m)}G_n^{(2)}(m) \, .
		\end{equation}
	\end{enumerate}
\end{assumption}

Arguing as before, the following is clear.

\begin{thm}\label{thm:unif_stab_geometric-bc}
	Every perturbation model satisfying Assumption~\ref{ass:new_uni} has a stable spectral gap. 
\end{thm}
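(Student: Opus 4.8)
The plan is to follow exactly the argument already carried out for Theorem~\ref{thm:uni_stab}, making only the simplifications afforded by Assumption~\ref{ind_rad_grow} and the resulting choice $K_n = L_n = r_n/2$. First I would record that, because $r_x^\Omega(\Lambda_n)\geq r_n$ for every $x\in\Lambda_n$, the perturbation region defined in \eqref{def:L_n^p} with $K_n=L_n=r_n/2$ is all of $\Lambda_n$: indeed $K_n+L_n = r_n \leq r_y^\Omega(\Lambda_n)$ for every $y$, so $\Lambda_n^p = \Lambda_n$, and hence also $\Lambda_n^p(K_n)=\Lambda_n$. Consequently the remainder term $E_n(s)$ from \eqref{def:E} is empty, so $\epsilon_n = 0$ for all $n$, and the decomposition \eqref{step2-dec} reduces to $\alpha_s^{\Lambda_n}(H_{\Lambda_n}(s)) = H_{\Lambda_n} + \Vee_n^{(2)}(s) + \Delta_n(s) + C_n(s)\idty$.

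Next I would invoke the results already proved in the finite volume. For each $n$, the hypotheses of Assumption~\ref{ass:basic} hold for $H_{\Lambda_n}$ and $\Vee_{\Lambda_n^p}$ (with the uniform Lieb--Robinson velocity $v$ from Remark~\ref{rem:uni_vel_bd}), so Theorem~\ref{thm:step2-(i)+(ii)} gives $\|\Delta_n(s)\|\leq s\delta_n$ with $\delta_n$ as in \eqref{def:delta_n}, and $C_n(s)\to 0$ as $s\to 0$. Fixing the function $f$ supplied by Assumption~\ref{ass:new_uni}(2), Theorem~\ref{thm:rel_bound_cond} expresses $\Vee_n^{(2)}(s)$ as an anchored interaction whose terms annihilate the local ground state projections and obey $\|\Phi^{(2)}(x,m,s)\|\leq 2sG_n^{(2)}(m)$ with $G_n^{(2)}$ as in \eqref{dec_fun_g2}. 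Then Theorem~\ref{thm:Step3General}, applied with the $(c,\zeta)$-polynomial-growth separating partitions $\mathcal T^{(n)}$ and the local gaps $\gamma_n(m)$, yields the relative form bound $|\langle\psi,\Vee_n^{(2)}(s)\psi\rangle|\leq s\beta_n\langle\psi,H_{\Lambda_n}\psi\rangle$ for all $\psi\in\cH_{\Lambda_n}$, with $\beta_n = \sum_{m=R}^{\ell_n} m^\zeta G_n^{(2)}(m)/\gamma_n(m)$ exactly as in Assumption~\ref{ass:new_uni}. (Note the factor $2c$ versus $1$ is absorbed harmlessly; the statement of Assumption~\ref{ass:new_uni} is written without the $2c$, so one either keeps it or notes the constants are bounded uniformly.)

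Now I would assemble the spectral estimate. By Assumption~\ref{ass:new_uni} the quantities $\delta = \sup_n\delta_n$ and $\beta = \sup_n\beta_n$ are finite; set $\epsilon = 0$. For $0<\gamma<\gamma_0$, apply Theorem~\ref{thm:Pert_Est} in each finite volume to $\tilde H_{\Lambda_n}(s) = \alpha_s^{\Lambda_n}(H_{\Lambda_n}(s))$ with $A(s)=\Delta_n(s)$, $V(s)=\Vee_n^{(2)}(s)$, $C(s)=C_n(s)$, and constants $\alpha=\alpha'=\delta$, $\alpha''=0$, $\beta$ (using $\gamma_0$ for the gap parameter, since $\gamma_n \geq \gamma_0$). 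Because the spectral flow is a unitary conjugation, $\spec(\tilde H_{\Lambda_n}(s)) = \spec(H_{\Lambda_n}(s))$, and one obtains $\Sigma_1^{\Lambda_n}(s)\subseteq[C_n(s)-s\delta, C_n(s)+s\delta]$ and $\Sigma_2^{\Lambda_n}(s)\subseteq[C_n(s)+(1-s\beta)\gamma_0,\infty)$ for all $0\leq s<\min\{\beta^{-1}, s_\gamma^{\Lambda_n}\}$; this is valid on $[0,s_\gamma^{\Lambda_n}]$ because that is the range on which $\alpha_s^{\Lambda_n}(P(s))=P(0)$ and hence the decomposition holds. Therefore $\gap(H_{\Lambda_n}(s)) = \mathrm{dist}(\Sigma_1^{\Lambda_n}(s),\Sigma_2^{\Lambda_n}(s)) \geq \gamma_0 - s(\gamma_0\beta + \delta)$. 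A standard continuity/bootstrap argument in $s$ (identical to that following Claim~\ref{clm:decompositionH}) shows this bound in fact holds for all $s$ with $\gamma_0 - s(\gamma_0\beta+\delta)\geq\gamma$, i.e. $s\leq(\gamma_0-\gamma)/(\gamma_0\beta+\delta)$; hence $s_\gamma = \inf_n s_\gamma^{\Lambda_n} \geq (\gamma_0-\gamma)/(\gamma_0\beta+\delta) > 0$, which is stability.

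\textbf{Main obstacle.} There is no genuinely new difficulty here: the theorem is a corollary of the machinery already in place, and the only real work is bookkeeping — verifying that the choice $K_n=L_n=r_n/2$ makes $\Lambda_n^p=\Lambda_n$ and that therefore $\epsilon_n=0$, and checking that the constants $\delta_n,\beta_n$ appearing in Assumption~\ref{ass:new_uni} are literally the ones produced by Theorems~\ref{thm:step2-(i)+(ii)}, \ref{thm:rel_bound_cond}, and \ref{thm:Step3General} when specialized to this setting (in particular that the decay function $G^{(1)}$ and velocity bound $v$ can be chosen uniformly in $n$, which is guaranteed by Definition~\ref{def:pert_model}(v) and Remark~\ref{rem:uni_vel_bd}). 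The one point requiring a sentence of care is the $s$-bootstrap: the decomposition \eqref{step2-dec} is only known to hold for $s\leq s_\gamma^{\Lambda_n}$, so one argues that if $s_\gamma^{\Lambda_n} < (\gamma_0-\gamma)/(\gamma_0\beta+\delta)$ then $\gap(H_{\Lambda_n}(s_\gamma^{\Lambda_n})) \geq \gamma_0 - s_\gamma^{\Lambda_n}(\gamma_0\beta+\delta) > \gamma$, contradicting the definition of $s_\gamma^{\Lambda_n}$ (which forces $\gap = \gamma$ there whenever $s_\gamma^{\Lambda_n}<1$).
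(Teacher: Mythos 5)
Your proposal is correct and follows exactly the route the paper intends — the paper's own ``proof'' of Theorem~\ref{thm:unif_stab_geometric-bc} is the single sentence ``Arguing as before, the following is clear,'' referring to the proof of Theorem~\ref{thm:uni_stab}, and you have simply written out that argument with the simplification $\Lambda_n^p = \Lambda_n^p(K_n) = \Lambda_n$ and $\epsilon_n = 0$ that $K_n=L_n=r_n/2$ affords. Your side remark about the missing factor $2c$ in the expression for $\beta_n$ in Assumption~\ref{ass:new_uni} relative to \eqref{def:beta_n} correctly identifies a harmless bookkeeping discrepancy in the paper, and your handling of the $s$-bootstrap is the right way to close the argument.
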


In fact, if such a model additionally satisfies $\lim_{n \to \infty} \delta_n =0$, then the analogue of 
Corollary~\ref{cor:zerod+e} holds as well.

%
%

\section{Infinite Systems and Automorphic Equivalence of Gapped Phases}\label{sec:automorphic-equivalence}

\subsection{Introduction}

In the previous sections we studied perturbations of systems defined on a finite set $\Lambda$ with one or more frustration-free 
ground states and a spectral gap. Theorem \ref{thm:uni_stab}, Corollary \ref{cor:zerod+e}, and Theorem \ref{thm:unif_stab_geometric-bc}
specify conditions under which we have a uniform positive lower bound for the spectral gap of a family of perturbed Hamiltonians defined on a sequence 
of finite volumes $\Lambda_n\uparrow\Gamma$ . We are now interested in applying these results to analyze the gap of the corresponding infinite model. 

The main goal of this section is to prove a lower bound for the spectral gap of the GNS Hamiltonians of the perturbed models in the thermodynamic limit. Such a bound would follow directly from strong resolvent convergence of the finite-volume perturbed Hamiltonians represented as operators acting on the GNS Hilbert space. Since the perturbations can spoil the frustration-free property of the Hamiltonians, though, it is not clear one can expect strong resolvent convergence in general. Therefore, in this section we develop a more direct approach to obtain bounds on the spectral gap in the thermodynamic limit, see Theorem \ref{thm:gap_ineq_TL}. In particular, we show that under the assumption of uniform LTQO, see Assumption~\ref{ind_rad_grow}, there is a unique gapped ground state in the thermodynamic limit.  We will also show that the perturbed models for which the stability result applies also have indistinguishable ground states. That is, LTQO is a stable property itself. The Toric Code model and, more generally, Kitaev's quantum double models, satisfy the conditions of this section. They and their perturbations have a translation invariant gapped pure ground state on $\Ir^2$. The one-dimensional AKLT model also satisfies the conditions of this section.

%
%

\subsection{Description of the infinite system}\label{sec:infinite_system}

We consider uniform perturbation models on an infinite set $\Gamma$, see Definition~\ref{def:pert_model} and Assumption~\ref{ass:uni_pert_model}. As a consequence, the spectral gap of the associated sequence of finite-volume perturbed Hamiltonians
\begin{equation}\label{pert_hams_TL}
H_{\Lambda_n}(s) = H_{\Lambda_n} + s \Vee_{\Lambda_n^p} \;\;\text{with}\;\; \Lambda_n\uparrow \Gamma
\end{equation}
is stable in the sense of Theorem~\ref{thm:uni_stab}, meaning that
\[
s_\gamma = \inf_{n\geq 1}s_\gamma^{\Lambda_n} >0 \;\; \text{for all} \;\; 0 < \gamma < \gamma_0
\]
where $s_{\gamma}^{\Lambda_n}$ is as in (\ref{def:s_lam_gam}) and $\gamma_0$ is the uniform lower-bound on the non-vanishing spectral gap above the ground state energy of the initial Hamiltonians. For each $n\geq 1$, $s_{\gamma}^{\Lambda_n}$ is a bounded non-increasing function of $\gamma$ and, hence, so is $s_\gamma$. Therefore, the following limit exists:
\be\label{s_gamma_inf} 
s_0 := \lim_{\gamma\to 0} s_{\gamma}.
\ee
By our definitions and assumptions, we can assume that $s_0\in(0,1]$, and for all $s\in [0,s_0)$ there exists $\gamma\in (0,\gamma_0)$ such that 
$s\in [0,s_\gamma)$, meaning that $\gap(H_{\Lambda_n}(s))\geq \gamma$ for all $n$. Said differently, for all $s\in [0,s_0)$, $\inf_n\gap(H_{\Lambda_n}(s)) >0$.

So far, we have not required that a {\em uniform perturbation model} converges in any sense as $n\to\infty$. We only imposed that for each $n$ the perturbation satisfied conditions that allowed us to prove that the gap above the ground state remains open uniformly in $n$. For a limiting perturbed infinite system to exist, we now add the assumption that the perturbing interactions $\Phi_n$ as described in Definition~\ref{def:pert_model}(v) converge locally in $F$-norm for the given $F$-function to an interaction $\Phi\in\cB_{F}$ on $\Gamma$. For static interactions which is the case we consider here, this notion
of convergence simply means that for all $\Lambda\in\cP_0(\Gamma)$,
\be\label{local_convergence}
\lim_{n\to\infty}\|(\Phi-\Phi_n)\restriction_{\Lambda} \|_F \to 0,
\ee
where $\Phi\restriction_\Lambda$ denotes the restriction of the interaction to $\Lambda$. This holds, for example, if the perturbations are eventually constant: there exists $\Phi\in\cB_{F}$ such that for all finite $X$ there is an $N$ so that $\Phi_n(X) = \Phi(X)$ for all $n\geq N$.

It was shown in \cite[Theorem 3.8]{nachtergaele:2019} that \eq{local_convergence} implies that the thermodynamic limit of the dynamics corresponding to $\Phi_n$ exists and equals the dynamics generated by $\Phi$. This implies
that there exists strongly continuous dynamics $\{\tau^s_t\}_{t\in\Rl}$ and $\{\alpha_s\}_{s\in\Rl}$ on $\cA_\Gamma$,  defined by
\bea
 \tau_t^s(A) = \lim_{n\to\infty} \tau^{s,\Lambda_n}_t (A), \quad \alpha_s(A) =  \lim_{n\to\infty} \alpha^{\Lambda_n}_s (A), \mbox{ for all }
 A\in\cA_\Gamma^{\rm loc}.
\eea
In the case of  $\alpha_s$ we left implicit the choice of the parameter $\xi>0$, which is kept fixed in this limit.
The convergence is uniform on any compact range of $t$ and $s$ and, as a consequence, the limit is strongly continuous in these parameters (see \cite{nachtergaele:2019} for proofs of these statements). It follows that $\{\tau_t^s\}_{t\in\Rl}$ is generated by a closed derivation 
$\delta_s$ for which $\cA_\Gamma^{\rm loc}$ is a core \cite{bratteli:1997}. Moreover, it is the limit of the finite-volume generators:
\be\label{delta_s}
\delta_s(A) = \lim_{n\to\infty}[H_{\Lambda_n}(s),A],\quad A\in\cA_{\Gamma}^{\rm loc}.
\ee

In order to express and study stability of the spectral gap in the thermodynamic limit, we will consider the GNS representation of an infinite-volume
ground state obtained as the thermodynamic limit of finite-volume ground states. As we will show, the set up considered in this section implies that such a
limiting state is pure and unique. In the next section we will discuss some important situations in which it is not unique.

Let $\tau:=\{\tau_t=e^{it\delta}\}_{t\in\Rl}$ be a strongly continuous dynamics on the $C^*$-algebra $\cA_\Gamma$, with a generator $\delta$ as in \eq{delta_s}, and let $(\pi_\omega,\cH_\omega, \Omega_\omega)$ denote the GNS representation of a $\tau$-invariant state $\omega$.
Then, by standard arguments (see, e.g.,  \cite{bratteli:1997} or \cite{naaijkens:2017}) the derivation is implemented by a self-adjoint operator, $H_\omega$, with dense domain $\dom H_\omega\subset \cH_\omega$ for which
\be
\pi_\omega(\delta(A)) = [H_\omega,\pi_\omega(A)] , \mbox{ for all } A\in\cA_{\Gamma}^{\rm loc}.
\ee
One has that $\cA_{\Gamma}^{\rm loc}$ is a core for $\delta$ (as a densely defined closed operator) and $\pi(\cA_{\Gamma}^{\rm loc})\Omega_\omega$ is a core for $H_\omega$. The spectrum of $H_\omega$ is then what we refer to as the spectrum of the infinite system. This is sometimes referred to as the {\em bulk spectrum}. Our main goal is to establish a spectral gap above the ground state of such a GNS Hamiltonian, $H_\omega$.

It is easy to see that $\omega$ is $\tau$-invariant if and only if  $\omega(\delta(A))=0$ for all $A\in\cA_{\Gamma}^{\rm loc}$. We recall that a state $\omega$ is called a \emph{ground state} for $\delta$ if 
\be\label{TL_gs}
\omega(A^*\delta(A)) \geq 0, \mbox{ for all } A\in\cA_{\Gamma}^{\rm loc}.
\ee
A simple argument shows that  $\omega(\delta(A))=0$ for all $A\in\cA_{\Gamma}^{\rm loc}$ if and only if $\omega(A^*\delta(A)) \in\bR$ for all $A\in\cA_{\Gamma}^{\rm loc}$. Hence, any ground state for $\delta$ is necessarily a $\tau$-invariant state. 

The GNS Hamiltonian for any ground state $\omega$ is non-negative and the cyclic vector $\Omega_\omega$ 
satisfies $H_\omega \Omega_\omega=0$, i.e. $\min\spec(H_\omega)=0$. As a consequence, if there is a $\gamma>0$ for which
\be\label{TL_gapped_gs}
\braket{\pi_\omega(A)\Omega_\omega}{H_\omega \, \pi_\omega(A)\Omega_\omega} = \omega(A^*\delta(A)) \geq \gamma \omega(A^*A) = \gamma\|\pi_\omega(A)\Omega_\omega\|^2
\ee
for all $A\in\cA_\Gamma^{\rm loc}$ such that $\omega(A)=0$, then the ground state of $H_\omega$ is unique and, moreover, 
\[\gap(H_\omega) :=\sup\{\delta\geq 0 : \spec(H_\omega)\cap (0,\delta) = \emptyset\} \geq \gamma.\] 
Thus, we say that $\omega$ is a \emph{unique gapped ground state} if \eqref{TL_gapped_gs} is satisfied 

We now return to the situation of interest: uniform perturbation models with perturbations that converge locally in $F$-norm. In Section~\ref{sec:TL_gap} we study the limiting infinite volume state for each $0\leq s <s_0$, and analyze the spectral gap of these states in Section~\ref{sec:infinite_gap}

\subsection{Stability of LTQO and the existence of a pure infinite volume state}\label{sec:TL_gap}

Recall that the regions $\Lambda_n^p$ for a perturbation model are defined using the indistinguishability radius, see Defintion~\ref{def:pert_model}(iv). Since the initial interactions are frustration-free, the indistinguishability radius implies that the ground state space of each unperturbed Hamiltonian satisfies the following estimate: for each $n \geq 1$,
$x \in \Lambda_n$, $0 \leq k \leq r_x^{\Omega}( \Lambda_n)$, and $A \in \mathcal{A}_{b_x^{\Lambda_n}(k)}$, 
\begin{equation} \label{LTQO_est_TL} 
\| P_{\Lambda_n}(0) A P_{\Lambda_n}(0) - \omega_0^{(n)}(A) P_{\Lambda_n}(0) \| \leq |b_x^{\Lambda_n}(k)| \| A \| \Omega(r_x^{\Omega}(\Lambda_n) -k)
\end{equation} 
where $P_{\Lambda_n}(0)$ is the ground state projection associated to $H_{\Lambda_n}$ and $\omega_0^{(n)}$ is the corresponding
ground state functional, see \eqref{eq:gs} below. As discussed in Section~\ref{sec:LTQO} and demonstrated in Sections~\ref{sec:LTQO2}-\ref{sec:uniform_sequences}, this 
LTQO property is crucial for stability of the gap. When studying the thermodynamic limit, one is often interested in the perturbation regions becoming extensive, i.e. $\Gamma^p = \Gamma$ where
\[
\Gamma^p = \left\{x\in \Gamma \mid \; \exists \; m\geq 1 \; \text{ s.t. } \; x\in \textstyle{\bigcap_{n\geq m}}\Lambda_n^p \, \right\}.
\]
As discussed in Section~\ref{sec:veri_ass}, when $\Gamma$ is infinite the conditions of a uniform perturbation model guarantee that $r_x^{\Omega}(\Lambda_n)\to \infty$ for any $x\in \Gamma^p$. This motivates us to consider uniform perturbation models that are indistinguishable everywhere in the following sense:

\begin{defn}\label{ass:bf} We say a perturbation model with decay function $\Omega$ as in Definition~\ref{def:pert_model}(iv) is \emph{everywhere indistinguishable} if for all $x \in \Gamma$,
	\begin{equation}\label{rinfinity}
	r_x^{\Omega}( \Lambda_n) \to \infty \quad \mbox{as} \quad n \to \infty.
	\end{equation}
\end{defn}
From the definition of indistinguishability radius it follows immediately that for all
$x,y\in\Lambda_n$, $r_x^{\Omega}( \Lambda_n) \geq r_y^{\Omega}( \Lambda_n) -d(x,y)$ and, hence, $|r_x^{\Omega}( \Lambda_n)- r_y^{\Omega}( \Lambda_n)|\leq d(x,y)$. Therefore, \eq{rinfinity} holds for all $x\in\Gamma$ if and only if it holds for some $x\in\Gamma$.

Clearly, any everywhere indistinguishable uniform perturbation model has an LTQO estimate as in (\ref{LTQO_est_TL}) which 
becomes vanishingly small in the thermodynamic limit for any $x\in\Gamma$. One of the main goals of this section is to show that for all $0\leq s< s_0$ the perturbed model has a similarly vanishing LTQO estimate and, moreover, the finite volume states
\be\label{eq:gs}
\omega^{(n)}_s(A) = \frac{\Tr P_{\Lambda_n}(s) A}{\Tr P_{\Lambda_n}(s)} \quad\text{for}\quad  A\in\cA_{\Lambda_n}
\ee
converge to a pure infinite volume state $\omega_s$ on $\cA_\Gamma$. Here, we recall that $P_{\Lambda_n}(s)$ is the spectral projection of $H_{\Lambda_n}(s)$ onto $\Sigma_1^{\Lambda_n}(s)$ as defined in \eqref{spec_sets}. We prove the stability of the LTQO estimate and existence of the limiting infinite volume state in Theorem~\ref{thm:convergence_LTQO}, and show that the state is pure in Corollary~\ref{cor:uniqueTL}.

The finite and infinite volume spectral flow automorphisms play a key role in the proof of Theorem~\ref{thm:convergence_LTQO}. As discussed, e.g., in Section~\ref{sec:uniformity}, for any $0< \gamma < \gamma_0$ there exists a function of decay class $( \eta, \frac{ \gamma}{2 v}, \theta)$, which we denote here by $G_{\alpha}^\gamma$, that can be used in the quasi-locality estimates for the finite volume spectral flows $\alpha_s^{(n)}$ uniformly in $n \geq 1$ and $0 \leq s \leq s_\gamma$. This decay function may also be used in the quasi-locality estimates for the limiting spectral flow automorphisms $\alpha_s$ for the same range of $s$. We use such a function in the statement of Theorem~\ref{thm:convergence_LTQO}.

\begin{thm}\label{thm:convergence_LTQO}
	For an everywhere indistinguishable uniform perturbation model with a sequence $\Phi_n$ that converges locally in $F$-norm in $\cB_{F}$ (see \eq{local_convergence}), the pointwise limit
	\begin{equation}\label{convergence_s}
	\omega_s(A) = \lim_{n \to \infty} \omega^{(n)}_s(A), \quad  A\in\mathcal{A}_{\Gamma}^{\rm loc}
	\end{equation}
	exists and defines a state on $\mathcal{A}_{\Gamma}$ for every $0 \leq s < s_0$. 
	Moreover, for any $0<\gamma<\gamma_0$ such that $s\leq s_\gamma$ and any local observable $A \in \mathcal{A}_{b_x(k)}$ with $x\in\Gamma$ and $k\geq0$, one has that for any $m \geq 0$ and all	$n \geq 1$ sufficiently large
	\be
	\Vert P_{\Lambda_n}(s) A P_{\Lambda_n}(s) - \omega^{(n)}_s(A) P_{\Lambda_n}(s) \Vert \leq
	|b_x(k)| \|A\| \left(\Omega(r_x^{\Omega}(\Lambda_n)-k-m) + 4 G_\alpha^\gamma(m)\right).
	\label{LTQO_s}\ee

\end{thm}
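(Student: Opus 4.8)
The plan is to derive the perturbed LTQO estimate \eqref{LTQO_s} by transporting the unperturbed estimate \eqref{LTQO_est_TL} through the finite-volume spectral flow $\alpha_s^{(n)}$, and then to use this estimate to extract convergence of the finite-volume states $\omega_s^{(n)}$ via a standard equicontinuity/Cauchy argument. First I would fix $0<\gamma<\gamma_0$ with $s\leq s_\gamma$, and recall from \eqref{spec_flow_proj} (applied in finite volume) that $\alpha_s^{(n)}(P_{\Lambda_n}(s)) = P_{\Lambda_n}(0)$ for all such $s$, so that
\[
\alpha_s^{(n)}\!\left(P_{\Lambda_n}(s) A P_{\Lambda_n}(s) - \omega_s^{(n)}(A) P_{\Lambda_n}(s)\right)
= P_{\Lambda_n}(0)\,\alpha_s^{(n)}(A)\,P_{\Lambda_n}(0) - \omega_s^{(n)}(A)\,P_{\Lambda_n}(0).
\]
Since $\alpha_s^{(n)}$ is an automorphism, the norm on the left equals the norm we wish to bound. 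Now $\alpha_s^{(n)}(A)$ is quasi-local: using the uniform quasi-locality estimate with decay function $G_\alpha^\gamma$ of class $(\eta,\tfrac{\gamma}{2v},\theta)$, Lemma~\ref{lem:qlm_loc_est} gives a local approximant $A_m := \Pi^{\Lambda_n}_{b_x^{\Lambda_n}(k+m)}(\alpha_s^{(n)}(A)) \in \mathcal{A}_{b_x^{\Lambda_n}(k+m)}$ with $\|\alpha_s^{(n)}(A) - A_m\| \leq 2|b_x(k)|\,\|A\|\,G_\alpha^\gamma(m)$ and $\|A_m\|\leq\|A\|$. Splitting $P_{\Lambda_n}(0)\alpha_s^{(n)}(A)P_{\Lambda_n}(0) - \omega(\cdot)P_{\Lambda_n}(0)$ into the $A_m$ piece plus the two error pieces (one from $\alpha_s^{(n)}(A)-A_m$ sandwiched by projections, one from the difference of the functionals $\omega_0^{(n)}(A_m) - \omega_s^{(n)}(A)$, which is itself controlled by $2|b_x(k)|\|A\|G_\alpha^\gamma(m)$ since $\omega_0^{(n)}(A_m)=\omega_0^{(n)}(\alpha_s^{(n)}(A))+O(\cdots)$ and $\omega_0^{(n)}\circ\alpha_s^{(n)} = \omega_s^{(n)}$), and then applying \eqref{LTQO_est_TL} to $A_m$ with $k$ replaced by $k+m$, one gets exactly the bound $|b_x(k)|\|A\|(\Omega(r_x^\Omega(\Lambda_n)-k-m) + 4G_\alpha^\gamma(m))$, provided $n$ is large enough that $b_x(k+m)\subseteq\Lambda_n$ and $k+m \leq r_x^\Omega(\Lambda_n)$ — which holds for large $n$ by everywhere indistinguishability \eqref{rinfinity}. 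I would be careful about the constant $4$: it should come out as $2$ (from the approximation of $A$) plus $2$ (from approximating the functional), matching the statement.

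For the convergence \eqref{convergence_s}, I would argue that $(\omega_s^{(n)})_n$ is Cauchy on each local observable. Fix $A\in\mathcal{A}_{b_x(k)}$. Using $\omega_s^{(n)} = \omega_0^{(n)}\circ\alpha_s^{(n)}$ and the local approximant $A_m$ as above, we have $|\omega_s^{(n)}(A) - \omega_0^{(n)}(A_m)| \leq 2|b_x(k)|\|A\|G_\alpha^\gamma(m)$. The key point is that $\omega_0^{(n)}(A_m)$ is, up to an LTQO error, determined by the local restriction of $H_{\Lambda_n}$ to a ball: since $A_m\in\mathcal{A}_{b_x^{\Lambda_n}(k+m)}$ and the model is frustration-free, $\omega_0^{(n)}(A_m)$ is close to $\mathrm{Tr}[P_{b_x^{\Lambda_n}(j)}A_m]/\mathrm{Tr}[P_{b_x^{\Lambda_n}(j)}]$ for $k+m\leq j\leq r_x^\Omega(\Lambda_n)$, again by \eqref{LTQO_est_TL} together with $P_{\Lambda_n}(0)=P_{b_x^{\Lambda_n}(j)}P_{\Lambda_n}(0)$; and this local ball-expectation, once $n$ is so large that $b_x^{\Lambda_n}(j)$ and $\alpha_s^{(n)}$ restricted to it have stabilized (which uses local $F$-norm convergence \eqref{local_convergence} so that the finite-volume dynamics and hence the spectral flow converge locally, and uniform $\nu$-regularity so the metric has stabilized), is eventually independent of $n$. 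Hence for $n,n'$ large, $|\omega_s^{(n)}(A) - \omega_s^{(n')}(A)| \leq C|b_x(k)|\|A\|(G_\alpha^\gamma(m) + \Omega(r_x^\Omega-k-m))$, and letting first $n,n'\to\infty$ (killing the $\Omega$ term) and then $m\to\infty$ shows the sequence is Cauchy. The pointwise limit $\omega_s$ extends by density and norm-continuity to a state on $\mathcal{A}_\Gamma$ (positivity and normalization pass to the limit).

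The main obstacle I anticipate is the bookkeeping in the second part: making precise the claim that $\omega_0^{(n)}(A_m)$, or rather the localized ball-expectation $\mathrm{Tr}[P_{b_x^{\Lambda_n}(j)}A_m]/\mathrm{Tr}[P_{b_x^{\Lambda_n}(j)}]$, eventually does not depend on $n$. This requires combining three separate convergences — local stabilization of the metric $d_n$, local $F$-norm convergence of $\Phi_n$ (hence of the Heisenberg dynamics $\tau^{s,\Lambda_n}_t$ and the spectral flow generator $D^\gamma_{\Lambda_n}(s) = \mathcal{G}^{\Lambda_n}_s(V_{\Lambda_n^p})$, hence of $\alpha_s^{(n)}$ restricted to a fixed ball), and stabilization of the unperturbed interaction $\eta_n$ and thus of $P_{b_x^{\Lambda_n}(j)}$ — and verifying that $A_m$ itself, being a partial trace of $\alpha_s^{(n)}(A)$ over sites outside $b_x(k+m)$, converges in norm as $n\to\infty$ with the limit lying in $\mathcal{A}_{b_x(k+m)}$. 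All of these are consequences of results quoted earlier (\cite[Theorem 3.8]{nachtergaele:2019} for the dynamics, hence for the spectral flow as in Section~\ref{sec:automorphic-equivalence}) plus uniform $\nu$-regularity, but assembling the $\varepsilon$-argument carefully — choosing $m$ first to control the quasi-local tails, then $n$ large to control everything else — is where the care is needed. The estimate \eqref{LTQO_s} itself is comparatively routine once \eqref{spec_flow_proj} and Lemma~\ref{lem:qlm_loc_est} are in hand.
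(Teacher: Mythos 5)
Your proof of the LTQO estimate \eqref{LTQO_s} matches the paper's: transport the problem through $\alpha_s^{(n)}$ via $\alpha_s^{(n)}(P_{\Lambda_n}(s))=P_{\Lambda_n}(0)$, replace $\alpha_s^{(n)}(A)$ by the partial trace approximant $A_m$, and apply the unperturbed estimate \eqref{LTQO_est_TL} to $A_m\in\cA_{b_x(k+m)}$; the factor $4$ is exactly the $2\cdot 2$ you identify (Lemma~\ref{lem:qlm_loc_est} contributes a $2$, and there are two error terms after inserting and removing $A_m$).

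Your route to the convergence \eqref{convergence_s}, however, diverges from the paper's, and is both longer and a bit shakier. The paper first disposes of the $s=0$ case by a direct two-term comparison of $\omega_0^{(n)}(A)$ with $\omega_0^{(n')}(A)$ that exploits the frustration-free nesting $P_{\Lambda_{n}}(0)P_{\Lambda_{n'}}(0)=P_{\Lambda_{n'}}(0)$ for $n'\geq n$: inserting and removing $P_{\Lambda_{n'}}(0)AP_{\Lambda_{n'}}(0)$ and applying \eqref{LTQO_est_TL} twice gives $|\omega_0^{(n)}(A)-\omega_0^{(n')}(A)|\leq 2|b_x(k)|\|A\|\Omega(\min\{r_x^\Omega(\Lambda_n),r_x^\Omega(\Lambda_{n'})\}-k)$, with no convergence of $\eta_n$ or $d_n$ and no ball expectations needed at all. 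It then proves $A_m^{(n)}(s)\to A_m(s)$ in norm via the consistency of partial traces and strong convergence of $\alpha_s^{(n)}$, and applies the $s=0$ bound to the fixed $A_m(s)$. Your proposal instead compares both $\omega_0^{(n)}(A_m)$ and $\omega_0^{(n')}(A_m)$ to the local ball expectation $\Tr[P_{b_x^{\Lambda_n}(j)}A_m]/\Tr[P_{b_x^{\Lambda_n}(j)}]$ and then argues that this expectation stabilizes as $n\to\infty$. That does work given the paper's definitions (stabilization of $\eta_n$ and of $d_n$ on a fixed ball are built into the notion of a uniformly gapped sequence of initial Hamiltonians on a uniform sequence of finite volumes), but it brings in an extra localization scale $j$ and requires a careful triple limit: $n,n'\to\infty$ kills the stabilization error at fixed $j$, then $j\to\infty$ kills the $\Omega(j-k-m)$ LTQO error, then $m\to\infty$ kills the $G_\alpha^\gamma(m)$ error. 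Your phrase \lq\lq letting first $n,n'\to\infty$ (killing the $\Omega$ term)\rq\rq\ conflates two of these: the $\Omega$ term is controlled by $j$, not by $n,n'$, unless you take $j$ proportional to $r_x^\Omega(\Lambda_n)$ --- but then the ball projections no longer stabilize. This is fixable, but it is exactly the bookkeeping that the frustration-free nesting trick eliminates, and you should use that trick instead.
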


We remark that \eq{LTQO_s} is an LTQO property for $\omega^{(n)}_s$. In fact, our estimates will show
\be
\vert \omega_s^{(n)}(A) - \omega_s(A)\vert \leq  |b_x(k)| \Vert A \Vert\left(2\Omega(r_x^{\Omega}(\Lambda_n)-k-m)+6 G_\alpha^\gamma(m)\right)
\ee
where the quantities are as in \eqref{LTQO_s}. Therefore, one can replace $\omega_s^{(n)}$ with 
$\omega_s$ in \eq{LTQO_s} and a similar bound holds with an appropriate change to the estimates on the right-hand-side. 

\begin{proof}
Fix $s\in[0,s_0)$ and let $\gamma>0$ be such that $s\leq s_\gamma$, which is guaranteed to exist by \eqref{s_gamma_inf}. We begin by considering the finite volume state $\omega^{(n)}_s$. Denote by $\alpha_s^{(n)}$ the spectral flow automorphism associated with $H_{\Lambda_n}(s)$ and $\xi = \gamma$ as in \eqref{spec-flow-auto}.
Since $\alpha^{(n)}_s(P_{\Lambda_n}(s)) = P_{\Lambda_n}(0)$, see (\ref{spec_flow_proj}), we can rewrite the perturbed finite volume states in terms of the initial state via 
\be
\omega^{(n)}_s = \omega^{(n)}_0\circ \alpha_s^{(n)}.
\label{spectralflow_states}\ee
The results of this theorem for values $s>0$ follow from establishing the analogous properties for $\omega_0^{(n)}$ and the uniform quasi-locality of $\alpha_s^{(n)}$. In particular, for sufficiently large $n$, quasi-locality is used to approximate observables of the form $\alpha_s^{(n)}(A)$ by an $n$-independent local operator. We first discuss this in more detail.

Consider an observable $A\in\cA_{b_x^{\Lambda_n}(k)}$ for $n \geq 1$, $x \in \Lambda_n$ and $k \geq 0$. For each $m \geq 0$, denote by $A_m^{(n)}(s)\in \cA_{b_x^{\Lambda_n}(k+m)}$ the strictly local approximation of $\alpha_s^{(n)}(A)$ given by
\begin{equation}
A_m^{(n)}(s) = \Pi_{b_x^{\Lambda_n}(k+m)}^{\Lambda_n}(\alpha_s^{(n)}(A))
\end{equation}
where we use the localizing maps introduced in Section~\ref{sec:QL+LD}, see \eqref{pi_partial_trace} and the subsequent discussion. Recall that the spectral flow automorphisms $\alpha_s^{(n)}$ converge strongly to $\alpha_s$ on $\mathcal{A}_{\Gamma}$.
Using the consistency relation $\Pi_X^\Lambda(A)\otimes \idty_{\Lambda'\setminus\Lambda} = \Pi_{X}^{\Lambda'}(A)$ for any $A\in \cA_{\Lambda}$ and $X\subset \Lambda \subset \Lambda'$,
we find that for $n' \geq n$ sufficiently large
\begin{align}\label{eq:approx_convergence}
\left\|A_m^{(n)}(s)- A_m^{(n')}(s)\right\|
\, = \, \left\|\Pi_{b_x(k+m)}^{\Lambda_{n'}}\left(\alpha_s^{(n)}(A)-\alpha_s^{(n')}(A)\right)\right\|  \, \leq \, \left\|\alpha_s^{(n)}(A)-\alpha_s^{(n')}(A)\right\| \,.
\end{align}
Thus, strong continuity implies that $\{ A_m^{(n)}(s) \}_{n \geq 1}$ is uniformly Cauchy and therefore converges, i.e.
\begin{equation}\label{local_stong_limit}
\lim_{n \to \infty} A_m^{(n)}(s) = A_m(s) 
\end{equation}
for some $A_m(s) \in \mathcal{A}_{b_x(k+m)}$. As a consequence, for each $A\in\cA_{b_x(k)}$ there is an $N$ so that for all $n \geq N$, 
\be
\Vert \alpha_s^{(n)}(A) - A_m(s)\Vert \leq \|\alpha_s^{(n)}(A) - A_m^{(n)}(s)\|+ \|A_m^{(n)}(s)- A_m(s)\| \leq 3|b_x^{\Lambda_n}(k)| \Vert A\Vert G_\alpha^\gamma(m).
\label{alphas_local}
\ee
Here, we have use \eqref{local_stong_limit}, that $s\leq s_\gamma$, and applied Lemma~\ref{lem:qlm_loc_est}. Said differently, given $m\geq 0$, the same local operator $A_{m}(s)\in \cA_{b_x(k+m)}$ can be used to approximate the transformed operator $\alpha_s^{(n)}(A)$ for all $n$ sufficiently large. 

	We now prove \eq{convergence_s} for $s=0$. Fix $x \in \Gamma$, $k \geq 0$, and $A \in \mathcal{A}_{b_x(k)}$. Note $b_x(k) \subset \Lambda_n$ for all
	$n$ sufficiently large. Moreover, Definition~\ref{ass:bf} implies
	$k\leq \min\{r_x^{\Omega}(\Lambda_n), \, r_x^{\Omega}(\Lambda_{n'})\}$ for $n' \geq n$ sufficiently large.
	In this case, using (\ref{LTQO_est_TL}) and the frustration free property of the initial ground state projectors, i.e.
	$P_{\Lambda_n}(0)P_{\Lambda_{n'}}(0) = P_{\Lambda_{n'}}(0)$, we find
	\bea
	\vert \omega_0^{(n)} (A) -  \omega_0^{(n^\prime)} (A) \vert &=& \Vert \left( \omega_0^{(n)} (A)  -  
	\omega_0^{(n')} (A) \right) P_{\Lambda_{n^\prime}}(0) \Vert\nonumber\\
	&\leq & \Vert  \omega^{(n)}_0(A) P_{\Lambda_{n^\prime}}(0) - P_{\Lambda_{n^\prime}}(0) A P_{\Lambda_{n^\prime}}(0) \Vert \nonumber\\
	&&+  \Vert P_{\Lambda_{n^\prime}}(0) A P_{\Lambda_{n^\prime}}(0) - \omega^{(n^\prime)}_0(A) P_{\Lambda_{n^\prime}}(0)\Vert\nonumber \\
	&\leq &2 |b_x(k)|  \Vert A\Vert \Omega(\min\{r_x^{\Omega}(\Lambda_n), \, r_x^{\Omega}(\Lambda_{n'})\}-k).
	\label{estimate0}\eea
	Here we have used that $\Omega$ is non-increasing. Since we assumed $r_x^{\Omega}(\Lambda_n)\to\infty$,
	it follows that $\omega_0^{(n)}(A)$ converges for all  $A\in \cA_\Gamma^{\rm loc}$.
	
	Now, consider $0<s\leq s_\gamma$ for some $0<\gamma<\gamma_0$. For each $n\geq 1$, let $\alpha_s^{(n)}$ be the spectral flow automorphism with $\xi=\gamma$. We use \eq{spectralflow_states} and \eq{alphas_local} to obtain similar estimates for the perturbed models. Given the parameters above, for each choice of $m \geq 0$ the quantity $l =\min\{r_x^{\Omega}(\Lambda_n), \, r_x^{\Omega}(\Lambda_{n'})\}-k-m\geq 0$
	for sufficiently large $n \leq n'$. In this case, for $A\in \cA_{b_x(k)}$ as above, an application of (\ref{alphas_local}) shows
	\bea
	\vert \omega_s^{(n)} (A) -  \omega_s^{(n^\prime)} (A) \vert 
	&=& \vert \omega_0^{(n)} (\alpha^{(n)}_s(A)) - \omega_0^{(n^\prime)}(\alpha^{(n')}_s(A)) \vert\nonumber\\
	&\leq &  \vert \omega_0^{(n)} (A_m(s)) - \omega_0^{(n^\prime)}(A_m(s)) \vert
	+6 |b_x(k)|  \Vert A\Vert G_\alpha^\gamma(m) \, .
	\eea
	Combining this with \eq{estimate0}, we have
	\be
	\vert \omega_s^{(n)} (A) -  \omega_s^{(n^\prime)} (A) \vert \leq |b_x(k)|\Vert A\Vert(2\Omega(l) +  6G_\alpha^\gamma(m)), \,
	\ee 
	from which it is clear that the limit in \eq{convergence_s} exists.
	
	To prove \eq{LTQO_s}, we argue similarly. Recall that $P_{\Lambda_n}(0) = \alpha^{(n)}_s(P_{\Lambda_n}(s))$.
	Using that $ \alpha^{(n)}_s$ is an automorphism and \eq{spectralflow_states} , we find that with $l=r_x^{\Omega}(\Lambda_n)-k-m$
	\bea
	\Vert P_{\Lambda_n}(s) A P_{\Lambda_n}(s) - \omega^{(n)}_s(A) P_{\Lambda_n}(s)\Vert
	&=& \Vert P_{\Lambda_n}(0)  \alpha^{(n)}_s(A) P_{\Lambda_n}(0) - \omega^{(n)}_0(\alpha^{(n)}_s(A)) P_{\Lambda_n}(0)\Vert \nonumber\\
	&\leq& \Vert P_{\Lambda_n}(0)A_m^{(n)}(s) P_{\Lambda_n}(0) - \omega^{(n)}_0(A_m^{(n)}(s)) P_{\Lambda_n}(0)\Vert \nonumber \\
	&&+ \; 2\Vert \alpha^{(n)}_s(A) -A_m^{(n)}(s)\Vert\nonumber\\
	&\leq& |b_x(k)| \Vert A\Vert (\Omega(l) + 4 G_\alpha^\gamma(m) ).
	\label{forLTQO_s}\eea
	For the last inequality, we have again used \eqref{LTQO_est_TL} and applied Lemma~\ref{lem:qlm_loc_est}.
\end{proof}

We now turn to showing that the states $\omega_s$ are pure for each $0\leq s < s_0$. In fact, we use LTQO to show that these states are unique in the sense that any sequence of finite-volume states defined by density matrices $\rho_n$ contained in the range of the spectral projections $P_{\Lambda_n}(s)$ necessarily converge to $\omega_s$.

\begin{cor}\label{cor:uniqueTL}  Consider an everywhere indistinguishable uniform perturbation model with
a sequence $\Phi_n$ that converges locally in $F$-norm in $\cB_{F}$ (see \eq{local_convergence}), and fix $0\leq s < s_0$.
	For any sequence of density matrices $\rho_n = P_{\Lambda_n}(s)\rho_n\in\cA_{\Lambda_n}$ the limit
	\be\label{eq:state_convergence}
	\lim_{n\to\infty} \Tr \rho_n A = \omega_s(A)
	\ee
	holds for all $A\in\cA^{\rm loc}_\Gamma$, and $\omega_s$ is a pure state on $\mathcal{A}_{\Gamma}$ for each $0\leq s < s_0$.
\end{cor}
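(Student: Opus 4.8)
The plan is to derive Corollary~\ref{cor:uniqueTL} from the LTQO estimate \eqref{LTQO_s} in Theorem~\ref{thm:convergence_LTQO} together with the key algebraic fact that each density matrix satisfies $\rho_n = P_{\Lambda_n}(s)\rho_n = P_{\Lambda_n}(s)\rho_n P_{\Lambda_n}(s)$ (using $\rho_n = \rho_n^*$). First I would fix a local observable $A \in \mathcal{A}_{b_x(k)}$ and a cutoff $m \geq 0$, and estimate, for $n$ large enough that $b_x(k) \subset \Lambda_n$ and $r_x^\Omega(\Lambda_n) \geq k$,
\begin{align*}
\left| \Tr(\rho_n A) - \omega_s^{(n)}(A)\right|
&= \left| \Tr\!\left( \rho_n \big( P_{\Lambda_n}(s) A P_{\Lambda_n}(s) - \omega_s^{(n)}(A) P_{\Lambda_n}(s)\big)\right)\right| \\
&\leq \|P_{\Lambda_n}(s) A P_{\Lambda_n}(s) - \omega_s^{(n)}(A) P_{\Lambda_n}(s)\| \\
&\leq |b_x(k)|\,\|A\|\left(\Omega(r_x^\Omega(\Lambda_n) - k - m) + 4 G_\alpha^\gamma(m)\right),
\end{align*}
where the first inequality uses that $\rho_n$ is a density matrix supported (on both sides) by $P_{\Lambda_n}(s)$ and $|\Tr(\rho_n B)| \leq \|B\|$, and the second is exactly \eqref{LTQO_s}. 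Taking $n \to \infty$ with $m$ fixed and using $r_x^\Omega(\Lambda_n) \to \infty$ (Definition~\ref{ass:bf}) together with $\omega_s^{(n)}(A) \to \omega_s(A)$ (Theorem~\ref{thm:convergence_LTQO}), we get $\limsup_n |\Tr(\rho_n A) - \omega_s(A)| \leq 4 |b_x(k)|\|A\| G_\alpha^\gamma(m)$; then letting $m \to \infty$ and using $\lim_{r\to\infty} G_\alpha^\gamma(r) = 0$ (it is of decay class $(\eta,\tfrac{\gamma}{2v},\theta)$) yields \eqref{eq:state_convergence} for all $A \in \mathcal{A}_\Gamma^{\rm loc}$.

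Next I would upgrade pointwise convergence on $\mathcal{A}_\Gamma^{\rm loc}$ to a statement about states: since each $\Tr(\rho_n\,\cdot\,)$ is a state and the limit exists on the norm-dense subalgebra $\mathcal{A}_\Gamma^{\rm loc}$ with $\|\Tr(\rho_n\,\cdot\,)\| = 1$ uniformly, the limit extends uniquely to a state on $\mathcal{A}_\Gamma$, which must be $\omega_s$ (this also re-confirms $\omega_s$ is a state, already known from Theorem~\ref{thm:convergence_LTQO}). For purity, I would use the standard criterion: a state is pure iff it is not a nontrivial convex combination of distinct states. Suppose $\omega_s = \tfrac12(\psi_1 + \psi_2)$ with $\psi_1, \psi_2$ states on $\mathcal{A}_\Gamma$. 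Each $\psi_i$ restricted to $\mathcal{A}_{\Lambda_n}$ is given by some density matrix $\sigma_n^{(i)} \in \mathcal{A}_{\Lambda_n}$. The point is to show one may replace $\sigma_n^{(i)}$ by $P_{\Lambda_n}(s)\sigma_n^{(i)}P_{\Lambda_n}(s)$ up to an error vanishing as $n \to \infty$, so that \eqref{eq:state_convergence} forces $\psi_i = \omega_s$.

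To make the purity argument work, I would observe that $\tfrac12 \Tr(\sigma_n^{(1)} \cdot) + \tfrac12 \Tr(\sigma_n^{(2)} \cdot) = \omega_s^{(n)}(\cdot) + (\text{error vanishing on local observables})$; more directly, since $\omega_s^{(n)} = \Tr(\bar\rho_n\,\cdot\,)$ with $\bar\rho_n = P_{\Lambda_n}(s)/\Tr P_{\Lambda_n}(s)$ a density matrix supported by $P_{\Lambda_n}(s)$, and since $\sigma_n^{(i)} \leq 2\bar\rho_n$ as positive operators when $\sigma_n^{(1)}+\sigma_n^{(2)}$ is close to $2\bar\rho_n$ — the cleanest route is: let $Q_n = \idty - P_{\Lambda_n}(s)$ and note $\Tr(\bar\rho_n Q_n) = 0$, hence $\tfrac12\Tr(\sigma_n^{(1)}Q_n) + \tfrac12\Tr(\sigma_n^{(2)}Q_n) = \Tr(\bar\rho_n Q_n) + o(1) = o(1)$ as $n\to\infty$ (using convergence of both sides on the algebra; one must be a bit careful since $Q_n$ is not a fixed local observable, so I would instead test against local $A$ and use $\|(\idty - P_{\Lambda_n}(s))A(\idty-P_{\Lambda_n}(s)) \,\|$-type bounds, or invoke that $\psi_i$ agrees with $\omega_s$ after projection). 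By positivity, $\Tr(\sigma_n^{(i)}Q_n) \to 0$ for each $i$, so $\|\sigma_n^{(i)} - P_{\Lambda_n}(s)\sigma_n^{(i)}P_{\Lambda_n}(s)\|_1 \to 0$ by the Cauchy–Schwarz/operator-monotonicity estimate $\|\sigma - P\sigma P\|_1 \leq 2\sqrt{\Tr(\sigma(\idty-P))}$. Applying the first part of the corollary to the density matrices $\tilde\sigma_n^{(i)} := P_{\Lambda_n}(s)\sigma_n^{(i)}P_{\Lambda_n}(s)/\Tr(\cdots)$ gives $\psi_i(A) = \lim_n \Tr(\tilde\sigma_n^{(i)}A) = \omega_s(A)$ on local $A$, hence $\psi_1 = \psi_2 = \omega_s$, proving purity.

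The main obstacle I anticipate is the bookkeeping in the purity step: $Q_n = \idty - P_{\Lambda_n}(s)$ is not a fixed local observable, so one cannot directly plug it into the convergence statements, and one has to route the argument through local observables and the LTQO bound, or alternatively establish $\Tr(\sigma_n^{(i)}Q_n) \to 0$ by a separate short argument (e.g. noting that $\omega_s$ being a limit of states in the range of $P_{\Lambda_n}(s)$ implies, via \eqref{eq:state_convergence} applied cleverly, that any decomposing state must be "asymptotically supported" by $P_{\Lambda_n}(s)$). The rest — the trace-norm perturbation estimate and the extension-by-density of the limiting functional to a state — is routine.
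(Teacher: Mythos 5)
Your proof of the convergence statement \eqref{eq:state_convergence} is essentially identical to the paper's: write $\Tr\rho_n A = \omega_s^{(n)}(A) + \Tr\rho_n\bigl[P_{\Lambda_n}(s)AP_{\Lambda_n}(s) - \omega_s^{(n)}(A)P_{\Lambda_n}(s)\bigr]$, bound the bracket in norm by \eqref{LTQO_s}, and combine with \eqref{convergence_s}. That part is fine.

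The purity argument, however, has a genuine gap, and you yourself flagged it without resolving it. Your plan is to decompose $\omega_s = \tfrac12(\psi_1+\psi_2)$, restrict each $\psi_i$ to $\cA_{\Lambda_n}$ to get a density matrix $\sigma_n^{(i)}$, show $\Tr(\sigma_n^{(i)}Q_n)\to 0$ with $Q_n=\idty - P_{\Lambda_n}(s)$, and then invoke the first half of the corollary. But for $s>0$ there is no mechanism that forces $\sigma_n^{(i)}$ to be (even approximately) supported by $P_{\Lambda_n}(s)$: the perturbed Hamiltonian $H_{\Lambda_n}(s)$ is not frustration-free, so you cannot deduce from $\psi_i \leq 2\omega_s$ that $\psi_i(H_{\Lambda_n}(s))$ is close to the bottom of the spectrum. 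The step $\tfrac12\Tr(\sigma_n^{(1)}Q_n)+\tfrac12\Tr(\sigma_n^{(2)}Q_n)=o(1)$ silently assumes that $\omega_s$ restricted to $\Lambda_n$ is close (in a sense that applies to the nonlocal observable $Q_n$) to $\bar\rho_n = P_{\Lambda_n}(s)/\Tr P_{\Lambda_n}(s)$, but the only convergence statement available is \eqref{eq:state_convergence} for \emph{local} observables, and $Q_n$ is not local. This is exactly the obstacle you name, but your suggested workaround (\emph{test against local $A$}, or \emph{a separate short argument}) is not an argument.

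The paper sidesteps all of this by first reducing to $s=0$: since $\omega_s = \omega_0\circ\alpha_s$ with $\alpha_s$ an automorphism, $\omega_s$ is pure iff $\omega_0$ is. For $\omega_0$, frustration-freeness makes the argument closed: if $\eta$ is a state with $\eta(A^*A)\leq c\,\omega_0(A^*A)$, then since $\omega_0(H_{\Lambda_n})=0$ one gets $\eta(H_{\Lambda_n})=0$, hence the restriction $\eta_n$ satisfies $P_{\Lambda_n}(0)\eta_n=\eta_n$ \emph{exactly} (not just asymptotically), and the first half of the corollary with $s=0$ gives $\eta=\omega_0$. You are missing this reduction, and without it your approach does not close.
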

\begin{proof}
	Note that if $P_{\Lambda_n}(s)\rho_n = \rho_n$ for all $n \geq 1$, then
	$$
	\Tr \rho_n A = \Tr \rho_nP_{\Lambda_n}(s) A P_{\Lambda_n}(s) = \omega_s^{(n)}(A) + 
	\Tr \rho_n [P_{\Lambda_n}(s)A P_{\Lambda_n}(s) - \omega^{(n)}_s(A) P_{\Lambda_n}(s)],
	$$
	and so the first claim follows from \eq{convergence_s} and \eq{LTQO_s}. 
	
	To see that $\omega_s$ is pure, we use the thermodynamic limit of the spectral flow to relate it to $\omega_0$ via
	\be
	\omega_s(A) = \omega_0(\alpha_s(A)),
	\label{alpha_s_TL}\ee
	and prove that $\omega_0$ is pure.
	
	Assume $\eta$ is a state that is majorized by $\omega_0$, i.e. $\eta(A^*A) \leq c \omega_0(A^* A)$ for some $c\geq 1$. Since $\omega_0$ is the ground state of a frustration-free system, it follows that $\omega_0(H_{\Lambda_n}) = 0$ for all $n$. Restricting $\eta$ to $\cA_{\Lambda_n}$ produces a state implemented by a density matrix $\eta_n$. By the majorizing assumption and the frustration-free property, this matrix satisfies $\eta_n = P_{\Lambda_n}(0)\eta_n$. 
	Therefore, applying \eq{eq:state_convergence} with $s=0$, one finds the states defined by $\eta_n$ necessarily converge to $\omega_0$. Hence $\eta=\omega_0$, and $\omega_0$ is a pure state.
	Since $\alpha_s$ is an automorphism, \eq{alpha_s_TL} implies that $\omega_s$ is also pure.
\end{proof}

\subsection{Spectral gap stability of the GNS Hamiltonian}\label{sec:infinite_gap}

We will now provide conditions under which the state $\omega_s$, whose existence is guaranteed by Theorem~\ref{thm:convergence_LTQO},
is a gapped ground state with respect to the dynamics $\delta_s$ from \eqref{delta_s}. Since we will apply similar arguments to systems with 
discrete symmetries in the next section, we first prove a more general result. 

\begin{thm}\label{thm:gap_ineq_TL}
	Let $\Lambda_n\uparrow\Gamma$ and assume that $H_n=H_n^*\in\cA_{\Lambda_n}$ is a sequence of Hamiltonians for which there is a derivation $\delta$ on $\cA_\Gamma$ with
	\begin{equation}
	\delta(A) = \lim_{n \to \infty} [H_n, A]
	\end{equation}
	for all $A \in \mathcal{A}_{\Gamma}^{\rm loc}$.	Set $E_n = {\rm min \, spec}(H_n)$ and suppose there are sequences of non-negative numbers $\{ \gamma_n \}_{n \geq 1}$ and $\{ \epsilon_n \}_{n \geq 1}$ so that:
	\begin{enumerate}
		\item[(i)] $\epsilon_n \to 0$ as $n \to \infty$,
		\item[(ii)]  $\limsup_{n} \gamma_n >0$,
		\item[(iii)]  The spectral projection $P_n$ of $H_n - E_n\idty$ onto $[0,\epsilon_n]$ satisfies
		\begin{equation} \label{general_gap_n}
		(\idty - P_n)(H_n-E_n \idty) \geq \gamma_n(\idty - P_n) \, .
		\end{equation}
	\end{enumerate}
	Then, for any state $\omega$ on $\cA_\Gamma$, if there exists a sequence $Q_n\in\cA_{\Lambda_n}$ of nonzero orthogonal projections $Q_n \leq P_n$ such that
	\be
	\lim_{n\to\infty} \Vert P_n A Q_n - \omega(A) Q_n\Vert = 0 \quad \mbox{for all } A \in  \cA_\Gamma^{\rm loc},
	\label{LTQO_PQ}\ee
	then $\omega$ is a unique gapped ground state for $\delta$. In particular, for any $A\in \cA^{\rm loc}_\Gamma$ with $\omega(A)=0$,
	\be
	\omega(A^* \delta(A))\geq \left(\limsup_n \gamma_n \right)\omega(A^*A).
	\label{gap_ineq_TL}\ee
\end{thm}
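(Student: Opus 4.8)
The plan is to establish \eqref{gap_ineq_TL} by a finite-volume approximation argument: for a fixed local observable $A$ with $\omega(A) = 0$, I would relate the infinite-volume quadratic form $\omega(A^*\delta(A))$ to its finite-volume counterparts $\langle \cdot, (H_n - E_n\idty)\cdot\rangle$ evaluated on a suitable state built from $Q_n$, then use the spectral gap estimate (iii) together with the indistinguishability hypothesis \eqref{LTQO_PQ} to pass the bound $\gamma_n$ through to the limit. Once \eqref{gap_ineq_TL} is established for $A \in \cA_\Gamma^{\rm loc}$ with $\omega(A) = 0$, the passage to a genuine GNS spectral-gap statement and uniqueness of the ground state follows from the standard argument sketched around \eqref{TL_gapped_gs}: the derivation $\delta$ is implemented by a self-adjoint $H_\omega \geq 0$ on the GNS space with $H_\omega\Omega_\omega = 0$, and \eqref{gap_ineq_TL} says precisely that $H_\omega \geq (\limsup_n \gamma_n)$ on the orthogonal complement of $\Omega_\omega$, which gives both the gap and (since the vector $\Omega_\omega$ is then the unique ground-state vector up to scalars) uniqueness.

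\textbf{Key steps.} First I would fix $A \in \cA_\Gamma^{\rm loc}$ with $\omega(A) = 0$, and choose $n$ large enough that $A \in \cA_{\Lambda_n}$. Define the finite-volume vector state via the density matrix $\rho_n = Q_n / \Tr Q_n$ (or work directly with $Q_n$); the quantity to control is $\Tr[\rho_n A^*(H_n - E_n\idty) A]$. Using $A = PA P + (\idty - P)AP + \dots$ is the wrong decomposition; instead I would use $Q_n \leq P_n$ and write $A Q_n = P_n A Q_n + (\idty - P_n)A Q_n$. By \eqref{LTQO_PQ}, $P_n A Q_n = \omega(A) Q_n + o(1) = o(1)$ in norm (since $\omega(A) = 0$), so $A Q_n = (\idty - P_n) A Q_n + o(1)$. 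Consequently $(\idty - P_n) A Q_n$ carries essentially all of $A Q_n$, and applying (iii):
\begin{equation}
\Tr[Q_n A^*(H_n - E_n\idty) A Q_n] \geq \gamma_n \, \Tr[Q_n A^*(\idty - P_n) A Q_n] + (\text{cross terms}).
\end{equation}
The cross terms and the discrepancy between $A Q_n$ and $(\idty - P_n)A Q_n$ are $o(1)$ in norm by \eqref{LTQO_PQ} and boundedness of $A$, $H_n$ near the bottom of the spectrum (more care is needed here since $H_n$ need not be uniformly bounded, but $(H_n - E_n)P_n$ is bounded by $\epsilon_n \to 0$, and the relevant contributions all involve either $P_n$-projected pieces or $o(1)$ factors). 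Meanwhile $\Tr[Q_n A^*(\idty - P_n)A Q_n] = \Tr[Q_n A^* A Q_n] - \Tr[Q_n A^* P_n A Q_n]$; the second term is $|\omega(A)|^2 \Tr Q_n + o(1) = o(1)$ using \eqref{LTQO_PQ} twice, so up to $o(1)$ this equals $\Tr[Q_n A^* A Q_n]$.

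\textbf{Taking the limit.} Normalizing by $\Tr Q_n$, the left-hand side $\Tr[\rho_n A^* (H_n - E_n\idty) A]$ converges — after subtracting the vanishing commutator-type corrections that relate $A^*(H_n - E_n)A$ to $A^*[H_n, A]$ plus lower-order terms — to $\omega(A^*\delta(A))$; here I would use $\delta(A) = \lim_n [H_n, A]$ in norm on $\cA_\Gamma^{\rm loc}$, the fact that $\omega(H_n - E_n\idty)$-type scalar shifts cancel, and that $\Tr[\rho_n \cdot]$ applied to a norm-convergent sequence of observables converges to $\omega$ of the limit (using \eqref{eq:state_convergence} or directly that $\rho_n$ has $Q_n \rho_n = \rho_n \leq P_n$, so these states converge to $\omega$). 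The right-hand side converges to $(\limsup_n \gamma_n)\omega(A^*A)$ along a subsequence realizing the $\limsup$. Combining yields \eqref{gap_ineq_TL}. The main obstacle I anticipate is the bookkeeping around potential unboundedness of $H_n$: one must be careful that every error term that is multiplied by an a priori unbounded factor actually lives in the range of $P_n$ (where $H_n - E_n$ is bounded by $\epsilon_n$) or is otherwise controlled, so that all the $o(1)$ estimates are legitimate; organizing the algebra so that $(\idty - P_n)AQ_n$ and $P_n A Q_n$ are handled separately, and never estimating $(H_n - E_n)$ against a bare norm, is the delicate point. A secondary technical point is justifying the interchange of $\limsup_n \gamma_n$ with the limit defining $\omega(A^*A)$, which is handled by passing to the appropriate subsequence at the end.
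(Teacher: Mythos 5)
Your core approach is the same as the paper's: test the relevant quadratic form against states $\rho_n$ supported in $\ran Q_n$, split $A$ along $P_n$ and $\idty - P_n$, kill the $P_n$-part using \eqref{LTQO_PQ} (since $\omega(A)=0$), bound the $(\idty - P_n)$-part from below by the gap inequality \eqref{general_gap_n}, and control all remainders via $\|H_n P_n\| \leq \epsilon_n \to 0$, finally passing to a subsequence realizing $\limsup_n \gamma_n$. One small remark: your ``cross terms'' are not merely $o(1)$ --- they vanish identically, since $P_n$ is a spectral projection of $H_n$ so $P_n H_n(\idty - P_n) = 0$. The paper exploits exactly this to obtain the clean identity $\Tr\rho_n A^*[H_n,A] = \Tr\rho_n A^* H_n(\idty - P_n)A + \Tr\rho_n A^*[H_n P_n, A]$, after which the first term is handled by \eqref{general_gap_n} and the second by $\|H_n P_n\|\leq\epsilon_n$; this organizes the argument so that $H_n$ is never estimated except through $\epsilon_n$ or the operator inequality, which is precisely the bookkeeping concern you flag.

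There is, however, one genuine gap. You invoke the standard GNS argument around \eqref{TL_gapped_gs} to obtain the ``unique gapped ground state'' conclusion, asserting that $\delta$ is implemented on the GNS space by a self-adjoint $H_\omega \geq 0$ with $H_\omega\Omega_\omega = 0$. But that presupposes $\omega$ is a $\tau$-invariant ground state for $\delta$, and neither property follows from \eqref{gap_ineq_TL} alone: the inequality constrains only observables with $\omega(A)=0$, and extending it to general $A$ via $A \mapsto A - \omega(A)\idty$ produces the cross-term $-\overline{\omega(A)}\,\omega(\delta(A))$, whose sign you cannot control without already knowing $\omega(\delta(A)) = 0$. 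The paper therefore precedes the gap estimate by two preparatory steps that your proposal skips: it proves $\omega(\delta(A)) = 0$ for all local $A$ (invariance), and then $\omega(A^*\delta(A)) \geq 0$ (the ground state property). Both use exactly the same ingredients you already deploy --- the convergence $\Tr\rho_n(\cdot)\to\omega(\cdot)$, the norm convergence $[H_n,A]\to\delta(A)$, the identity $\Tr\rho_n[H_n,A]=\Tr\rho_n[H_n P_n,A]$, and $\|H_nP_n\|\leq\epsilon_n$ --- but they are logically prior to invoking the GNS framework and must be supplied to close the argument.
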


\begin{proof}
	Without loss of generality, we will assume that $E_n = 0$ for all $n \geq 1$. 
	We begin with two observations. First, we claim that for all $A \in \mathcal{A}_{\Gamma}^{\rm loc}$, 
	\be\label{eq:state_convergence2}
	\lim_{n\to\infty}\Tr\rho_nA = \omega(A)
	\ee
	for any sequence $\{ \rho_n \}_{n \geq 1}$ of density matrices with $Q_n \rho_n = \rho_n \in \mathcal{A}_{\Lambda_n}$ for all $n \geq 1$. 
	Here we argue as in Corollary~\ref{cor:uniqueTL}. In fact, since $Q_n \leq P_n$, each of these density matrices satisfies  $\rho_n = \rho_nQ_n = Q_n\rho_nP_n$.
	In this case, for any $A \in \mathcal{A}_{\Gamma}^{\rm loc}$  and $n \geq 1$ sufficiently large, 
	\[
	\Tr \rho_n A - \omega (A) = \Tr \rho_n [P_n A Q_n  - \omega(A) Q_n]  \, .
	\]
	By \eq{LTQO_PQ}, the above tends to zero, and thus we have (\ref{eq:state_convergence2}). 
	
	Next, we prove that $\omega$ is invariant under the dynamics $e^{it \delta}$ by showing that $\omega(\delta(A)) = 0$ for all $A \in \mathcal{A}_{\Gamma}^{\rm loc}$.
	To see this, note that 
	\be\label{eq:wd_limit}
	\omega(\delta(A)) = \lim_{n\to \infty}\Tr\rho_n[H_n,A] = 0\,  \quad \mbox{for all } A \in \mathcal{A}_{\Gamma}^{\rm loc} \, .
	\ee
	Here the second equality above follows as $\rho_n=\rho_nP_n$, $\epsilon_n \to 0$, and 
	\be\label{eq:trace_bound}
	|\Tr\rho_n[H_n,A]| = |\Tr\rho_n[H_nP_n,A]| \leq 2\epsilon_n\|A\|
	\ee
	where for the inequality we used that $P_n$ is the spectral projection of $H_n$ onto $[0,\epsilon_n]$.
	We approximate to see that the first equality in (\ref{eq:wd_limit}) is true. For any $m \leq n$ and each $A \in \mathcal{A}_{\Gamma}^{\rm loc}$, 
	\begin{align}
	|\omega(\delta(A))-\Tr\rho_n[H_n,A]| \; \leq \;&  |\omega(\delta(A))-\omega([H_m,A])| + |\omega([H_m,A])-\Tr\rho_n[H_m,A]|\label{eq:wd_convergence} \\
	&+ |\Tr\rho_n([H_m,A]-[H_n,A])|.  \nonumber \, 
	\end{align}
	The existence of $\delta(A)$ guarantees that for $n\geq m \geq 1$ sufficiently large enough, 
	both the first and last term above can be made arbitrarily small. For any such $m$, the second term above can be made small, 
	using (\ref{eq:state_convergence2}), and a possibly larger choice of $n \geq 1$. This completes the
	proof of (\ref{eq:wd_limit}). 
	
	We now show that $\omega$ is a ground state for $\delta$. Arguing as above, we find that for any 
	$A\in\cA_{\Gamma}^{\rm loc}$,
	\begin{equation}
	\omega(A^*\delta(A))  = \lim_{n\to\infty} \Tr\rho_nA^*[H_n,A] \label{eq:gs_convergence}
	\end{equation}
	and in addition, the estimate
	\begin{equation}
	|\Tr\rho_nA^*AH_n| \leq \epsilon_n\|A\|^2  \label{eq:remainder_bound}
	\end{equation}
	holds. Since $H_n\geq 0$, we also have that 
	\begin{equation} \label{non-neg-term}
	\Tr\rho_nA^*[H_n,A]+\Tr\rho_nA^*AH_n = \Tr\rho_nA^*H_nA\geq 0.
	\end{equation}
	The fact that $\omega$ is invariant under the dynamics implies $\omega(A^*\delta(A))\in\bR$ for 
	all local $A$, and so for any $n \geq 1$,
	\beann
	\omega(A^*\delta(A)) &\geq& \omega(A^*\delta(A))- \Tr\rho_nA^*H_nA \\
	&\geq& -|\omega(A^*\delta(A))-\Tr\rho_nA^*[H_n,A]|-|\Tr\rho_nA^*AH_n|.
	\eeann
	where we used (\ref{non-neg-term}) for the final estimate above.
	{F}rom \eqref{eq:gs_convergence} and \eqref{eq:remainder_bound}, we conclude that $\omega$ satisfies (\ref{TL_gs}) and hence
	is a ground state for $\delta$.
	
	To argue that $\omega$ is a gapped ground state for $\delta$, we establish (\ref{gap_ineq_TL}).
	Note that since $\omega$ is a ground state of $\delta$,
	we conclude from (\ref{eq:gs_convergence}) that 
	\begin{equation} 
	\omega(A^*\delta(A))  = \lim_{n\to\infty} | \Tr\rho_nA^*[H_n,A] |
	\end{equation}
	for all $A\in \cA^{\rm loc}_\Gamma$. Now, since $(\idty-P_n)\rho_n=0$, we may re-write 
	\be \label{re-write-in-proof}
	\Tr\rho_nA^*[H_n,A] = \Tr\rho_nA^*H_n(\idty-P_n)A + \Tr\rho_nA^*[H_nP_n,A].
	\ee
	The first term above is non-negative. In particular, an application of \eqref{general_gap_n} shows that
	\[
	\Tr\rho_nA^*H_n(\idty-P_n)A \geq \gamma_n\Tr\rho_nA^*(\idty-P_n)A\geq 0.
	\]
	For the second term in (\ref{re-write-in-proof}), we find that
	\[
	|\Tr\rho_nA^*[H_nP_n,A]| \leq \|A^*[H_nP_n,A]\| \leq 2\epsilon_n \|A\|^2.
	\]
	As a result, we have the following the lower bound
	\be\label{eq:abs_LB}
	|\Tr\rho_nA^* [H_n, A]| \geq \gamma_n\Tr  \rho_n A^*(\idty - P_n)A- 2\epsilon_n\Vert A\Vert^2.
	\ee
	
	Now, let $A \in \mathcal{A}_{\Gamma}^{\rm loc}$ and suppose that $\omega(A)=0$. Observe that 
	for such an observable, \eq{LTQO_PQ} implies that
	\be
	\Tr  \rho_n A^* P_n A = \Tr  \rho_n A^* [P_n A Q_n - \omega(A)  Q_ n ] \to 0.
	\ee
	Since $\epsilon_n\to0$ and $\Tr(\rho_nA^* A) \to \omega(A^*A)$, we conclude from \eqref{eq:abs_LB} that
	\be
	\limsup_n |  \Tr(\rho_nA^* [H_n , A])| \geq  \limsup_n \gamma_n \Tr(\rho_nA^* A)
	=  \left(\limsup_n \gamma_n \right) \omega(A^*A),
	\ee
	and this completes the proof.
\end{proof}

The previous theorem implies that the uniform lower bound obtained for uniform sequences of finite systems in Section \ref{sec:uniform_sequences}
carries over to the gap for the GNS Hamiltonian $H_{\omega_s}$ of the corresponding thermodynamic limit. Since we are interested in infinite volume ground states, we require that the splitting of the lower part of the spectrum $\Sigma_1^{\Lambda_n}(s)$ tends to a single point in the sense that $\diam(\Sigma_1^{\Lambda_n}(s))\to 0$  as $n\to \infty$. This is the case if \eqref{lim=0} holds since $\diam(\Sigma_1^{\Lambda_n}(s)) \leq 2s(\delta_n+\epsilon_n)$. We finish this section with a precise statement of this fact for the perturbation models we have been considering.

\begin{cor}\label{cor:GNSgap}
	Assume that (\ref{lim=0}) holds for a everywhere indistinguishable uniform perturbation model for which the perturbations converge locally in $F$-norm. Then, for any $0< \gamma < \gamma_0$ and each $0 \leq s \leq s_{\gamma}$, the GNS Hamiltonian associated with the pure state $\omega_s$ from Theorem~\ref{thm:convergence_LTQO}	has a simple ground state eigenvalue 0 with a spectral gap above it bounded below 
	by $\gamma$.
\end{cor}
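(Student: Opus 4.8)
\textbf{Proof proposal for Corollary~\ref{cor:GNSgap}.}

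The plan is to apply Theorem~\ref{thm:gap_ineq_TL} with the sequence of Hamiltonians $H_n = H_{\Lambda_n}(s)$ and with $\omega = \omega_s$ the pure state produced by Theorem~\ref{thm:convergence_LTQO}. First I would verify the structural hypotheses of Theorem~\ref{thm:gap_ineq_TL}. The existence of the limiting derivation $\delta_s$ with $\delta_s(A) = \lim_n [H_{\Lambda_n}(s), A]$ for $A \in \cA_\Gamma^{\rm loc}$ is guaranteed by the local $F$-norm convergence assumption and \eqref{delta_s}, invoking \cite[Theorem~3.8]{nachtergaele:2019}. Next, setting $E_n = {\rm min\,spec}(H_{\Lambda_n}(s))$, I would let $\gamma_n = \gamma_n'(s)$ and $\epsilon_n = \epsilon_n'(s)$ be exactly the sequences provided by Corollary~\ref{cor:zerod+e}, which applies since we have assumed \eqref{lim=0}. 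That corollary gives, for $0 \leq s \leq s_\gamma$, the spectral inclusion ${\rm spec}(H_{\Lambda_n}(s) - E_n\idty) \subseteq [0, \epsilon_n'(s)] \cup [\gamma_n'(s) + \epsilon_n'(s), \infty)$, with $\epsilon_n'(s) \to 0$ and $\liminf_n \gamma_n'(s) \geq \gamma$; in particular $\limsup_n \gamma_n > 0$. Taking $P_n$ to be the spectral projection of $H_{\Lambda_n}(s) - E_n\idty$ onto $[0, \epsilon_n'(s)]$, the gap inequality \eqref{general_gap_n} holds by construction. Note that, in the notation of Section~\ref{sec:stab+sf}, $P_n = P_{\Lambda_n}(s)$ is precisely the spectral projection onto $\Sigma_1^{\Lambda_n}(s)$.

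The remaining hypothesis of Theorem~\ref{thm:gap_ineq_TL} is the existence of nonzero orthogonal projections $Q_n \leq P_n$ in $\cA_{\Lambda_n}$ satisfying the indistinguishability condition \eqref{LTQO_PQ}, i.e. $\|P_n A Q_n - \omega_s(A) Q_n\| \to 0$ for all $A \in \cA_\Gamma^{\rm loc}$. The natural choice is $Q_n = P_n = P_{\Lambda_n}(s)$ itself. I would then derive \eqref{LTQO_PQ} from the LTQO estimate \eqref{LTQO_s} of Theorem~\ref{thm:convergence_LTQO} together with the convergence $\omega_s^{(n)}(A) \to \omega_s(A)$ from \eqref{convergence_s}. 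Concretely, fix $A \in \cA_{b_x(k)}$; then
\begin{align*}
\|P_{\Lambda_n}(s) A P_{\Lambda_n}(s) - \omega_s(A) P_{\Lambda_n}(s)\|
&\leq \|P_{\Lambda_n}(s) A P_{\Lambda_n}(s) - \omega_s^{(n)}(A) P_{\Lambda_n}(s)\| + |\omega_s^{(n)}(A) - \omega_s(A)| \\
&\leq |b_x(k)|\,\|A\|\,\bigl(\Omega(r_x^\Omega(\Lambda_n) - k - m) + 4 G_\alpha^\gamma(m)\bigr) + |\omega_s^{(n)}(A) - \omega_s(A)|.
\end{align*}
Given $\varepsilon > 0$, one first chooses $m$ large so that $4|b_x(k)|\|A\| G_\alpha^\gamma(m) < \varepsilon/3$ (possible since $G_\alpha^\gamma$ has decay class $(\eta, \tfrac{\gamma}{2v}, \theta)$ and hence $\to 0$); then, since the everywhere-indistinguishability assumption (Definition~\ref{ass:bf}) gives $r_x^\Omega(\Lambda_n) \to \infty$ and $\Omega(r) \to 0$, one takes $n$ large enough that the $\Omega$-term is $< \varepsilon/3$ and, using \eqref{convergence_s}, that $|\omega_s^{(n)}(A) - \omega_s(A)| < \varepsilon/3$. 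Thus \eqref{LTQO_PQ} holds. Theorem~\ref{thm:gap_ineq_TL} then yields that $\omega_s$ is a unique gapped ground state for $\delta_s$ with $\omega_s(A^*\delta_s(A)) \geq (\limsup_n \gamma_n'(s))\,\omega_s(A^*A) \geq \gamma\,\omega_s(A^*A)$ for all local $A$ with $\omega_s(A) = 0$. By the discussion around \eqref{TL_gapped_gs}, this is exactly the statement that the GNS Hamiltonian $H_{\omega_s}$ has simple ground-state eigenvalue $0$ and spectral gap at least $\gamma$ above it.

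The main obstacle is a bookkeeping one rather than a conceptual one: making sure the parameter $m$ in \eqref{LTQO_s} can be chosen \emph{before} taking $n \to \infty$, which is legitimate because the decay function $G_\alpha^\gamma$ in Theorem~\ref{thm:convergence_LTQO} is uniform in $n$ (it depends only on $\gamma$ and the uniform Lieb--Robinson velocity $v$, by Remark~\ref{rem:uni_vel_bd} and the uniformity of the spectral-flow quasi-locality bounds). One must also confirm that purity of $\omega_s$ — already established in Corollary~\ref{cor:uniqueTL} — together with the gap inequality is precisely what gives a \emph{simple} (one-dimensional) ground-state eigenspace for $H_{\omega_s}$; this is immediate since $\Omega_{\omega_s}$ is cyclic, $H_{\omega_s}\Omega_{\omega_s} = 0$, and \eqref{TL_gapped_gs} forces any ground-state vector to be proportional to $\Omega_{\omega_s}$. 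Everything else is a direct application of the already-proven finite-volume stability estimates and the thermodynamic-limit machinery of this section.
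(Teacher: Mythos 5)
Your proposal is correct and follows essentially the same route as the paper's proof: apply Theorem~\ref{thm:gap_ineq_TL} with $Q_n = P_n = P_{\Lambda_n}(s)$, verify the gap hypotheses via Corollary~\ref{cor:zerod+e}, and check \eqref{LTQO_PQ} using the LTQO bound \eqref{LTQO_s} together with the convergence \eqref{convergence_s} and everywhere-indistinguishability. The extra care you take in choosing $m$ uniformly in $n$ before sending $n\to\infty$, and the explicit remark that simplicity of the ground-state eigenvalue follows from purity of $\omega_s$ and \eqref{TL_gapped_gs}, are details the paper compresses into a single sentence but are indeed the right justification.
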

\begin{proof}
	Consider such a perturbation model, and fix $0\leq s \leq s_\gamma$ for some positive $\gamma<\gamma_0$. Since this model satisfies Assumption~\ref{ass:uni_pert_model} and	Definition~\ref{ass:bf}, the results of Theorem~\ref{thm:convergence_LTQO} and Corollary~\ref{cor:uniqueTL} hold. In particular, the state $\omega_s$ on $\mathcal{A}_{\Gamma}$ from Theorem~\ref{thm:convergence_LTQO} is pure. Since this model satisfies (\ref{lim=0}), Corollary~\ref{cor:zerod+e} also holds, and so there are non-negative sequences $\{ \epsilon_n'(s) \}_{n \geq 1}$ and $\{ \gamma_n'(s) \}_{n \geq 1}$
	for which 
	\begin{enumerate}
		\item[(i)] $\epsilon_n'(s) \to 0$ as $n \to \infty$,
		\item[(ii)] $\limsup_{n} \gamma_n'(s)\geq\gamma$,
		\item[(iii)] For all $n \geq 1$, 
		\[
		\spec (H_{\Lambda_n}(s)-E_n(s)\idty)\subset [0,\epsilon_n'(s)]\cup [\epsilon_n'(s)+\gamma_n'(s),\infty)  \, .
		\]
	\end{enumerate} 
	This shows that the conditions of Theorem~\ref{thm:gap_ineq_TL} hold where we take $Q_n =P_n = P_{\Lambda_n}(s)$ and observe that (\ref{LTQO_PQ}) holds by \eqref{LTQO_s}, (\ref{convergence_s}), and Assumption~\ref{ass:bf}.	Our claims about the gap for the corresponding GNS Hamiltonian now follow from (\ref{gap_ineq_TL}) and the comments following (\ref{TL_gapped_gs}).
\end{proof}


%
%

\section{Symmetry Restricted Stability and the Thermodynamic limit of the ground states with discrete symmetry breaking}\label{sec:TL_G}

\subsection{Discrete symmetries} \label{sec:discrete_sym}

In many interesting systems, the interactions have symmetries. When considering the thermodynamic limit we need to 
allow for the possibility that symmetries of the model are spontaneously broken. In the case of a continuous symmetry, such as the spin 
rotations about an axis, the Goldstone theorem \cite{landau:1981} implies that, under quite general conditions, there is no gap in the spectrum 
above the ground state in the thermodynamic limit. Therefore, in our context of gapped ground states, only discrete symmetries need to be considered. 

An important consequence of the results in this section is the stability of the gapped portion of the ground state phase diagram 
of a variety of quantum lattice models, which includes many special cases studied previously in the literature
\cite{ginibre:1969, albanese:1989, kennedy:1992, borgs:1996, datta:1996, yarotsky:2006, szehr:2015}.

We now proceed to setting up the class of models with discrete symmetry breaking for which we prove stability of the symmetry breaking and the ground state gap. We find a compromise between generality and an effort to state the assumptions succinctly and transparently. Instead of attempting to describe the most general situation, we will focus on three types of discrete symmetry breaking (described below) that cover a large number of models considered in the literature. We start with unperturbed models defined on an increasing and absorbing sequence of finite volumes $\Lambda_n$, $n\geq 1$, that have a symmetry described by a finite set of automorphisms $\sigma_g$ labeled by $g\in G$. These automorphism act on $\cA_{\Lambda_n}$ and they are $n$-dependent in that sense. 

Local topological quantum order expresses the indistinguishability of the ground states by local observables, which is made precise by our notion of the indistinguishability radius. If a spontaneous symmetry breaking occurs that can be detected by a local order parameter, then clearly one cannot expect LTQO to hold for all local observables. However, if the perturbation respects the symmetry, then stability can be again verified using a modified notion of LTQO. We introduce two indistinguishability radii that take into account the model symmetry: the $G$-symmetric indistinguishability radius (see Definition~\ref{LTQO_def_G}) and the $G$-broken indistinguishability radius (see Assumption \ref{assumption:LTQO_N}). We show that the uniform finite-volume stability results from Section~\ref{sec:uniform_sequences} hold when perturbing at sites with a sufficiently large $G$-symmetric indistinguishability radius. In this case, though, it is not clear if the uniform gap stability extends to the infinite system. To this end, we show that a sufficiently large lower bound on the $G$-broken indistinguishability radius guarantees a non-vanishing spectral gap for the GNS Hamiltonian.

Let us now describe three types of symmetry breaking to which our arguments apply. In short, they are (i) a finite group of local gauge symmetries, (ii) partial breaking of translation invariance to an infinite subgroup (periodic states), and (iii) finite lattice symmetries in translation invariant systems
(reflections and rotations). In each case stability for the uniform sequence of finite systems follows from a ground state indistinguishability condition for a subalgebra of the local observables generated by the symmetry, which we denote by $\cAG$. The superscript $G$ refers to the symmetry as it is represented in the system and not just the abstract symmetry group. In each case $G$ labels a finite set of automorphisms that commute with the infinite system's initial dynamics as well as the perturbed dynamics.
When $\cA_\Gamma$ carries a representation of $\Ir^d$ by translations, we denote these automorphisms by $\rho_a, a\in \Ir^d$. 

For each type of symmetry breaking, the automorphisms and algebra $\cAG$ are as follows:
\begin{enumerate}
	\item[(S1)]\emph{Local Gauge Symmetry:} $G$ is a finite group and for each $x\in \Gamma$ there is a representation of $G$ by automorphisms $\sigma_g^x, g\in G$, on
	$\cA_{\{x\}}$ for which $\sigma_g=\bigotimes_{x\in\Gamma} \sigma_g^x$ denotes the corresponding automorphism on $\cA_\Gamma$. In this case, the gauge symmetry is broken in the ground states and $\cAG$ is the $G$-invariant elements of $\cA_\Gamma^{\rm loc}$:
	\be\label{AGi}
	\cAG = \{ A\in \cA_\Gamma^{\rm loc} \mid \sigma_g(A) =A, \; \forall \; g\in G\}.
	\ee
	\item[(S2)]\emph{Translation-Invariant:} The infinite system has a $d$-dimensional translation invariance represented by automorphisms $\rho_a, a\in \Ir^d$, and this symmetry is broken in 
	the set of ground states to the subgroup $(N_1\Ir)\times \cdots \times (N_d\Ir)$ for integers $N_1,\ldots,N_d >1$. In this situation we take
	$G=\Ir_{N_1} \times \cdots \times\Ir_{N_d}$ where we identify $\Ir_{N_i}$ as a set with $\{0,\ldots,N_i-1\}\subset \Ir$. We consider the subset of $\cA_\Gamma^{\rm loc}$ consisting of observables that reflect this symmetry (but are not invariant):
	\be\label{AGii}
	\cAG = \left\{\sum_{a\in G} \rho_a(A)\mid A\in \cA^{\rm loc}_\Gamma \right\}.
	\ee 
	\item[(S3)]\emph{Finite Group of Lattice Symmetries:} $G$ is a finite group of symmetries of $\Gamma$ acting on $\cA_\Gamma$ as automorphisms 
	$\sigma_r$, $r\in G$, and we assume that the system also has a translation symmetry, acting by automorphisms $\{\rho_a \mid a\in \Ir^d\}$,
	that remains unbroken in the ground states of the initial dynamics. In this situation, define
	\be\label{AGiii}
	\cAG = \{ A\in \cA_\Gamma^{\rm loc} \mid \; \forall \; r\in G, \; \exists \; a_r \in \Ir^d \mbox{ s.t. } \sigma_r (A) = \rho_{a_r} (A)\} .
	\ee
\end{enumerate}

Key for the analysis below will be how the automorphisms in each of these cases behave under composition with the localizing operators $\Pi_X^\Lambda$ from Section~\ref{sec:Step1}, see \eqref{pi_partial_trace}. For (S1), the action of $\sigma_g$ on $\cA_{\Lambda}$ is given by conjugating with $U_\Lambda(g) = \bigotimes_{x\in\Lambda} U_x(g)$ where $U_x(g)$ is a unitary that implements $\sigma_g^x$. As a consequence, each $\sigma_g$ commutes with the partial trace $\tr_{\Lambda\setminus X}:\cA_\Lambda \to \cA_X$, and hence also with the localizing operators:
\be\label{g_loc_commute}
\sigma_g\circ \Pi_X^\Lambda = \Pi_X^\Lambda\circ \sigma_g.
\ee
For both (S2) and (S3), we will assume periodic boundary conditions on $\Lambda$ and therefore each of the automorphisms $\rho_a$ and $\sigma_r$ has a well-defined restriction onto $\cA_\Lambda$. In this case, one does not have commutativity but rather a covariant relation:
\be\label{covariance}
\rho_a\circ \Pi_X^\Lambda = \Pi_{X+a}^\Lambda\circ \rho_a, \qquad \sigma_r\circ \Pi_X^\Lambda = \Pi_{r(X)}^\Lambda\circ \sigma_r
\ee
for all $a\in\bZ^d$ and $r\in G$.

The differences between \eqref{g_loc_commute}-\eqref{covariance} as well as the various choices for $\cAG$ cause a small change in the arguments for stability below. We provide the full argument for the case (S1) in Sections~\ref{subsec:Gsymmetry_radius}-\ref{subsec:Gbroken_radius} and discuss the necessary alterations for cases (S2) and (S3) in Section~\ref{sec:casesS2S3}.

\subsection{Symmetry restricted indistinguishability and stability of the spectral gap}\label{subsec:Gsymmetry_radius}
For the case (S1), we consider the same set-up as in Section \ref{sec:TL_gap} with a few modifications due to the gauge symmetry $G$. Once again there is a sequence $(\Lambda_n,d_n)$ of increasing and absorbing finite subsets of $(\Gamma, d)$ for which the unperturbed Hamiltonians $H_{\Lambda_n}$ are frustration-free, uniformly finite-range (with range $R$), and uniformly bounded. Moreover, we assume that the interaction is gauge symmetric, and so
$$
H_{\Lambda_n} = \sum_{X\subseteq\Lambda_n}\eta_n(X)
$$
where $\eta_n(X) \in \cAG$. As before, we assume a non-vanishing spectral gap:
\[
\gamma_0 = \inf_{n\geq 1}\gap(H_{\Lambda_n})>0.
\]

The perturbations are given by interactions $\Phi_n$ which take values in $\cAG$ and have a finite $F$-norm as in Definition~\ref{def:pert_model}(v). To ensure that the conditions of Section~\ref{sec:infinite_system} are satisfied, and in particular \eqref{local_convergence}, we assume both $\eta_n$ and $\Phi_n$ eventually become constant for any finite $X\subset \Gamma$, i.e. there are interactions $\eta$ and $\Phi$ so that $\eta_n(X) = \eta(X)$ and $\Phi_n(X)=\Phi(X)$ for $n$ sufficiently large. In particular, this implies that the perturbations converge locally in $F$-norm, and so Section~\ref{sec:automorphic-equivalence} is relevant. The perturbed Hamiltonians are then given by
\be\label{Gpert_ham_TL}
H_{\Lambda_n}(s) = H_{\Lambda_n} + s\sum_{\substack{X\subseteq\Lambda_n \\ X\cap \Lambda_n^{p}\neq \emptyset}} \Phi_n(X)
\ee
where the perturbation regions $\Lambda_n^p$ are chosen similarly to \eqref{def:L_n^p} for an indistinguishability radius that reflects the symmetry of the model. We will consider two possible candidates for this radius. First, we consider
\be\label{eq:G-pert_region}
\Lambda_n^p = \{ x \in \Lambda_n : r_y^{\Omega,G}( \Lambda_n) \geq K_n + L_n \mbox{ for all } y \in b_x^{\Lambda_n}(K_n) \}
\ee
where $K_n$, $L_n$ are chosen appropriately and $r_y^{\Omega,G}(\Lambda_n)$ is the $G$-symmetric indistinguishability radius:

\begin{defn}[$G$-symmetric indistinguishability radius]\label{LTQO_def_G}
	Let $\Omega:\bR \to [0,\infty)$ be a non-increasing function. The \emph{G-symmetric indistinguishability radius} of $H_\Lambda$ at $x\in\Lambda$, is the largest integer $r_x^{\Omega,G}(\Lambda)\leq \diam(\Lambda)$ such that for all integers $0 \leq k \leq n \leq r_x^{\Omega,G}(\Lambda)$ and all observables $A\in \cA_{b_x^\Lambda(k)}\cap\cAG$,
	\begin{equation}\label{G_LTQO_length}
	\|P_{b_x^\Lambda(n)} A P_{b_x^\Lambda(n)} - \omega_\Lambda(A) P_{b_x^\Lambda(n)}\| \leq |b_x^\Lambda(k)| \|A\| \Omega(n-k)
	\end{equation}
	where $\omega_\Lambda(A) = \Tr(AP_\Lambda)/\Tr(P_\Lambda).$
\end{defn}	

With perturbation regions defined using the $G$-symmetric indistinguishability radius, the main difficulty in adapting the framework from the previous sections is showing the results from Section~\ref{sec:LTQO2} still hold as we no longer assume indistinguishability for all observables. The key observation is that in the proofs of Theorem~\ref{thm:step2-(i)+(ii)} and Theorem~\ref{thm:rel_bound_cond}, the indistinguishability condition is only applied to the anchored observables $\Phi^{(1)}(x,m,s)$ constructed in Section~\ref{sec:Step1}, see \eqref{phi_1_x_bits}. Hence, these results are also valid for a Hamiltonian $H_{\Lambda_n}(s)$ as in \eqref{Gpert_ham_TL}-\eqref{eq:G-pert_region} as long as the corresponding operators $\Phi_n^{(1)}(x,m,s)$ belong to $\cAG$.

We first note that since $\cAG$ is an algebra, the anchoring procedure provided in Section~\ref{sec:def-ball-int} also produces terms that again belong to the algebra, and so one can assume an anchored form for $H_{\Lambda_n}(s)$ comprised of terms belonging to $\cAG$. With the usual decay assumptions on the perturbation, it is clear that the results of Section~\ref{sec:Step1} still apply to $H_{\Lambda_n}(s)$. Considering the definitions of the spectral flow, see \eqref{gen_spec_flow}-\eqref{spec-flow-auto}, and the integral operator $\caF_s^{(n)}$, see \eqref{def_wio_F}, the symmetry assumptions on $H_{\Lambda_n}(s)$ guarantee that both of these quasi-local maps commute with the automorphisms $\sigma_g$, $g\in G$. Since $\Phi_n^{(1)}(x,m,s)$ is defined in terms of compositions of these quasi-local maps and the localizing maps acting on the interaction terms (see \eqref{phi_1_x_bits}) for the gauge symmetry case (S1) it follows from \eqref{g_loc_commute} that $\Phi_n^{(1)}(x,m,s)\in\cAG$ as desired. For the cases (S2) and (S3), \eqref{g_loc_commute} does not hold and the argument needs to be modified. This is main difference between the different symmetry cases, and is the topic of Section~\ref{sec:casesS2S3}. 

If the sequence $H_{\Lambda_n}(s)\in \cAG$ constructed as in \eqref{Gpert_ham_TL}-\eqref{eq:G-pert_region} is a uniform perturbation model, i.e. satisfies Assumption~\ref{ass:uni_pert_model}, then it is clear that the spectral gaps are stable in the sense of Theorem~\ref{thm:uni_stab}. In particular, for each $0<\gamma<\gamma_0$ there is an $s_\gamma>0$ so that the finite volume gaps are uniformly bounded as follows:
\[
\inf_n\gap(H_{\Lambda_n}(s))\geq \gamma, \quad 0\leq s \leq s_\gamma.
\]
If additionally $\delta_n,\epsilon_n\to 0$, see \eqref{lim=0}, then Corollary~\ref{cor:zerod+e} holds and one can consider the stability of the ground state gap in the thermodynamic limit as in Section~\ref{sec:automorphic-equivalence}.

For the thermodynamic limit, we additionally assume that the model is everywhere $G$-indistinguishable in the sense that
\be\label{eq:Gindisting_cond} 
r_x^{\Omega,G}(\Lambda_n)\to \infty  \quad \text{as} \quad n\to\infty,
\ee
for all $x\in\Gamma$. 
Under the assumption of \eqref{eq:Gindisting_cond} many of the results from Section \ref{sec:automorphic-equivalence} are obtained for observables $A\in\cAG$ with virtually no modification. Using the $G$-symmetric indistinguishability radius, the two statements of Theorem~\ref{thm:convergence_LTQO} still hold for $A\in\cAG$ as well as the convergence from Corollary~\ref{cor:uniqueTL}, i.e.
\be
\omega_s(A) = \lim_{n\to\infty} \omega_s^{(n)}(A)=\lim_{n\to\infty}\frac{\Tr P_{\Lambda_n}(s) A}{\Tr P_{\Lambda_n}(s)}
=\lim_{n\to\infty}\Tr\rho_n A, \quad
\ee
for $A\in\cAG$ and all density matrices $\rho_n= \rho_nP_{\Lambda_n}(s)\in\cA_{\Lambda_n}$. The latter can be extended to a unique $G$-symmetric state on $\cA_\Gamma$ by
\be\label{sym_op}
\omega_s(A) = \omega_s(A_G), \qquad A_G := \frac{1}{|G|}\sum_{g\in G}\sigma_g(A).
\ee

One can consider the spectral gap for the GNS Hamiltonian associated with the state $\omega_s$. However, not all results from Section~\ref{sec:automorphic-equivalence} hold when using $G$-symmetric indistinguishability radius. Spontaneous breaking of the $G$-symmetry means that $\omega_s$ is {\em not} a pure state. While the main inequality \eq{gap_ineq_TL}
from Theorem~\ref{thm:gap_ineq_TL} holds for $A\in \cAG$, since this algebra is not dense in $\cA_\Gamma$ it is not immediately clear what this implies for the spectral gap of the GNS Hamiltonian associated to $\omega_s$. To address this, we impose more detailed assumptions suitable to cover the symmetry broken situation. 


\subsection{Symmetry breaking and its stability}\label{subsec:Gbroken_radius}

The goal of this section is to prove that in the case of a $G$-broken LTQO condition, see Assumption~\ref{assumption:LTQO_N} below, the simplex of infinite volume ground states is preserved for sufficiently small $s$ and, moreover, the GNS Hamiltonian associated with each pure ground state has a nonzero spectral gap. We will assume a sequence of finite volume Hamiltonians of the form \eqref{Gpert_ham_TL} with respect to perturbation regions defined using the \emph{G-broken indistinguishability radius} rather than the $G$-symmetric indistinguishability radius (see \eqref{approx_ortho} below). As we will show, the latter radius is necessarily bounded from below by the former, and so the discussion from the previous section still applies when using the $G$-broken indistinguishability radius to define the perturbation regions.

Let $\cS_s$ denote the set of all states on $\cA_\Gamma$ that can be obtained as weak limits of states on $\cA_{\Lambda_n}$ given by density matrices $\rho_n$ satisfying $P_{\Lambda_n}(s)\rho_n = \rho_n$. Recall that the perturbations converge locally in $F$-norm. In the situation that Theorem~\ref{thm:uni_stab} holds, the family of infinite volume spectral flows $\alpha_s:\cA_\Gamma \to \cA_\Gamma$ are a strongly continuous co-cycle of automorphisms, and moreover for $0\leq s\leq s_\gamma$:
\be
\cS_s = \{ \omega \circ \alpha_s \mid \omega \in \cS_0\},
\label{auto_equiv}\ee
see, e.g., \cite[Theorem 7.4]{nachtergaele:2019}. We assume that $\cS_0$ is a simplex with the pure, (gauge) symmetry-broken ground states as its extreme points.
If we denote the set of pure states of $\cS_s$ by $\cE_s$, then the relation \eq{auto_equiv} implies that $\cS_s$ is also a simplex of the same dimension 
as
\be
\cE_s = \{ \omega \circ \alpha_s \mid \omega \in \cE_0\}.
\label{auto_equiv_extreme}\ee
Hence, the structure of the symmetry-broken ground states is preserved. 

It is left to consider the spectral gap of the GNS Hamiltonians associated with $\omega\in \cE_0$. Assume that $\cE_0=\{\omega^1,\ldots,\omega^N\}$ for mutually disjoint $\omega^i$, meaning that their GNS representations are inequivalent. To prove a lower bound on the spectral gap of the GNS Hamiltonian associated with each $\omega_s^i := \omega^i \circ \alpha_s$, we assume the following $G$-broken local topological order condition on the unperturbed Hamiltonians. 

\begin{assumption} \label{assumption:LTQO_N}
	We say that a model with local Hamiltonians $H_{\Lambda_n}$ satisfies
	\emph{local topological quantum order with $N$ $G$-broken phases} (with decay function $\Omega$) if $G$ through composition with $\sigma_g$ acts transitively on $\cE_0$, and there are $N$ non-zero orthogonal projections $P^1_{b_x^{\Lambda_n}(m)},\ldots, P^N_{b_x^{\Lambda_n}(m)}$ onto subspaces of 
	$\ker H_{b_x^{\Lambda_n}(m)}$ such that the following properties hold:
	\begin{enumerate}
		\item[(i)] There is a constant $C$ such that for all $m\geq R$
		\be
		\left\Vert P_{b_x^{\Lambda_n}(m)} - \sum_{i=1}^N P^i_{b_x^{\Lambda_n}(m)}\right\Vert\leq C \Omega(m);
		\label{Pi1}\ee
		\item[(ii)] 
		There is a one-to-one correspondence between the projections $P^i_{\Lambda_n}$ and the pure states $\omega^i$ via:
		\be
		\omega^i(A)=\lim_{n\to\infty} \frac{\Tr P^i_{\Lambda_n} A}{\Tr P^i_{\Lambda_n}},
		\label{Pi2}\ee
		\item[(iii)]
		The \emph{$G$-broken indistinguishability radius} diverges for each $x\in \Gamma$. That is, $r_x^{\Omega,\cE_0}(\Lambda_n)\to \infty$ where $r_x^{\Omega,\cE_0}(\Lambda_n)\leq \diam_n(\Lambda_n)$ is the largest integer so that for all $0\leq k \leq m \leq r_x^{\Omega,\cE_0}(\Lambda_n)$ and for all local observables $A\in\caA_{b_x^{\Lambda_n}(k)}$
		\begin{equation}
		\left\Vert P^i_{b_x^{\Lambda_n}(m)} A P^j_{b_x^{\Lambda_n}(m)} -\delta_{ij} \omega^i(A) P^i_{b_x^{\Lambda_n}(m)} \right\Vert
		\leq |b_x^{\Lambda_n}(k)| \Vert A\Vert  \Omega(m-k).
		\label{approx_ortho}\end{equation}
	\end{enumerate}
\end{assumption}

Before stating the main result, we show that transitivity of the group action implies that the $G$-broken indistinguishability radius is a lower bound on the $G$-symmetric indistinguishability radius. From the action of $G$ on $\cE_0$, it is clear that the state
\be\label{sym_def}
\overline{\omega} _0(A) = \frac{1}{N} \sum_{i=1}^N \omega^i(A), \quad A\in\cA_{\Gamma}
\ee
is $G$-symmetric. In general, $\omega(A) = \omega(A_G)$ for any symmetric state $\omega\in\cS_0$ where $A_G$ is as in \eqref{sym_op}. Since such a state is a convex combinations of $\omega^i\in\cE_0$, transitivity then guarantees that $\overline{\omega}_0$ is the unique $G$-symmetric state as for all $i=1,\ldots,N$,
\be\label{sym_equal}
\omega^{i}(A_G) = \frac{1}{|G|}\sum_{g\in G}\omega^1(\sigma_{g_ig}(A)) = \omega^{1}(A_G) 
\ee
where $g_i\in G$ is such that $\omega^i = \omega^1\circ \sigma_{g_i}$.  More generally, transitivity implies that for all $i$
\begin{equation}\label{eq:sym_equality}
\overline{\omega}_0(A) = \omega^i(A), \quad A\in\cAG.
\end{equation}
Here, we use \eqref{sym_def}, \eqref{sym_equal} and that $\overline{\omega}_0$ is $G$-symmetric. This argument could be simplified using the invariance of $A\in\cAG$ under $\sigma_g$, $g\in G$. However, the more general justification above also holds for the cases (S2) and (S3) considered in Section~\ref{sec:casesS2S3}.

Now, given \eqref{eq:sym_equality}, Assumption~\ref{assumption:LTQO_N} implies \eqref{G_LTQO_length}. Specifically, there is $C'>0$ so that for any $0 \leq k \leq m \leq r_x^{\Omega,\cE_0}(\Lambda_n)$ and $A\in \caA_{b_x^{\Lambda_n}(k)}\cap \cAG$
\begin{equation}\label{G-indistinguish}
\|P_{b_{x}^{\Lambda_n}(m)}AP_{b_{x}^{\Lambda_n}(m)}-\overline{\omega}_0(A)P_{b_{x}^{\Lambda_n}(m)}\| \leq C'|b_x^{\Lambda_n}(k)|\|A\|\Omega(m-k) .
\end{equation}
Since $\omega_{\Lambda_n}(A) = \Tr(P_{\Lambda_n}A)/\Tr(P_{\Lambda_n})$ is a $G$-symmetric state on $\cA_{\Lambda_n}$, the pointwise limit $\omega_{\Lambda_n} \to \overline{\omega}_0$ holds as $\overline{\omega}_0$ is the unique $G$-symmetric state in $\cS_0$. Therefore, \eqref{G-indistinguish} implies \[ r_x^{\Omega,\cE_0}(\Lambda_n) \leq r_x^{C'\Omega,G}(\Lambda_n)\] for $n$ sufficiently large.

The main result of this section proves stability for a sequence of Hamiltonians $H_{\Lambda_n}(s)$ as in \eqref{Gpert_ham_TL} with perturbation regions defined using the $G$-broken indistinguishability radius:
\be\label{eq:broken_LTQO_regions}
\Lambda_n^p = \{ x \in \Lambda_n : r_y^{\Omega,\cE_0}( \Lambda_n) \geq K_n + L_n \mbox{ for all } y \in b_x^{\Lambda_n}(K_n) \}.
\ee
In particular, we require this sequence forms a \emph{uniform gauge symmetry-breaking perturbation model} by which we mean:
\begin{enumerate}
	\item[(i)] Both the initial interaction and perturbation take values in the algebra $\cAG$.
	\item[(ii)]  It is a uniform perturbation model as in Definition~\ref{def:pert_model} and Assumption~\ref{ass:uni_pert_model} with perturbation regions as in \eqref{eq:broken_LTQO_regions}.
\end{enumerate}

\begin{thm}\label{thm:sym_stability} Assume that $\cS_0$ is a simplex with pure, gauge symmetry broken extreme points, and that $H_{\Lambda_n}(s)$, $n\geq1$ is a uniform gauge symmetry-breaking perturbation model for which \eqref{lim=0} and Assumption~\ref{assumption:LTQO_N} hold. Fix $0<\gamma<\gamma_0$. Then for any $0\leq s\leq s_\gamma$ the following properties hold:
	
	(i) The orthogonal projections $P^i_{\Lambda_n}(s)$ for which $\alpha_s^{(n)}(P^i_{\Lambda_n}(s))=P^i_{\Lambda_n}$ satisfy:
	\bea
	&&\left\Vert P_{\Lambda_n} (s)- \sum_{i=1}^N P^i_{\Lambda_n}(s)\right\Vert\leq C\Omega(\diam_n(\Lambda_n));
	\label{Pis1}\eea
	where $C$ is as in \eqref{Pi1}. Moreover, defining $\omega_s^i = \omega^i \circ \alpha_s$ one has
	\bea
	&&
	\omega_s^i(A)=\lim_{n\to\infty} \frac{\Tr P^i_{\Lambda_n} (s)A}{\Tr P^i_{\Lambda_n}(s)}
	\label{Pis2}
	\eea
	and for all $x,k,m$, $A\in \cA_{b_x(k)}$ and sufficiently large $n$:
	\begin{equation}
	\left\Vert P^i_{\Lambda_n}(s) A P^j_{\Lambda_n}(s) -\delta_{ij} \omega^i_s(A) P^i_{\Lambda_n}(s) \right\Vert
	\leq |b_x(k)|\|A\| \left(\Omega(r_x^{\Omega,\cE_0}(\Lambda_n)-k-m) + 5 G_\alpha(m)\right).
	\label{LTQO_is}\end{equation}
	
	(ii) The set of limiting ground states $\cS_s$ is an $N$-dimensional simplex
	satisfying \eq{auto_equiv} and the GNS-Hamiltonian for each of its extreme points $\omega_s^i\in\cE_s$ has a positive spectral gap $\gamma_s$ above a unique ground state that satisfies
	\be
	\gamma_s \geq \limsup_n \gamma(H_{\Lambda_n}(s))\geq \gamma.
	\ee
\end{thm}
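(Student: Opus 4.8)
The plan is to prove the two parts of Theorem~\ref{thm:sym_stability} by combining the uniform finite-volume stability estimates (Theorem~\ref{thm:uni_stab} and Corollary~\ref{cor:zerod+e}) with the spectral flow intertwining property \eqref{spec_flow_proj}, exactly as in the proof of Corollary~\ref{cor:GNSgap}, but now applied to each of the $N$ sub-projections $P^i_{b_x^{\Lambda_n}(m)}$ separately. First I would observe that since the model is a uniform gauge symmetry-breaking perturbation model, the decomposition of $\alpha_s^{(n)}(H_{\Lambda_n}(s))$ from Section~\ref{sec:LTQO2} goes through with all terms in $\cAG$, and Theorem~\ref{thm:uni_stab} gives $s_\gamma>0$ with $\inf_n \gap(H_{\Lambda_n}(s))\geq\gamma$ for $0\leq s\leq s_\gamma$; moreover \eqref{lim=0} lets us invoke Corollary~\ref{cor:zerod+e}, so $\diam(\Sigma_1^{\Lambda_n}(s))\to 0$. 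Define $P^i_{\Lambda_n}(s) = \alpha_{-s}^{(n)}(P^i_{\Lambda_n})$; since $\alpha_s^{(n)}(P_{\Lambda_n}(s)) = P_{\Lambda_n}(0) = P_{\Lambda_n}$ and $P^i_{\Lambda_n}\leq P_{\Lambda_n}$, applying the automorphism $\alpha_{-s}^{(n)}$ to \eqref{Pi1} at scale $m = \diam_n(\Lambda_n)$ immediately yields \eqref{Pis1}, because $\alpha_s^{(n)}$ is norm-preserving. Equation \eqref{Pis2} follows by writing $\omega^i_s(A) = \omega^i(\alpha_s(A)) = \lim_n \Tr(P^i_{\Lambda_n}\alpha_s^{(n)}(A))/\Tr(P^i_{\Lambda_n})$ using \eqref{Pi2} and the strong convergence $\alpha_s^{(n)}\to\alpha_s$, together with $\Tr P^i_{\Lambda_n}(s) = \Tr P^i_{\Lambda_n}$. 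For \eqref{LTQO_is}, I would transport \eqref{approx_ortho} through $\alpha_{-s}^{(n)}$, approximating $\alpha_s^{(n)}(A)$ by the $n$-independent local operator $A_m(s)$ as in the proof of Theorem~\ref{thm:convergence_LTQO} (using \eqref{alphas_local} and the uniform quasi-locality decay $G_\alpha$), which produces the stated bound with the $\Omega$-term coming from \eqref{approx_ortho} applied at radius $r_x^{\Omega,\cE_0}(\Lambda_n)$ and the $5G_\alpha(m)$-term collecting the localization errors (two factors from replacing $\alpha_s^{(n)}(A)$ on each side, plus the error in $\omega^i(A_m(s))$ versus $\omega_s^i(A)$).

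For part (ii), the automorphic equivalence \eqref{auto_equiv} and the description of the extreme points \eqref{auto_equiv_extreme} are already available from \cite[Theorem 7.4]{nachtergaele:2019}, so the simplex structure is immediate once we know the $\omega^i_s$ are mutually disjoint — which follows since disjointness is preserved under a common automorphism $\alpha_s$. The substantive content is the spectral gap of the GNS Hamiltonian $H_{\omega^i_s}$. Here I would apply Theorem~\ref{thm:gap_ineq_TL} with $H_n = H_{\Lambda_n}(s)$, $\delta = \delta_s$, $P_n = P_{\Lambda_n}(s)$, $\gamma_n = \gamma_n'(s)$ and $\epsilon_n = \epsilon_n'(s)$ from Corollary~\ref{cor:zerod+e} (so (i)--(iii) of that theorem hold), and crucially with $Q_n = P^i_{\Lambda_n}(s)$, which is a nonzero orthogonal projection with $Q_n\leq P_n$. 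The hypothesis \eqref{LTQO_PQ}, namely $\|P_n A Q_n - \omega^i_s(A)Q_n\|\to 0$ for all local $A$, follows from \eqref{LTQO_is}: write $P_n A Q_n = P^i_{\Lambda_n}(s) A P^i_{\Lambda_n}(s) + \sum_{j\neq i} P^j_{\Lambda_n}(s) A P^i_{\Lambda_n}(s) + (P_n - \sum_j P^j_{\Lambda_n}(s))AQ_n$; the first term is handled by the $i=j$ case of \eqref{LTQO_is} with $m\to\infty$, the cross terms by the $i\neq j$ case, and the last term by \eqref{Pis1} since $\Omega(\diam_n(\Lambda_n))\to 0$ by Assumption~\ref{ass:bf}-type divergence (which is implied by Assumption~\ref{assumption:LTQO_N}(iii) together with $r_x^{\Omega,\cE_0}(\Lambda_n)\leq\diam_n(\Lambda_n)$, hence $\diam_n(\Lambda_n)\to\infty$). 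Theorem~\ref{thm:gap_ineq_TL} then gives that $\omega^i_s$ is a unique gapped ground state for $\delta_s$ with $\omega^i_s(A^*\delta_s(A))\geq (\limsup_n\gamma_n'(s))\,\omega^i_s(A^*A)$ for $\omega^i_s(A)=0$, and since $\limsup_n\gamma_n'(s)\geq\gamma$ and, more sharply, $\limsup_n\gamma_n'(s)\geq\limsup_n\gap(H_{\Lambda_n}(s))$, the comments following \eqref{TL_gapped_gs} yield $\gamma_s\geq\limsup_n\gap(H_{\Lambda_n}(s))\geq\gamma$.

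The main obstacle I anticipate is the bookkeeping in part (i), specifically verifying that the constants and decay functions in \eqref{Pis1}--\eqref{LTQO_is} can genuinely be taken uniform in $n$ and that the transported estimates close up: one must check that the localizing maps, the spectral flow, and the projections $P^i$ interact correctly under $\alpha_{-s}^{(n)}$, i.e. that $\alpha_s^{(n)}(P^i_{\Lambda_n(m)}(s)) = P^i_{\Lambda_n(m)}$ is consistent with the frustration-free nesting $P^i_{\Lambda_n(m)}\leq P_{\Lambda_n(m)}$ used implicitly when comparing scales. The other delicate point is that Assumption~\ref{assumption:LTQO_N}(i) is stated for balls $b_x^{\Lambda_n}(m)$ while \eqref{Pis1} concerns the full volume; taking $m = \diam_n(\Lambda_n)$ so that $b_x^{\Lambda_n}(m) = \Lambda_n$ resolves this, but one should double-check that the constant $C$ is genuinely $m$-independent in the hypothesis, which it is by the statement of \eqref{Pi1}. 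Everything else is a routine adaptation of the arguments already carried out in Sections~\ref{sec:automorphic-equivalence} and the discussion preceding this theorem.
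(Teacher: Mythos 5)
Your proposal is correct and follows essentially the same route as the paper: transport \eqref{Pi1}--\eqref{approx_ortho} through the unitary spectral flow to get part (i), then invoke Theorem~\ref{thm:gap_ineq_TL} with $P_n = P_{\Lambda_n}(s)$ and $Q_n = P^i_{\Lambda_n}(s)$, using Corollary~\ref{cor:zerod+e} for the finite-volume gap/splitting hypotheses and \eqref{LTQO_is} for \eqref{LTQO_PQ}. The explicit decomposition of $P_n A Q_n$ you give to verify \eqref{LTQO_PQ}, and the anticipated factor-counting for the $5G_\alpha(m)$ term, are just the details the paper leaves implicit.
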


\begin{proof}
	For (i), \eq{Pis1} is immediate from \eqref{Pi1} since $\alpha_s^{(n)}$ is norm preserving, and \eqref{Pis2} follows from \eqref{Pi2} and the strong convergence $\alpha_s^{(n)}\to \alpha_s$. The LTQO property in \eq{LTQO_is} follows from \eqref{approx_ortho} and is proven using a similar argument as in Section~\ref{sec:automorphic-equivalence}, see specifically \eqref{forLTQO_s}, and that for sufficiently large $n$,
	\[
	\|\omega^i(\alpha_s(A)-\alpha_s^{(n)}(A))\| \leq |b_x(k)|\|A\|G_\alpha^\gamma(m).
	\]

	For (ii), note that the assumptions of Corollary~\ref{cor:zerod+e} guarantee that Theorem~\ref{thm:uni_stab} holds, and so the phase structure is preserved as discussed above, see \eqref{auto_equiv} and \eqref{auto_equiv_extreme}. Since the set of ground states $\cS_s$ is a simplex, its extreme points are disjoint states, i.e., their GNS representations are inequivalent. This implies that the the GNS Hamiltonian $H_{\omega^i_s}$ in each of these representations has a non-degenerate ground state.
	
	It remains to show that the finite-volume lower bounds for the spectral gap carry through in the thermodynamic limit. This follows from an 
	application of Theorem \ref{thm:gap_ineq_TL} with $P_n = P_{\Lambda_n}(s)$,  and $Q_n = P^i_{\Lambda_n}(s), i=1,\ldots, N$. For this application we use Corollary~\ref{cor:zerod+e} to verify \eqref{general_gap_n}, and \eqref{LTQO_is} to prove \eqref{LTQO_PQ}. For $0\leq s \leq s_\gamma$, \eq{gap_ineq_TL} implies that $H_{\omega^i_s}$ has spectral gap above it bounded by 
	$\limsup_n \gamma(H_{\Lambda_n}(s))\geq \gamma$.
\end{proof}

\subsection{Modifications to handle the cases (S2) and (S3)}\label{sec:casesS2S3}

In this section, we consider how to modify the arguments in Sections~\ref{subsec:Gsymmetry_radius}-\ref{subsec:Gbroken_radius} so that they apply to models with symmetry breaking of the type (S2) or (S3). The main challenge here is to prove that the anchored interaction terms constructed in Section~\ref{sec:Step1} belong to the algebra $\cAG$ so that the $G$-symmetric indistinguishability condition can be used to establish the results from Section~\ref{sec:LTQO2}. In each case, we first outline the assumptions on the lattice and interactions, and then move to discussing the necessary modifications. We begin by considering the case (S3) as it is most similar to the gauge invariant case (S1).

\subsubsection{The case (S3)} As outlined above, we consider that $\Gamma$ is a lattice with two actions: one by $\bZ^d$ via translations, and the other by a finite group of lattice symmetries $G$ (such as reflections and rotations). We require that the metric $d$ on $\Gamma$ respects these symmetries in the sense that for any $x,y\in\Gamma$
\begin{equation}\label{lattice_sym_d}
d(x,y)=d(r(x),r(y))=d(x+a,y+a)
\end{equation}
for all $r\in G$ and $a\in\bZ^d$. This is typically satisfied, e.g., by the lattice graph distance. These symmetries are realized on the local algebra $\cA_\Gamma^{\rm loc}$ through automorphisms $\sigma_r$ and $\rho_a$ whose action on $\cAG$ is compatible with the lattice symmetries in the following sense: if $A\in\cAG$ and $\supp(A)=X$, then 
\be\label{alg_lattice_action}
\sigma_r(A) = \rho_a(A) \iff r(X) = X+a.
\ee
The analogous result holds in the case that we consider these automorphisms acting on a finite volume with periodic boundary conditions given that we identify points $a,a'\in\bZ^d$ that are equivalent under the periodicity.

For the finite volume systems, as always we assume an increasing and absorbing sequence of finite volumes $(\Lambda_n,d_n)$ and associated finite volume Hamiltonians $H_{\Lambda_n}(s) \in\cA_{\Lambda_n}$ as in \eqref{Gpert_ham_TL} where $\eta_n$ is frustration-free, uniformly finite range, and uniformly bounded. In addition, we require that both the initial interaction and perturbation are translation invariant and take values in $\cAG$. Note that the assumption of translation invariance implies $\Lambda_n^p = \Lambda_n$. To model the group actions on $\Lambda_n$ and $\cA_{\Lambda_n}$, we also impose periodic boundary conditions and assume that the finite volume metrics $d_n$ also satisfy \eqref{lattice_sym_d}.

To prove that the observables constructed in Section~\ref{sec:Step1} belong to $\cAG$, one only needs to show that (for a finite volume $\Lambda$ with periodic boundary conditions) 
\be\label{S3_goal}
\Pi_{b_x^\Lambda(m)}^\Lambda(\cK(A)) \in \cA_{b_x^\Lambda(m)}\cap \cAG
\ee
for all $m\geq k$ where $A\in \cA_{b_x^\Lambda(k)}\cap \cAG$, and $\cK:\cA_\Lambda \to \cA_\Lambda$ is a quasi-local map that commutes with the symmetry automorphism. Once this is established one can proceed in the same way as in the gauge invariant case (S1).

The assumption on the metric \eqref{lattice_sym_d} guarantees that for any $X\subseteq \Lambda$, $(r(X))(m)=r(X(m))$ and $(X+a)(m)=X(m)+a$ for all $m>0$, $r\in G$ and $a\in \bZ^d$. As a result
\be\label{consistent_growth}
r(X)=X+a_r \implies r(X(m)) = X(m)+a_r.
\ee

If $A\in\cA_{X}\cap\cAG$, then by definition of $\cAG$, for any $r\in G$ there is an $a_r\in \bZ^d$ for which $A = \rho_{-a_r}(\sigma_r(A)).$ Since $\cK$ commutes with the symmetry actions, it follows that $\cK(A) = \rho_{-a_r}(\sigma_r(\cK(A)))$. Combining the covariance relation \eqref{covariance} with  \eqref{alg_lattice_action} and \eqref{consistent_growth} then shows that for all $r\in G$
\[
\Pi_{X(m)}^\Lambda(\cK(A))=\rho_{-a_r}\circ\sigma_r\circ\Pi_{X(m)}^\Lambda(\cK(A)) \in \cAG.
\]
The claim in \eqref{S3_goal} then follows from verifying \eqref{consistent_growth} applies to $X=b_x^\Lambda(k)$. This can be seen by using \eqref{lattice_sym_d} to show that for each $r\in G$ there is $a_r\in \bZ^d$ for which $r(b_x^\Lambda(k)) = b_x^\Lambda(k) + a$. 

The above shows that models with (S3) symmetry breaking will satisfy spectral stability as described in Theorem~\ref{thm:sym_stability} if they satisfy Assumption~\ref{assumption:LTQO_N} and the uniform perturbation model criterion from Section~\ref{sec:uniform_sequences}. Since the translation invariance implies that $\Lambda_n^p = \Lambda_n$, Section~\ref{subsec:PBC} becomes relevant. In particular, the uniform perturbation model criterion will be satisfied if $r_x^{\Omega,\cE_0}(\Lambda_n)\geq r_n$ for all $x\in\Lambda_n$ for a sequence $r_n \to \infty$ sufficiently fast so that Assumption~\ref{ass:new_uni} holds with $\delta_n\to 0$. 

\vskip12pt
\subsubsection{The case (S2)} We again consider a lattice $\Gamma$ endowed with an action of $\bZ^d$ by translations and assume that the metric respects this action. For the local Hamiltonians  $H_{\Lambda_n}(s)$ we assume the same construction as in the case (S3) with one alteration: it is \emph{not} required that the initial interaction or the perturbation take values in $\cAG$. Note that the periodic boundary conditions and translation invariance again imply $\Lambda_n^p=\Lambda_n$ for all $n$, and so the results from Section~\ref{subsec:PBC} are relevant here as they were in the case (S3). Also, one can assume the anchored representation of the local Hamiltonians are again translation invariant as the anchoring procedure from Section~\ref{sec:balled-int} preserves this property.

Contrasted to the other two cases, the terms $\Phi^{(1)}(x,m,s)\in\cA_{b_x^\Lambda(m)}$ constructed in Section~\ref{sec:Step1} for a fixed finite volume $\Lambda$ usually do \emph{not} belong to the algebra $\cAG$, see \eqref{phi_1_x_bits}. Thus, the approach for the other two cases does not work here and an alternate argument is needed. 

Due to the covariance property \eqref{covariance} along with the assumptions of periodic boundary conditions and translation invariance, the anchored terms are again translation invariant, that is
\be\label{trans_inv}
\rho_a(\Phi^{(1)}(x,m,s)) = \Phi^{(1)}(x+a,m,s)
\ee
for all $a\in\bZ^d$, $x\in\Lambda$, $m\geq R$, and $0\leq s \leq s_\gamma^\Lambda$. Using this, we will recombine the anchored terms to produce elements of $\cAG$ for which the conclusions of Proposition~\ref{prop:step1-commute} and Theorem~\ref{thm:Step1} still hold. This is sufficient for applying the results of Section~\ref{sec:LTQO2} and Section~\ref{sec:uniform_sequences} as well as the theory developed in this section.

Recall that $G=\Ir_{N_1} \times \cdots \times\Ir_{N_d}$ is the set of translations generating $\cAG$, and denote by $R_G := \max_{a\in G} |a|$. Then, for $m\geq R$ define
\begin{equation}\label{phiG_def}
\Phi_G^{(1)}(x,m+R_G,s) = \frac{1}{|G|}\sum_{a\in G} \Phi^{(1)}(x+a,m,s). 
\end{equation}
Applying \eqref{trans_inv} one trivially finds that $\Phi_G^{(1)}(x,m+R_G,s)\in \cA_{b_x^\Lambda(m+R_G)}\cap \cAG$ as desired. It is clear that these terms are self-adjoint, and due to periodic boundary conditions:
\be\label{TI1}
V^{(1)}(s) := \sum_{x\in\Lambda}\sum_{m\geq R}\Phi^{(1)}(x,m,s) = \sum_{x\in \Lambda}\sum_{m\geq R+R_G}\Phi_G^{(1)}(x,m,s).
\ee
Furthermore, applying Theorem~\ref{thm:Step1} to $\Phi^{(1)}(x,m,s)$ shows that for $m\geq R+R_G$
\be\label{TI2}
\|\Phi_G^{(1)}(x,m,s)\| \leq sG^{(1)}_{\rm sym}(m)
\ee
where $G^{(1)}_{\rm sym}(m) = G^{(1)}(m-R_G)$ is also of decay class $(\eta, \frac{\gamma}{2v},\theta)$. Hence, the conclusions of Theorem~\ref{thm:Step1} also hold for the averaged terms $\Phi_G^{(1)}(x,m,s)$.

To verify the conclusions of Proposition~\ref{prop:step1-commute}, for each $x\in\Lambda$ we introduce the $G$-averaged global operators
\[
\Phi_{x,G}^{(1)}(s) := \sum_{m\geq R+R_G}\Phi_G^{(1)}(x,m,s) = \frac{1}{|G|}\sum_{a\in G} \Phi_{x+a}^{(1)}(s),
\]
where we use \eqref{phiG_def} for the second equality. If we denote by $P_\Lambda$ the ground state projection onto the unperturbed Hamiltonian, then applying Proposition~\ref{prop:step1-commute} to each $\Phi_{x+a}^{(1)}(s)$, shows that
\be\label{TI3}
[P_\Lambda, \Phi_{x,G}^{(1)}(s)] = \frac{1}{|G|}\sum_{a\in G}[P_\Lambda,\Phi_{x+a}^{(1)}(s)]=0
\ee
for all $0 \leq s\leq s_\gamma^\Lambda$ as desired.

From \eqref{TI1}-\eqref{TI3}, it is clear that one can continue through the arguments of Sections~\ref{sec:LTQO2}-\ref{sec:uniform_sequences} using the $G$-averaged local and global terms along with the decay function $G_{\rm sym}^{(1)}$. Since these terms belong to the algebra $\cAG$, the $G$-symmetric indistinguishability condition holds and the results from this section can again be applied with no modifications.

\subsection{A class of one-dimensional examples with discrete symmetry breaking}\label{sec:MPSexamples}

In this section we discuss a class of frustration-free quantum spin chains ($\Gamma=\Ir$) with discrete symmetry breaking for which 
the conditions for stability with symmetry breaking as discussed in the previous section can be explicitly verified. 

Suppose $\omega$ is a pure, translation invariant matrix product state of a quantum spin chain with a $n$-dimensional single-site Hilbert space. 
It is well-known, see \cite{fannes:1992}, that there exists a positive integer $R$, and a frustration-free interaction, $0\leq h^\omega\in\cA_{[0,R]}$ such 
that $\omega$ is the unique zero-energy ground state on the whole chain for the model with local Hamiltonians
\be\label{MPS_ham}
H^\omega_\Lambda = \sum_{x\in \Ir\atop
	[x,x+R]\subset\Lambda} h^\omega_x,
\ee
where $h_x^\omega\in\cA_{[x,x+r]}$ is a translated copy of $h^\omega$.
Let us denote by $P^\omega_\Lambda$ the orthogonal projection onto 
$\ker H^\omega_\Lambda$. Then, there exists $\lambda\in [0,1)$,
and a constant $C$ such that for all $A\in\cA_{[a,b]}$ and $[a,b]\subset[l,r]$ we have ground state indistinguishability with an exponential rate:
\be\label{MPS_LTQO}
\Vert P_{[l,r]} A P_{[l,r]}  - \omega(A) P_{[l,r]} \Vert \leq C \Vert A\Vert (\lambda^{a-l}+\lambda^{r-b}).
\ee
The explicit details of this argument can be found in Theorem~\ref{thm:MPS_LTQO}. Thus, it is clear that $r_x^{\Omega}(\Lambda)\geq d(x,\partial\Lambda)$ for $\Omega(n)=2C\lambda^n$. In particular, given a perturbation model for initial Hamiltonians and $\Omega(r)$ as above, it is necessarily a \emph{uniform perturbation model}. Moreover, the conditions of Corollary~\ref{cor:zerod+e} hold with perturbation regions $\Lambda_n^p \uparrow \bZ$, and hence the results from Section~\ref{sec:automorphic-equivalence} apply.

Next, consider $N$ distinct pure, translation invariant matrix product states $\omega_1,\cdots,\omega_N$ of a quantum spin chain. 
It was proved in \cite[page 570, Theorem 1]{nachtergaele:1996} that there exists a finite-range frustration-free interaction $0\leq h\in\cA_{[0,R]}$ for which the set of zero-energy ground states of the infinite chains is exactly given by the convex hull of $\cS_0:=\{\omega_1,\cdots,\omega_N\}$. In the same paper, it was also proved that $h$ can be taken such that the orthogonal projection onto the ground state space is given by $P_{[a,b]}=\bigvee_{i=1}^N P^{\omega_i}_{[a,b]}$ where $P_{[a,b]}^{\omega_i}$ is the orthogonal projection onto $\ker(H_{[a,b]}^{\omega_i})$, and that there is a uniform positive lower bound on the spectral gaps: \[\gamma_0:=\inf_{[a,b] \in\cP_0(\bZ)}\gap(H_{[a,b]})>0. \]

For models with this construction, one necessarily has that \eqref{Pi1}-\eqref{approx_ortho} from Assumption~\ref{assumption:LTQO_N} are satisfied. First, \eqref{Pi2} holds trivially with $P_{[a,b]}^i = P_{[a,b]}^{\omega_i}$. Second, as the matrix product states are distinct, there is $\lambda\in[0,1)$ and $C>0$ so that
\be
\left\Vert P^{\omega_i}_{[a-n,b+n]} A P^{\omega_j}_{[a-n,b+n]}  -\delta_{ij} \omega_i(A) P^{\omega_i}_{[a-n,b+n]} \right\Vert \leq C \Vert A\Vert \lambda^n.
\label{LTQO_ij}
\ee
For $i=j$ this is clear from \eqref{MPS_LTQO}, and the case of $i\neq j$ is discussed in Appendix~\ref{app:N_MPS}. Finally, applying the above with $A=\idty$ shows that the projections $P^{\omega_i}_{[a-n,b+n]}$ are nearly pairwise orthogonal for large $n$:
\be
\Vert P^{\omega_i}_{[a-n,b+n]} P^{\omega_j}_{[a-n,b+n]}\Vert \leq C\lambda^n , \mbox{ for } i\neq j,
\ee
from which \eqref{Pi1} holds.

We now specialize to the situation in which the distinct pure states $\omega_i$ are related by a finite symmetry.
Suppose we have a unitary representation $G\ni g\mapsto U_g$ of a finite group $G$
on the single site Hilbert space $\cH_x$, and let $\sigma_g$ 
denote the corresponding automorphisms acting on the algebra of quasi-local observables $\cA_\Gamma$. Given a pure, translation invariant matrix product state $\omega$, consider the set of pure states, $\cE_0$, defined by 
\be\label{G_MPS}
\cE_0 = \{ \omega\circ\sigma_g \mid g\in G\}.
\ee
Then, $\cE_0$ is a finite set of mutually disjoint translation-invariant pure states to which the previous discussion can be applied, and moreover,
the corresponding frustration-free finite-range interaction can be chosen such that $\sigma_g(h)=h$. For such models we have the following theorem for stability.

\begin{thm}\label{thm:stability1D}
	Let $h\geq 0$ be a $G$-symmetric, frustration-free interaction for the set of distinct MPS $\cE_0$ as in \eqref{G_MPS}. For any $\Lambda_n\uparrow \bZ$ and any interaction $\Phi\in\cB_F$ with $F$ for a weighted $F$-function as in \eqref{g_grows_theta}, there is a sequence of perturbation regions $\Lambda_n^p\uparrow\bZ$ so that
	\[
	H_{\Lambda_n}(s) = \sum_{x, [x,x+R]\subset \Lambda_n}h_x + s\sum_{\substack{X\subseteq \Lambda_n \\ X\cap \Lambda_n^p \neq \emptyset}}\Phi(X)
	\]
	is a uniform gauge symmetry-breaking perturbation model. Moreover, there is an $s_\gamma>0$ for each $0<\gamma<\gamma_0$ so that
	\begin{enumerate}
		\item[(i)] $\cS_s$ is a simplex of ground state of the infinite spin chain ($\Gamma=\Ir$).
		\item[(ii)] The GNS Hamiltonian in each of the extreme points of $\cS_s$ has a simple ground state and a spectral gap bounded below by $\gamma$.
	\end{enumerate}
\end{thm}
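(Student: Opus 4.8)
The plan is to verify that the hypotheses of Theorem~\ref{thm:sym_stability} are satisfied for the family $H_{\Lambda_n}(s)$ built from a $G$-symmetric, frustration-free interaction $h$ associated with the finite set of MPS $\cE_0 = \{\omega \circ \sigma_g \mid g \in G\}$, and then simply invoke that theorem. Concretely, I must exhibit (a) the sequence of finite volumes, initial Hamiltonians, separating partitions, and perturbation regions making this a uniform gauge symmetry-breaking perturbation model in the sense of Section~\ref{subsec:Gbroken_radius}; (b) the validity of Assumption~\ref{assumption:LTQO_N} (LTQO with $N = |G|$ $G$-broken phases) with an exponentially decaying $\Omega$; and (c) the condition \eqref{lim=0}, i.e.\ $\delta_n + \epsilon_n \to 0$. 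Once these are in place, Theorem~\ref{thm:sym_stability} yields exactly claims (i) and (ii).

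\textbf{Setting up the model and the LTQO input.} First I would recall from \cite{nachtergaele:1996} that, since $\cE_0$ is a finite set of distinct translation-invariant pure MPS, there is a finite-range frustration-free interaction $h \geq 0$ whose infinite-chain ground states form the convex hull of $\cE_0$, with ground state projections $P_{[a,b]} = \bigvee_{i} P_{[a,b]}^{\omega_i}$ and a uniform local gap $\gamma_0 > 0$; and when the $\omega_i$ are the $G$-orbit of a single $\omega$, $h$ may be chosen $G$-symmetric, $\sigma_g(h) = h$. This immediately supplies a uniformly gapped sequence of frustration-free $G$-symmetric initial Hamiltonians $H_{\Lambda_n} = \sum_{x:\,[x,x+R]\subset\Lambda_n} h_x$ for any $\Lambda_n \uparrow \bZ$. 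The separating partitions are trivial in one dimension (intervals): for $\Gamma = \bZ$ one can take $\Lambda_n(x,m) = b_x^{\Lambda_n}(m)$ and standard one-dimensional colorings give a family of $(c,\zeta)$-polynomial-growth separating partitions with, say, $\zeta = 1$. The key estimate is the $G$-broken indistinguishability bound \eqref{approx_ortho}: using \eqref{MPS_LTQO} for the diagonal terms $i = j$ and the argument of Appendix~\ref{app:N_MPS} (equation \eqref{LTQO_ij}) for $i \neq j$, one gets a constant $C$ and $\lambda \in [0,1)$ with
\[
\bigl\| P^{\omega_i}_{b_x^{\Lambda_n}(m)} A\, P^{\omega_j}_{b_x^{\Lambda_n}(m)} - \delta_{ij}\,\omega_i(A)\, P^{\omega_i}_{b_x^{\Lambda_n}(m)} \bigr\| \leq |b_x^{\Lambda_n}(k)|\,\|A\|\,\Omega(m-k)
\]
for $\Omega(n) = 2C\lambda^n$, whence $r_x^{\Omega,\cE_0}(\Lambda_n) \geq d_n(x, \partial\Lambda_n) - c$ for a model-dependent constant $c$, and in particular $r_x^{\Omega,\cE_0}(\Lambda_n) \to \infty$ for every fixed $x$ as $\Lambda_n \uparrow \bZ$. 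Applying this bound with $A = \idty$ gives near-orthogonality of the $P^{\omega_i}$, which is precisely condition (i), \eqref{Pi1}, of Assumption~\ref{assumption:LTQO_N}; condition (ii), \eqref{Pi2}, is immediate. Transitivity of the $G$-action on $\cE_0$ holds by construction of \eqref{G_MPS}, so Assumption~\ref{assumption:LTQO_N} is fully verified.

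\textbf{Choosing the perturbation regions and checking uniformity.} With $\Omega$ exponentially decaying, I would invoke the sufficient conditions in Section~\ref{sec:veri_ass}: choose the sequences $K_n, L_n$ to grow, say, like $K_n = L_n = \lceil \ell_n^{\alpha} \rceil$ for a suitable $\alpha \in (0,1)$ where $\ell_n = \lceil \diam_n \Lambda_n \rceil$, and define $\Lambda_n^p$ by \eqref{eq:broken_LTQO_regions}. Because $\Omega$ decays exponentially, it is $O(r^{-q})$ for \emph{every} $q$, so both conditions \eqref{check_ass_1} and \eqref{check_ass_2} (equivalently \eqref{make_go_0}, \eqref{omeg_int_bd}, \eqref{omeg_sup_bd}) hold with room to spare, using that the decay functions $G$, $G^{(1)}$ obtained from Theorems~\ref{thm:Step1} and~\ref{thm:rel_bound_cond} are of decay class $(\eta, \frac{\gamma}{2v}, \theta)$ with all finite moments; a linear choice $f(t) = at$, $a \in (0,1)$, suffices. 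Since in one dimension $|\Lambda_n| = \ell_n$ and the local gaps satisfy the uniform lower bound $\gamma_n(m) \geq \gamma_0$ (the $k=0$ power-law case), the quantities $\delta_n$ from \eqref{def:delta_n} and $\beta_n$ from \eqref{def:beta_n} are uniformly bounded, and in fact $\delta_n + \epsilon_n \to 0$, giving \eqref{lim=0}. Also $\Lambda_n^p \uparrow \bZ$ follows because $r_x^{\Omega,\cE_0}(\Lambda_n)$ eventually exceeds $K_n + L_n$ at any fixed $x$. Thus the family is a uniform gauge symmetry-breaking perturbation model with $\delta_n \to 0$, and also an everywhere $G$-indistinguishable one, so Theorem~\ref{thm:sym_stability} applies and delivers (i) and (ii).

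\textbf{Expected main obstacle.} The genuinely substantive input is not the abstract machinery but the off-diagonal MPS indistinguishability estimate \eqref{LTQO_ij} for $i \neq j$ (deferred to Appendix~\ref{app:N_MPS}): one must control $\|P^{\omega_i}_{I} A P^{\omega_j}_{I}\|$ uniformly in the interval $I$ by $C\|A\|\lambda^{\mathrm{dist}}$, which requires the transfer-operator/finite-correlation-length analysis for a \emph{pair} of distinct MPS and the fact that their boundary blocks decouple exponentially; the $G$-symmetry only makes the constants uniform across phases, it does not trivialize this step. A secondary point of care is bookkeeping the constant $c$ in $r_x^{\Omega,\cE_0}(\Lambda_n) \geq d_n(x,\partial\Lambda_n) - c$ and confirming it is $\Lambda_n$-independent (it depends only on $R$ and the MPS data), so that the growth conditions on $K_n, L_n$ and the verification of Assumption~\ref{ass:uni_pert_model} via Section~\ref{sec:veri_ass} go through with uniform constants; given the exponential decay of $\Omega$ this is routine.
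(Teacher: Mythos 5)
Your proposal is correct and follows essentially the same route as the paper's proof: verify Assumption~\ref{assumption:LTQO_N} with an exponentially decaying $\Omega$ via the MPS estimate \eqref{LTQO_ij}, choose $K_n, L_n$ as in \eqref{def:L_nK_n} so that Corollary~\ref{cor:zerod+e} applies, and then invoke Theorem~\ref{thm:sym_stability}. Your write-up is more detailed about the separating partitions and the $k=0$ local gap bound, but these points are already subsumed in the paper's reference to Sections~\ref{sec:veri_ass}--\ref{subsec:PBC} and the definition of a uniform perturbation model, so there is no substantive deviation.
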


\begin{proof}
	From \eqref{LTQO_ij} and the surrounding discussion, we see that Assumption~\ref{assumption:LTQO_N} holds with an exponentially decaying function $\Omega$ for which
	\[
	r_x^{\Omega,\cE_0}(\Lambda_n)\geq d(x,\partial\Lambda_n).
	\] 
	This decay implies that for any sequence of increasing and absorbing intervals $\Lambda_n\uparrow \bZ$, there exists a sequence perturbation regions
	\[
	\Lambda_n^p = \{ x \in \Lambda_n : r_y^{\Omega,\cE_0}( \Lambda_n) \geq K_n + L_n \mbox{ for all } y \in b_x^{\Lambda_n}(K_n) \}
	\]
	for which Corollary~\ref{cor:zerod+e} applies. For example, one can choose the sequences $L_n$ and $K_n$ as described in Section~\ref{sec:uniform_sequences_geometric-bc}, see \eqref{def:L_nK_n}. The claims regarding the infinite volume then follow from Theorem~\ref{thm:sym_stability}.
\end{proof}

An identical construction using site-blocking of the local Hilbert spaces can be applied to obtain models with spontaneous breaking of the lattice translation invariance that have $p$ distinct $p$-periodic
ground states, for any $p\geq 2$, see e.g. \cite{nachtergaele:1996}. Other symmetries, such as lattice reflection and charge conjugation can be treated in the same way. In each case 
Assumption~\ref{assumption:LTQO_N} holds with exponential decay as well as an analog of Theorem~\ref{thm:stability1D}.

\appendix

\section{Estimating Transformations of Anchored Interactions} \label{sec:est-trans-balled-ints}

In this section, we review some basic estimates concerning quasi-local transformations
of interactions. Many of the results proven here run parallel to estimates that can
be found in \cite[Section V.D.]{nachtergaele:2019}. However, we restrict our attention to anchored interactions here,
and this causes slight differences in some arguments. 

We begin with a simple lemma. For comparison, this lemma will play the role as \cite[Lemma A.9]{nachtergaele:2019}. Let us introduce the following notation. For any $x \in \mathbb{R}$, set 
\begin{equation} \label{non-neg-part}
| x |_+ = \max(x,0) \, .
\end{equation}
Furthermore, we will say that a function $f:[0, \infty) \to [0, \infty)$ is summable if
\begin{equation}
\| f \| = \sum_{n=0}^{\infty} f(n) < \infty \, .
\end{equation}
\begin{lemma} \label{lem:sum-dec-est} Let $F,G :[0, \infty) \to [0, \infty)$ be summable functions.
	If $G$ is also non-increasing, then for any $R \geq 0$, one has that
	\begin{equation} \label{sum-dec-est}
	\sum_{k \geq 0} G(k) \sum_{n \geq |R-k-1|_+} F(n) \leq \min\left( \| G \| \| F \|, H(R) \right)
	\end{equation} 
	where $H:[0, \infty) \to [0, \infty)$ is given by
	\begin{equation}
	H(R) = G(0) \cdot \lfloor R/2 \rfloor \sum_{n \geq \lfloor R/2 \rfloor} F(n) + \| F \| \sum_{n \geq \lfloor R/2 \rfloor } G(n). 
	\end{equation}
\end{lemma}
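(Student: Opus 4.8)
The plan is to prove the two bounds on the left-hand side of \eqref{sum-dec-est} separately. The first bound, $\|G\|\|F\|$, is immediate: since $|R-k-1|_+ \geq 0$ and $F \geq 0$, we have $\sum_{n \geq |R-k-1|_+} F(n) \leq \sum_{n \geq 0} F(n) = \|F\|$ for every $k$, and then summing the factor $G(k)$ over $k \geq 0$ gives $\|G\|\|F\|$. This requires no monotonicity assumption.

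For the second bound, $H(R)$, the idea is to split the outer sum over $k$ at the threshold $k = \lfloor R/2 \rfloor$. First I would handle the case $R = 0$ (or small $R$) trivially, since then $H(R) \geq \|F\|\|G\|$ or the inner sums are already full; alternatively one checks the inequality is vacuous. For $R \geq 1$: in the range $0 \leq k < \lfloor R/2 \rfloor$, the key observation is that $R - k - 1 \geq R - \lfloor R/2 \rfloor \geq \lfloor R/2 \rfloor$ (using $R - \lfloor R/2 \rfloor = \lceil R/2 \rceil \geq \lfloor R/2 \rfloor$), so $|R-k-1|_+ \geq \lfloor R/2 \rfloor$ and hence $\sum_{n \geq |R-k-1|_+} F(n) \leq \sum_{n \geq \lfloor R/2 \rfloor} F(n)$. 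Bounding $G(k) \leq G(0)$ (here monotonicity of $G$ enters) for each of the at most $\lfloor R/2 \rfloor$ values of $k$ in this range yields the contribution $G(0) \cdot \lfloor R/2 \rfloor \sum_{n \geq \lfloor R/2 \rfloor} F(n)$, the first term of $H(R)$. In the complementary range $k \geq \lfloor R/2 \rfloor$, I would simply bound the inner sum $\sum_{n \geq |R-k-1|_+} F(n) \leq \|F\|$ and keep the factor $G(k)$, giving $\|F\| \sum_{k \geq \lfloor R/2 \rfloor} G(k)$, the second term of $H(R)$. Adding the two contributions gives $H(R)$, and combining with the first bound gives the stated minimum.

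I do not anticipate a serious obstacle here; the only mild care needed is the arithmetic with floors — verifying that $k < \lfloor R/2 \rfloor$ implies $R - k - 1 \geq \lfloor R/2 \rfloor$, which follows from $R - k - 1 \geq R - (\lfloor R/2 \rfloor - 1) - 1 = R - \lfloor R/2 \rfloor = \lceil R/2 \rceil \geq \lfloor R/2 \rfloor$ — and making sure the count of indices $k$ in the first range is at most $\lfloor R/2 \rfloor$ (namely $k \in \{0, 1, \dots, \lfloor R/2 \rfloor - 1\}$, which has exactly $\lfloor R/2 \rfloor$ elements). The non-negativity and summability of $F$ and $G$ and the monotonicity of $G$ are exactly the hypotheses used, so the proof will be short and self-contained.
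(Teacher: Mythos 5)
Your proposal is correct and takes essentially the same route as the paper: bound the inner sum by $\|F\|$ for the trivial estimate, and for $H(R)$ split the $k$-sum at $\lfloor R/2 \rfloor$, using monotonicity of $G$ (to replace $G(k)$ by $G(0)$) and the lower bound $R - k - 1 \geq \lfloor R/2 \rfloor$ on the first block, and the trivial bound $\|F\|$ on the second block. The only cosmetic difference is that the paper observes $H(R) = \|G\|\|F\|$ exactly when $0 \leq R < 2$ (since $\lfloor R/2 \rfloor = 0$), whereas your treatment of small $R$ is slightly hand-waved; also note that your intermediate identity $R - \lfloor R/2 \rfloor = \lceil R/2 \rceil$ requires integer $R$, though the inequality $R - \lfloor R/2 \rfloor \geq \lfloor R/2 \rfloor$ you actually need holds for all $R \geq 0$.
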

\begin{proof}
	First note that for any $R \geq 0$, one has the naive bound
	\begin{equation}
	\sum_{k \geq 0} G(k) \sum_{n \geq |R-k-1|_+} F(n) \leq  \| G \|  \| F \| \, .
	\end{equation}
	It is also clear that for any $0 \leq R < 2$, $H(R) = \| G \| \| F \|$. 
	
	For $R \geq 2$, a different estimate holds. Note that, in this case, 
	for $0 \leq k \leq \lfloor R/2 \rfloor -1$, one has
	\begin{equation}
	R-k-1 \geq R - \lfloor R/2 \rfloor \geq R/2 
	\end{equation} 
	and so since $G$ is non-increasing,
	\begin{equation}
	\sum_{k=0}^{\lfloor R/2 \rfloor -1} G(k) \sum_{n \geq |R-k-1|_+} F(n) \leq G(0) \cdot \lfloor R/2 \rfloor \sum_{n \geq \lfloor R/2 \rfloor} F(n) .
	\end{equation}
	For the remaining terms, it is clear that
	\begin{equation}
	\sum_{k \geq \lfloor R/2 \rfloor } G(k) \sum_{n \geq |R-k-1|_+} F(n) \leq \| F \| \sum_{k \geq \lfloor R/2 \rfloor } G(k)
	\end{equation} 
	This proves the estimate in (\ref{sum-dec-est}). 
\end{proof}

Let us now review the notion of a transformed interaction.
Consider a $\nu$-regular metric space $(\Gamma, d)$, and assume that $\Lambda \subset \Gamma$ is finite and there is an associated quantum lattice system $\mathcal{A}_{\Lambda} = \mathcal{B}( \mathcal{H}_{\Lambda})$,
see Section~\ref{sec:QSS} for more details. For any $x \in \Lambda$ and
$n \geq 0$, let us also denote by $b_x^{\Lambda}(n) = \{ y \in \Lambda : d(x,y) \leq n \}$.   Let $V \in \mathcal{A}_{\Lambda}$ denote the Hamiltonian associated with an anchored interaction $\Phi:\Lambda\times\bZ_{\geq 0}\to \cA_\Lambda$, see Definition~\ref{def:ball-int}, i.e.
\begin{equation} \label{Phi-balled}
V = \sum_{x \in \Lambda} \sum_{n \geq 0} \Phi(x,n) \, .
\end{equation}
For any linear map $\mathcal{K} : \mathcal{A}_{\Lambda} \to \mathcal{A}_{\Lambda}$, we will refer to the composition
\begin{equation} \label{apply-K}
\mathcal{K}(V) = \sum_{x \in \Lambda} \sum_{n \geq 0} \mathcal{K}( \Phi(x,n))
\end{equation}
as a {\it transformed interaction}. If the map $\mathcal{K}$ commutes with the involution,  i.e.
\begin{equation} \label{K-com-inv}
\mathcal{K}(A)^* = \mathcal{K}(A^*) \quad \mbox{for all } A \in \mathcal{A}_{\Lambda} \, ,
\end{equation}
then one may re-write this composition $\mathcal{K}( V)$ as an anchored interaction. In fact, 
using the local decompositions described in Section~\ref{sec:QL+LD}, see (\ref{def:Delta}), 
we see that for each $x \in \Lambda$ and $n \geq 0$
\begin{equation} \label{chop-terms}
\mathcal{K}(\Phi(x,n)) = \sum_{m \geq n} \Delta_{x,n;m}^{\Lambda}(\mathcal{K}(\Phi(x,n)))
\end{equation}
where we used (\ref{telescope}). Now inserting (\ref{chop-terms}) into (\ref{apply-K}), we find that
\begin{equation} \label{KPhi-balled}
\mathcal{K}(V) = \sum_{x \in \Lambda} \sum_{n \geq 0} \sum_{m \geq n} \Delta_{x,n; m}^{\Lambda}(\mathcal{K}(\Phi(x,n))) 
= \sum_{x \in \Lambda} \sum_{m \geq 0} \Psi(x,m) 
\end{equation}
where we have re-ordered the sums on $m$ and $n$ and set
\begin{equation} \label{def-psi-x-m}
\Psi(x,m) = \sum_{n=0}^m \Delta_{x, n; m}^{\Lambda}( \mathcal{K}(\Phi(x,n))) \, .
\end{equation}

It is easy to check that $\mathcal{K}(V)$, as written in (\ref{KPhi-balled}), is an anchored interaction as in Definition~\ref{def:ball-int}. 
In fact, since $\mathcal{K}$ commutes with the involution, we have that
\begin{equation}
\Psi(x,m)^* = \Psi(x,m) \in \mathcal{A}_{b_x^{\Lambda}(m)} \quad \mbox{for all } x \in \Lambda \mbox{ and } m \geq 0 \, .
\end{equation} 

Moreover, if $\Phi$ satisfies (\ref{support_property}), then so too does $\Psi$. In fact, if for some $x \in \Lambda$ and $m \geq 0$, we have that $\Psi(x,m) \neq 0$, then
there is $0 \leq n \leq m$ for which $\Delta^{\Lambda}_{x,n;m}(\mathcal{K}(\Phi(x,n))) \neq 0$.
If $n=m$, then $0 \neq \Delta^{\Lambda}_{x,m;m}(\mathcal{K}(\Phi(x,m))) = \Pi^{\Lambda}_{b_x^{\Lambda}(m)}( \mathcal{K}(\Phi(x,m)))$
and therefore, $\Phi(x,m) \neq 0$. In this case, since $\Phi$ satisfies (\ref{support_property}), we know that there
are points $y,z \in b_x^{\Lambda}(m)$ for which $d(y,z)>m-1$. Otherwise, there is $0 \leq n <m$ for which
$\Delta^{\Lambda}_{x,n;m}(\mathcal{K}(\Phi(x,n))) \neq 0$ and thus 
$\Delta^{\Lambda}_{x,n;m} = \Pi^{\Lambda}_{b_x^{\Lambda}(m)} - \Pi^{\Lambda}_{b_x^{\Lambda}(m-1)} \not\equiv 0$. In this case,
it must be that $b_x^{\Lambda}(m) \setminus b_x^{\Lambda}(m-1) \neq \emptyset$. Thus with $y =x$ and $z \in b_x^{\Lambda}(m) \setminus b_x^{\Lambda}(m-1)$,
we find $y,z \in b_x^{\Lambda}(m)$ for which $d(y,z)>m-1$.

We will now show that if $\Phi$ and $\mathcal{K}$ have appropriate decay, then so too does $\Psi$.
Let us now assume that the linear mapping $\mathcal{K}: \mathcal{A}_{\Lambda} \to \mathcal{A}_{\Lambda}$ 
is locally bounded and quasi-local. More precisely, we assume that:
\begin{enumerate}
	\item[(i)]{\it $\mathcal{K}$ is locally bounded:} There is a number $p \geq 0$ and $B< \infty$ for which
	\begin{equation} \label{K-lb}
	\| \mathcal{K}(A) \| \leq B |X|^p \| A \| .
	\end{equation}
	for all $A \in \mathcal{A}_X$ with $X \subset \Lambda$. Here $p$ is called the order of the local bound for $\mathcal{K}$.
	\item[(ii)] {\it $\mathcal{K}$ is quasi-local:} There is a number $q \geq 0$ and a non-increasing function $G:[0, \infty) \to (0, \infty)$ with
	$\lim_{r \to \infty} G(r) = 0$ for which
	\begin{equation} \label{K-ql}
	\| [ \mathcal{K}(A), B ] \| \leq |X|^q \| A \| \| B \| G(d(X,Y))
	\end{equation}
	for all $A \in \mathcal{A}_X$, $B \in \mathcal{A}_Y,$ and $X,Y \subset \Lambda$. Here $q$ is called the order of the quasi-locality bound for $\mathcal{K}$ and $G$ is the associated
	decay function. 
	
\end{enumerate}

Let $F$ be an $F$-function on $(\Gamma, d)$ in the sense described in Section~\ref{sec:Fnorm}.
For any $r \geq 0$, we will say that $\Phi$ has an $r$-th moment which is bounded by $F$ if
there is a number $\| \Phi \|_{r,F} < \infty$ for which given any $y, z \in \Lambda$,
\begin{equation} \label{fin-mth-mom}
\sum_{x \in \Lambda} \sum_{\stackrel{n \geq 0:}{y,z \in b_x^{\Lambda}(n)}} | b_x^{\Lambda}(n)|^r \| \Phi(x,n) \| \leq \| \Phi \|_{r,F} F(d(y,z)).
\end{equation}

The following result is the analogue of \cite[Theorem 5.13]{nachtergaele:2019} for anchored interactions.
\begin{thm} \label{thm:trans-int-bd} Let $\mathcal{K} : \mathcal{A}_{\Lambda} \to \mathcal{A}_{\Lambda}$ be a linear map which is 
	locally bounded and quasi-local, and $F$ be an $F$-function on $(\Gamma, d)$.
	Set $r= \max(p,q)$ with $p$ and $q$, respectively, the orders of the local
	bound and quasi-local estimate for $\mathcal{K}$, and let $\Phi\in \cB_F^r$ be an anchored interaction on $\Lambda$, satisfying (\ref{support_property}). In this case, the terms of transformed interaction $\Psi$ defined as in (\ref{KPhi-balled})-(\ref{def-psi-x-m}) satisfy the following bound: for any $y,z \in \Lambda$,
	\begin{eqnarray} \label{Psi-Fnorm-est}
	\sum_{x \in \Lambda} \sum_{\stackrel{m \geq 0:}{y,z \in b_x^{\Lambda}(m)}} \| \Psi(x,m) \| & \leq & B \| \Phi \|_{r,F}F(d(y,z)) + 4 \kappa d(y,z)^v H(d(y,z)/2) \nonumber \\
	& \mbox{ } & \quad + 4 \sum_{w \in \{y,z\}} \sum_{\stackrel{x \in \Lambda:}{d(x,w)>d(y,z)}} H(d(x,w)) \, .
	\end{eqnarray}
	Here the function $H:[0, \infty) \to [0, \infty)$ is given by
	\begin{equation} \label{def:trans-H}
	H(R) = G(0) \| \Phi \|_{r,F} \lfloor R/2 \rfloor F( \lfloor R/2 \rfloor - 1 ) + \| \Phi \|_{r,F} F(0) \sum_{k \geq \lfloor R/2 \rfloor } G(k)
	\end{equation}
\end{thm}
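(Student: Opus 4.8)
The goal is to bound $\sum_x \sum_{m: y,z\in b_x^\Lambda(m)} \|\Psi(x,m)\|$, where $\Psi(x,m) = \sum_{n=0}^m \Delta_{x,n;m}^\Lambda(\mathcal{K}(\Phi(x,n)))$. The natural first move is to split each term according to whether $m=n$ (the ``diagonal'' part, coming from the strictly local piece $\Pi_{b_x^\Lambda(m)}^\Lambda$) or $m>n$ (the ``off-diagonal'' part, a genuine local decomposition increment). For the diagonal contribution, $\|\Delta_{x,m;m}^\Lambda(\mathcal{K}(\Phi(x,m)))\| = \|\Pi_{b_x^\Lambda(m)}^\Lambda(\mathcal{K}(\Phi(x,m)))\| \leq \|\mathcal{K}(\Phi(x,m))\| \leq B |b_x^\Lambda(m)|^p \|\Phi(x,m)\| \leq B|b_x^\Lambda(m)|^r\|\Phi(x,m)\|$ by the local bound (\ref{K-lb}) and $r\geq p$. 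Summing over $x$ and over $m$ with $y,z\in b_x^\Lambda(m)$ and invoking the $r$-th moment bound (\ref{fin-mth-mom}), this contributes exactly the first term $B\|\Phi\|_{r,F}F(d(y,z))$ on the RHS of (\ref{Psi-Fnorm-est}).

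For the off-diagonal part with $m>n$, I would apply Lemma~\ref{lem:qlm_loc_est}, specifically the bound (\ref{Delta_bd}): $\|\Delta_{x,n;m}^\Lambda(\mathcal{K}(\Phi(x,n)))\| \leq 4|b_x^\Lambda(n)|^q \|\Phi(x,n)\| G(m-n-1)$. Since the terms $\Psi(x,m)$ being summed are those with $y,z\in b_x^\Lambda(m)$, I first want to organize the double sum over $n\leq m$ in terms of the pair $(x,n)$ — i.e. swap the roles so that the outer index is the anchoring pair $(x,n)$ of the original interaction and the inner sum runs over $m\geq n$ with $y,z\in b_x^\Lambda(m)$. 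The key geometric fact is that $y,z\in b_x^\Lambda(m)$ forces a relation between $m$ and the distances $d(x,y), d(x,z)$, and since $\Phi$ satisfies (\ref{support_property}) (so $\diam(b_x^\Lambda(n))>n-1$), one also controls $n$ in terms of the support of $\Phi(x,n)$. The decisive case distinction is whether the pair $\{y,z\}$ is ``close'' to $x$ relative to $d(y,z)$ or not: either $b_x^\Lambda(n)$ already sees both $y$ and $z$ — in which case one gets a geometric sum $\sum_{m\geq n}(\text{poly})\,G(m-n-1)$ weighted against $F$ evaluated near $n$, handled by Lemma~\ref{lem:sum-dec-est} with $R\sim d(y,z)$ — or only one of $y,z$ (say $w$) lies far from $x$, i.e. $d(x,w)>d(y,z)$, giving the third term of (\ref{Psi-Fnorm-est}). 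The factors of $\kappa$ and powers of $d(y,z)$ arise from $\nu$-regularity bounds $|b_x^\Lambda(k)|\leq \kappa k^\nu$ applied to the $|b_x^\Lambda(n)|^q$ and the count of sites $x$ in intermediate shells; the factor $4$ in the last two terms propagates from (\ref{Delta_bd}).

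More precisely, I would write $4|b_x^\Lambda(n)|^q\|\Phi(x,n)\|G(m-n-1)$ and sum first over $m$: setting $k=m-n-1$, using $y,z\in b_x^\Lambda(m)$ and the fact that $m$ ranges over an interval whose lower endpoint is roughly $\max(d(x,y),d(x,z),n)$, a direct application of Lemma~\ref{lem:sum-dec-est} (with $F$ there being $G$, and the role of $F$ played by the weight $F$ of the interaction, or more carefully just using that $G$ is summable and non-increasing) produces the function $H$ as defined in (\ref{def:trans-H}). One then sums over the anchoring pairs $(x,n)$ using the $r$-th moment bound (\ref{fin-mth-mom}) — applied with the two marker sites being $\{y,z\}$ when $b_x^\Lambda(n)$ contains both, and with a single marker $w\in\{y,z\}$ together with a site witnessing the support property when it contains only one — and the geometric sorting by $d(x,w)$ versus $d(y,z)$ yields exactly the two $H$-terms in (\ref{Psi-Fnorm-est}).

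\textbf{Main obstacle.} The routine calculations (the local bound, the telescoping structure, $\nu$-regularity counting) are straightforward; the delicate point is the bookkeeping of the geometric case distinction — correctly identifying, for a given anchoring pair $(x,n)$ and a given pair $\{y,z\}$ with $y,z\in b_x^\Lambda(m)$ for some $m\geq n$, which regime $x$ falls into, and ensuring the three contributions on the RHS of (\ref{Psi-Fnorm-est}) genuinely cover all cases without double-counting and with the stated constants. This is precisely where one must exploit both (\ref{support_property}) for $\Phi$ (to lower-bound $\diam(b_x^\Lambda(n))$, hence get a usable marker-pair inside the support of $\Phi(x,n)$) and the non-increasing nature of $G$ and $F$ to push the geometric sum into the single function $H$. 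I expect the argument to parallel closely the proof of \cite[Theorem 5.13]{nachtergaele:2019}, the only real difference being that anchored interactions make the $(x,n)$-indexing of the moment bound (\ref{fin-mth-mom}) directly available, which actually simplifies the combinatorics relative to the general-subset case.
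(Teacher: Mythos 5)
Your first step is exactly right: split off the $n=m$ diagonal term, bound it by the local estimate (\ref{K-lb}) and the $r$-th moment bound (\ref{fin-mth-mom}) to get the term $B\|\Phi\|_{r,F}F(d(y,z))$, and bound the $n<m$ increments via Lemma~\ref{lem:qlm_loc_est}. The change of variables $k=m-n-1$ and the appeal to Lemma~\ref{lem:sum-dec-est} and a geometric sorting are also the right tools. But your bookkeeping for the off-diagonal part has two gaps.

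First, your case dichotomy for a pair $(x,n)$ — ``$b_x^\Lambda(n)$ already sees both $y$ and $z$'' or ``only one of $y,z$ lies far from $x$'' — is incomplete: there is a third, unavoidable case in which $b_x^\Lambda(n)$ contains \emph{neither} $y$ nor $z$ (e.g.\ $n=0$ and $d(x,y), d(x,z) \leq m$ but both $>0$). Moreover, in the ``only one'' case your proposed use of the moment bound (\ref{fin-mth-mom}) with one marker $w$ and ``a site witnessing the support property'' requires the second marker to vary with $(x,n)$, which is not a valid single application of (\ref{fin-mth-mom}). The paper avoids all of this by making the case split depend on $x$ only, not $(x,n)$: setting $\Lambda_{y,z}=b_y^\Lambda(d(y,z))\cap b_z^\Lambda(d(y,z))$ and splitting the sum over $x$ into $x\in\Lambda_{y,z}$ versus $x\in\Lambda\setminus\Lambda_{y,z}$.

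Second, the aggregation order. You propose summing over $m$ (i.e.\ over $k$) first with $(x,n)$ fixed, and claim this ``produces the function $H$'' before summing over $(x,n)$ via the moment bound. But for $(x,n)$ fixed the $m$-sum is a one-variable tail of $G$; the function $H$ does not arise that way. In the paper's proof the anchoring site $x$ is held fixed, the combined $(k,n)$ double sum over $n\geq|m_0(x)-k-1|_+$ (where $m_0(x)$ is the smallest radius with $y,z\in b_x^\Lambda(m_0(x))$) is estimated by Lemma~\ref{lem:sum-dec-est} with the role of the lemma's $F$ played by $n\mapsto|b_x^\Lambda(n)|^r\|\Phi(x,n)\|$ and a per-term (not summed) bound, i.e.\ the analogue of (\ref{int-bd-con}), used to make the resulting bound $x$-independent — this is where $H(m_0(x))$ appears. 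The moment bound (\ref{fin-mth-mom}) is used only for the diagonal term. The sum over $x$ is then carried out geometrically: $\nu$-regularity gives $|\Lambda_{y,z}|\leq\kappa d(y,z)^\nu$ for $x\in\Lambda_{y,z}$ (and $m_0(x)\geq d(y,z)/2$ there), producing the term $4\kappa d(y,z)^\nu H(d(y,z)/2)$; while for $x\notin\Lambda_{y,z}$ one has $d(x,w)>d(y,z)$ and $m_0(x)\geq d(x,w)$ for some $w\in\{y,z\}$, producing the final sum $4\sum_{w\in\{y,z\}}\sum_{x:d(x,w)>d(y,z)}H(d(x,w))$. Your outline reaches for the same destination, but these two structural choices — case split in $x$ rather than $(x,n)$, and Lemma~\ref{lem:sum-dec-est} applied to the joint $(k,n)$ sum with $x$ aggregated geometrically — are what make the proof close.
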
 
\begin{proof}
	We begin this argument as in the proof of \cite[Theorem 5.13]{nachtergaele:2019}. For each $x \in \Lambda$ and $m \geq 0$, the estimate
	\begin{eqnarray}
	\| \Psi(x,m) \| & \leq & \sum_{n=0}^m \| \Delta^{\Lambda}_{x,n;m}( \mathcal{K}(\Phi(x,n))) \| \nonumber \\
	& \leq & \| \mathcal{K}( \Phi(x,m)) \| + \sum_{n=0}^{m-1} \| \Delta^{\Lambda}_{x, n; m}( \mathcal{K}(\Phi(x,n))) \| \nonumber \\
	& \leq & B |b_x^{\Lambda}(m)|^p \| \Phi(x,m) \| + 4 \sum_{n=0}^{m-1} |b_x^{\Lambda}(n)|^q \| \Phi(x,n) \| G(m-n-1) 
	\end{eqnarray}
	follows using (\ref{def-psi-x-m}), the form of the local decompositions, see (\ref{def:Delta}), the local bound (\ref{K-lb}), and inserted 
	the quasi-local bound (\ref{K-ql}) into the general estimate in Lemma~\ref{lem:qlm_loc_est}, see (\ref{Delta_bd}). From this bound, it is
	clear that
	\begin{eqnarray}
	\sum_{x \in \Lambda} \sum_{\stackrel{m \geq 0:}{y,z \in b_x^{\Lambda}(m)}} \| \Psi(x,m) \| & \leq & 
	B \sum_{x \in \Lambda} \sum_{\stackrel{m \geq 0:}{y,z \in b_x^{\Lambda}(m)}} |b_x^{\Lambda}(m)|^p \| \Phi(x,m) \|  \nonumber \\
	& \mbox{ } & \quad + 4 \sum_{x \in \Lambda} \sum_{\stackrel{m \geq 0:}{y,z \in b_x^{\Lambda}(m)}}  \sum_{n=0}^{m-1} |b_x^{\Lambda}(n)|^q \| \Phi(x,n) \| G(m-n-1) 
	\end{eqnarray}
	Using the bound on the $r$-th moment of $\Phi$, the first term on the right-hand-side of (\ref{Psi-Fnorm-est}) is clear. 
	
	For the second term above, we re-write
	\begin{equation}
	\sum_{x \in \Lambda} \sum_{\stackrel{m \geq 0:}{y,z \in b_x^{\Lambda}(m)}}  \sum_{n=0}^{m-1} |b_x^{\Lambda}(n)|^q \| \Phi(x,n) \| G(m-n-1) =
	\sum_{x \in \Lambda} \sum_{k \geq 0} G(k)  \sum_{\stackrel{n \geq 0:}{y,z \in b_x^{\Lambda}(n+k+1)}} |b_x^{\Lambda}(n)|^q \| \Phi(x,n) \| \, .
	\end{equation}
	
	Now the argument diverges slightly from the proof of \cite[Theorem 5.13]{nachtergaele:2019}.
	Here, to further estimate, we will split the sum on $x \in \Lambda$. 
	Before doing so, for each $x \in \Lambda$, let us denote by $m_0(x)$ the
	smallest integer $m \geq 0$ for which $y,z \in b_x^{\Lambda}(m)$. Two observations readily follow.
	First, for any $x \in \Lambda$, $d(y,z) \leq d(y,x) + d(x,z) \leq 2 m_0(x)$. Next, for 
	each fixed $x \in \Lambda$, 
	\begin{equation}
	\sum_{k \geq 0} G(k)  \sum_{\stackrel{n \geq 0:}{y,z \in b_x^{\Lambda}(n+k+1)}} |b_x^{\Lambda}(n)|^q \| \Phi(x,n) \| = \sum_{k \geq 0} G(k)  \sum_{n \geq |m_0(x) -k-1|_+} |b_x^{\Lambda}(n)|^q \| \Phi(x,n) \|
	\end{equation} 
	where we have used the notation (\ref{non-neg-part}).  
	
	Let us now split $\Lambda$ by writing
	\begin{equation}
	\Lambda_{y,z} = \left\{ x \in \Lambda : \max\left(d(x,y), d(x,z)\right) \leq d(y,z) \right\} = b_y^{\Lambda}(d(y,z)) \cap b_z^{\Lambda}(d(y,z)) 
	\end{equation}
	and setting $\Lambda = \Lambda_{y,z} \cup \Lambda_{y,z}^c$, a disjoint union. Note here that $\Lambda_{y,z}^c = \Lambda \setminus \Lambda_{y,z}$.  Now, for any $x \in \Lambda_{y,z}$, we estimate
	\begin{eqnarray}
	\sum_{k \geq 0} G(k)  \sum_{n \geq |m_0(x) -k-1|_+} |b_x^{\Lambda}(n)|^q \| \Phi(x,n) \| & \leq & \sum_{k \geq 0} G(k)  \sum_{n \geq |d(y,z)/2 -k-1|_+} |b_x^{\Lambda}(n)|^r \| \Phi(x,n) \| \nonumber \\
	& \leq & H(d(y,z)/2)  
	\end{eqnarray}
	where we have used that $d(y,z)/2 \leq m_0(x)$ and Lemma~\ref{lem:sum-dec-est}. Note that in this application of 
	Lemma~\ref{lem:sum-dec-est} we have taken $F(n) = |b_x^{\Lambda}(n)|^r \| \Phi(x,n) \|$ and used the analogue of (\ref{int-bd-con}). Since the right-hand-side
	above is independent of $x \in \Lambda_{y,z}$ and $| \Lambda_{y,z}| \leq |b_y^{\Lambda}(d(y,z))| \leq \kappa d(y,z)^{\nu}$ by $\nu$-regularity of
	$(\Gamma, d)$, see (\ref{nu-reg}), we have now obtained the second term on the right-hand-side of (\ref{Psi-Fnorm-est}).

	Finally, for each $x \in \Lambda_{y,z}^c$ there is $w \in \{y,z\}$ for which $d(x,w)>d(y,z)$. In this case,
	\begin{equation}
	\sum_{k \geq 0} G(k)  \sum_{n \geq |m_0(x) -k-1|_+} |b_x^{\Lambda}(n)|^q \| \Phi(x,n) \|  \leq  \sum_{k \geq 0} G(k)  \sum_{n \geq |d(x,y) -k-1|_+} |b_x^{\Lambda}(n)|^r \| \Phi(x,n) \| 
	\end{equation}
	since $d(x,w) \leq m_0(x)$ for each $w \in \{y,z\}$. Another application of Lemma~\ref{lem:sum-dec-est} completes the bound claimed in (\ref{Psi-Fnorm-est}).
\end{proof}

In applications of Theorem~\ref{thm:trans-int-bd}, it is common to know more detailed properties of the function $F$ which bounds
the decay of the anchored interaction $\Phi$, see (\ref{fin-mth-mom}), as well as the function $G$ which bounds the 
quasi-locality of $\mathcal{K}$, see (\ref{K-ql}).  The corollary that follows demonstrates a useful form
of this estimate which holds whenever both $F$ and $G$ are members of the same decay class; here we
refer specifically to the decay classes described in Definition~\ref{def:dec-class}. 
\begin{cor} \label{cor:trans-int-F-dec}
	Under the assumptions of Theorem~\ref{thm:trans-int-bd}, suppose further that there exist positive numbers $\eta$, $\xi$, and $\theta$ for which
	both $F$ and $G$, the decay functions associate to $\Phi$ and $\mathcal{K}$ through (\ref{fin-mth-mom}) and (\ref{K-ql}) respectively, are in 
	decay class $(\eta, \xi, \theta)$. In this case, for each $0< \eta' < \eta$, there is an $F$-function $F_{\Psi}^{\eta'}$ on $(\Gamma, d)$ for which
	\begin{equation} \label{trans-int-F-bd}
	\sum_{x \in \Lambda} \sum_{\stackrel{m \geq 0:}{y,z \in b_x^{\Lambda}(m)}} \| \Psi(x,m) \| \leq F_{\Psi}^{\eta'}(d(y,z))
	\end{equation}
	and moreover, for any $\zeta > \nu +1$, there are positive numbers $C_1$, $C_2$, $d$, and $a'$ ,satisfying $C_1 \geq C_2e^{- \eta' f_{\xi}(a' d^{\theta})}$,
	for which one may take $F_{\Psi}^{\eta'}$ with the form $F_{\Psi}^{\eta'} = F_{\Psi, 0} \cdot F_{\Psi, \eta'}^{\rm dec}$ where:
	\begin{equation} \label{F-factors}
	F_{\Psi, 0}(r) = \frac{1}{(1+r)^{\zeta}} \quad \mbox{and} \quad F_{\Psi, \eta'}^{\rm dec}(r) = \left\{ \begin{array}{cl} C_1 & \mbox{if } 0 \leq r \leq d \\
	C_2 e^{- \eta' f_{\xi}(a' r^{\theta})} & \mbox{if } r>d. \end{array} \right.
	\end{equation}
\end{cor}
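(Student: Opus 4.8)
\textbf{Proof proposal for Corollary~\ref{cor:trans-int-F-dec}.} The plan is to feed the bound from Theorem~\ref{thm:trans-int-bd} into the hypothesis that $F$ and $G$ lie in the decay class $(\eta,\xi,\theta)$ and to reorganize the resulting expression as a product of a polynomially decaying $F$-function and a stretched-exponential-type decay factor. Concretely, fix $0<\eta'<\eta$. First I would apply Theorem~\ref{thm:trans-int-bd} to write, for any $y,z\in\Lambda$,
\begin{equation*}
\sum_{x \in \Lambda} \sum_{\stackrel{m \geq 0:}{y,z \in b_x^{\Lambda}(m)}} \| \Psi(x,m) \| \leq B \| \Phi \|_{r,F}F(d(y,z)) + 4 \kappa d(y,z)^\nu H(d(y,z)/2) + 4 \sum_{w \in \{y,z\}} \sum_{\stackrel{x \in \Lambda:}{d(x,w)>d(y,z)}} H(d(x,w)) ,
\end{equation*}
with $H$ as in \eqref{def:trans-H}. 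The term $B\|\Phi\|_{r,F}F(d(y,z))$ is already in the desired form since $F$ is in the decay class. For the remaining two terms, the key observation is that $H(R)$ is, up to $R$-independent constants, a sum of two contributions: $\lfloor R/2\rfloor F(\lfloor R/2\rfloor - 1)$ and $\sum_{k\geq\lfloor R/2\rfloor}G(k)$. By Remark~\ref{rem:dec-class}, multiplying a decay-class function by a polynomial keeps it in the same decay class, and the tail sum $\sum_{k\geq\lfloor cR\rfloor}k^pG(k)$ of a decay-class function is again in the same decay class. Hence $H$ itself is of decay class $(\eta,\xi,\theta)$ (for a suitable pair of constants).

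Next I would handle the three occurrences of $H$ separately. The term $4\kappa d(y,z)^\nu H(d(y,z)/2)$ is a polynomial times a decay-class function evaluated at $d(y,z)/2$; since $f_\xi$ is non-decreasing and one has the scaling freedom in the constant $a$ inside $f_\xi(ar^\theta)$, replacing $r$ by $r/2$ only changes the internal constant $a$, so this term is again of decay class $(\eta,\xi,\theta)$. For the last term, $\sum_{w\in\{y,z\}}\sum_{d(x,w)>d(y,z)}H(d(x,w))$, I would use $\nu$-regularity of $(\Gamma,d)$ to bound the number of sites $x$ with $d(x,w)=n$ by $\kappa n^\nu$ (more precisely, $|b_w^\Lambda(n)\setminus b_w^\Lambda(n-1)|\leq\kappa n^\nu$), turning the sum over $x$ into $\sum_{n>d(y,z)}\kappa n^\nu H(n)$; this is precisely a weighted tail sum of a decay-class function, hence again of decay class $(\eta,\xi,\theta)$ by Remark~\ref{rem:dec-class}. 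Adding the three pieces and using that decay classes are closed under addition, we conclude the left-hand side of \eqref{trans-int-F-bd} is bounded by a single function $\tilde G(d(y,z))$ of decay class $(\eta,\xi,\theta)$.

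Finally, to produce the specific factorized form $F_\Psi^{\eta'}=F_{\Psi,0}\cdot F_{\Psi,\eta'}^{\rm dec}$ claimed in \eqref{F-factors}, I would invoke the definition of decay class with the chosen $\eta'<\eta$: there are constants $C_1',C_2',a',d'$ with $\tilde G(r)\leq C_1'$ for $r\leq d'$ and $\tilde G(r)\leq C_2'e^{-\eta' f_\xi(a'r^\theta)}$ for $r>d'$. Then I would multiply and divide by $(1+r)^\zeta$ for the given $\zeta>\nu+1$: set $F_{\Psi,0}(r)=(1+r)^{-\zeta}$, which is an $F$-function on the $\nu$-regular space $(\Gamma,d)$ by \eqref{poly-dec-F}, and absorb the factor $(1+r)^\zeta$ into the decay part, i.e.\ $F_{\Psi,\eta'}^{\rm dec}(r)=(1+r)^\zeta\tilde G(r)$, readjusting $C_1,C_2,d$ (and shrinking $\eta'$ slightly if one wants to keep the exact exponential form, which is allowed since $\eta'<\eta$ is arbitrary) so that the bound $C_1\geq C_2e^{-\eta' f_\xi(a'd^\theta)}$ and the stated piecewise form hold. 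One checks that $F_{\Psi}^{\eta'}=F_{\Psi,0}\cdot F_{\Psi,\eta'}^{\rm dec}$ is then an $F$-function (it is a weighted $F$-function in the sense of Section~\ref{sec:Fnorm}, since $e^{-\eta' f_\xi(a'r^\theta)}$ times any polynomial correction is bounded and non-increasing for large $r$), completing the proof. The main obstacle, though a routine one, is bookkeeping: carefully tracking how the constants $a$, $C_1$, $C_2$, $d$ transform under the operations ``multiply by polynomial,'' ``shift argument by a factor,'' and ``take a weighted tail sum,'' and verifying at the end that the relation $C_1\geq C_2e^{-\eta' f_\xi(a'd^\theta)}$ can indeed be arranged; all of this is guaranteed in principle by Remark~\ref{rem:dec-class}, so no genuinely new estimate is needed.
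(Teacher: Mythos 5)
Your proposal follows the same overall plan as the paper's proof: show the right-hand side of Theorem~\ref{thm:trans-int-bd} is bounded by a single function of $d(y,z)$ lying in the decay class $(\eta,\xi,\theta)$, and then factor out $(1+r)^{-\zeta}$ to reach the factorized form~\eqref{F-factors}. Your treatment of the first two terms agrees with the paper's, and your handling of the far-away sum is a legitimate variant: you convert $\sum_{w}\sum_{d(x,w)>d(y,z)}H(d(x,w))$ to $\sum_{n>d(y,z)}\kappa n^{\nu}H(n)$ via $\nu$-regularity and appeal to the tail-sum closure in Remark~\ref{rem:dec-class}, whereas the paper extracts $(1+d(x,w))^{-\zeta}$ from $H$ and sums that over $x$ using uniform integrability of the base $F$-function. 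Both reach the same intermediate conclusion.

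The one step that is stated too loosely is the final verification that $F_{\Psi}^{\eta'}$ is an $F$-function on $(\Gamma,d)$. Your parenthetical — that $F_\Psi^{\eta'}$ ``is a weighted $F$-function in the sense of Section~\ref{sec:Fnorm}'' — is not literally true: the piecewise factor $F_{\Psi,\eta'}^{\rm dec}$ cannot be written as $e^{-g(r)}$ with $g$ non-negative, non-decreasing, and \emph{sub-additive}, since sub-additivity fails across the break at $r=d$ (take $r,s\leq d$ with $r+s>d$; then $g(r+s)>0$ while $g(r)+g(s)=0$). In particular you cannot simply inherit the finite convolution constant from Section~\ref{sec:Fnorm}'s weighted-$F$-function machinery. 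The paper instead introduces a reference function $\tilde F$ whose $F$-function property is quoted from \cite[Prop. A.2]{nachtergaele:2019}, and then sandwiches $F_{\Psi}^{\eta'}$ between $\tilde F$ and $(1+d)^{-\zeta}\tilde F$; the $F$-function axioms (uniform integrability and finite convolution coefficient) are stable under such comparability. With that final step replaced, your argument matches the paper's.
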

\begin{proof}
	We prove this corollary in three steps. First, we argue that 
	\begin{equation} \label{trans-G-bd}
	\sum_{x \in \Lambda} \sum_{\stackrel{m \geq 0:}{y,z \in b_x^{\Lambda}(m)}} \| \Psi(x,m) \| \leq G_{\Psi}(d(y,z))
	\end{equation}
	for some function $G_{\Psi}$ in decay class $(\eta, \xi, \theta)$. Then we show that the estimate above implies the family of
	bounds in (\ref{trans-int-F-bd}) with functions $F_{\Psi}^{\eta'}$ having the form described in (\ref{F-factors}). Finally, we argue that
	each of the functions $F_{\Psi}^{\eta'}$, as above, are indeed $F$-functions on $(\Gamma, d)$.
	
	A direct application of Theorem~\ref{thm:trans-int-bd} shows that (\ref{trans-G-bd}) holds for the 
	function $G_{\Psi}$ defined by the right-hand-side of (\ref{Psi-Fnorm-est}). Note that the function
	$H$, as defined in (\ref{def:trans-H}), is clearly a member of the decay class $(\eta, \xi, \theta)$, and thus so too is
	\begin{equation}
	x \mapsto B \| \Phi \|_{r,F} F(x) + 4 \kappa x^{\nu} H(x/2) \, ,
	\end{equation}
	here we use, for example, the comments made in Remark~\ref{rem:dec-class}. To conclude that this
	$G_{\Psi}$ is in the appropriate decay class, we need only confirm that this is true for the final term
	on the right-hand-side of (\ref{Psi-Fnorm-est}). Since $H$ is in the appropriate decay class, 
	for any $0 < \eta' < \eta$ and each choice of $\zeta > \nu +1$, there are positive numbers 
	$C_1$, $C_2$, $a$, and $d$, with $C_1 \geq C_2 e^{- \eta' f_{\xi}(a d^{\theta})}$,
	for which
	\begin{equation}
	(1 + r)^{\zeta} H(r) \leq \left\{ \begin{array}{cl} C_1 & \mbox{if } 0 \leq r \leq d, \\
	C_2 e^{- \eta' f_{\xi}(a r^{\theta})} & \mbox{if } r > d. \end{array} \right.
	\end{equation}
	In this case, we have that for $0 \leq d(y,z) \leq d$,
	\begin{equation}
	4 \sum_{w \in \{y,z\}} \sum_{\stackrel{x \in \Lambda:}{d(x,w)>d(y,z)}} H(d(x,w)) \leq 8 C_1 \max_{x \in \Lambda} \sum_{y \in \Lambda} \frac{1}{(1+d(x,y))^{\zeta}} \, .
	\end{equation}
	When $d(y,z)>d$, we also have
	\begin{equation}
	4 \sum_{w \in \{y,z\}} \sum_{\stackrel{x \in \Lambda:}{d(x,w)>d(y,z)}} H(d(x,w)) \leq 8 C_2 e^{- \eta' f_{\xi}(a d(y,z)^{\theta})} \max_{x \in \Lambda} \sum_{y \in \Lambda} 
	\frac{1}{(1+d(x,y))^{\zeta}} \, .
	\end{equation}
	This completes the proof that $G_{\Psi}$ is in decay class $(\eta, \xi, \theta)$. 
	
	Now, for any choice of $\zeta > \nu +1$, we may write 
	\begin{equation}
	G_{\Psi}(r) = \frac{1}{(1+r)^{\zeta}} (1+r)^{\zeta} G_{\Psi}(r) \quad \mbox{for all } r \geq 0 \, .
	\end{equation}
	Since $G_{\Psi}$ is in decay class $(\eta, \xi, \theta)$, arguing as above we see that the family of
	bounds claimed in (\ref{trans-int-F-bd})-(\ref{F-factors}) holds.
	
	Turning to the final point, let $F_{\Psi}^{\eta'} = F_{\Psi,0} F_{\Psi, \eta'}^{\rm dec}$ be as in (\ref{F-factors}).
	As is proven, e.g. in Proposition A.2 of \cite{nachtergaele:2019}, the function
	\begin{equation}
	\tilde{F}(r) = \left\{ \begin{array}{cl} C_1 & 0 \leq r \leq d, \\ \frac{C_2}{(1+r)^{\zeta}} e^{- \eta' f_{\xi}(a'r^{\theta})} & r >d. \end{array} \right.
	\end{equation}
	is an $F$-function on $(\Gamma, d)$. One also readily checks that
	\begin{equation}
	\frac{1}{(1+d)^{ \zeta}} \tilde{F}(r) \leq F_{\Psi}^{\eta'}(r) \leq \tilde{F}(r) \quad \mbox{for all } r \geq 0 \, ,
	\end{equation}
	and thus $F_{\Psi}^{\eta'}$ is an $F$-function on $(\Gamma, d)$ as well.
\end{proof}

%
%

\section{Indistinguishability of Matrix Product States}\label{app:MPS}

We consider ground state indistinguishability for a frustration-free quantum spin chains with matrix product (MPS) ground states and open boundary conditions. We show that for such models there is an exponential decay function $\Omega$ for which the indistinguishability radius can be taken as the distance from the site to the volume boundary, i.e. $r_x^\Omega(\Lambda) = d(x,\partial \Lambda)$. After setting notation and reviewing key properties of MPS, we prove ground state indistinguishability for models with a unique infinite volume ground state. Afterwards, we turn to an example of discrete symmetry breaking where the local ground state space is spanned by several distinct MPS, and discuss how in the thermodynamic limit the ground states become orthogonal.

\subsection{MPS Indistinguishability with a Unique Ground State}\label{app:unique_MPS}
We associate the same on-site Hilbert space to each site, i.e. $\cH_x=\bC^d$ for each $x\in\bZ$, and assume the MPS is translation invariant and has a primitive transfer operator $\bbE:M_k\to M_k$. Here, $k$ is the bond dimension of the MPS and $M_k$ is the set of $k\times k$ matrices. For a fixed an orthonormal basis $\{\ket{i} : i=1,\ldots d\}\subset\cH_x$, these assumptions imply there is a set of matrices $\{v_i\}_{i=1}^d\subset M_k$ generating the MPS, and a density matrix $\rho\in M_k$ for which $\bE$ (in isometric form) satisfies:
\begin{equation}\label{def:transferE}
\bE(B) := \sum_{i=1}^d v_i^*Bv_i, \quad \bE(\idty) = \idty, \quad \bE^t(\rho) = \rho.
\end{equation}
The primitive assumption guarantees $\rho$ is invertible, and that there is a $\lambda \in[0,1)$ and $c>0$ for which
\begin{equation}\label{eq:E_convergence}
\|\bE^n-\bE^\infty\| \leq c\lambda^n
\end{equation}
where $\bE^\infty(B) := \Tr(\rho B)\idty$.

It is well known that for the resulting MPS2
\begin{equation}
M_k \ni B \; \mapsto \; \Gamma_{[l,r]}(B) := \sum_{i_l,\ldots,i_r=1}^d\Tr(Bv_{i_l}\cdots v_{i_r})\ket{i_l\ldots i_r}\in \cH_{[l,r]},
\end{equation}
there is a (non-unique) finite-range, frustration-free interaction on $\cA_{\bZ}^{\rm loc}$ so that the ground state space for each of the corresponding local Hamiltonians is $\ker H_{[l,r]} = \ran \Gamma_{[l,r]}$. Moreover, the assumptions guarantee that this model has a unique ground state in the thermodynamic limit, $\omega:\cA_\bZ \to \bC$, defined by
\be\label{eq:infinite_state}
\omega(A) = \Tr(\rho \bE_A(\idty)),  \quad \forall \, A\in\cA_{\bZ}^{\rm loc}
\ee
where for each $A\in \cA_{[a,b]}$ and all $B\in M_k$:
\be\label{eq:E_A}
\bE_A(B) := V_{[a,b]}^* A\otimes B V_{[a,b]} \quad \text{with}\quad
V_{[a,b]} := \sum_{i_a, \ldots, i_b=1}^d \ket{i_a\ldots i_b}\otimes v_{i_a}\cdots v_{i_b}.
\ee
For any finite interval $[a,b]$, it is easy to verify using \eqref{def:transferE} that
\be
V_{[a,b]}^*V_{[a,b]} = \bE^{b-a+1}(\idty)= \idty,
\ee
and moreover, that the state $\omega$ is consistent with the identification $A\mapsto A':=\idty^{\otimes n}\otimes A \otimes \idty^{\otimes m}\in\cA_{[a-n,b+m]}$. For the latter, one finds $\omega(A')=\omega(A)$ by first verifying $\bE_{A'}=\bE^n\circ \bE_A \circ \bE^m$ and then using \eqref{def:transferE}-\eqref{eq:infinite_state}.

We now turn to the indistinguishability of the ground states. Given two intervals $[a,b]\subset[l,r]$, let $P_{[l,r]}$ denote the orthogonal projection onto $\ran (\Gamma_{[l,r]})$ and notice that for any $A\in \cA_{[a,b]}$,
\be\label{eq:MPS_LTQO1}
\|P_{[l,r]}AP_{[l,r]}-\omega(A)P_{[l,r]}\| = 
\sup_{\Gamma_{[l,r]}(B),\Gamma_{[l,r]}(C)\neq 0}\frac{\left|\braket{\Gamma_{[l,r]}(B)}{(A-\omega(A)\idty)\,\Gamma_{[l,r]}(C)}\right|}{\|\Gamma_{[l,r]}(B)\|\|\Gamma_{[l,r]}(C)\|}.
\ee

The first result we provide, which will be used to bound the RHS above, was first proved in \cite[Lemma 5.2]{fannes:1992}. We provide the proof here as it outlines the basic techniques needed to establish ground state indistinguishability.

\begin{lemma}\label{lem:MPS_Exp_Val}
	Fix finite intervals $[a,b]\subseteq[l,r]$ and let $\Gamma_{[l,r]}:M_k \to \cH_{[l,r]}$ be a translation invariant MPS with primative transfer operator $\bE$ as in \eqref{def:transferE}-\eqref{eq:E_convergence}. Then for any $A\in\cA_{[a,b]}$,
	\begin{equation}\label{eq:MPS_expectation}
	\left|\braket{\Gamma_{[l,r]}(B)}{A\,\Gamma_{[l,r]}(C)}-\omega(A)\braket{B}{C}_{\rho}\right| \leq c\left(\Tr(\rho^{-1})\lambda^{a-l}+\lambda^{r-b}\right)\|A\|\|B\|_{\rho}\|C\|_{\rho}
	\end{equation} 
	where $\braket{B}{C}_\rho := \Tr(\rho B^*C)$ is the inner product on $M_k$ induced by $\rho$, and $\omega$ is as in \eqref{eq:infinite_state}.
\end{lemma}

\begin{proof}
Fix any orthonormal basis $\caB$ of $\bC^k$, and consider the LHS of \eqref{eq:MPS_expectation}. Using this orthonormal basis to rewrite the trace and applying \eqref{def:transferE} and \eqref{eq:E_A}, the first inner product can be rewritten as 
\bea
\braket{\Gamma_{[l,r]}(B)}{A\Gamma_{[l,r]}(C)}
	 & = &
\sum_{\substack{i_{l}, \ldots, i_r \\ j_{l}, \ldots, j_r}}
\Tr(v_{i_{r}}^*\ldots v_{i_{l}}^*B^*)\Tr(Cv_{j_l}\ldots v_{j_{r}}) \bra{i_{l}\ldots i_r} A \ket{j_{l}\ldots j_r}\nonumber\\
& = &
\sum_{\alpha,\beta\in\cB}\sum_{\substack{i_{l}, \ldots, i_r \\ j_{l}, \ldots, j_r}}
\braket{\alpha}{v_{i_{r}}^*\ldots v_{i_{l}}^*B^*\alpha}\braket{\beta}{Cv_{j_l}\ldots v_{j_{r}}\beta} \bra{i_{l}\ldots i_r} A \ket{j_{l}\ldots j_r}\nonumber\\
& = & \sum_{\alpha,\beta\in\cB}\braket{\alpha}{\bbE^{r-b}\circ\bbE_A\circ\bbE^{a-l}\left(B^*\ket{\alpha}\bra{\beta}C\right)\beta}.\label{MPS_decomp}
\eea
Here, we use that $A$ is supported on $[a,b]$, and choose the convention that $\bE^0$ is the identity operator on $M_k$. Letting $\bE^\infty$ be as in \eqref{eq:E_convergence}, linearity implies that
\be\label{E_decomp}
\bbE^{r-b}\circ\bbE_A\circ\bbE^{a-l} = \bbE^{\infty}\circ\bbE_A\circ\bbE^{\infty}+(\bbE^{r-b}-\bE^\infty)\circ\bbE_A\circ\bbE^{\infty}+\bbE^{r-b}\circ\bbE_A\circ(\bbE^{a-l}-\bE^\infty).
\ee

Inserting the definition of $\bE^\infty$ and using the orthonormality of $\cB$, one finds that the first term in this decomposition corresponds to the matrix inner product from \eqref{eq:MPS_expectation}, i.e.
\be\label{eq:MPS1}
\omega(A)\braket{B}{C}_\rho = \sum_{\alpha,\beta\in\cB} \braket{\alpha}{\bbE^{\infty}\circ\bbE_A\circ\bbE^{\infty}\left(B^*\ket{\alpha}\bra{\beta}C\right)\beta}.
\ee

For the second term, we can similarly rewrite the summation as
\[
\sum_{\alpha,\beta\in\cB}\braket{\alpha}{(\bbE^{r-b}-\bE^\infty)\circ\bbE_A\circ\bbE^{\infty}\left(B^*\ket{\alpha}\bra{\beta}C\right)\beta} = \Tr(C\rho B^*(\bbE^{r-b}-\bE^\infty)\circ\bbE_A(\idty)).
\]
Notice that $\|\bE_A\| \leq \|A\|$ since $V_{[a,b]}$ is an isomertry. As a result, using \eqref{eq:E_convergence} and applying Holder's inequality proves:
\begin{align}
| \Tr(C\rho B^*(\bbE^{r-b}-\bE^\infty)\circ\bbE_A(\idty))| & \leq \|(\bbE^{r-b}-\bE^\infty)\circ\bbE_A(\idty)\| \|C\rho B^*\|_1 \nonumber\\
& \leq c\lambda^{r-b} \|A\|\|C\rho^{1/2}\|_2\|\rho^{1/2} B^*\|_2 \nonumber\\
& =  c\lambda^{r-b} \|A\|\|C\|_\rho\|B\|_\rho \label{eq:MPS2}
\end{align}
where $\|\cdot\|_k$ for $k=1,2$ denotes the usual trace class and Hilbert-Schmidt norms, respectively. 

For the final term in \eqref{E_decomp}, we again use \eqref{eq:E_convergence} to bound
\[
\left|\braket{\alpha}{\bbE^{r-b}\circ\bbE_A\circ(\bbE^{a-l}-\bE^\infty)\left(B^*\ket{\alpha}\bra{\beta}C\right)\beta}\right| \leq 
c\lambda^{a-l}\|A\|\|B^*\ket{\alpha}\|\|C\ket{\beta}\|
\]
for any $\alpha,\beta\in\cB$. Choose $\cB$ to be any orthonormal basis that diagonalizes $\rho$, i.e. $\rho\ket{\alpha} = \rho_\alpha\ket{\alpha}$. Summing over $\alpha\in\cB$ and applying Cauchy-Schwarz yields,
\[
\sum_{\alpha\in\cB}\|B^*\ket{\alpha}\| = \sum_{\alpha}\rho_{\alpha}^{-1/2}\|B^*\rho^{1/2}\ket{\alpha}\| \leq \sqrt{\Tr(\rho^{-1})}\|B\|_{\rho}.
\]
The analogous bound holds when summing over $\beta$. As a consequence,
\begin{equation}
\sum_{\alpha,\beta\in\cB}\left|\braket{\alpha}{\bbE^{r-b}\circ\bbE_A\circ(\bbE^{a-l}-\bE^\infty)\left(B^*\ket{\alpha}\bra{\beta}C\right)\beta}\right| \leq c\Tr(\rho^{-1})\lambda^{a-l}\|A\|\|B\|_\rho\|C\|_\rho.\label{eq:MPS3}
\end{equation}

Thus, inserting \eqref{E_decomp} into \eqref{MPS_decomp}, the bound in \eqref{eq:MPS_expectation} follows from combining \eqref{eq:MPS1}-\eqref{eq:MPS3}.
\end{proof}

In the situation that $A=\idty$, the operator inside the summation of \eqref{MPS_decomp} becomes $\bE^{r-l+1}=\bE^{\infty}+(\bE^{r-l+1}-\bE^\infty)$. In this case, the argument from Lemma~\ref{lem:MPS_Exp_Val} simplifies, and using an estimate similar to \eqref{eq:MPS3} one can prove
\be\label{eq:MPS_IP}
\left| \braket{\Gamma_{[l,r]}(B)}{\Gamma_{[l,r]}(C)}-\braket{B}{C}_\rho\right| \leq c\Tr(\rho^{-1})\lambda^{r-l+1}\|B\|_\rho\|C\|_\rho,
\ee
see, e.g. \cite[Lemma 5.2]{fannes:1992}. Choosing $B=C$, this bound implies that $\Gamma_{[l,r]}$ is injective for sufficiently large intervals. Specifically, for any $0\neq B\in M_k$ one has
\be \label{eq:MPS_norm_bd}
1-c\Tr(\rho^{-1})\lambda^{r-l+1} \leq\frac{\|\Gamma_{[l,r]}(B)\|^2}{\|B\|_\rho^2} \leq 1+c\Tr(\rho^{-1})\lambda^{r-l+1}.
\ee

We are now ready to prove the lower bound on the indistinguishability radius for models with MPS ground states and a unique thermodynamic limit.
\begin{thm}\label{thm:MPS_LTQO}	Fix finite intervals $[a,b]\subseteq[l,r]$ and let $\Gamma_{[l,r]}:M_k \to \cH_{[l,r]}$ be a translation invariant MPS with primative transfer operator $\bE$ as in \eqref{def:transferE}-\eqref{eq:E_convergence}. Then for any $A\in\cA_{[a,b]}$,
	\begin{equation}\label{eq:MPS_LTQO}
	\|P_{[l,r]}AP_{[l,r]}-\omega(A)P_{[l,r]}\|\leq C(r-l+1)\left[\Tr(\rho^{-1})(\lambda^{r-l+1}+\lambda^{a-l})+\lambda^{r-b}\right]\|A\|
	\end{equation} 
	where $\omega$ is as in \eqref{eq:infinite_state} and $C(n) := c(1-c\Tr(\rho^{-1})\lambda^{n})^{-1}$. As a result, one has $r_{x}^\Omega(\Lambda) \geq d(x,\partial\Lambda)$ for all sites $x$ in an interval $\Lambda$ when choosing
	\be\label{eq:MPS_LTQO_fn}
	\Omega(n) := 2C(n)(2\Tr(\rho^{-1})+1)\lambda^n.
	\ee
\end{thm}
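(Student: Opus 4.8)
The plan is to combine the abstract expansion identity \eqref{eq:MPS_LTQO1} with the two quantitative estimates already available: Lemma~\ref{lem:MPS_Exp_Val} (controlling $\braket{\Gamma_{[l,r]}(B)}{A\,\Gamma_{[l,r]}(C)}$ by the $\rho$-inner product up to exponentially small error) and the norm comparison \eqref{eq:MPS_norm_bd} (controlling $\|\Gamma_{[l,r]}(B)\|$ by $\|B\|_\rho$). First I would recall that, for intervals $[a,b]\subseteq[l,r]$ with $A\in\cA_{[a,b]}$, the operator norm $\|P_{[l,r]}AP_{[l,r]}-\omega(A)P_{[l,r]}\|$ equals the supremum in \eqref{eq:MPS_LTQO1}. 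Applying Lemma~\ref{lem:MPS_Exp_Val} to $A$ and also to $\idty$ (i.e.\ using \eqref{eq:MPS_IP}) bounds the numerator by
\[
\left|\braket{B}{C}_\rho-\omega(A)\braket{B}{C}_\rho\right|\cdot 0 + \big(c\Tr(\rho^{-1})\lambda^{a-l}+c\lambda^{r-b}\big)\|A\|\,\|B\|_\rho\|C\|_\rho + |\omega(A)|\,\big|\braket{\Gamma_{[l,r]}(B)}{\Gamma_{[l,r]}(C)}-\braket{B}{C}_\rho\big|,
\]
so that the numerator is at most $\big(c\Tr(\rho^{-1})\lambda^{a-l}+c\lambda^{r-b}+c\Tr(\rho^{-1})\lambda^{r-l+1}\big)\|A\|\|B\|_\rho\|C\|_\rho$, where I have used $|\omega(A)|\le\|A\|$.

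Next I would divide through by $\|\Gamma_{[l,r]}(B)\|\,\|\Gamma_{[l,r]}(C)\|$. By the lower bound in \eqref{eq:MPS_norm_bd}, $\|\Gamma_{[l,r]}(B)\|\ge\sqrt{1-c\Tr(\rho^{-1})\lambda^{r-l+1}}\,\|B\|_\rho$ (and similarly for $C$), provided the interval is large enough that this quantity is positive. Hence the quotient $\|B\|_\rho\|C\|_\rho/(\|\Gamma_{[l,r]}(B)\|\|\Gamma_{[l,r]}(C)\|)$ is bounded by $(1-c\Tr(\rho^{-1})\lambda^{r-l+1})^{-1}$. Absorbing the constants and setting $n=r-l+1$ gives exactly \eqref{eq:MPS_LTQO} with $C(n)=c(1-c\Tr(\rho^{-1})\lambda^n)^{-1}$; here one just needs to note that $\lambda^{r-l+1}+\lambda^{a-l}$ and $\lambda^{r-b}$ are the three error exponents, collected appropriately, with a factor $(r-l+1)$ inserted only to make the subsequent bound uniform (in fact no such factor is strictly needed for this step, but it makes the final $\Omega$ cleaner).

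Then I would derive the lower bound on the indistinguishability radius. Fix a finite interval $\Lambda=[l,r]$ and $x\in\Lambda$, and let $k\le n\le d(x,\partial\Lambda)$ with $A\in\cA_{b_x^\Lambda(k)}$. Writing $b_x^\Lambda(n)=[l',r']$ with $l'=\max(l,x-n)$, $r'=\min(r,x+n)$ and $b_x^\Lambda(k)=[a,b]\subseteq[l',r']$, the condition $n\le d(x,\partial\Lambda)$ forces $l'=x-n$, $r'=x+n$, so $r'-l'+1=2n+1\ge n$, and also $a-l'\ge n-k$, $r'-b\ge n-k$ (since $[a,b]\subseteq b_x(k)$ which is centered at $x$). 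Plugging into \eqref{eq:MPS_LTQO}, every exponent of $\lambda$ appearing is at least $n-k$, so the bound becomes at most $C(2n+1)\big(\Tr(\rho^{-1})(\lambda^{n-k}+\lambda^{n-k})+\lambda^{n-k}\big)\|A\|\le |b_x^\Lambda(k)|\,\|A\|\,\Omega(n-k)$ once we take $\Omega(m)=2C(m')(2\Tr(\rho^{-1})+1)\lambda^m$ for a suitable monotone choice; using $|b_x^\Lambda(k)|\ge 1$ and that $C$ is non-decreasing in its argument and $m'\le \diam(\Lambda)$, one can bound $C(2n+1)$ by $C(m)$ uniformly if one is slightly careful, or simply absorb it by replacing $C$ with $\sup$ over the relevant range — this matches \eqref{eq:MPS_LTQO_fn} up to the precise way the argument of $C$ is chosen. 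The main obstacle, and the only genuinely delicate point, is verifying that $\Omega$ as defined is non-increasing and that the argument of $C$ in the final estimate can be bounded by the argument of $\Omega$ (i.e.\ $n-k$ vs.\ $n$); this is handled by noting $C$ is bounded on the finite range $[0,\diam\Lambda]$ and that, since the leftover geometric factor $\lambda^{n-k}$ decays and $C$ grows at most like a bounded constant as long as $c\Tr(\rho^{-1})\lambda^n<1$, the product is dominated by a single decreasing exponential; shrinking $\lambda$ slightly (replacing $\lambda$ by $\sqrt\lambda$, say) if necessary makes $\Omega$ manifestly non-increasing. The remaining verifications — that $[a,b]\subseteq[x-n,x+n]$ gives the claimed exponent bounds, and the algebra collecting constants — are routine.
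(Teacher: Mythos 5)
Your derivation of \eqref{eq:MPS_LTQO} is essentially the same as the paper's: combine \eqref{eq:MPS_LTQO1}, Lemma~\ref{lem:MPS_Exp_Val} applied to $A$, the special case \eqref{eq:MPS_IP}, and the lower bound in \eqref{eq:MPS_norm_bd} to control the denominators. (Also, to be clear, $C(r-l+1)$ is the function $C(\cdot)$ evaluated at $r-l+1$, not a product with an extra factor $(r-l+1)$; you arrive at the right formula but the parenthetical remark suggesting the factor is merely cosmetic is mistaken.)

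There is, however, a genuine gap in the second half. The indistinguishability radius in Definition~\ref{LTQO_def} is formulated with the \emph{finite-volume} ground state functional $\omega_{\Lambda}(A) = \Tr(P_{\Lambda}A)/\Tr(P_{\Lambda})$, not the infinite-volume state $\omega$. Your estimate \eqref{eq:MPS_LTQO} involves $\omega(A)$, so you still need to control $|\omega_\Lambda(A)-\omega(A)|$ before you can conclude anything about $r_x^\Omega(\Lambda)$. The factor of $2$ in \eqref{eq:MPS_LTQO_fn} exists precisely to absorb this error; the paper estimates it by writing $\omega_\Lambda(A)=k^{-2}\sum_i \braket{\Gamma_\Lambda(B_i)}{A\Gamma_\Lambda(B_i)}$ for an orthonormal basis $\{\Gamma_\Lambda(B_i)\}$ of $\operatorname{ran}\Gamma_\Lambda$ (valid for $\Lambda$ large by the injectivity implied by \eqref{eq:MPS_norm_bd}) and then applying the same $\rho$-inner-product comparison to bound $|\braket{\Gamma_\Lambda(B)}{A\,\Gamma_\Lambda(B)}-\omega(A)|$. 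You never address this substitution, so your argument only matches $\tfrac{1}{2}\Omega(n-k)$ rather than establishing the required inequality with $\omega_\Lambda$.

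You also have the monotonicity of $C$ backwards: since $\lambda<1$, the map $n\mapsto\lambda^n$ is decreasing, hence $n\mapsto 1-c\Tr(\rho^{-1})\lambda^n$ is increasing and $C(n)=c\bigl(1-c\Tr(\rho^{-1})\lambda^n\bigr)^{-1}$ is \emph{decreasing}. This actually simplifies your last step: $C(2n+1)\leq C(n-k)$ follows immediately from $2n+1\geq n-k$, so there is no need to pass to a supremum over $[0,\operatorname{diam}\Lambda]$ or to shrink $\lambda$ to make $\Omega$ monotone; $\Omega(n)=2C(n)(2\Tr(\rho^{-1})+1)\lambda^n$ is manifestly non-increasing as the product of a decreasing function and a decreasing exponential. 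The only step you are actually missing is the $\omega$ vs.\ $\omega_\Lambda$ comparison.
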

Before proving the result, we note that if we set $X=[a,b]$ and $\Lambda =[l,r]$, then $d(X,\partial\Lambda)=\min\{l-a,r-b\}$, and so one could further bound \eqref{eq:MPS_LTQO} by
\be\label{decay_fun_bd}
\|P_{\Lambda}AP_{\Lambda}-\omega(A)P_{\Lambda}\|\leq C(|\Lambda|)(2\Tr(\rho^{-1})+1)\lambda^{d(X,\partial\Lambda)}\|A\|,
\ee
which motivates the choice for the LTQO function. The extra factor of two in \eqref{eq:MPS_LTQO_fn} comes from replacing the infinite state $\omega$ with the finite ground state functional $\omega_\Lambda(A)=\Tr(AP_\Lambda)/\Tr(P_\Lambda)$ used to define the indistinguishability radius, see \eqref{LTQO_length}.

\begin{proof}
Fix nonzero matrices $B,C\in M_k$. Applying \eqref{eq:MPS_expectation} and \eqref{eq:MPS_IP}, one obtains the bound
\begin{align}
|\braket{\Gamma_{[l,r]}(B)}{(A-\omega(A)\idty)\Gamma_{[l,r]}(C)}| \leq & |\braket{\Gamma_{[l,r]}(B)}{A\Gamma_{[l,r]}(C)}-\omega(A)\braket{B}{C}_\rho| \\
&+ |\omega(A)|\left|\braket{\Gamma_{[l,r]}(B)}{\Gamma_{[l,r]}(C)}-\braket{B}{C}_\rho\right|\nonumber\\
\leq & c\left[\Tr(\rho^{-1})(\lambda^{r-l+1}+\lambda^{a-l})+\lambda^{r-b}\right]\|A\|\|B\|_\rho\|C\|_{\rho} \label{eq:LTQO1}
\end{align}
Therefore, \eqref{eq:MPS_LTQO} follows from combining this with \eqref{eq:MPS_LTQO1} as inverting the bounds in \eqref{eq:MPS_norm_bd} implies
\be\label{eq:LTQO2}
\frac{\|B\|_\rho}{\|\Gamma_{[l,r]}(B)\|} \leq \frac{1}{\sqrt{1-c\Tr(\rho^{-1})\lambda^{r-l+1}}}.
\ee

To determine the indistinguishability radius, for any site $x\in\Lambda$ one trivially has that $b_x^\Lambda(n) = [x-n,x+n]$ for all $n\leq d(x,\partial\Lambda)$. Thus, if $A\in \cA_{[x-m,x+m]}$ for $m\leq n$, \eqref{eq:MPS_LTQO} implies
\[
\|P_{b_x^\Lambda(n)}AP_{b_x^{\Lambda}(n)}-\omega(A)P_{b_x^{\Lambda}(n)}\| \leq C(n-m)(2\Tr(\rho^{-1})+1)\lambda^{n-m}\|A\|,
\]
where we trivially use $\lambda^{2n+1}\leq \lambda^{n-m}$. Defining the state $\omega_\Lambda(A):=\Tr(AP_\Lambda)/\Tr(P_\Lambda)$, the indistinguishability radius is given by bounding
\[
\|P_{b_x^\Lambda(n)}AP_{b_x^{\Lambda}(n)}-\omega_\Lambda(A)P_{b_x^{\Lambda}(n)}\|\leq \|P_{b_x^\Lambda(n)}AP_{b_x^{\Lambda}(n)}-\omega(A)P_{b_x^{\Lambda}(n)}\|+|\omega_\Lambda(A)-\omega(A)|.
\]
Thus, the claimed bound on $r_x^\Omega(\Lambda)$ follows from estimating the second quantity above. 

For $\Lambda$ sufficiently large, injectivity implies $\omega_\Lambda(A) = k^{-2}\sum_{i=1}^{k^2}\braket{\Gamma_\Lambda(B_i)}{A\Gamma_\Lambda(B_i)}$ for an orthonormal basis $\{\Gamma_\Lambda(B_i)\}$ of $\ran(\Gamma_\Lambda)$. For any normalized state $\Gamma_\Lambda(B)$, the arguments from \eqref{eq:LTQO1}-\eqref{eq:LTQO2} apply with $B=C$ to produce
\begin{align*}
|\braket{\Gamma_\Lambda(B)}{A\Gamma_\Lambda(B)}-\omega(A)|\leq& C(n-m)(2\Tr(\rho^{-1})+1)\lambda^{n-m}\|A\|,
\end{align*} 
where we use \eqref{decay_fun_bd}, that $C(r)$ and $\lambda^r$ are decreasing functions, and $n-k\leq d(b_x(k),\partial\Lambda)\leq |\Lambda|$.
Thus, the result follows from the bound
\[
|\omega_\Lambda(A)-\omega(A)| \leq \frac{1}{k^2}\sum_{i=1}^{k^2}|\braket{\Gamma_\Lambda(B_i)}{A\Gamma_\Lambda(B_i)}-\omega(A)| \leq C(n-m)(2\Tr(\rho^{-1})+1)\lambda^{n-m}\|A\|.
\]
\end{proof}

\subsection{Indistinguishability with Multiple MPS Ground States}\label{app:N_MPS} We now turn our attention to the situation of a frustration-free model whose ground states are spanned by several distinct MPS. Specifically, we assume there are $n\geq 2$ matrix product states for which the corresponding infinite volume ground states are unique, i.e.
\[
\omega^i \neq \omega^j \quad\text{for}\quad i\neq j,
\]
and the ground state space of the local Hamiltonians is $\ker H_\Lambda = \sum_{i=1}^n \ran(\Gamma_\Lambda^i)$. Each of these matrix product states individually satisfy the conditions of the previous section, namely \eqref{def:transferE}-\eqref{eq:E_convergence}, and so each of the infinite states $\omega^i$ is of the form \eqref{eq:infinite_state}, see also \eqref{eq:E_A}. As outlined in Section~\ref{sec:TL_G}, the correct indistinguishability condition in this situation is: 
\be
\|P_{[a-n,b+n]}^{i}AP_{[a-n,b+n]}^{j}-\delta_{ij}\omega_i(A)P_{[a-n,b+n]}^{i}\| \leq \|A\|\Omega(n)
\ee
for all $A\in \cA_{[a,b]}$ where $P_{\Lambda}^i$ is the orthogonal projection onto $\ran(\Gamma_\Lambda^i)$. Applying Theorem~\ref{thm:MPS_LTQO}, such a bound clearly holds when $i=j$. Therefore, we need only show that in the case that $i\neq j$, 
\be\label{MPS_orthog}
\sup_{\Gamma^i_\Lambda(B), \, \Gamma^j_\Lambda(C)\neq 0} \frac{\left|\braket{\Gamma^i_\Lambda(B)}{A\Gamma^j_\Lambda(C)}\right|}{\|\Gamma^i_\Lambda(B)\|\|\Gamma^j_\Lambda(C)\|} \leq \|A\|\Omega(n),
\ee
where we set $\Lambda = [a-n,b+n]$. This is the content of \cite[Lemma 6]{nachtergaele:1996}. We briefly outline this argument.

To simplify notation, consider two distinct infinite states $\omega^{1}$, $\omega^2:\cA_{\bZ}\to\bC$ defined via matrix product states on the same quantum spin system. Let $\{v_i\}_{i=1}^d \subset M_k$ and $\{w_i\}_{i=1}^d \subset M_l$ be the set of matrices defining $\Gamma_\Lambda^1$ and $\Gamma_\Lambda^2$ with respect to the same orthonormal basis for $\cH_x=\bC^d$. Arguing similarly as in \eqref{MPS_decomp}, for any $A\in\cA_{[a,b]}$, one has
\[
\braket{\Gamma^1_{\Lambda}(B)}{A\Gamma^2_{\Lambda}(C)} = \sum_{\substack{\alpha\in\cB_k \\ \beta\in\cB_L}}
\braket{\alpha}{\bbF^{n}\circ\bbF_A\circ\bbF^{n}\left(B^*\ket{\alpha}\bra{\beta}C\right)\beta}
\]
were $\cB_k$, and $\cB_l$ are orthonormal bases for the respective virtual spin spaces and, with respect to the isometries from \eqref{eq:E_A}, the transfer operators $\bbF_A$ and $\bbF$ on $M_{k\times l}$ are given by
\begin{align}
\bbF_A(B) = & V_{[a,b]}^*A\otimes B W_{[a,b]}\\
\bbF(B) = &\sum_{i=1}^d v_i^* B w_i = V^* (\idty_d\otimes B) W
\end{align}
where $V:=\sum_{i} \ket{i}\otimes v_i$ and $W:=\sum_{i} \ket{i}\otimes w_i$ are isometries for a single site. 

Using a modified version of the argument used to obtain \eqref{eq:MPS3}, see also \eqref{eq:LTQO2}, one finds that \eqref{MPS_orthog} follows from showing that
	\begin{equation}
	\|F\| :=\sup_{0\neq B\in M_{k\times l}}\frac{\|\bbF(B)\|_{\rho_2}}{\|B\|_{\rho_2}} <1
	\end{equation}
where $\|B\|_{\rho_2}^2 := \Tr(\rho_2 B^*B)$ is the inner product induced by the density matrix $\rho_2$ associated with $\omega^2$. It is easy to see that $\|\bbF\|\leq 1$ as for any $B,C\in M_{k\times l}$
\[
|\Tr(\rho_2C^*\bbF(B))|^2 \leq \Tr(\rho_2C^*V^*VC)\Tr(\rho_2W^*\idty\otimes B^*B W) = \Tr(\rho_2C^*C)\Tr(\rho_2B^*B), 
\]
where we use that $V^*V=\idty$, and $W^*\idty\otimes B^*B W =\sum_{i}w_i^*B^*Bw_i$ with \eqref{def:transferE}. Strict equality follows from showing that $\|F\|=1$ implies $\omega^1=\omega^2$. The details of this argument, which can be found at the end of the proof of \cite[Lemma 6]{nachtergaele:1996}, are left to the reader.

\section*{Acknowledgments}

All three authors wish to thank the Departments of Mathematics of the University of Arizona and the University of California, Davis, for extending their kind  hospitality to us and for the stimulating atmosphere they offered during several visits back and forth over the years it took to complete this project. Based upon work supported by the National Science Foundation under grant DMS-1813149 (BN) and the DFG under EXC-2111–390814868 (AY). 
 

\begin{thebibliography}{100}

\bibitem{abdul-rahman:2020}
H.~Abdul-Rahman, M.~Lemm, A.~Lucia, B.~Nachtergaele, and A.~Young, \emph{A
  class of two-dimensional {AKLT} models with a gap}, Analytic Trends in
  Mathematical Physics (A.~Young H.~Abdul-Rahman, R.~Sims, ed.), Contemporary
  Mathematics, vol. 741, American Mathematical Society, 2020, arXiv:1901.09297,
  pp.~1--21.

\bibitem{affleck:1988}
I.~Affleck, T.~Kennedy, E.H. Lieb, and H.~Tasaki, \emph{Valence bond ground
  states in isotropic quantum antiferromagnets}, Comm. Math. Phys. \textbf{115}
  (1988), no.~3, 477--528.

\bibitem{albanese:1989}
C.~Albanese, \emph{On the spectrum of the {H}eisenberg {H}amiltonian}, J. Stat.
  Phys. \textbf{55} (1989), 297--309.

\bibitem{alicki:2007}
R.~Alicki, M.~Fannes, and M.~Horodecki, \emph{A statistical mechanics view on
  {K}itaev's proposal for quantum memories}, J. Phys. A \textbf{40} (2007),
  no.~24, 6451--6467.

\bibitem{bachmann:2017b}
S.~Bachmann, A.~Bols, W.~De~Roeck, and M.~Fraas, \emph{Quantization of
  conductance in gapped interacting systems}, Ann. Henri Poincar\'e \textbf{19}
  (2018), 695--708.

\bibitem{bachmann:2020a}
\bysame, \emph{A many-body index for quantum charge transport}, Commun. Math.
  Phys. (2020), 1249--1272.

\bibitem{bachmann:2020}
\bysame, \emph{Rational indices for quantum ground state sectors},
  arXiv:2001.06458, 2020.

\bibitem{bachmann:2017c}
S.~Bachmann, W.~De~Roeck, and M.~Fraas, \emph{Adiabatic theorem for quantum
  spin systems}, Phys. Rev. Lett. \textbf{119} (2017), 060201,
  arXiv:1612.01505.

\bibitem{bachmann:2018}
\bysame, \emph{The adiabatic theorem and linear response theory for extended
  quantum systems}, Commun. Math. Phys. \textbf{361} (2018), 997--1027,
  arXiv:1705.02838.

\bibitem{bachmann:2016}
S.~Bachmann, W.~Dybalski, and P.~Naaijkens, \emph{{L}ieb--{R}obinson bounds,
  {A}rveson spectrum and {H}aag--{R}uelle scattering theory for gapped quantum
  spin systems}, Ann. H. Poincar\'{e} \textbf{17} (2016), 1737--1791.

\bibitem{bachmann:2015}
S.~Bachmann, E.~Hamza, B.~Nachtergaele, and A.~Young, \emph{{P}roduct {V}acua
  and {B}oundary {S}tate models in $d$ dimensions}, J. Stat. Phys. \textbf{160}
  (2015), 636--658.

\bibitem{bachmann:2012}
S.~Bachmann, S.~Michalakis, B.~Nachtergaele, and R.~Sims, \emph{Automorphic
  equivalence within gapped phases of quantum lattice systems}, Comm. Math.
  Phys. \textbf{309} (2012), 835--871.

\bibitem{bachmann:2012bf}
S.~Bachmann and B.~Nachtergaele, \emph{{Product vacua with boundary states}},
  Phys. Rev. B \textbf{86} (2012), no.~3, 035149.

\bibitem{bachmann:2014c}
\bysame, \emph{Product vacua with boundary states and the classification of
  gapped phases}, Comm. Math. Phys. \textbf{329} (2014), 509--544.

\bibitem{bachmann:2015a}
S.~Bachmann and Y.~Ogata, \emph{{$C^1$}-classification of gapped parent
  {H}amiltonians of quantum spin chains}, Commun. Math. Phys. , 1011--1042
  (2015) \textbf{338} (2015), 1011--1042.

\bibitem{bishop:2016a}
M.~Bishop, B.~Nachtergaele, and A.~Young, \emph{Spectral gap and edge
  excitations of $d$-dimensional {PVBS} models on half-spaces}, J. Stat. Phys.
  \textbf{162} (2016), 1485--1521.

\bibitem{bogli:2017}
S.~B\"ogli, \emph{Convergence of sequences of linear opeators and their
  spectra}, Integr. Equ. Oper. Theory \textbf{88} (2017), 559--599.

\bibitem{borgs:1996}
C.~Borgs, R.~Koteck\'y, and D.~Ueltschi, \emph{Low temperature phase diagrams
  for quantum perturbations of classical spin systems}, Commun. Math. Phys.
  \textbf{181} (1996), no.~2, 409--446.

\bibitem{bratteli:1997}
O.~Bratteli and D.~W. Robinson, \emph{Operator algebras and quantum statistical
  mechanics}, 2 ed., vol.~2, Springer Verlag, 1997.

\bibitem{bravyi:2010}
S.~Bravyi, M.~Hastings, and S.~Michalakis, \emph{Topological quantum order:
  stability under local perturbations}, J. Math. Phys. \textbf{51} (2010),
  093512.

\bibitem{bravyi:2011}
S.~Bravyi and M.~B. Hastings, \emph{A short proof of stability of topological
  order under local perturbations}, Commun. Math. Phys. \textbf{307} (2011),
  609.

\bibitem{bravyi:2015}
Sergey Bravyi and David Gosset, \emph{Gapped and gapless phases of
  frustration-free spin-1/2 chains}, J. Math. Phys. \textbf{56} (2015), 061902,
  arXiv:1503.04035.

\bibitem{cha:2020}
M.~Cha, P.~Naaijkens, and B.~Nachtergaele, \emph{On the stability of charges in
  infinite quantum spin systems}, Commun. Math. Phys. \textbf{373} (2020),
  219--264, arXiv:1804.03203.

\bibitem{chen:2013}
X.~Chen, Z.-C. Gu, Z.-X. Liu, and X.-G. Wen, \emph{Symmetry protected
  topological orders and the group cohomology of their symmetry group}, Phys.
  Rev. B \textbf{87} (2013), 155114.

\bibitem{chen:2011}
X.~Chen, Z.-C. Gu, and X.-G. Wen, \emph{{Classification of gapped symmetric
  phases in one-dimensional spin systems}}, Phys. Rev. B \textbf{83} (2011),
  no.~3, 035107.

\bibitem{cirac:2013}
J.I. Cirac, S.~Michalakis, D.~Perez-Garcia, and N.~Schuch, \emph{Robustness in
  projected entangled pair states}, Phys. Rev. B \textbf{88} (2013), 115108.

\bibitem{cubitt:2015b}
T.~Cubitt, D.~Perez-Garcia, and M.~M. Wolf, \emph{Undecidability of the
  spectral gap}, arXiv:1502.04573, 2015.

\bibitem{cubitt:2015}
T.S. Cubitt, A.~Lucia, S.~Michalakis, and D.~Perez-Garcia, \emph{Stability of
  local quantum dissipative systems}, Commun. Math. Phys. \textbf{337} (2015),
  1275--1315.

\bibitem{cubitt:2015a}
T.S. Cubitt, D.~Perez-Garcia, and M.~M. Wolf, \emph{Undecidability of the
  spectral gap}, Nature \textbf{528} (2015), 207--211.

\bibitem{cui:2019}
S.~X. Cui, D.~Dawei, X.~Han, G.~Penington, D.~Ranard, B.~C. Rayhaun, and
  Z.~Shangnan, \emph{Kitaev's quantum double model as an error correcting
  code}, arXiv:1908.02829, 2019.

\bibitem{datta:1996}
N.~Datta, R.~Fern\'andez, and J.~Fr\"ohlich, \emph{Low-temperature phase
  diagrams of quantum lattice systems. {I}. stability for quantum perturbations
  of classical systems with finitely-many ground states}, J. Stat. Phys.
  \textbf{84} (1996), 455.

\bibitem{datta:1996a}
N.~Datta, J.~Fr\"ohlich, L.~Rey-Bellet, and R.~Fern\'andez,
  \emph{Low-temperature phase diagrams of quantum lattice systems. {II}.
  convergent perturbation expansions and stability in systems with infinite
  degeneracy.}, Helv. Phys. Acta \textbf{69} (1996), 752--820.

\bibitem{de-roeck:2019}
W.~De~Roeck and M.~Salmhofer, \emph{Persistence of exponential decay and
  spectral gaps for interacting fermions}, Commun. Math. Phys. \textbf{365}
  (2019), 773--796.

\bibitem{del-vecchio:2019a}
S.~Del~Vecchio, J.~Fr\"ohlich, A.~Pizzo, and S.~Rossi, \emph{Lie-{S}chwinger
  block-diagonalization and gapped quantum chains: analyticity of the
  ground-state energy}, arXiv:1908.07486, 2019.

\bibitem{del-vecchio:2019}
\bysame, \emph{{L}ie-{S}chwinger block-diagonalization and gapped quantum
  chains with unbounded interactions}, arXiv:1908.07450, 2019.

\bibitem{del-vecchio:2020}
\bysame, \emph{Local interative block-diagonalization of gapped {H}amiltonians:
  a new tool in singular perturbation theory}, arXiv:2007.07667, 2020.

\bibitem{fannes:1992}
M.~Fannes, B.~Nachtergaele, and R.~F. Werner, \emph{Finitely correlated states
  of quantum spin chains}, Commun. Math. Phys. \textbf{144} (1992), 443--490.

\bibitem{fernandez:2006}
R.~Fernandez, J.~Fr\"ohlich, and D.~Ueltschi, \emph{{M}ott transitions in
  lattice boson models.}, Commun. Math. Phys. \textbf{266} (2006), 777--795.

\bibitem{frohlich:2020}
J.~Fr\"ohlich and A.~Pizzo, \emph{{L}ie-{S}chwinger block-diagonalization and
  gapped quantum chains}, Commun. Math. Phys. \textbf{375} (2020), 2039--2069,
  arXiv:1812.02457.

\bibitem{ginibre:1969}
J.~Ginibre, \emph{Existence of phase transitions for quantum lattice systems},
  Comm. Math. Phys. \textbf{14} (1969), 205--234.

\bibitem{gosset:2016}
David Gosset and Evgeny Mozgunov, \emph{Local gap threshold for
  frustration-free spin systems}, J. Math. Phys. \textbf{57} (2016), 091901.

\bibitem{guo:2020}
W.~Guo, N.~Pomata, and T.-C. Wei, \emph{The {AKLT} models on the singly
  decorated diamond lattice and two degree-4 planar lattices are gapped},
  arXiv:2010.03137, 2020.

\bibitem{haah:2016}
J~Haah, \emph{An invariant of topologically ordered states under local unitary
  transformations}, Commun. Math. Phys. \textbf{342} (2016), 771--801.

\bibitem{hastings:2004}
M.~B. Hastings, \emph{{L}ieb-{S}chultz-{M}attis in higher dimensions}, Phys.
  Rev. B \textbf{69} (2004), 104431.

\bibitem{hastings:2004b}
M.~B. Hastings, \emph{Locality in quantum and markov dynamics on lattices and
  networks}, Phys. Rev. Lett. \textbf{93} (2004), 140402.

\bibitem{hastings:2019}
\bysame, \emph{The stability of free {F}ermi {H}amiltonians}, J. Math. Phys.
  \textbf{60} (2019), 042201, arXiv:1706.02270.

\bibitem{hastings:2006}
M.~B. Hastings and T.~Koma, \emph{Spectral gap and exponential decay of
  correlations}, Commun. Math. Phys. \textbf{265} (2006), 781--804.

\bibitem{hastings:2015}
M.~B. Hastings and S.~Michalakis, \emph{Quantization of {H}all conductance for
  interacting electrons on a torus}, Commun. Math. Phys. \textbf{334} (2015),
  433--471.

\bibitem{hastings:2005}
M.~B. Hastings and X.~G. Wen, \emph{Quasi-adiabatic continuation of quantum
  states: The stability of topological ground-state degeneracy and emergent
  gauge invariance}, Phys. Rev. B \textbf{72} (2005), 045141.

\bibitem{kato:1995}
T.~Kato, \emph{Perturbation theory for linear operators}, Springer-Verlag,
  1995.

\bibitem{kennedy:1992}
T.~Kennedy and H.~Tasaki, \emph{Hidden symmetry breaking and the {H}aldane
  phase in $s=1$ quantum spin chains}, Commun. Math. Phys. \textbf{147} (1992),
  431--484.

\bibitem{kitaev:2006}
A.~Kitaev, \emph{Anyons in an exactly solved model and beyond}, Annals of
  Physics \textbf{321} (2006), 2--111.

\bibitem{kitaev:2009}
\bysame, \emph{Periodic table for topological insulators and superconductors},
  Advances in theoretical physics : {L}andau Memorial Conference,
  {C}hernogolovka, {R}ussia, 22-26 {J}une 2008, AIP conference proceedings, no.
  1134, American Institute of Physics, 2009.

\bibitem{kitaev:2003}
A.~Yu. Kitaev, \emph{Fault-tolerant quantum computation by anyons}, Ann. Phys.
  \textbf{303} (2003), 2--30.

\bibitem{klich:2010}
I.~Klich, \emph{On the stability of topological phases on a lattice},
  Ann.~Phys. \textbf{325} (2010), 2120--2131.

\bibitem{knabe:1988}
S.~Knabe, \emph{Energy gaps and elementary excitations for certain
  {VBS}-quantum antiferromagnets}, J. Stat. Phys. \textbf{52} (1988), 627--638.

\bibitem{koma:2020}
T.~Koma, \emph{Stability of the spectral gap for lattice fermions},
  arXiv:2005.04548, 2020.

\bibitem{koma:1997}
T.~Koma and B.~Nachtergaele, \emph{The spectral gap of the ferromagnetic {XXZ}
  chain}, Lett. Math. Phys. \textbf{40} (1997), 1--16.

\bibitem{koma:2001}
T.~Koma, B.~Nachtergaele, and S.~Starr, \emph{The spectral gap for the
  ferromagnetic spin-{J} {XXZ} chain}, Adv. Theor. Math. Phys. \textbf{5}
  (2001), 1047--1090.

\bibitem{landau:1981}
L.~J. Landau, J.~F. Perez, and W.~F. Wreszinski, \emph{{Energy gap, clustering,
  and the Goldstone theorem in statistical mechanics}}, J. Stat. Phys.
  \textbf{26} (1981), no.~4, 755--766.

\bibitem{lemm:2019b}
M.~Lemm and E.~Mozgunov, \emph{Spectral gaps of frustration-free spin systems
  with boundary}, J. Math. Phys. \textbf{60} (2019), 051901, arXiv:1801.08915.

\bibitem{lemm:2019}
M.~Lemm and B.~Nachtergaele, \emph{Gapped {PVBS} models for all species numbers
  and dimensions}, Rev. Math. Phys. \textbf{31} (2019), 1950028.

\bibitem{lemm:2020a}
M.~Lemm, A.W. Sandvik, and L.~Wang, \emph{Existence of a spectral gap in the
  {A}ffleck-{K}ennedy-{L}ieb-{T}asaki model on the hexagonal lattice}, Phys.
  Rev. Lett. \textbf{124} (2020), 177204, arXiv:1910.11810.

\bibitem{levin:2005}
M.A. Levin and X.-G. Wen, \emph{String-net condensation: A physical mechanism
  for topological phases}, Phys. Rev. B \textbf{71} (2005), 045110.

\bibitem{lieb:1972}
E.H. Lieb and D.W. Robinson, \emph{The finite group velocity of quantum spin
  systems}, Commun. Math. Phys. \textbf{28} (1972), 251--257.

\bibitem{matsui:1990}
T.~Matsui, \emph{Uniqueness of the translationally invariant ground state in
  quantum spin systems}, Commun. Math. Phys. \textbf{126} (1990), 453--467.

\bibitem{matsui:1997}
T.~Matsui, \emph{On the spectra of the kink for ferromagnetic {XXZ} models},
  Lett. Math. Phys. \textbf{42} (1997), 229--239.

\bibitem{matsui:2010}
\bysame, \emph{On spectral gap and split property in quantum spin chains}, J.
  Math. Phys. \textbf{51} (2010), 015216.

\bibitem{michalakis:2013}
S.~Michalakis and J.P. Zwolak, \emph{Stability of frustration-free
  {H}amiltonians}, Commun. Math. Phys. \textbf{322} (2013), 277--302.

\bibitem{moon:2019a}
A.~Moon, \emph{Automorphic equivalence preserves the split property}, J. Funct.
  Analysis \textbf{277} (2019), 3653--3672, arXiv:1903.00944.

\bibitem{Moon:2018}
A.~Moon and B.~Nachtergaele, \emph{Stability of gapped ground state phases of
  spins and fermions in one dimension}, J. Math. Phys. \textbf{59} (2018),
  091415.

\bibitem{moon:2020}
A.~Moon and Y.~Ogata, \emph{Automorphic equivalence within gapped phases in the
  bulk}, J. Funct. Analysis \textbf{278} (2020), 108422, arXiv:1906.05479.

\bibitem{naaijkens:2017}
P.~Naaijkens, \emph{Quantum spin systems on infinite lattices}, Lecture Notes
  in Physics, vol. 933, Springer, 2017.

\bibitem{nachtergaele:1996}
B.~Nachtergaele, \emph{The spectral gap for some quantum spin chains with
  discrete symmetry breaking}, Commun. Math. Phys. \textbf{175} (1996),
  565--606.

\bibitem{nachtergaele:2006a}
B.~Nachtergaele and R.~Sims, \emph{{L}ieb-{R}obinson bounds and the exponential
  clustering theorem}, Commun. Math. Phys. \textbf{265} (2006), 119--130.

\bibitem{paper4}
B.~Nachtergaele, R.~Sims, and A.~Young, \emph{Quasi-locality bounds for quantum
  lattice systems. {III}. {G}apped phases of lattice fermion systems}, in
  preparation.

\bibitem{paper3}
\bysame, \emph{Stability of the bulk gap for frustration-free topologically
  ordered quantum lattice systems}, In preparation.

\bibitem{nachtergaele:2018}
\bysame, \emph{{L}ieb-{R}obinson bounds, the spectral flow, and stability for
  lattice fermion systems}, Mathematical Results in Quantum Physics
  (F.~Bonetto, D.~Borthwick, E.~Harrell, and M.~Loss, eds.), Contemporary
  Mathematics, vol. 717, Amer. Math. Soc., 2018, pp.~93--115.

\bibitem{nachtergaele:2019}
\bysame, \emph{Quasi-locality bounds for quantum lattice systems. {I}.
  {L}ieb-{R}obinson bounds, quasi-local maps, and spectral flow automorphisms},
  J. Math. Phys. \textbf{60} (2019), 061101.

\bibitem{nachtergaele:2020}
B.~Nachtergaele, S.~Warzel, and A.~Young, \emph{Spectral gaps and
  incompressibility in a $\nu = 1/3$ fractional quantum {H}all system},
  arXiv:2004.04992, 2020.

\bibitem{ogata:2016a}
Y.~Ogata, \emph{A class of asymmetric gapped {H}amiltonians on quantum spin
  chains and its characterization {I}}, Commun. Math. Phys. \textbf{348}
  (2016), no.~3, 847--895. \MR{3555356}

\bibitem{ogata:2019a}
\bysame, \emph{A classification of pure states on quantum spin chains
  satisfying the split property with on-site finite group symmetries},
  ArXiv:1908.08621, 2019.

\bibitem{ogata:2019b}
\bysame, \emph{A {$\mathbb{Z}_2$-index} of symmetry protected topological
  phases with reflection symmetry for quantum spin chains}, arXiv:1904.01669,
  2019.

\bibitem{ogata:2020a}
\bysame, \emph{A {$\mathbb{Z}_2$-index} of symmetry protected topological
  phases with time reversal symmetry for quantum spin chains}, Commun. Math.
  Phys. \textbf{374} (2020), 705--734.

\bibitem{ogata:2019}
Y.~Ogata and H.~Tasaki, \emph{{L}ieb-{S}chultz-{M}attis type theorems for
  quantum spin chains without continuous symmetry}, Commun. Math. Phys. (2019),
  online, arXiv:1808.08740, published online, 5 Feb 2019.

\bibitem{pollmann:2012}
F.~Pollmann and A.~Turner, \emph{{Detection of symmetry-protected topological
  phases in one dimension}}, Phys. Rev. B \textbf{86} (2012), 125441.

\bibitem{pomata:2019}
N.~Pomata and T.-C. Wei, \emph{{AKLT} models on decorated square lattices are
  gapped}, Phys. Rev. B \textbf{100} (2019), 094429.

\bibitem{pomata:2020}
\bysame, \emph{Demonstrating the {A}ffleck-{K}ennedy-{L}ieb-{T}asaki spectral
  gap on 2d degree-3 lattices}, Phys. Rev. Lett. \textbf{124} (2020), 177203,
  arXiv:1911.01410.

\bibitem{qiu:2020}
Y.~Qiu and Z.~Wang, \emph{Ground subspaces of topological phases of matter as
  error correcting codes}, arXiv:2004.11982, 2020.

\bibitem{reed:1975}
M.~Reed and B.~Simon, \emph{Fourier analysis, self-adjointness. {M}ethods of
  modern mathematical physics}, vol.~2, Academic Press, 1975.

\bibitem{reed:1980}
M.~Reed and B.~Simon, \emph{Functional analsyis. {M}ethods of modern
  mathematical physics}, revised and enlarged ed., vol.~1, Academic Press,
  1980.

\bibitem{schuch:2011a}
N.~Schuch, D.~Perez-Garcia, and I.~Cirac, \emph{Classifying quantum phases
  using matrix product states and peps}, Phys. Rev. B \textbf{84} (2011),
  165139.

\bibitem{spitzer:2003}
W.~Spitzer and S.~Starr, \emph{Improved bounds on the spectral gap above
  frustration free ground states of quantum spin chains}, Lett. Math. Phys.
  \textbf{63} (2003), 165---177.

\bibitem{szehr:2015}
O.~Szehr and M.~M. Wolf, \emph{Perturbation theory for parent {H}amiltonians of
  matrix product states}, J. Stat. Phys. \textbf{159} (2015), 752--771,
  arXiv:1402.4175.

\bibitem{tasaki:2018}
H.~Tasaki, \emph{Topological phase transition and {$\mathbb{Z}_2$}-index for
  {$S=1$} quantum spin chains}, Phys. Rev. Lett. \textbf{121} (2018), 140604,
  arXiv:1804.04337.

\bibitem{tasaki:2020}
\bysame, \emph{Physics and mathematics of quantum many-body systems}, Springer
  Nature, 2020.

\bibitem{wang:2019}
J.~Wang, X.-G. Wen, and E.~Witten, \emph{A new $su(2)$ anomaly}, J. Math. Phys.
  \textbf{60} (2019), 052301.

\bibitem{weidmann:2000}
J.~Weidmann, \emph{{L}ineare {O}peratoren in {H}ilbertr\"aumen. {T}eil {I}.
  {G}rundlagen}, B.G.~Teubner (Stuttgart), 2000.

\bibitem{yao:2019}
Y.~Yao and M.~Oshikawa, \emph{A generalized boundary condition applied to
  {L}ieb-{S}chultz-{M}attis type ingappabilities}, arXiv:1906.11662, 2019.

\bibitem{yarotsky:2006}
D.~A. Yarotsky, \emph{Ground states in relatively bounded quantum perturbations
  of classical lattice systems}, Commun. Math. Phys. \textbf{261} (2006),
  799--819.

\end{thebibliography}
\providecommand{\bysame}{\leavevmode\hbox to3em{\hrulefill}\thinspace}
\providecommand{\MR}{\relax\ifhmode\unskip\space\fi MR }
\providecommand{\MRhref}[2]{%
  \href{http://www.ams.org/mathscinet-getitem?mr=#1}{#2}
}
\providecommand{\href}[2]{#2}

\end{document}